\documentclass[12pt]{article}

\newcommand{\papertitle}{Semiparametric Inference for Dynamic Causal Effects from Observational Time Series}
\newcommand{\authorlist}{Shibo Yu, Yan Chen, Jin-Hong Du, Guodong Li}

\usepackage{iftex}
\ifPDFTeX
  \usepackage[T1]{fontenc}
\fi
\usepackage{natbib}
\usepackage[margin=1.0in]{geometry}
\usepackage{titling}
\usepackage{authblk} 
\usepackage{mathtools}
\usepackage{amsfonts}
\usepackage{amsmath,amssymb,amsthm}
\usepackage[mathscr]{euscript}
\allowdisplaybreaks

\usepackage{multirow,multicol}
\usepackage{graphicx,color}
\usepackage{comment}
\usepackage{pifont} 
\usepackage{textcomp}
\usepackage{makecell, booktabs}
\usepackage{caption}
\usepackage{tabularx}
\usepackage{array}
\usepackage{titlesec}
\usepackage{algorithm}
\usepackage{algorithmic}
\usepackage{float}
\usepackage{subfig}
\usepackage[english]{babel}
\usepackage[toc,page]{appendix}

\newcommand{\romannum}[1]{\romannumeral #1}
\DeclareRobustCommand{\Romannum}[1]{\MakeUppercase{\romannum{#1}}}

\newcommand{\bm}{\boldsymbol}

\newcommand{\Fr}{{\mathrm{F}}}
\newcommand{\op}{{\mathrm{op}}}
\DeclareMathOperator{\cov}{cov}
\newcommand{\var}{{\mathrm{var}}}
\newcommand{\E}{\mathbb{E}}

\newcommand{\R}{\mathbb{R}}
\newcommand{\I}{\mathbf{1}} 

\definecolor{turquoise}{rgb}{0.03, 0.7, 0.87}

\newcommand{\sgnorm}{{\psi_2}}
\newcommand{\senorm}{{\psi_1}}
\newcommand{\dto}{\xrightarrow{d}} 
\newcommand{\pto}{\xrightarrow{p}} 
\DeclareMathOperator*{\vectorize}{vec}
\DeclareMathOperator*{\rank}{rank}
\DeclareMathOperator*{\trace}{tr}
\DeclareMathOperator*{\argmin}{arg\,min}

\newtheorem{assumption}{Assumption}
\newtheorem{condition}{Condition}
\newtheorem{example}{Example}

\newtheorem{lemma}{Lemma}
\newtheorem{proposition}{Proposition}
\newtheorem{theorem}{Theorem}
\newtheorem{corollary}{Corollary}

\newtheorem{remark}{Remark}

\newcommand{\setupcleverefnames}{
	\crefname{assumption}{Assumption}{Assumptions}
	\Crefname{assumption}{Assumption}{Assumptions}
	\crefname{condition}{Condition}{Conditions}
	\Crefname{condition}{Condition}{Conditions}
	\crefname{example}{Example}{Examples}
	\Crefname{example}{Example}{Examples}
	\crefname{definition}{Definition}{Definitions}
	\Crefname{definition}{Definition}{Definitions}
	\crefname{figure}{Figure}{Figures}
	\Crefname{figure}{Figure}{Figures}
	\crefname{lemma}{Lemma}{Lemmas}
	\Crefname{lemma}{Lemma}{Lemmas}
	\crefname{proposition}{Proposition}{Propositions}
	\Crefname{proposition}{Proposition}{Propositions}
	\crefname{theorem}{Theorem}{Theorems}
	\Crefname{theorem}{Theorem}{Theorems}
	\crefname{corollary}{Corollary}{Corollaries}
	\Crefname{corollary}{Corollary}{Corollaries}
	\crefname{remark}{Remark}{Remarks}
	\Crefname{remark}{Remark}{Remarks}
	\crefname{algorithm}{Algorithm}{Algorithms}
	\Crefname{algorithm}{Algorithm}{Algorithms}
}

\usepackage[colorlinks,linkcolor=blue,citecolor=blue]{hyperref}
\usepackage[noabbrev]{cleveref}
\setupcleverefnames
\usepackage{bibunits}
\defaultbibliographystyle{apalike}
\defaultbibliography{dynamicIV}

\makeatletter
\let\arxiv@mainfloat\@xfloat
\let\arxiv@mainfootnote\@footnotetext
\let\arxiv@mainmpfootnote\@mpfootnotetext
\newtoks\arxiv@maindisplay
\arxiv@maindisplay=\expandafter{\the\everydisplay}
\makeatother
\usepackage{setspace}
\makeatletter
\let\arxiv@suppfloat\@xfloat
\let\arxiv@suppfootnote\@footnotetext
\let\arxiv@suppmpfootnote\@mpfootnotetext
\newtoks\arxiv@suppdisplay
\arxiv@suppdisplay=\expandafter{\the\everydisplay}
\newcommand{\mainspacing}{%
  \let\@xfloat\arxiv@mainfloat
  \let\@footnotetext\arxiv@mainfootnote
  \let\@mpfootnotetext\arxiv@mainmpfootnote
  \everydisplay=\expandafter{\the\arxiv@maindisplay}}
\newcommand{\supplementspacing}{%
  \let\caption@ORI@xfloat\arxiv@suppfloat
  \let\@footnotetext\arxiv@suppfootnote
  \let\@mpfootnotetext\arxiv@suppmpfootnote
  \everydisplay=\expandafter{\the\arxiv@suppdisplay}}
\mainspacing

\newcommand{\bibliographypart}[1]{%
  \def\@extra@b@citeb{.#1}%
  \def\@extra@binfo{.#1}}

\let\arxiv@pageanchor\Hy@EveryPageAnchor
\renewcommand{\Hy@EveryPageAnchor}{%
  \begingroup\Hy@hypertexnamesfalse\arxiv@pageanchor\endgroup}
\makeatother

\numberwithin{equation}{section}
\title{\papertitle}
\author{
	\centering
	\authorlist
	\thanks{Address for correspondence: Department of Statistics and Actuarial Science, The University of Hong Kong, Hong Kong, China.}
}
\affil{Department of Statistics and Actuarial Science, The University of Hong Kong}
\date{\today}

\begin{document}
\begin{bibunit}
\bibliographypart{main}
\captionsetup{font={stretch=1.667}}
\renewcommand{\paragraph}{\textbf}
\setlength{\droptitle}{-4em}
\maketitle
\vspace{-3em}

\begin{abstract}
    In observational time series, statistical inference for dynamic causal effects of a one-time intervention across horizons is complicated by high-dimensional observed pre-treatment information, unmeasured confounding, and serial dependence.
    To address these challenges, we develop a semiparametric framework for inference from a single serially dependent time series, integrating debiased machine learning with instrumental variables through buffered block cross-fitting.
    Under geometric beta-mixing, we derive non-asymptotic bounds on estimation error, asymptotic normality at each fixed horizon, and feasible inference that accommodates serial dependence.
    We further show how learner-specific prediction guarantees under temporal dependence can be used to verify the nuisance-rate conditions required for orthogonal inference.
    In a monetary-policy application with 468 months and 1464 lagged FRED-MD controls, we show an instrumented policy tightening lowers housing starts at medium horizons, with sensitivity analyses that support the finding.
\end{abstract}
\textit{Keywords}:  Dynamic causal effects; instrumental variable; local projection; high-dimensional adjustment; debiased/double machine learning; time series.

\section{Introduction} \label{sec:intro}
Many scientists and policymakers ask how an outcome trajectory would have changed if, at a specific time, a treatment or policy variable had been shifted by an external intervention while the pre-treatment history was held fixed \citep{adamek2024local}. For example, a central bank cares about how inflation and real activity would respond over the following months to a change in its policy rate \citep{romer2004new,stock2018identification}, a fiscal authority considers how a tax change propagates through the economy \citep{mertens2013dynamic}, and a security analyst investigates whether airstrikes at one date raise or lower insurgent violence in the weeks that follow \citep{papadogeorgou2022causal}. Similar examples can be found in time series experiments \citep{bojinov2019time} and adaptive treatment of a single patient \citep{vanderlaan2018robust}. In observational time series data, such questions are complicated, as such counterfactual shifts are not directly observed. Treatment assignment often responds to the evolving state, so its ordinary association with future outcomes combines the causal effect of interest with endogenous selection. Instrumental-variable (IV) methods address this problem by isolating treatment variation induced by an external source and are therefore widely used across fields of causal inference \citep{wright1928tariff,stock2018identification,khalaf2026monetary}. However, credible use of such variation may require adjustment for a rich pretreatment information set, both to support the identifying assumptions and to sharpen estimation by accounting for predictable variation in treatment and outcome \citep{tan2006regression,chernozhukov2015postselection}. This often results in high dimensional state covariates recording such information, thwarting the inference of the causal target.

Our empirical application of FRED-MD data offers a motivation. We ask how U.S.\ inflation, unemployment, industrial production, and housing starts respond to a change in the monetary-policy stance. The policy rate is set in response to hundreds of macroeconomic indicators and to private forecasts that never enter the public record \citep{romer2004new,bernanke2005favar}, so there can always be confoundedness caused by latent conditions, no matter how many covariates are chosen. Luckily, the Romer--Romer shock \citep{romer2004new,wieland2021updated} serves as a widely-used instrument for policy shocks. However, to adjust for known confounders and purify causal effects, a high-dimensional state for adjustment is also needed, which, in our application,
is twelve lags of the FRED-MD panel containing 1464 variables \citep{mccracken2016fred}. Given the sample size being only 468 months, a sophisticated analysis framework is hence necessitated.

Formally speaking, the causal target is the horizon-specific response matrix \(\{\mathbf B_h:h\in\mathcal H\}\), whose entries record the expected change in the outcome vector \(\mathbf Y_{t+h}\) that an externally induced unit shift in the period-$t$ treatment \(\mathbf W_t\) would produce when the pre-treatment history is held fixed and the system evolves naturally afterward. This can be viewed as a semiparametric local-projection IV (LP--IV) target in the sense of \citet{jorda2005estimation} and \citet{stock2018identification}. Here, \emph{dynamic} refers to the propagation of one intervention across horizons, as in the impulse-response tradition, and not to a sustained treatment regime. We will show that under certain conditions, \(\mathbf B_h\) coincides with the normalized structural impulse response of proxy-SVAR analysis \citep{mertens2013dynamic,montielolea2021inference}.

Estimating this target involves three difficulties at once, namely an endogenous treatment identified only through an external instrument, an adjustment state whose dimension may exceed the length of the series, and serial dependence along the single trajectory. Related methods address these difficulties under different identifying and modeling assumptions. LP--IV and proxy-SVAR methods accommodate an instrument but adjust for a low-dimensional prespecified control set, and their inference theory is built around that fixed-dimensional adjustment \citep{stock2018identification,montielolea2021local,montielolea2021inference,plagborg2021local}. High-dimensional time series methods, from sparse vector autoregressions and Lasso-based local projections to factor-augmented regressions, permit many predictors \citep{basu2015regularized,kock2015oracle,chernozhukov2021lasso,krampe2023structural,adamek2024local,fan2023bridging}, but focus generally on prediction or on structural impulse responses under prespecified identification restrictions \citep{adamek2024local,krampe2023structural}, and their guarantees are tied to a particular learner. 
A common approach to identifying effects of time-varying treatments assumes sequential ignorability, meaning that the observed history suffices to control confounding at each treatment decision \citep{robins1986new,murphy2003optimal,lewis2021double}. That assumption does not fit our settings.
Debiased machine learning (DML) \citep{chernozhukov2018double} separates a low-dimensional causal parameter from an arbitrary nuisance learner through a Neyman-orthogonal score and cross-fitting, and its IV form handles endogenous treatments. In the independent-data setting, sample splitting makes the fitted nuisance independent of the held-out data. We will show that adopting DML to a single dependent trajectory is nontrivial, because the training and validation blocks remain dependent, and the independent-data guarantees do not apply directly.
Recent work brings orthogonal estimation to dependent data. \citet{semenova2023debiased} leave out temporal neighbors in dynamic panels with many units. \citet{ciganovic2026double}, which propose reverse cross-fitting for partially linear regressions, require unconfoundedness and very strong assumptions on temporal dependence. \citet{ballinari2024semiparametric} construct orthogonal impulse-response estimators for binary treatments under conditional unconfoundedness, whereas \citet{goncalves2026semiparametric} consider finite additive shifts of an i.i.d.\ structural shock. To our knowledge, no existing result covers an endogenous treatment instrumented by an external variable, a learner-agnostic adjustment for a high-dimensional state, and a single temporally-dependent trajectory, together with a theory that converts a generic prediction guarantee into the conditions the orthogonal estimator needs in a block cross-fitting scheme.

In light of these, we make the following contributions. First, we formulate the total response to a single counterfactual intervention, derive its semiparametric local-projection representation, and identify it through residualized IV moments; see \Cref{sec:model}. Then, we develop an orthogonal estimator with buffered block cross-fitting for a single time series with high-dimensional adjustment in \Cref{sec:estimation}. Under geometric \(\beta\)-mixing and regularity conditions, we derive a non-asymptotic estimation-error bound and, under additional stability conditions, joint asymptotic normality and feasible HAC inference at each fixed horizon; see \Cref{sec:inference}. Furthermore, we establish a transfer principle that converts prediction guarantees of a temporal learner into nuisance-error rates under buffered block cross-fitting in \Cref{sec:nuisance-learners}. This makes a framework for time-series prediction theory to be usable in orthogonal inference across different choices of nuisance learners, and can be applied to other similar settings. We exemplify how to verify these conditions for a factor model and sparse spline in the supplement, so that concrete results can be readily obtained. 

In the FRED-MD application in \Cref{sec:application}, we find a decline in housing starts at medium horizons after an instrumented tightening, which persists under multiple ways of residualization. To check that the procedure is reliable, we also run a plasmode experiment to establish our method's robustness. Simulations in \Cref{sec:simulations} also evaluate our method.

\paragraph{Notations and preliminaries.}
For a probability law $P$ and a fixed vector-valued function $f$, we write
\(
    \|f\|_{\psi_2,P}
    =\sup_{\|v\|_2=1}\inf\left\{c>0:
      \E_{U\sim P}\exp\left[\{v^\top f(U)\}^2/c^2\right]\le2
    \right\}
\)
for the distributional sub-Gaussian norm of $f$ under $P$. We use $\sigma(\cdot)$ and $\mathcal L(\cdot)$ to denote the sigma-field generated by, and the law of, a random variable or a random vector, respectively. We write \(a\vee b=\max\{a,b\}\), and use \(O_p(\cdot)\) to denote stochastic boundedness.
\section{Model setup and causal target}\label{sec:model}
Let $\{(\mathbf Y_t,\mathbf W_t,\mathbf Z_t,\mathbf U_t)\}_{t=1}^T$ be the observed time series, where \(\mathbf Y_t\in\mathbb R^p\) is the outcome vector, \(\mathbf W_t\in\mathbb R^s\) is the observed treatment random vector, \(\mathbf Z_t\in\mathbb R^r\) is an instrumental variable, and \(\mathbf U_t\in\mathbb R^{q_T}\) is an observed pre-treatment adjustment state. Let \(\mathcal H\subset\{0,1,\ldots,H\}\) denote the finite set of horizons of interest. The dimension $q_T$ may grow with $T$, while $p$, $s$, $r$, and $H$ are treated as fixed or small. We ask how the expected outcome \(\mathbf Y_{t+h}\) changes when only \(\mathbf W_t\) is externally shifted, holding the pre-treatment history fixed and allowing subsequent treatments and covariates to respond through their natural mechanisms.

The state \(\mathbf U_t\) may include lagged outcomes, lagged treatments, lagged instruments, and a large panel of pre-treatment covariates in applications. However, it should not contain post-treatment variables or mediators, as they may block the causal path from treatment to outcome. The instrument \(\mathbf Z_t\) is a source of exogenous variation in the treatment, which may be a policy shock or a randomized encouragement, which will be detailed in \Cref{sec:id}.

\subsection{Structural equation for the causal target}
The causal target is counterfactual by nature. Let \(\mathcal W\subseteq\mathbb R^s\) denote the set of all possible deterministic intervention or treatment values. For a fixed intervention \(w\in\mathcal W\), let \(\mathbf Y_{t+h}(w)\in\mathbb R^p\) denote the potential outcome at horizon $h$ that would be observed if the period-$t$ treatment were externally set to $w$.
We introduce \(\mathcal C_t\) to denote the full pre-assignment state information, represented as a sigma-field. It contains the observed pre-treatment history and may also include latent macroeconomic fundamentals that affect both the endogenous part of the treatment and future outcomes. The observed state \(\mathbf U_t\) is then a coarsening of this full information set, in the sense that \(\sigma(\mathbf U_t) \subsetneq \mathcal C_t\), serving as observed confounders. On the other hand, the treatment random variable \(\mathbf W_t\) is often unmeasurable with respect to \(\mathcal C_t\), since it may contain exogenous variation, including that induced by the instrument \(\mathbf Z_t\).

To formally derive the causal target, we impose a semiparametric linear response model for each \(h\in\mathcal H\) conditional on the full pre-assignment information. For any fixed intervention \(w\in\mathcal W\), we assume the structural equation
\begin{equation} \label{eq:po-linear-response}
    \E[\mathbf Y_{t+h}(w)\mid \mathcal C_t] =
    \mathbf B_h w+\mathbf m_h(\mathcal C_t),
\end{equation}
where \(\mathbf B_h\in\mathbb R^{p\times s}\) is the horizon-specific causal effect matrix and \(\mathbf m_h(\mathcal C_t)\in\mathbb R^p\) is an unknown function of the full pre-treatment state. Equivalently, for any increment \(\delta\in\mathbb R^s\) such that \(w,w+\delta\in\mathcal W\), \eqref{eq:po-linear-response} implies the constant conditional causal contrast
\begin{equation} \label{eq:causal-contrast}
    \E[\mathbf Y_{t+h}(w+\delta)-\mathbf Y_{t+h}(w)\mid \mathcal C_t]
    = \mathbf B_h\delta .
\end{equation}
Thus \(\mathbf B_h\) is defined as the causal response parameter. The horizon $h$ records how the effect of this intervention evolves over time.

To bridge the counterfactual target to the observed outcomes, we first decompose the potential outcome for each fixed $w$ into its conditional mean \eqref{eq:po-linear-response} and a residual as
\begin{equation}\label{eq:po-decomposition}
    \mathbf Y_{t+h}(w) = \mathbf B_h w + \mathbf m_h(\mathcal C_t) + \boldsymbol\eta_{t+h}(w),
\end{equation}
where, by construction, \(\E[\boldsymbol\eta_{t+h}(w)\mid \mathcal C_t]=\mathbf 0\).
To connect the counterfactual response to observed outcomes, we impose conditional mean exchangeability given \(\mathcal C_t\), so that \(\E[\mathbf Y_{t+h}(w)\mid\mathcal C_t,\mathbf W_t]=\E[\mathbf Y_{t+h}(w)\mid\mathcal C_t]\) for every \(w\in\mathcal W\),
which is weaker than conditional independence of the potential outcome and treatment \citep{imbens2004nonparametric}. Under \eqref{eq:po-decomposition}, it is equivalent to
\begin{equation}
    \E[\boldsymbol\eta_{t+h}(w)\mid \mathcal C_t,\mathbf W_t]=\mathbf 0 \quad \text{ for every } w\in\mathcal W.
    \label{eq:exchangeability}
\end{equation}
It then follows that $\E[\boldsymbol\eta_{t+h}(\mathbf W_t)\mid \mathcal C_t,\mathbf W_t] = \mathbf 0$ as the treatment random variable $\mathbf W_t$ is also supported on \(\mathcal W\). By the tower property, \(\E[\boldsymbol\eta_{t+h}(\mathbf W_t)\mid \mathcal C_t]=\mathbf 0\).

For the observed outcomes, the consistency condition dictates \(\mathbf Y_{t+h}=\mathbf Y_{t+h}(\mathbf W_t)\). Evaluating \eqref{eq:po-decomposition} at the observed treatment \(\mathbf W_t\) yields the full-state structural equation
\[
    \mathbf Y_{t+h}=\mathbf B_h\mathbf W_t+\mathbf m_h(\mathcal C_t)+\boldsymbol\eta_{t+h},
    \qquad \E[\boldsymbol\eta_{t+h}\mid \mathcal C_t]=\mathbf 0,
\]
where \(\boldsymbol\eta_{t+h} := \boldsymbol\eta_{t+h}(\mathbf W_t)\) is the future structural shock evaluated at \(\mathbf W_t\).

However, the complete information set \(\mathcal C_t\) is never observed in full. Therefore, we project the unobserved baseline \(\mathbf m_h(\mathcal C_t)\) onto the available state $\mathbf U_t$ by defining the observed-state nuisance function \(\mathbf g_{h,t}(\mathbf U_t)=\E[\mathbf m_h(\mathcal C_t)\mid \mathbf U_t]\) as a surrogate to be estimated. Substituting this into the full-state equation yields the observed-data local projection (LP) representation
\begin{equation}
    \mathbf Y_{t+h} = \mathbf B_h\mathbf W_t + \mathbf g_{h,t}(\mathbf U_t) + \boldsymbol\varepsilon_{t+h}, \qquad \E[\boldsymbol\varepsilon_{t+h}\mid \mathbf U_t]=\mathbf 0,
    \label{eq:model}
\end{equation}
where \(\boldsymbol\varepsilon_{t+h} = \mathbf m_h(\mathcal C_t)-\mathbf g_{h,t}(\mathbf U_t) + \boldsymbol\eta_{t+h}\) is the composite error term, whose zero conditional mean follows by the tower property.
The difference \(\mathbf m_h(\mathcal C_t)-\mathbf g_{h,t}(\mathbf U_t)\) is the unobserved confounding effect from the latent state, and the term \(\boldsymbol\eta_{t+h}\) is the pure structural future shock.
However, because the observed treatment \(\mathbf W_t\) may depend on latent components of \(\mathcal C_t\), e.g., private information not captured in \(\mathbf U_t\), we generally have \(\E[\boldsymbol\varepsilon_{t+h}\mid \mathbf W_t,\mathbf U_t]\neq\mathbf 0\). We therefore introduce an instrumental variable (IV) $\mathbf Z_t$ to overcome this endogeneity and identify $\mathbf B_h$, by isolating externally generated variation in \(\mathbf W_t\); see \Cref{sec:id} below. 

\begin{remark}
    Without the linear response restriction in \eqref{eq:po-linear-response}, an IV estimand would generally have a local or weighted interpretation analogous to LATE-type estimands. Such heterogeneous-effect interpretations go beyond the focus of this paper, as the target here is the global, horizon-specific causal response matrix \(\mathbf B_h\) in \eqref{eq:causal-contrast}. At the sample level, \eqref{eq:model} cleanly aligns our causal target with the structural equations commonly seen in the LP literature.
\end{remark}

\subsection{Identification of the causal estimand} \label{sec:id}
This section gives the population argument identifying the causal response matrix in \eqref{eq:causal-contrast} from the observed-data LP representation in \eqref{eq:model}. 
Fix \(h\in\mathcal H\), and let \(\mathcal T_h=\{1,\ldots,T-h\}\). All variables appearing in the following second moments are assumed square-integrable.

For each \(t\in\mathcal T_h\), define the observed-state conditional means
\begin{equation} \label{eq:conditional-mean}
    \boldsymbol\mu_{Y,h,t}(\mathbf U_t)=\E[\mathbf Y_{t+h}\mid \mathbf U_t],
    \qquad
    \boldsymbol\mu_{W,t}(\mathbf U_t)=\E[\mathbf W_t\mid \mathbf U_t],
    \qquad
    \boldsymbol\mu_{Z,t}(\mathbf U_t)=\E[\mathbf Z_t\mid \mathbf U_t],
\end{equation}
which are to be learnt by nuisance learners,
and the corresponding residualized variables
\begin{equation} \label{eq:residuals}
    \widetilde{\mathbf Y}_{t+h}=\mathbf Y_{t+h}-\boldsymbol\mu_{Y,h,t}(\mathbf U_t),
        \qquad
    \widetilde{\mathbf W}_{t}=\mathbf W_t-\boldsymbol\mu_{W,t}(\mathbf U_t),
        \qquad
    \widetilde{\mathbf Z}_{t}=\mathbf Z_t-\boldsymbol\mu_{Z,t}(\mathbf U_t).
\end{equation}
The residualization is with respect to the observed adjustment state $\mathbf U_t$ only, and again, \(\E[\boldsymbol\varepsilon_{t+h}\mid \mathbf W_t,\mathbf U_t]\) may be nonzero, indicating the presence of unmeasured confounding.
Note that we allow the conditional laws to vary with $t$ as we are currently dealing with population-level parameters, so each term is indexed with $t$.

\begin{assumption}[Conditional IV validity]\label{ass:iv}
For each \(h\in\mathcal H\), the instrument \(\mathbf Z_t\) satisfies:
\begin{enumerate}
    \item[(i)] (Conditional exclusion)
    For every \(t\in\mathcal T_h\), \(\E[\boldsymbol\varepsilon_{t+h}\mid \mathbf Z_t,\mathbf U_t]=\mathbf 0\).
    \item[(ii)] (Aggregate residualized relevance) Define the time-specific first-stage matrices
    \begin{equation} \label{eq:first-stage-matrix}
        \mathbf A_t = \E[\widetilde{\mathbf W}_t\widetilde{\mathbf Z}_t^\top]
        \in\mathbb R^{s\times r}.
    \end{equation}
    Their horizontal concatenation $\boldsymbol{\mathcal A}_{h,T}:=[\,\mathbf A_t:t\in\mathcal T_h\,]\in\mathbb R^{s\times r(T-h)}$ has full row rank $s$.
\end{enumerate}
\end{assumption}

Part (i) is the familiar exclusion restriction from the IV literature, here imposed on the observed-data error after adjusting for the state \(\mathbf U_t\). It requires the instrument to be unrelated to the composite error in \eqref{eq:model}, so that the influence of \(\mathbf Z_t\) on the future outcome \(\mathbf Y_{t+h}\) must travel only through the treatment \(\mathbf W_t\) and never along a direct path.

Part (ii) is the relevance condition stated in an aggregate form. Instead of asking each period to identify every treatment direction, it only requires that the periods jointly do so, and that equivalently, \(\sum_{t\in\mathcal T_h}\mathbf A_t\mathbf A_t^\top\) be nonsingular. As will be seen in \Cref{prop:identification}, any individual \(\mathbf A_t\) may be rank deficient, as long as other periods supply relevance along the directions it misses.  
Otherwise, some nonzero treatment direction is never shifted by the residualized instrument, and the corresponding part of \(\mathbf B_h\) cannot be recovered.
Under part (ii), the pseudo-inverse
$\boldsymbol{\mathcal A}_{h,T}^{\dagger} :=\boldsymbol{\mathcal A}_{h,T}^{\top}(\boldsymbol{\mathcal A}_{h,T}\boldsymbol{\mathcal A}_{h,T}^{\top})^{-1} \in \mathbb{R}^{r(T-h) \times s}$
satisfies \(\boldsymbol{\mathcal A}_{h,T}\boldsymbol{\mathcal A}_{h,T}^{\dagger}=\mathbf I_s\).

\begin{proposition}[Population identification]\label{prop:identification}
Suppose the observed-data representation \eqref{eq:model} holds for a fixed \(h\in\mathcal H\) and every \(t\in\mathcal T_h\). If \Cref{ass:iv} holds, then for every \(t\in\mathcal T_h\), writing $\mathbf M_{h,t}=\E[\widetilde{\mathbf Y}_{t+h}\widetilde{\mathbf Z}_t^\top] \in\mathbb R^{p\times r}$, we have
\begin{equation}\label{eq:identification-moment}
    \mathbf M_{h,t}=\mathbf B_h\mathbf A_t.
\end{equation}
Consequently, stacking these moment equations identifies \(\mathbf B_h\) as
\begin{equation}\label{eq:identification-formula}
    \mathbf B_h=\boldsymbol{\mathcal M}_{h,T}\boldsymbol{\mathcal A}_{h,T}^{\dagger},
\end{equation}
where \(\boldsymbol{\mathcal{M}}_{h,T}:=\big[\mathbf M_{h,t}:t\in\mathcal T_h\big] \in \mathbb{R}^{p \times r(T-h)}\).
\end{proposition}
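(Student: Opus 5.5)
The argument has two steps. First, residualize the observed-data representation \eqref{eq:model} with respect to $\mathbf U_t$ and apply the conditional exclusion restriction to obtain the per-period moment equation \eqref{eq:identification-moment}. Second, stack these equations across $t\in\mathcal T_h$ and invert using the full-row-rank part of \Cref{ass:iv}.

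\textbf{Residualized outcome equation.} Take conditional expectations of \eqref{eq:model} given $\mathbf U_t$. Since $\E[\boldsymbol\varepsilon_{t+h}\mid\mathbf U_t]=\mathbf 0$, this gives $\boldsymbol\mu_{Y,h,t}(\mathbf U_t)=\mathbf B_h\boldsymbol\mu_{W,t}(\mathbf U_t)+\mathbf g_{h,t}(\mathbf U_t)$. Subtracting this from \eqref{eq:model} eliminates the nuisance $\mathbf g_{h,t}$ and yields
\[
\widetilde{\mathbf Y}_{t+h}=\mathbf B_h\widetilde{\mathbf W}_t+\boldsymbol\varepsilon_{t+h}.
\]
Now multiply on the right by $\widetilde{\mathbf Z}_t^\top$ and take expectations. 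Square-integrability makes all the products integrable, and we obtain
\[
\mathbf M_{h,t}=\mathbf B_h\mathbf A_t+\E[\boldsymbol\varepsilon_{t+h}\widetilde{\mathbf Z}_t^\top].
\]

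\textbf{Killing the cross term.} This is the only step that needs care. The residualized instrument $\widetilde{\mathbf Z}_t=\mathbf Z_t-\boldsymbol\mu_{Z,t}(\mathbf U_t)$ is $\sigma(\mathbf Z_t,\mathbf U_t)$-measurable. Conditioning on $(\mathbf Z_t,\mathbf U_t)$ and using \Cref{ass:iv}(i) together with the tower property, we get
\[
\E[\boldsymbol\varepsilon_{t+h}\widetilde{\mathbf Z}_t^\top]=\E\big[\E[\boldsymbol\varepsilon_{t+h}\mid\mathbf Z_t,\mathbf U_t]\,\widetilde{\mathbf Z}_t^\top\big]=\mathbf 0.
\]
Integrability of the product is justified by Cauchy--Schwarz under the standing square-integrability assumption. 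This proves \eqref{eq:identification-moment} for every $t\in\mathcal T_h$.

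Note that exclusion is applied to $\boldsymbol\varepsilon_{t+h}$, not to the treatment. We never need $\E[\boldsymbol\varepsilon_{t+h}\mid\mathbf W_t,\mathbf U_t]=\mathbf 0$, which is exactly what allows for unmeasured confounding. Also, the residualization of $\mathbf Z_t$ is not needed for orthogonality to $\boldsymbol\varepsilon_{t+h}$; it is there to make the moment match $\mathbf A_t$.

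\textbf{Stacking and inversion.} Concatenate \eqref{eq:identification-moment} horizontally over $t\in\mathcal T_h$ to get $\boldsymbol{\mathcal M}_{h,T}=\mathbf B_h\boldsymbol{\mathcal A}_{h,T}$. By \Cref{ass:iv}(ii), $\boldsymbol{\mathcal A}_{h,T}\boldsymbol{\mathcal A}_{h,T}^\top$ is invertible, so $\boldsymbol{\mathcal A}_{h,T}\boldsymbol{\mathcal A}_{h,T}^\dagger=\mathbf I_s$. Right-multiplying by $\boldsymbol{\mathcal A}_{h,T}^\dagger$ gives \eqref{eq:identification-formula}.

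Uniqueness also follows. If some $\mathbf B'$ satisfies $\boldsymbol{\mathcal M}_{h,T}=\mathbf B'\boldsymbol{\mathcal A}_{h,T}$, then $(\mathbf B_h-\mathbf B')\boldsymbol{\mathcal A}_{h,T}=\mathbf 0$. Since $\boldsymbol{\mathcal A}_{h,T}$ has full row rank, this forces $\mathbf B'=\mathbf B_h$.
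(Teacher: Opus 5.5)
Your proof is correct and follows the paper's argument essentially step for step. You residualize \eqref{eq:model} given $\mathbf U_t$, kill $\E[\boldsymbol\varepsilon_{t+h}\widetilde{\mathbf Z}_t^\top]$ via the tower property and \Cref{ass:iv}(i), and then stack and right-multiply by $\boldsymbol{\mathcal A}_{h,T}^{\dagger}$; your uniqueness remark is a harmless extra.

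One small correction to your aside. Because $\E[\widetilde{\mathbf W}_t\mid\mathbf U_t]=\mathbf 0$ and $\E[\widetilde{\mathbf Y}_{t+h}\mid\mathbf U_t]=\mathbf 0$, replacing $\widetilde{\mathbf Z}_t$ by $\mathbf Z_t$ leaves both $\mathbf A_t$ and $\mathbf M_{h,t}$ unchanged. So residualizing the instrument is not needed to match $\mathbf A_t$ either; as the paper notes, its role is Neyman orthogonality and robustness to nuisance error at the estimation stage.
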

This proposition recovers the causal response matrix of \eqref{eq:causal-contrast} from observable second moments, using only the two IV conditions. Strictly speaking, the instrument \(\mathbf Z_t\) is not required to be residualized by $\mathbf U_t$ for population identification, but in practice, if the nuisance learners fail to remove all the influence of \(\mathbf U_t\) from \(\mathbf Y_{t+h}\) and \(\mathbf W_t\), the effective error may contain part of \(\mathbf g_{h,t}(\mathbf U_t)\), which could be correlated with the original instrument through the observed state. In this way, spurious correlation between the instrument and the error may arise. Another reason is that the population identity \eqref{eq:identification-moment} pins down the target of estimation, and its estimation procedure in \Cref{sec:estimation} will show that residualization on \(\mathbf Z_t\) ensures robustness to nuisance errors via Neyman-orthogonality. 

\subsection{Alternative interpretations of the model}
\label{sec:alternative-interpretations}
We relate \(\mathbf B_h\) to structural impulse responses and to causal contrasts used in sequential-treatment analysis. These comparisons depend on how the intervention is imposed and on how subsequent treatments are determined.

\paragraph{Structural impulse responses.}
From the perspective of the structural time series, the dynamics of treatment and outcome is jointly generated from a set of unobserved structural shocks that propagate through the system, and an intervention is imagined as exogenous shift in some parts of a shock $\bm\zeta_{t}$. 
In the stable structural vector MA (SVMA) representation
\begin{equation}\label{eq:svma-wold}
    [\mathbf W_t^\top \quad \mathbf Y_t^\top]^\top
    =\sum_{j\ge 0} \bm\Theta_j \bm\zeta_{t-j},
\end{equation}
the observables are driven by unobserved i.i.d.\ mean-zero structural shocks $\bm\zeta_t$, and the impact matrix $\bm\Theta_j$ records how a shock at date $t-j$ moves the observables $j$ periods later. 
The form \eqref{eq:svma-wold} also encompasses the stable finite-order SVAR by the Wold representation theorem. In this language, $\mathbf U_t$ plays the role of the information available before the period-$t$ shock, such as the lagged state in a VAR. The nuisance $\mathbf g_{h,t}$ aims to forecast the part of the outcome after accounting for the predictable component of $\mathbf W_t$. The instrument $\mathbf Z_t$ is called a proxy for the target shock in $\mathbf W_t$ in the sense that it is only non-orthogonal to target shocks.
After removing the components predictable from \(\mathbf U_t\), taking moments with the residualized proxy isolates the target-shock contributions to \(\mathbf W_t\) and \(\mathbf Y_{t+h}\). The resulting normalized structural impulse response equals \(\mathbf B_h\) under the proxy-SVMA assumptions of \Cref{ex:svma} \citep{stock2018identification,mertens2013dynamic,montielolea2021inference}. This gives \(\mathbf B_h\) the economic interpretation of being a response to structural shocks normalized by their treatment impact. We can therefore estimate these structural responses through horizon-specific IV moments without estimating a transition model for the full system.

\paragraph{Sequential-treatment effects.}
In sequential-treatment analysis, causal contrasts depend on the continuation regime \citep{robins1986new,robins1994correcting,murphy2003optimal}. For example, a structural nested mean model (SNM) can describe the effect of changing the current treatment while fixing subsequent treatments at reference values. In contrast, our parameter \(\mathbf B_h\) includes the effects transmitted through changes in later treatments induced by the initial intervention. This generally differs from the effect obtained by holding later treatments fixed. We impose \eqref{eq:po-linear-response} directly on the total response and identify it through the period-$t$ instrument, without specifying effects under other continuation regimes. Our framework thus permits IV inference on this total response from a single dependent trajectory, even when flexible adjustment for the high-dimensional $\mathbf U_t$ cannot remove treatment confounding.

\section{Orthogonal estimation via block cross-fitting} \label{sec:estimation}

This section proposes an estimation pipeline for \(\mathbf B_h\) via debiased machine learning (DML) route, in which flexible learners first fit the conditional means and the causal target is then estimated from a single moment equation, and then states the requirements for the pipeline.

Before proceeding, note that identification in \Cref{prop:identification} placed no restriction on how the conditional laws evolve across time, since the moment equation \eqref{eq:identification-moment} holds for each $t$ even when the state distribution drifts. However, estimation cannot proceed at that level of generality. A single realized trajectory supplies only one draw of \((\mathbf Y_{t+h},\mathbf W_t,\mathbf Z_t)\) in each period, whereas the conditional means \(\boldsymbol\mu_{Y,h}\), \(\boldsymbol\mu_W\), and \(\boldsymbol\mu_Z\) can be recovered only by pooling observations over $t$ so that the sample size can increase. Basically speaking, such pooling is informative when these functions do not themselves vary with $t$. We therefore work in a temporally homogeneous regime where \(\boldsymbol\mu_{Y,h,t}=\boldsymbol\mu_{Y,h}\), \(\boldsymbol\mu_{W,t}=\boldsymbol\mu_W\), and \(\boldsymbol\mu_{Z,t}=\boldsymbol\mu_Z\) are constant in $t$; see \Cref{ass:estimation-primitive}(i) for details. This constrains only three regression functions, so it is much weaker than stationarity of the entire process, even though stationarity is commonly seen in time series analysis. Nuisance functions that drift smoothly with $t$ could in principle be tracked by localizing in time, but that extension may be left for future work.

Still, time homogeneity of the conditional means does not require the residualized first-stage moment to be constant, because \(\mathbf A_t\) also depends on the time-specific distribution of the residuals. Therefore, we define the averaged moments
\begin{equation}\label{eq:average-first-stage}
    \overline{\mathbf M}_{h,T}
    =\frac{1}{n_h}\sum_{t=1}^{n_h}\mathbf M_{h,t} \in \mathbb R^{p\times r},
    \qquad
    \overline{\mathbf A}_{h,T}
    =\frac{1}{n_h}\sum_{t=1}^{n_h}\mathbf A_t \in \mathbb R^{s\times r},
\end{equation}
Averaging \eqref{eq:identification-moment} over the estimation period gives
\(
    \overline{\mathbf M}_{h,T}=\mathbf B_h\overline{\mathbf A}_{h,T}.
\)
Thus, we require $r\ge s$ and that \(\overline{\mathbf A}_{h,T}\) has full row rank, and then, in population level,
\[
    \mathbf B_h
    =\overline{\mathbf M}_{h,T}\overline{\mathbf A}_{h,T}^{\dagger},
    \qquad
    \overline{\mathbf A}_{h,T}^{\dagger}
    =\overline{\mathbf A}_{h,T}^{\top}
    (\overline{\mathbf A}_{h,T}\overline{\mathbf A}_{h,T}^{\top})^{-1}.
\]
The remainder of this section first introduces a Neyman-orthogonal score which prevents first-order errors in learned nuisances from perturbing the estimating equation, then gives the cross-fitted estimator, and finally records the learner-facing conditions used in the theory.

\subsection{Orthogonal IV score}

Let \(\eta=(\mu_Y,\mu_W,\mu_Z)\) collect generic nuisance functions of \(\mathbf U_t\). For fixed $h$, define the residualized IV moment function, which serves as the Neyman-orthogonal score,
\begin{equation}\label{eq:orthogonal-score}
    \psi_{t,h}(\mathbf B,\eta)
    = \{\mathbf Y_{t+h}-\mu_Y(\mathbf U_t) -\mathbf B(\mathbf W_t-\mu_W(\mathbf U_t))\}
    \{\mathbf Z_t-\mu_Z(\mathbf U_t)\}^\top .
\end{equation}
Recall the definition of the population residuals in \eqref{eq:residuals}. At the true nuisance value \(\eta_{0,h}=(\boldsymbol\mu_{Y,h},\boldsymbol\mu_W,\boldsymbol\mu_Z)\) and the causal parameter, define the oracle score
\begin{equation}\label{eq:oracle-score}
    \psi^0_{t,h}
    :=\psi_{t,h}(\mathbf B_h,\eta_{0,h})
    =\boldsymbol\varepsilon_{t+h}\widetilde{\mathbf Z}_t^\top.
\end{equation}
The oracle score is unbiased, i.e., $\E(\psi^0_{t,h})=\mathbf 0_{p\times r}$. Moreover, the score function \eqref{eq:orthogonal-score} is Neyman orthogonal, in the sense that the Gateaux derivative of \(\E[\psi_{t,h}(\mathbf B_h,\eta_{0,h}+u a)]\) at $u=0$ is zero for any square-integrable $\mathbf U_t$-measurable perturbation $a$ and the population moment is thus unaffected by first-order nuisance errors; see \Cref{prop:supp-orthogonality}.

This score is the residualized form of the IV moment in \Cref{prop:identification}. If the nuisance functions were known, averaging the population moments would give \(\overline{\mathbf M}_{h,T}=\mathbf B_h\overline{\mathbf A}_{h,T}\). The feasible estimator below replaces the population term with sample estimates.

\subsection{Cross-fitted residualized IV estimator}

Fix a horizon \(h\in\mathcal H\) and require $n_h:=T-h>2$.
For any time-indexed array $f_t$, define
\[
    P_{n,h}f=\frac{1}{n_h}\sum_{t=1}^{n_h} f_t .
\]
The estimator first constructs out-of-sample estimates of the nuisance functions \(\boldsymbol\mu_{Y,h}\), \(\boldsymbol\mu_W\), and \(\boldsymbol\mu_Z\), i.e., the nuisance functions are trained in the training sample and evaluated in the validation sample.
In time-series settings, such sample splitting should be block-wise, and the blocks take turns to serve as the validation set. In each turn, the training set should be separated from the validation set by a growing buffer $b_T$ to reduce the effect of first-order bias of nuisance estimation.
Specifically, a fixed number $L$ of contiguous, balanced validation blocks \(\mathcal I_{\ell,h}\) partition $\{1,\ldots,n_h\}$. For fold $\ell$, form the admissible training-index set
\begin{equation} \label{eq:training-set}
    \mathcal J_{\ell,h}
    :=\left\{s\in\{1,\ldots,n_h\}:
    \min_{t\in\mathcal I_{\ell,h}}|s-t|\ge b_T\right\},
\end{equation}
and the nuisance learners for this fold  \(\mathcal I_{\ell,h}\) are trained only on
\(
    \{(\mathbf Y_{s+h},\mathbf W_s,\mathbf Z_s,\mathbf U_s):
    s\in\mathcal J_{\ell,h}\}
\).

In this way, all time points are used for validation exactly once, but may be used for training many different validation folds. Let \(\widehat{\boldsymbol\mu}_{Y,h}(\mathbf U_t)\), \(\widehat{\boldsymbol\mu}_W(\mathbf U_t)\), and \(\widehat{\boldsymbol\mu}_Z(\mathbf U_t)\) denote the cross-fitted nuisance estimates evaluated at time $t$. Define the estimated residuals
\begin{equation}\label{eq:residuals-estimated}
    \widehat{\mathbf e}_{Y,t,h}
    = \mathbf Y_{t+h}-\widehat{\boldsymbol\mu}_{Y,h}(\mathbf U_t),
    \qquad
    \widehat{\mathbf e}_{W,t}
    = \mathbf W_t-\widehat{\boldsymbol\mu}_W(\mathbf U_t),
    \qquad
    \widehat{\mathbf e}_{Z,t}
    = \mathbf Z_t-\widehat{\boldsymbol\mu}_Z(\mathbf U_t).
\end{equation}
Note that the cross-fitted nuisance functions $\mathbf U_t \mapsto \widehat{\boldsymbol\mu}_\cdot (\mathbf U_t)$ depend on which validation fold the $\mathbf U_t$ belongs to, and different learners may be used for different folds. Set
\begin{equation}
    \widehat{\mathbf M}_h
    = P_{n,h}\!\left(
    \widehat{\mathbf e}_{Y,t,h}\widehat{\mathbf e}_{Z,t}^\top
    \right)\in\mathbb R^{p\times r},
    \qquad
    \widehat{\mathbf A}_h
    = P_{n,h}\!\left(
    \widehat{\mathbf e}_{W,t}\widehat{\mathbf e}_{Z,t}^\top
    \right)\in\mathbb R^{s\times r}.
    \label{eq:sample-moments}
\end{equation}
When \(\widehat{\mathbf A}_h\) has full row rank, which will be shown to hold with high probability, we recall that $\widehat{\mathbf A}_h^\dagger = \widehat{\mathbf A}_h^\top (\widehat{\mathbf A}_h\widehat{\mathbf A}_h^\top)^{-1} \in\mathbb R^{r\times s}$,
and the causal response matrix can be estimated by
\begin{equation}\label{eq:orthogonal-estimator}
    \widehat{\mathbf B}_h=\widehat{\mathbf M}_h\widehat{\mathbf A}_h^\dagger \in\mathbb R^{p\times s}.
\end{equation}
Now we describe the whole cross-fitting scheme in \Cref{alg:orthogonal-lpiv} that defines the estimator. It fits the out-of-sample nuisance means block by block, residualizes on each block, aggregates the residuals into \(\widehat{\mathbf M}_h\) and \(\widehat{\mathbf A}_h\), and solves the low-dimensional IV step for \(\widehat{\mathbf B}_h\).

\begin{algorithm}[H]
\caption{Orthogonal residualized IV estimator via block cross-fitting regime}
\label{alg:orthogonal-lpiv}
\begin{algorithmic}[1]
\REQUIRE Horizon $h$, observations \(\{(\mathbf Y_{t+h},\mathbf W_t,\mathbf Z_t,\mathbf U_t):1\le t\le n_h\}\), $L$ balanced contiguous blocked folds, buffer length $b_T$, and nuisance learners.
\FOR{\(\ell=1,2,\ldots,L\)}
\STATE Fit \(\boldsymbol\mu_{Y,h}\), \(\boldsymbol\mu_W\), and \(\boldsymbol\mu_Z\) using only observations with indices in \(\mathcal J_{\ell,h}\) formed via \eqref{eq:training-set}.
\label{line:fit-nuisance}
\STATE Calculate the residuals \eqref{eq:residuals-estimated} for all $t\in\mathcal I_{\ell,h}$ via the fitted nuisance functions in line \ref{line:fit-nuisance}.
\ENDFOR
\STATE Aggregate the residuals over all folds and compute \(\widehat{\mathbf M}_h\) and \(\widehat{\mathbf A}_h\) via \eqref{eq:sample-moments}.
\ENSURE \(\widehat{\mathbf B}_h=\widehat{\mathbf M}_h\widehat{\mathbf A}_h^\dagger\) if \(\widehat{\mathbf A}_h\) has full row rank; otherwise report first-stage failure.
\end{algorithmic}
\end{algorithm}

In the exactly identified case $r=s$, \eqref{eq:orthogonal-estimator} reduces to the usual residualized IV ratio \(\widehat{\mathbf B}_h=\widehat{\mathbf M}_h\widehat{\mathbf A}_h^{-1}\), which is the sample analogue of the averaged population identity following \eqref{eq:average-first-stage}, estimated by general methods-of-moments based on the sample score \(P_{n,h}\psi_{t,h}(\mathbf B,\widehat\eta)\).

\subsection{Conditions on the learned nuisances}

To use the orthogonality property, the fitted nuisance functions need to be accurate and well behaved to some extent. To characterize this, we first define the cross-fitted nuisance errors
\[
    \boldsymbol\Delta_{Y,t,h}
    = \widehat{\boldsymbol\mu}_{Y,h}(\mathbf U_t)-\boldsymbol\mu_{Y,h}(\mathbf U_t),
    \qquad
    \boldsymbol\Delta_{W,t}
    = \widehat{\boldsymbol\mu}_{W}(\mathbf U_t)-\boldsymbol\mu_{W}(\mathbf U_t),
    \qquad
    \boldsymbol\Delta_{Z,t}
    = \widehat{\boldsymbol\mu}_{Z}(\mathbf U_t)-\boldsymbol\mu_{Z}(\mathbf U_t).
\]
Write $\boldsymbol\mu_Y=\boldsymbol\mu_{Y,h}$ for convenience, and, for
$a\in\{Y,W,Z\}$, define fold-specific error functions
\[
    \boldsymbol\delta_{a,\ell}(u)
    =\widehat{\boldsymbol\mu}^{(-\ell)}_a(u)-\boldsymbol\mu_a(u).
\]
Thus, for $t\in\mathcal I_{\ell,h}$,
$\boldsymbol\Delta_{a,t}=\boldsymbol\delta_{a,\ell}(\mathbf U_t)$, with the
horizon index $h$ on the outcome error suppressed when no confusion can arise.
For $P_t=\mathcal L(\mathbf U_t)$, define
\[
    \|\boldsymbol\Delta_a\|_{n,h,2}
    =\left(P_{n,h}\|\boldsymbol\Delta_{a,t}\|_2^2\right)^{1/2},
    \qquad
    \|\boldsymbol\Delta_a\|_{L_2}
    =\max_{1\le\ell\le L}\sup_{t\in\mathcal I_{\ell,h}}
      \left(\E_{U\sim P_t}
      \|\boldsymbol\delta_{a,\ell}(U)\|_2^2\right)^{1/2}.
\]
In the population norm, the trained function is held fixed and only the fresh
state $U\sim P_t$ is integrated out. The following condition characterizes the size of these fitted error functions.
\begin{condition}[Cross-fitted nuisance prediction rates]\label{cond:nuisance-rate}
For the fixed horizon $h$, there is an event \(\mathcal E_{T,h}\) with \(\Pr(\mathcal E_{T,h})\to1\) on which the cross-fitted nuisance estimates satisfy the two controls below. Throughout, \(\|\cdot\|\) denotes either the population \(L_2\) norm, evaluated uniformly over the time-specific laws of \(\mathbf U_t\) in each validation block with the trained nuisance held fixed, or its empirical counterpart \(\|\cdot\|_{n,h,2}\), and both versions are required.
\begin{enumerate}
    \item[(i)] \textup{(Individual logarithmic-rate consistency)} The three nuisance errors satisfy the logarithmically strengthened consistency requirement
    \begin{equation} \label{eq:def-rbar_T}
        \max\{\|\boldsymbol\Delta_{Y,h}\|,\|\boldsymbol\Delta_W\|,\|\boldsymbol\Delta_Z\|\}
        \le \bar r_T ,
        \qquad \bar r_T\log n_h\to0 .
    \end{equation}
    \item[(ii)] \textup{(Product rate)} The instrument error and the larger of the remaining two satisfy
    \begin{equation} \label{eq:def-r_T}
        \|\boldsymbol\Delta_Z\|\cdot\max\{\|\boldsymbol\Delta_{Y,h}\|,\|\boldsymbol\Delta_W\|\}
        \le r_T^2 ,
        \qquad r_T= o(T^{-1/4}) .
    \end{equation}
\end{enumerate}
\end{condition}

Part (ii) is the binding requirement. It governs the second-order bias of the orthogonal score, which enters the estimation error only through the product form, without requiring a fast rate on each nuisance separately. Part (i) is much weaker and requires each nuisance error to vanish faster than $\log^{-1} n_h$. This rate controls the first-order fluctuation of the empirical score.
The product structure is the dependent-data counterpart of the refined condition for the Robinson-style partially linear IV score in \citet{chernozhukov2018double}. In particular, when the instrument is close to exogenous with respect to the state, its conditional mean \(\boldsymbol\mu_Z\) is nearly constant and \(\|\boldsymbol\Delta_Z\|\) is small, so Part (ii) can hold even when the outcome and treatment nuisances are estimated at a considerably slower rate. Many econometric applications can exploit this asymmetry, where externally identified instruments make \(\boldsymbol\mu_Z\) simple to learn, so that the policy rule \(\boldsymbol\mu_W\) and the outcome projection can carry more of the high-dimensional burden.

Such a condition may also be required in settings from other topics where first-stage errors need to be controlled for a trajectory of dependent observations. There is no existing result tailored to the block cross-fitting scheme in \Cref{alg:orthogonal-lpiv} on how to verify such rates for an off-the-shelf temporal learner, and \Cref{thm:regression-rate-transfer} provides a general recipe; see \Cref{sec:nuisance-learners}.

Apart from the size, the tails of the nuisance errors also need to be controlled for empirical process analysis, and we require that errors for the fitted functions have uniformly light tails on the corresponding validation block.
\begin{condition}[Tail envelope for fitted nuisance errors]
\label{cond:sg-nuisance-error}
Let \(\mathcal G_\ell\) be the $\sigma$-field generated by the training observations indexed by \(\mathcal J_{\ell,h}\) and any independent auxiliary learner randomization for fold \(\ell\), so that the functions
$\boldsymbol\delta_{a,\ell}$ are fixed conditional on $\mathcal G_\ell$.
On the event \(\mathcal E_{T,h}\) in \Cref{cond:nuisance-rate}, and for $P_t=\mathcal L(\mathbf U_t)$, there is a model-dependent constant $\bar K>0$, such that
\[
    \sup_{1\le\ell\le L}\sup_{t\in\mathcal I_{\ell,h}}
    \max_{a\in\{Y,W,Z\}}
    \|\boldsymbol\delta_{a,\ell}\|_{\psi_2,P_t}
    \le \bar K.
\]
\end{condition}
Both conditions constrain the fitted nuisances rather than the data-generating process, and both can be checked directly for concrete learners. The tail envelope holds, for instance, whenever the learners are uniformly bounded, or clipped to a fixed range with bounded truth and the fitted error then inherits a fixed sub-Gaussian envelope. For sparse linear models, sub-Gaussian covariates together with a suitable coefficient-error bound imply the condition.

\section{Statistical properties}\label{sec:inference}
With the estimator and the learner-facing conditions in place, we now establish its statistical behavior. We first give a non-asymptotic estimation error bound and an oracle expansion for \(\widehat{\mathbf B}_h\). We then use that expansion for joint fixed-horizon inference and, as a scalar specialization, feasible inference for causal contrasts.

\subsection{Estimation theory}
\label{subsec:est-err}
Apart from the conditions on nuisance errors in \Cref{sec:estimation}, the estimation error bound needs control on the underlying process, governing its temporal stability, dependence, and tails, characterized by following assumptions.

\begin{assumption}[Temporal stability, mixing, and tails]\label{ass:estimation-primitive}
Fix \(h\in\mathcal H\). The following conditions hold with their constants being model-dependent but not depending on $T$.
\begin{enumerate}
    \item[(i)] The conditional mean functions \eqref{eq:conditional-mean} are time-invariant.
    \item[(ii)] The joint process \(\mathcal X_{t,h}=(\boldsymbol\varepsilon_{t+h},\widetilde{\mathbf W}_t,\widetilde{\mathbf Z}_t,\mathbf U_t)\) is $\beta$-mixing, i.e, there exist some $C_\beta,c_\beta>0$ such that
    \[
        \beta_h(k) :=
        \sup_{j\in\mathbb Z}\beta\left(
        \sigma(\mathcal X_{t,h}:t\le j),
        \sigma(\mathcal X_{t,h}:t\ge j+k)
        \right)
        \le C_\beta\exp(-c_\beta k)
    \]
    \item[(iii)] The residuals have uniform sub-Gaussian bounds
    \[
        \sup_t\|\widetilde{\mathbf W}_t\|_\sgnorm\le K_W,\qquad
        \sup_t\|\widetilde{\mathbf Z}_t\mid\mathbf U_t\|_\sgnorm\le K_Z,\qquad
        \sup_t\|\boldsymbol\varepsilon_{t+h}\mid\mathbf U_t\|_\sgnorm\le K_\varepsilon.
    \]
    Also, \(\|\mathbf B_h\|_{\op}\le C_B\) for some constant \(C_B>0\).
\end{enumerate}
\end{assumption}

Part (i) is the temporal-homogeneity restriction motivated at the start of \Cref{sec:estimation}, restricting only the forms of the three conditional means and not the distribution of the entire process.
Under \Cref{ass:estimation-primitive} and \Cref{ass:iv}(i), each coordinate of \(\boldsymbol\varepsilon_{t+h}\widetilde{\mathbf Z}_t^\top\) and of \(\widetilde{\mathbf W}_t\widetilde{\mathbf Z}_t^\top-\mathbf A_t\) is centered, uniformly sub-exponential, and geometrically mixing.
Such constraints are standard for deriving concentration inequalities for dependent processes, and they are used to control the leading stochastic terms in the estimation error.

\Cref{ass:estimation-primitive}(ii) quantifies serial dependence. To control its effect on cross-fitting estimation, the training and validation sets in \Cref{alg:orthogonal-lpiv} need enough probabilistic separation.
\begin{assumption}[Block cross-fitting regularization]\label{ass:cross-fitting}
The buffer in \Cref{alg:orthogonal-lpiv} satisfies $b_T\to\infty$ and $b_T/n_h\to0$ as $T\to\infty$.
If a learner uses any auxiliary randomization for training, such randomness is independent of the data process.
\end{assumption}

We further require that the average first-stage moment \(\overline{\mathbf A}_{h,T} = n_h^{-1}\sum_{t=1}^{n_h}\mathbf A_{t}\) is well-conditioned, so that the IV equation can be solved stably via the form \eqref{eq:orthogonal-estimator}.
\begin{assumption}[Average residualized relevance]\label{ass:average-relevance}
For the fixed horizon $h$, there are a fixed $T_0<\infty$ and a model-dependent constant $\underline\sigma_A>0$ such that
\[
    \sigma_{\min}(\overline{\mathbf A}_{h,T})
    \ge \underline\sigma_A
    \qquad\text{for every }T\ge T_0.
\]
\end{assumption}
This assumption is stronger than \Cref{ass:iv}(ii), prohibiting the full row rank of the horizontally stacked matrix $\boldsymbol{\mathcal A}_{h,T}$ from cancellation when $\mathbf A_t$'s are averaged over time. Still, it allows the individual \(\mathbf A_t\)'s to be rank-deficient and have no limiting distribution, so it is much weaker than the usual fixed-dimensional IV relevance condition.
\begin{theorem}[Orthogonal estimation and linear expansion]\label{thm:orthogonal-estimation}
Fix \(h\in\mathcal H\). Suppose the observed-data model \eqref{eq:model} and \Cref{ass:iv,ass:estimation-primitive,ass:average-relevance,ass:cross-fitting} hold. Under \Cref{cond:nuisance-rate,cond:sg-nuisance-error}, the \(\widehat{\mathbf B}_h\) estimated via \Cref{alg:orthogonal-lpiv} satisfies the following.
\begin{enumerate}
    \item[(i)] \textup{(Non-asymptotic consistency)} For every $c>0$ and sufficiently large $T$, with probability at least
    \begin{equation}
        \Pr(\mathcal E_{T,h})-n_h^{-c}-3L\, C_\beta\exp(-c_\beta b_T),
    \end{equation}
    the matrix \(\widehat{\mathbf A}_h\) has full row rank and simultaneously, writing \(\rho_{n,h}=\sqrt{\log n_h/n_h}\),
    \begin{equation}\label{eq:consistency-rate}
        \|\widehat{\mathbf B}_h-\mathbf B_h\|_\Fr
        \le C\big\{\sqrt{1+c}\,\rho_{n,h}+r_T^2\big\}
        \lesssim \sqrt{\frac{(1+c)\log n_h}{n_h}}
    \end{equation}
    where the constant $C$ depends only on the fixed dimensions and fold number $(p,s,r,L)$ and the model primitives $(C_\beta,c_\beta,K_W,K_Z,K_\varepsilon,C_B,\bar K,\underline\sigma_A)$, but not on $T$ and $c$.
    \item[(ii)] \textup{(Asymptotic linear expansion)} The estimator admits the expansion
    \begin{equation}\label{eq:orthogonal-linear-expansion}
        \widehat{\mathbf B}_h-\mathbf B_h
        = P_{n,h}(\psi^0_{t,h}\overline{\mathbf A}_{h,T}^\dagger)
        + \mathbf R_{T,h},
    \end{equation}
    where \(\|\mathbf R_{T,h}\|_\Fr=o_p(n_h^{-1/2})\). Consequently, \(\|\widehat{\mathbf B}_h-\mathbf B_h\|_\Fr=O_p(n_h^{-1/2})\).
\end{enumerate}
\end{theorem}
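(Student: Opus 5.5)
The plan is to decompose the two sample moments around their oracle versions and then propagate the error through the map $(\mathbf M,\mathbf A)\mapsto \mathbf M\mathbf A^\dagger$.

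\textbf{Decomposition.} Write $\widehat{\mathbf e}_{Y,t,h}=\widetilde{\mathbf Y}_{t+h}-\boldsymbol\Delta_{Y,t,h}$, and similarly for $W$ and $Z$. Using \eqref{eq:model} and time invariance, $\widetilde{\mathbf Y}_{t+h}=\mathbf B_h\widetilde{\mathbf W}_t+\boldsymbol\varepsilon_{t+h}$. Expanding $\widehat{\mathbf M}_h-\mathbf B_h\widehat{\mathbf A}_h=P_{n,h}\psi_{t,h}(\mathbf B_h,\widehat\eta)$ gives
\[
P_{n,h}\psi^0_{t,h}
-P_{n,h}\{(\boldsymbol\Delta_{Y,t,h}-\mathbf B_h\boldsymbol\Delta_{W,t})\widetilde{\mathbf Z}_t^\top\}
-P_{n,h}\{\boldsymbol\varepsilon_{t+h}\boldsymbol\Delta_{Z,t}^\top\}
+P_{n,h}\{(\boldsymbol\Delta_{Y,t,h}-\mathbf B_h\boldsymbol\Delta_{W,t})\boldsymbol\Delta_{Z,t}^\top\}.
\]
Likewise,
\[
\widehat{\mathbf A}_h-\overline{\mathbf A}_{h,T}=P_{n,h}(\widetilde{\mathbf W}_t\widetilde{\mathbf Z}_t^\top-\mathbf A_t)-P_{n,h}(\boldsymbol\Delta_{W,t}\widetilde{\mathbf Z}_t^\top)-P_{n,h}(\widetilde{\mathbf W}_t\boldsymbol\Delta_{Z,t}^\top)+P_{n,h}(\boldsymbol\Delta_{W,t}\boldsymbol\Delta_{Z,t}^\top).
\]

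\textbf{Bounding the terms.} The product terms are bounded by Cauchy--Schwarz in the empirical norm $\|\cdot\|_{n,h,2}$, which is why \Cref{cond:nuisance-rate} requires both norms. This gives $r_T^2$ on $\mathcal E_{T,h}$, using $\|\mathbf B_h\|_\op\le C_B$.

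The oracle averages $P_{n,h}\psi^0_{t,h}$ and $P_{n,h}(\widetilde{\mathbf W}_t\widetilde{\mathbf Z}_t^\top-\mathbf A_t)$ are centered, sub-exponential and geometrically $\beta$-mixing by \Cref{ass:estimation-primitive} and \Cref{ass:iv}(i). A Bernstein-type inequality for $\beta$-mixing sequences (for example, Merlevède--Peligrad--Rio, via blocking) bounds them by $C\sqrt{1+c}\,\rho_{n,h}$ with probability $1-n_h^{-c}/2$. The same bound gives $O_p(n_h^{-1/2})$ through a mixing CLT-scale variance bound.

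The cross terms, such as $P_{n,h}(\boldsymbol\delta_{Y,\ell}(\mathbf U_t)\widetilde{\mathbf Z}_t^\top)$ over $t\in\mathcal I_{\ell,h}$, are the crux. Here I would use Berbee's coupling. Because the buffer separates $\mathcal I_{\ell,h}$ from $\mathcal J_{\ell,h}$ by $b_T$, the validation block can be replaced by a copy independent of $\mathcal G_\ell$ except on an event of probability at most $2C_\beta e^{-c_\beta b_T}$ per fold; the training set sits on both sides, which accounts for the factor 3L. On the coupled copy $\boldsymbol\delta_{a,\ell}$ is a fixed function, and $\E[\widetilde{\mathbf Z}_t\mid\mathbf U_t]=0$ and $\E[\boldsymbol\varepsilon_{t+h}\mid\mathbf U_t]=0$. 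Each summand is therefore a centered, mixing, sub-exponential array with $\psi_1$ norm at most $\bar K\max(K_Z,K_\varepsilon)$ by \Cref{cond:sg-nuisance-error}, and its variance is at most $C\|\boldsymbol\Delta_a\|_{L_2}^2$. A conditional Bernstein bound then gives the size $\bar r_T\sqrt{\log n_h/n_h}+(\log n_h)^2/n_h$. This term is $o(n_h^{-1/2})$ provided $\bar r_T\log n_h\to0$, which is precisely where the logarithmic strengthening in \Cref{cond:nuisance-rate}(i) enters, since the mixing Bernstein tail carries an extra log factor.

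\textbf{Inversion.} On the intersection of these events, $\|\widehat{\mathbf A}_h-\overline{\mathbf A}_{h,T}\|\lesssim\rho_{n,h}+r_T^2\to0$. By \Cref{ass:average-relevance} and Weyl's inequality, $\sigma_{\min}(\widehat{\mathbf A}_h)\ge\underline\sigma_A/2$, so $\widehat{\mathbf A}_h$ has full row rank. Then
\[
\widehat{\mathbf B}_h-\mathbf B_h=(\widehat{\mathbf M}_h-\mathbf B_h\widehat{\mathbf A}_h)\widehat{\mathbf A}_h^\dagger,
\]
because $\widehat{\mathbf A}_h\widehat{\mathbf A}_h^\dagger=\mathbf I_s$. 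This yields part (i) via $\|\widehat{\mathbf A}_h^\dagger\|_\op\le 2/\underline\sigma_A$ and a union bound over folds and events.

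For part (ii), write $\widehat{\mathbf A}_h^\dagger-\overline{\mathbf A}_{h,T}^\dagger=O(\|\widehat{\mathbf A}_h-\overline{\mathbf A}_{h,T}\|)$ using the smoothness of the pseudo-inverse near full rank. Multiplied by $P_{n,h}\psi^0=O_p(n_h^{-1/2})$, this gives $o_p(n_h^{-1/2})$. The remaining nuisance terms are $o_p(n_h^{-1/2})$ because $r_T^2=o(T^{-1/2})$ and by the cross-term bound above.

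The main obstacle is the cross-term step. Making the coupling rigorous when the training set straddles the validation block, with a coupling for each side, and keeping the Bernstein constant uniform in the random fitted function requires conditioning on $\mathcal G_\ell$ through the coupled process.
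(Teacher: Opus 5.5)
Your proposal is correct and follows essentially the same route as the paper. The paper uses the same oracle/first-order/quadratic split of $P_{n,h}\psi_{t,h}(\mathbf B_h,\widehat\eta)$, a truncated Merlev\`ede--Peligrad--Rio Bernstein bound for the oracle averages, and Berbee coupling of each validation block against $\mathcal H_-\vee\mathcal H_+$ followed by conditional centering for the cross terms. It then uses Cauchy--Schwarz for the product terms and Weyl plus Wedin for the pseudo-inverse.

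Three details differ from the paper. First, the per-fold coupling cost is $3C_\beta e^{-c_\beta b_T}$, not two; it comes from a triangle inequality for $\beta$-coefficients. Second, the mixing Bernstein step for the cross terms needs the long-run variance proxy, not the marginal variance. The paper bounds this proxy by $\bar r_T^2\{1\vee\log(\bar K/\bar r_T)\}$, splitting the lag sum with Cauchy--Schwarz at short lags and Davydov at long lags; this extra factor still leaves the cross terms $o(n_h^{-1/2})$ because $\bar r_T\log n_h\to0$. Third, your first-stage bound should carry an $O(\bar r_T)$ term from the linear first-stage pieces, as the paper's $D_A$ does. This is harmless, since only $\|\widehat{\mathbf A}_h-\overline{\mathbf A}_{h,T}\|\to0$ is needed.
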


A more general and detailed formulation of \Cref{thm:orthogonal-estimation} and their proofs are given in \Cref{app:subsec:est-error-proof}. Part (i) is a high-probability consistency rate with an adjustable tail exponent. A larger $c$ lowers the failure probability $n_h^{-c}$ at the cost of the \(\sqrt{1+c}\) factor on the oracle empirical-process scale $\rho_{n,h}$, while the nuisance term $r_T^2$ is unaffected. Neyman-orthogonality ensures that the first-order nuisance errors is absorbed into the leading term, so the overall error rate is dominated by the oracle empirical score. 

Part (ii) concerns only boundedness in probability and depicts the finer behaviour of the estimation error by isolating the average oracle influence \(P_{n,h}(\psi^0_{t,h}\overline{\mathbf A}_{h,T}^\dagger)\), namely the sample average that would arise if the nuisance functions were known. The remainder is $o_p(n_h^{-1/2})$, so the estimator is first-order equivalent to this oracle average and attains the parametric $n_h^{-1/2}$ rate even if the high-dimensional nuisances rates are looser.
This approximate linear expansion serves as the premise for uncertainty quantification analyzed below.

\subsection{Normal approximation for causal targets and contrasts}
\label{subsec:clt}

The linear expansion in \Cref{thm:orthogonal-estimation}(ii) gives a fixed-dimensional first-order representation for the whole causal matrix. Throughout this subsection the horizon $h$ is fixed, and we let $n=n_h=T-h$ and $P_n=P_{n,h}$. Write
\(\mathbf q_{t,h}:=\vectorize(\boldsymbol\varepsilon_{t+h}\widetilde{\mathbf Z}_t^\top)\in\mathbb R^{pr},
\)
and vectorizing \eqref{eq:orthogonal-linear-expansion} gives
\[
    \vectorize(\widehat{\mathbf B}_h-\mathbf B_h)
    =\frac1{n}\sum_{t=1}^n(\overline{\mathbf A}_{h,T}^{\dagger\top}\otimes\mathbf I_p)\mathbf q_{t,h}
    +\vectorize(\mathbf R_{T,h}).
\]
Time variation in $\mathbf A_t$ therefore affects the uncertainty quantification only through its sample average, and we need to control its limiting behavior shown in the next assumption.
\begin{assumption}[Oracle-score and population-average first-stage stability]\label{ass:uq-stationarity}
For the fixed horizon $h$, the oracle score process $(\mathbf q_{t,h})_{t\in\mathbb Z}$ is strictly stationary, and there exists full row-rank matrix $\mathbf A_\infty\in\mathbb R^{s\times r}$ such that $\overline{\mathbf A}_{h,T}\to \mathbf A_\infty$ as \(T\to\infty\).
\end{assumption}
The centered oracle score needs stationarity and the average first-stage matrix need to converge, so that the partial sum of the influence process can have a limiting distribution. Indeed, together with \Cref{ass:estimation-primitive}(ii)--(iii), the Davydov covariance argument used in \Cref{prop:supp-variance-proxy} ensures the long-run variance of $\mathbf q_{t,h}$ is well defined as
\[
    \boldsymbol\Lambda_h
    :=\sum_{m\in\mathbb Z}\cov(\mathbf q_{t,h},\mathbf q_{t+m,h})
    \in\mathbb R^{pr\times pr},
\]
and accordingly, the limiting long-run variance of the vectorized influence process
\begin{equation}\label{eq:vector-lrv}
    \boldsymbol\Omega_h
    :=(\mathbf A_{\infty}^{\dagger\top}\otimes\mathbf I_p)\boldsymbol\Lambda_h
    (\mathbf A_{\infty}^{\dagger}\otimes\mathbf I_p)
\end{equation}
is also finite and well-defined because a full-row-rank \(\mathbf A_\infty\) ensures \(\overline{\mathbf A}_{h,T}^\dagger\to \mathbf A_\infty^\dagger\). But it can be rank deficient, in which case the Gaussian limit in \eqref{eq:vector-clt} below is degenerate on \(\operatorname{col}(\boldsymbol\Omega_h)\).

\begin{theorem}[Central limit theorem]\label{thm:vector-clt}
Suppose the assumptions and the conditions of \Cref{thm:orthogonal-estimation} and \Cref{ass:uq-stationarity} hold, and fix \(h\in\mathcal H\). Then
\begin{equation} \label{eq:vector-clt}
    \sqrt n\,\vectorize(\widehat{\mathbf B}_h-\mathbf B_h)
    \dto
    N(\mathbf 0,\boldsymbol\Omega_h).
\end{equation}
And for every scalar causal contrast $\theta_h(a,b)=a^\top\mathbf B_h b$ with its estimator \(\widehat\theta_h(a,b)=a^\top\widehat{\mathbf B}_h b\) where \((a,b)\in\mathbb R^p\times\mathbb R^s\) is fixed, whenever its limiting long-run variance \(\sigma_h^2(a,b):=(b\otimes a)^\top\boldsymbol\Omega_h(b\otimes a)>0\), we have the scalar central limit theorem
\[
    {\sqrt n\{\widehat\theta_h(a,b)-\theta_h(a,b)\}}\big/
    {\sigma_h(a,b)}
    \dto N(0,1).
\]
\end{theorem}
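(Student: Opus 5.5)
The plan is to start from the linear expansion in \Cref{thm:orthogonal-estimation}(ii). After vectorizing,
\[
\sqrt n\,\vectorize(\widehat{\mathbf B}_h-\mathbf B_h)=(\overline{\mathbf A}_{h,T}^{\dagger\top}\otimes\mathbf I_p)\,n^{-1/2}\textstyle\sum_{t=1}^n\mathbf q_{t,h}+\sqrt n\,\vectorize(\mathbf R_{T,h}),
\]
and the last term is $o_p(1)$ by that theorem. By \Cref{ass:uq-stationarity}, $\overline{\mathbf A}_{h,T}\to\mathbf A_\infty$ with $\mathbf A_\infty$ of full row rank. The pseudo-inverse map $\mathbf A\mapsto\mathbf A^\top(\mathbf A\mathbf A^\top)^{-1}$ is continuous on full-row-rank matrices, so $\overline{\mathbf A}_{h,T}^\dagger\to\mathbf A_\infty^\dagger$ deterministically. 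Hence the prefactor converges to $(\mathbf A_\infty^{\dagger\top}\otimes\mathbf I_p)$. By Slutsky's lemma, it then suffices to prove the multivariate CLT
\[
n^{-1/2}\textstyle\sum_{t=1}^n\mathbf q_{t,h}\dto N(\mathbf 0,\boldsymbol\Lambda_h).
\]
The stated covariance $\boldsymbol\Omega_h$ in \eqref{eq:vector-lrv} then follows from the identity $\vectorize(\mathbf X\mathbf C)=(\mathbf C^\top\otimes\mathbf I_p)\vectorize(\mathbf X)$.

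For the CLT of the oracle score, I will use the Cram\'er--Wold device. Fix $v\in\mathbb R^{pr}$ and consider the scalar sequence $\xi_t=v^\top\mathbf q_{t,h}$. This sequence has the following properties:
\begin{itemize}
\item It is strictly stationary by \Cref{ass:uq-stationarity}.
\item It is centered, since $\E[\boldsymbol\varepsilon_{t+h}\mid\mathbf Z_t,\mathbf U_t]=\mathbf 0$ by \Cref{ass:iv}(i) gives $\E\psi^0_{t,h}=\mathbf 0$.
\item It is uniformly sub-exponential, being a product of the conditionally sub-Gaussian $\boldsymbol\varepsilon_{t+h}$ and $\widetilde{\mathbf Z}_t$ from \Cref{ass:estimation-primitive}(iii). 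In particular it has finite moments of every order, so $\E|\xi_t|^{2+\delta}<\infty$ for some $\delta>0$.
\item It is $\beta$-mixing, hence $\alpha$-mixing, with geometric coefficients, because it is a measurable function of $\mathcal X_{t,h}$ (\Cref{ass:estimation-primitive}(ii)).
\end{itemize}
Ibragimov's CLT for strongly mixing stationary sequences then applies, since $\sum_k\alpha(k)^{\delta/(2+\delta)}<\infty$ under geometric decay. Davydov's covariance inequality gives absolute summability of the autocovariances, exactly as in \Cref{prop:supp-variance-proxy}. Together these yield $n^{-1/2}\sum_t\xi_t\dto N(0,v^\top\boldsymbol\Lambda_h v)$. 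This covers the degenerate case $v^\top\boldsymbol\Lambda_h v=0$ as well, because then $\var(n^{-1/2}\sum_t\xi_t)\to0$ and the limit is a point mass at zero. Since $v$ is arbitrary, Cram\'er--Wold gives the vector CLT, which may be degenerate on $\operatorname{col}(\boldsymbol\Omega_h)$.

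The scalar contrast follows from the continuous mapping theorem. We have $a^\top(\widehat{\mathbf B}_h-\mathbf B_h)b=(b\otimes a)^\top\vectorize(\widehat{\mathbf B}_h-\mathbf B_h)$, so $\sqrt n\{\widehat\theta_h(a,b)-\theta_h(a,b)\}\dto N(0,\sigma_h^2(a,b))$. Dividing by $\sigma_h(a,b)>0$ gives the standard normal limit.

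The main obstacle is the mixing CLT step, specifically checking that the stated primitives deliver its hypotheses. Stationarity is assumed only for the oracle score process, not for $\mathcal X_{t,h}$ itself, and the mixing coefficient of $(\mathbf q_{t,h})$ is bounded by $\beta_h(k)$ because $\mathbf q_{t,h}$ is $\sigma(\mathcal X_{t,h})$-measurable. Also, the moment bounds are conditional on $\mathbf U_t$. Uniform boundedness of the conditional $\psi_2$-norms implies unconditional sub-Gaussianity of $\boldsymbol\varepsilon_{t+h}$ and $\widetilde{\mathbf Z}_t$, and their product is then sub-exponential by Cauchy--Schwarz on the moment generating functions. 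Once these points are verified, the remaining steps are routine.
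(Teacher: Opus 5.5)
Your proposal is correct and follows essentially the same route as the paper. Both arguments combine the oracle expansion with Slutsky via $\overline{\mathbf A}_{h,T}^\dagger\to\mathbf A_\infty^\dagger$, then apply Cram\'er--Wold together with Ibragimov's strong-mixing CLT, treat the zero-variance case separately by letting the variance vanish, and obtain the contrast through $(b\otimes a)^\top$. The only cosmetic difference is that you apply Cram\'er--Wold to $n^{-1/2}\sum_t\mathbf q_{t,h}$ before multiplying by the limiting matrix, whereas the paper first notes $\mathbf S_n=O_p(1)$ and applies Cram\'er--Wold directly to $\lambda^\top(\mathbf A_\infty^{\dagger\top}\otimes\mathbf I_p)\mathbf S_n$.
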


\subsection{Feasible inferential theory}
To turn the asymptotic normality established in \Cref{subsec:clt} into a practical inference procedure, we need to estimate the long-run variance \(\boldsymbol\Omega_h\) and its scalar counterpart \(\sigma_h^2(a,b)\), as they depend on the unobserved oracle influence process. 

Since the influence vectors are autocorrelated across time $t$, we use the HAC estimation protocol.
A feasible version of the joint CLT uses a long-run variance estimator. Define the vectorized oracle influence for sample size \(T\) and plug-in influence respectively as
\[
    \boldsymbol\psi_{t,h}^{(T)} = \operatorname{vec}\!\left(
    \boldsymbol\varepsilon_{t+h}\widetilde{\mathbf Z}_t^\top
    \overline{\mathbf A}_{h,T}^{\dagger}\right)
    =(\overline{\mathbf A}_{h,T}^{\dagger\top}\otimes\mathbf I_p)
    \mathbf q_{t,h},
    \quad
    \widehat{\boldsymbol\psi}_{t,h} = \vectorize\!\left[
    \big(\widehat{\mathbf e}_{Y,t,h}
    -\widehat{\mathbf B}_h\widehat{\mathbf e}_{W,t}\big)\,
    \widehat{\mathbf e}_{Z,t}^\top\widehat{\mathbf A}_h^\dagger
    \right]\in\mathbb R^{ps}.
\]
We use the Bartlett kernel \(K(x)=(1-|x|)\I\{|x|\le1\}\) to form a Newey--West estimator \citep{newey1987simple} for simplicity and explicitness, although more general kernels and bandwidths can be used; see, e.g., \citet{andrews1991heteroskedasticity} for their conditions. We use an unweighted HAC estimator because the oracle influence is stationary.
With this kernel $K(\cdot)$ and identity weight, a feasible long-run variance estimator is
\begin{equation}\label{eq:vector-hac}
    \widehat{\boldsymbol\Omega}_h
    = \widehat{\boldsymbol\Gamma}_0
    +\sum_{m=1}^{\ell_n} K\!\left(\frac{m}{\ell_n}\right)
    (\widehat{\boldsymbol\Gamma}_m+\widehat{\boldsymbol\Gamma}_m^\top),
    \quad
    \widehat{\boldsymbol\Gamma}_m
    = \frac1n\sum_{t=m+1}^n
    (\widehat{\boldsymbol\psi}_{t,h}-P_n\widehat{\boldsymbol\psi}_{t,h})
    (\widehat{\boldsymbol\psi}_{t-m,h}-P_n\widehat{\boldsymbol\psi}_{t,h})^\top.
\end{equation}
The truncation bandwidth $\ell_n$ in \eqref{eq:vector-hac} needs to grow with the sample size, but not too fast, so that the bias of the estimator is dominated by its variance. The following condition specifies the behaviour so that the HAC estimator is consistent.
\begin{condition}[Vector plug-in stability]\label{cond:vector-plugin}
In \eqref{eq:vector-hac}, the bandwidth satisfies \(\ell_n\to\infty\) and \(\ell_n/n\to0\), and the plug-in influence vectors satisfy $\ell_n P_n\|\widehat{\boldsymbol\psi}_{t,h}-\boldsymbol\psi_{t,h}^{(T)}\|_2^2=o_p(1)$.
\end{condition}
\begin{remark}[Sufficient conditions from estimation theory]\label{rem:vector-plugin-sufficient}
Under the conditions of \Cref{thm:orthogonal-estimation}, we prove in \Cref{app:wald-inference} that
\(
    P_n\|\widehat{\boldsymbol\psi}_{t,h}-\boldsymbol\psi_{t,h}^{(T)}\|^2
    =O_p\!\left(\bar r_T^2\log n+\log n/n\right).
\)
Consequently, \Cref{cond:vector-plugin} holds whenever $\ell_n\to\infty$,
$\ell_n\bar r_T^2\log n\to0$
and
$\ell_n\log n/n\to0$.
For example, the common choice $\ell_n\asymp n^{1/3}$ is compatible with any $\bar r_T=o(T^{-1/6}\log^{-1/2}T)$.
\end{remark}

For a fixed causal contrast, we obtain the corresponding scalar variance estimate by projecting the HAC matrix onto the contrast direction, i.e.,
$s^2(a,b) := (b \otimes a)^\top \widehat{\boldsymbol\Omega}_h (b\otimes a)$.

With these estimates all set, we offer a modular proposition on the consistency of the feasible long-run variance matrix.
\begin{proposition}[Consistency of the feasible long-run variance]\label{prop:hac-consistent}
Under Assumptions \ref{ass:iv}, \ref{ass:estimation-primitive}, \ref{ass:uq-stationarity} and \Cref{cond:vector-plugin},
$\widehat{\boldsymbol\Omega}_h\pto\boldsymbol\Omega_h$,
and for every fixed contrast $(a,b)$, \(s^2(a,b)\pto\sigma_h^2(a,b)\).
\end{proposition}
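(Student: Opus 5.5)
The plan is the standard two-step HAC argument. First, the infeasible estimator built from the oracle influence $\boldsymbol\psi^{(T)}_{t,h}$ is consistent. Second, replacing it by the plug-in $\widehat{\boldsymbol\psi}_{t,h}$ costs $o_p(1)$ under \Cref{cond:vector-plugin}. The scalar claim then follows by continuity, since $s^2(a,b)-\sigma_h^2(a,b)=(b\otimes a)^\top(\widehat{\boldsymbol\Omega}_h-\boldsymbol\Omega_h)(b\otimes a)$ and $(a,b)$ is fixed.

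\textbf{Infeasible estimator.} Let $\widetilde{\boldsymbol\Omega}_h$ be \eqref{eq:vector-hac} with $\widehat{\boldsymbol\psi}_{t,h}$ replaced by $\boldsymbol\psi^{(T)}_{t,h}=(\overline{\mathbf A}_{h,T}^{\dagger\top}\otimes\mathbf I_p)\mathbf q_{t,h}$. Because the HAC map is quadratic in the inputs, we have $\widetilde{\boldsymbol\Omega}_h=(\overline{\mathbf A}_{h,T}^{\dagger\top}\otimes\mathbf I_p)\widehat{\boldsymbol\Lambda}_h(\overline{\mathbf A}_{h,T}^{\dagger}\otimes\mathbf I_p)$, where $\widehat{\boldsymbol\Lambda}_h$ is the Bartlett HAC estimator of the stationary sequence $\mathbf q_{t,h}$. \Cref{ass:uq-stationarity} gives $\overline{\mathbf A}_{h,T}^\dagger\to\mathbf A_\infty^\dagger$, since the pseudo-inverse is continuous at full row rank. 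It therefore suffices to prove $\widehat{\boldsymbol\Lambda}_h\pto\boldsymbol\Lambda_h$.

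This is the classical Newey--West consistency result. The process $\mathbf q_{t,h}$ is strictly stationary and mean zero ($\E\psi^0_{t,h}=\mathbf 0$ by \Cref{ass:iv}(i)). It is also geometrically $\beta$-mixing, hence $\alpha$-mixing, as a measurable function of $\mathcal X_{t,h}$. Its coordinates are uniformly sub-exponential, as products of the sub-Gaussian $\boldsymbol\varepsilon_{t+h}$ and $\widetilde{\mathbf Z}_t$ from \Cref{ass:estimation-primitive}(iii), so all moments exist.

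We split $\widehat{\boldsymbol\Lambda}_h-\boldsymbol\Lambda_h$ into three parts.
\begin{itemize}
\item[(a)] Kernel bias $\sum_{m\le\ell_n}\{1-K(m/\ell_n)\}\boldsymbol\Gamma_m+\sum_{m>\ell_n}\boldsymbol\Gamma_m$, plus the $m/n$ bias from the $1/n$ normalization. This vanishes by dominated convergence, because Davydov's inequality gives $\|\boldsymbol\Gamma_m\|\lesssim e^{-c m}$, which is summable (as in \Cref{prop:supp-variance-proxy}).
\item[(b)] Centering by $P_n\mathbf q_{t,h}$. This contributes $O_p(\ell_n\|P_n\mathbf q\|^2)=O_p(\ell_n/n)=o_p(1)$, using $\mathrm{var}(P_n\mathbf q)=O(1/n)$.
\item[(c)] Stochastic fluctuation $\sum_{m\le\ell_n}K(m/\ell_n)(\widehat{\boldsymbol\Gamma}_m^{o}-\E\widehat{\boldsymbol\Gamma}_m^{o})$, where $\widehat{\boldsymbol\Gamma}_m^{o}$ denotes the uncentered sample autocovariance.
\end{itemize}

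\textbf{Main obstacle: the fluctuation term (c).} I will bound its second moment. Each summand averages products $\mathbf q_{t}\mathbf q_{t-m}^\top$, and a fourth-moment cumulant bound under geometric $\alpha$-mixing gives $\mathrm{var}(\widehat{\boldsymbol\Gamma}_m^{o})\lesssim 1/n$ uniformly in $m$. This bound follows from Davydov's inequality applied to the product process with higher moments, as in Andrews (1991), Lemma 1 conditions. The total variance is then $O(\ell_n^2/n)$, which is enough only if $\ell_n^2/n\to0$. Since \Cref{cond:vector-plugin} only gives $\ell_n/n\to0$, I would instead use the sharper cumulant-summability bound. Summability of fourth cumulants of a geometrically mixing sequence with finite moments of all orders yields $\mathrm{var}\big(\sum_{m}K(m/\ell_n)\widehat{\boldsymbol\Gamma}^o_m\big)=O(\ell_n/n)$, which suffices.

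\textbf{Feasible minus infeasible.} Write $d_t=\widehat{\boldsymbol\psi}_{t,h}-\boldsymbol\psi^{(T)}_{t,h}$. Expanding the bilinear form, each lag $m$ differs by cross terms bounded via Cauchy--Schwarz by
\[
2(P_n\|d_t\|_2^2)^{1/2}(P_n\|\boldsymbol\psi^{(T)}_{t,h}\|_2^2)^{1/2}+P_n\|d_t\|_2^2,
\]
and the centering corrections are handled the same way. Summing over at most $\ell_n+1$ lags with $|K|\le1$ bounds the difference in operator norm by $\ell_n\,P_n\|d_t\|^2+\ell_n(P_n\|d_t\|^2)^{1/2}O_p(1)$, where $P_n\|\boldsymbol\psi^{(T)}\|^2=O_p(1)$ by the law of large numbers under mixing.

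The first piece is $o_p(1)$ by \Cref{cond:vector-plugin}. The second equals $\big(\ell_n\cdot\ell_nP_n\|d_t\|^2\big)^{1/2}O_p(1)$, which is not obviously small because of the extra factor $\ell_n$. To avoid this, I would use the kernel-weighted form instead: the Bartlett-weighted sum is a quadratic form $n^{-1}\mathbf x^\top\mathbf K\mathbf y$ with a Toeplitz weight matrix whose operator norm is $O(\ell_n)$. The cross term is then bounded by $\ell_n(P_n\|d\|^2)^{1/2}(P_n\|\psi\|^2)^{1/2}$, which still carries one factor $\ell_n$. Exploiting that the Bartlett Toeplitz matrix is positive semidefinite, the feasible and infeasible HAC matrices are both Gram-type forms, so a Cauchy--Schwarz step in the $\mathbf K$-inner product gives
\[
|\langle d,\psi\rangle_{\mathbf K}|\le\langle d,d\rangle_{\mathbf K}^{1/2}\langle\psi,\psi\rangle_{\mathbf K}^{1/2}.
\]
Here $\langle d,d\rangle_{\mathbf K}\le\ell_nP_n\|d\|^2=o_p(1)$ and $\langle\psi,\psi\rangle_{\mathbf K}=O_p(1)$ by the infeasible step. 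This closes the argument.
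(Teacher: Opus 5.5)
Your proposal is correct and follows the paper's argument. Both proofs first establish consistency of the oracle HAC matrix, and both then control the plug-in error by writing the two estimators as Gram forms in the positive semidefinite Bartlett matrix \(\mathbf K_n\). Cauchy--Schwarz in the \(\mathbf K_n\)-inner product, together with \(\|\mathbf K_n\|_{\op}\le 1+2\ell_n\), is exactly how the paper removes the extra \(\ell_n\) factor that you correctly flagged in the naive bound.

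The differences are minor and concern only the oracle step and centering:
\begin{itemize}
\item The paper applies the de Jong--Davidson kernel-HAC theorem directly to the array \(n^{-1/2}\mathbf G_{h,T}\mathbf q_{t,h}\). You instead factor out the deterministic \(\mathbf G_{h,T}\) and prove stationary Newey--West consistency by a bias/variance split. Your use of Andrews-type fourth-cumulant summability is what delivers the \(O(\ell_n/n)\) variance needed when only \(\ell_n/n\to0\) is assumed.
\item The paper avoids centering through the exact identity \(P_n\widehat{\boldsymbol\psi}_{t,h}=\mathbf 0\). You absorb centering because subtracting the sample mean cannot increase \(P_n\|d_t\|_2^2\).
\end{itemize}
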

The proposition is based on the standard kernel HAC consistency \citep{dejongdavidson2000consistency}, and is the cornerstone for feasible inference; see corollaries below.

\begin{corollary}[Feasible joint inference]\label{cor:vector-wald}
Suppose the conditions of \Cref{thm:vector-clt} and \Cref{prop:hac-consistent} hold, and in addition \(\boldsymbol\Omega_h\) is invertible. Then the Wald statistic satisfies
\[
    n\,\vectorize(\widehat{\mathbf B}_h-\mathbf B_h)^\top
    \widehat{\boldsymbol\Omega}_h^{-1}
    \vectorize(\widehat{\mathbf B}_h-\mathbf B_h)
    \dto \chi^2_{ps},
\]
and an asymptotic $1-\alpha$ joint confidence region for \(\vectorize(\mathbf B_h)\) follows by inverting the corresponding Wald test.
More generally, for any fixed full-row-rank matrix \(\mathbf M\in\mathbb R^{k\times ps}\),
\[
    n\,
    (\mathbf M\vectorize(\widehat{\mathbf B}_h-\mathbf B_h))^\top
    (\mathbf M\widehat{\boldsymbol\Omega}_h\mathbf M^\top)^{-1}
    (\mathbf M\vectorize(\widehat{\mathbf B}_h-\mathbf B_h))
    \dto
    \chi^2_{k}.
\]
\end{corollary}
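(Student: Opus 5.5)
The plan is to combine the joint central limit theorem of \Cref{thm:vector-clt} with the consistency of the HAC estimator in \Cref{prop:hac-consistent} through Slutsky's theorem and the continuous mapping theorem. First write $\mathbf G_n:=\sqrt n\,\vectorize(\widehat{\mathbf B}_h-\mathbf B_h)$. By \Cref{thm:vector-clt}, $\mathbf G_n\dto N(\mathbf 0,\boldsymbol\Omega_h)$. By \Cref{prop:hac-consistent}, $\widehat{\boldsymbol\Omega}_h\pto\boldsymbol\Omega_h$.

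Since $\boldsymbol\Omega_h$ is invertible by hypothesis, and matrix inversion is continuous on the open set of nonsingular matrices, $\widehat{\boldsymbol\Omega}_h$ is invertible with probability tending to one. On the complementary event, which has vanishing probability, the statistic may be defined arbitrarily without affecting weak limits. Moreover, $\widehat{\boldsymbol\Omega}_h^{-1}\pto\boldsymbol\Omega_h^{-1}$. The pair $(\mathbf G_n,\widehat{\boldsymbol\Omega}_h^{-1})$ then converges jointly in distribution to $(\mathbf G,\boldsymbol\Omega_h^{-1})$ with $\mathbf G\sim N(\mathbf 0,\boldsymbol\Omega_h)$, because the second component has a constant limit. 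Applying the continuous map $(x,S)\mapsto x^\top S x$ gives
\[
n\,\vectorize(\widehat{\mathbf B}_h-\mathbf B_h)^\top\widehat{\boldsymbol\Omega}_h^{-1}\vectorize(\widehat{\mathbf B}_h-\mathbf B_h)\dto \mathbf G^\top\boldsymbol\Omega_h^{-1}\mathbf G .
\]
Writing $\mathbf G=\boldsymbol\Omega_h^{1/2}\boldsymbol\xi$ with $\boldsymbol\xi\sim N(\mathbf 0,\mathbf I_{ps})$ shows that the limit equals $\|\boldsymbol\xi\|_2^2\sim\chi^2_{ps}$. The confidence region is the set $\{\mathbf B:\ n\,\vectorize(\widehat{\mathbf B}_h-\mathbf B)^\top\widehat{\boldsymbol\Omega}_h^{-1}\vectorize(\widehat{\mathbf B}_h-\mathbf B)\le\chi^2_{ps,1-\alpha}\}$. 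Its asymptotic coverage follows from this convergence together with the continuity of the $\chi^2_{ps}$ distribution function at the quantile, which gives $\Pr(\mathbf B_h\in\text{region})\to1-\alpha$.

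For the general statement, apply the continuous mapping theorem once more to obtain $\mathbf M\mathbf G_n\dto N(\mathbf 0,\mathbf M\boldsymbol\Omega_h\mathbf M^\top)$. Because $\boldsymbol\Omega_h$ is positive definite and $\mathbf M$ has full row rank $k$, the matrix $\mathbf M\boldsymbol\Omega_h\mathbf M^\top$ is positive definite, since $v^\top\mathbf M\boldsymbol\Omega_h\mathbf M^\top v>0$ whenever $\mathbf M^\top v\neq 0$, which holds for every $v\neq0$. Consistency gives $\mathbf M\widehat{\boldsymbol\Omega}_h\mathbf M^\top\pto\mathbf M\boldsymbol\Omega_h\mathbf M^\top$, and the same Slutsky and inversion argument yields a $\chi^2_k$ limit.

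There is no substantive obstacle. The only point needing care is that $\widehat{\boldsymbol\Omega}_h^{-1}$ may fail to exist on a small event, and this is handled by the invertibility of the limit as described above.
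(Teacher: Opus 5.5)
Your proposal is correct and follows the paper's proof essentially step for step: the CLT of \Cref{thm:vector-clt} plus HAC consistency, continuity of inversion at the invertible $\boldsymbol\Omega_h$, Slutsky and continuous mapping for the quadratic form, and the same argument applied to $\mathbf M\mathbf G_n$ for the $\chi^2_k$ case. Your additional remarks fill in details the paper leaves implicit: handling the vanishing event where $\widehat{\boldsymbol\Omega}_h$ is singular, positive definiteness of $\mathbf M\boldsymbol\Omega_h\mathbf M^\top$ (which uses that $\boldsymbol\Omega_h$, as a long-run covariance, is positive semidefinite and hence positive definite once invertible), and continuity of the $\chi^2_{ps}$ distribution function for the coverage claim.
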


\begin{corollary}[Feasible studentized inference]\label{cor:studentized-inference}
Suppose the conditions of \Cref{thm:vector-clt} and \Cref{prop:hac-consistent} hold. For a fixed contrast $(a,b)$ with $\sigma_h(a,b)>0$, the statistic satisfies
\[
    \frac{\sqrt n\{\widehat\theta_h(a,b)-\theta_h(a,b)\}}
    {s(a,b)}
    \dto N(0,1).
\]
Consequently, the asymptotic Wald interval for $\theta_h(a,b)$ can be constructed accordingly.
\end{corollary}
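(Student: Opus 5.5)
The corollary follows from \Cref{thm:vector-clt} and \Cref{prop:hac-consistent} by Slutsky's theorem and the continuous mapping theorem. Write $\widehat\theta_h-\theta_h=a^\top(\widehat{\mathbf B}_h-\mathbf B_h)b$. Using $\vectorize(a^\top\mathbf X b)=(b\otimes a)^\top\vectorize(\mathbf X)$, this equals $(b\otimes a)^\top\vectorize(\widehat{\mathbf B}_h-\mathbf B_h)$.

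The scalar part of \Cref{thm:vector-clt} gives $\sqrt n\{\widehat\theta_h-\theta_h\}/\sigma_h(a,b)\dto N(0,1)$ whenever $\sigma_h(a,b)>0$. Alternatively, one can get this directly: apply the continuous mapping theorem to the joint limit \eqref{eq:vector-clt} with the linear functional $x\mapsto(b\otimes a)^\top x$. This yields $N(0,(b\otimes a)^\top\boldsymbol\Omega_h(b\otimes a))=N(0,\sigma_h^2(a,b))$.

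Next, \Cref{prop:hac-consistent} gives $s^2(a,b)\pto\sigma_h^2(a,b)$. Since $\sigma_h^2(a,b)>0$ and $x\mapsto\sqrt x$ is continuous at positive points, $s(a,b)\pto\sigma_h(a,b)$. Hence $\sigma_h(a,b)/s(a,b)\pto1$, and $s(a,b)>0$ with probability tending to one, so the studentized statistic is well defined on an event of probability approaching one. On the complementary event we may assign it any value; this does not affect the weak limit.

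Now write
\[
\frac{\sqrt n\{\widehat\theta_h-\theta_h\}}{s(a,b)}=\frac{\sqrt n\{\widehat\theta_h-\theta_h\}}{\sigma_h(a,b)}\cdot\frac{\sigma_h(a,b)}{s(a,b)}.
\]
The first factor converges in distribution to $N(0,1)$ and the second converges in probability to $1$, so Slutsky's lemma gives the stated $N(0,1)$ limit.

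For the Wald interval, $\widehat\theta_h(a,b)\pm z_{1-\alpha/2}\,s(a,b)/\sqrt n$ has asymptotic coverage $1-\alpha$. This holds because the $N(0,1)$ distribution function is continuous at $\pm z_{1-\alpha/2}$.

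Nothing here is a real obstacle, since the heavy lifting is done by the two cited results. The only point requiring care is the positivity of $\sigma_h(a,b)$. It is needed both to invert $s(a,b)$ with probability tending to one and for the continuity of the square root. Without it the ratio $\sigma_h/s$ need not converge to one, because $\boldsymbol\Omega_h$ may be degenerate along $b\otimes a$. This is exactly why the hypothesis $\sigma_h(a,b)>0$ is imposed.
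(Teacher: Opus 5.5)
Your proposal is correct and follows the paper's own argument. Both obtain the infeasible $N(0,1)$ limit from the scalar part of \Cref{thm:vector-clt}, deduce $s(a,b)/\sigma_h(a,b)\pto 1$ from \Cref{prop:hac-consistent} together with $\sigma_h(a,b)>0$, apply Slutsky's theorem, and get the Wald interval by inverting the normal limit.
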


\section{Nuisance estimation and rate transfer}\label{sec:nuisance-learners}
All inferential properties above depend on satisfaction of the two conditions on the population and block-sample behaviours of the nuisance errors. Meanwhile, existing regression theorems typically bound population prediction error or excess risk, and it is still lacking in the literature how such population guarantees transfer to the validation blocks under temporal dependence. In light of this, we establish a rate-transfer theorem for the general block cross-fitting scheme, analyze its implications, and apply it to our context. Note that \Cref{cond:sg-nuisance-error} concerns the data-generating mechanism, and thus needs ad-hoc analysis.

Consider a generic regression problem with the predictor and response being square-integrable random vectors \(\{(\mathbf U_t,\mathbf D_t)\}_{t=1}^T\). The population $L_2$ risk of a time-invariant square-integrable map \(\mathbf f\) at time $t$ is $R_t(\mathbf f)=\E\|\mathbf D_t-\mathbf f(\mathbf U_t)\|_2^2$. We define its excess risk at time $t$ as the difference between this risk and the risk of the true conditional mean \(\boldsymbol\mu(\mathbf U_t)=\E[\mathbf D_t\mid \mathbf U_t]\). By the squared-loss excess-risk identity, 
\begin{equation}\label{eq:excess-risk-L2}
    R_t(\mathbf f)-R_t(\boldsymbol\mu_a)
    =\E_{\mathbf U_t}\|\mathbf f(\mathbf U_t)-\boldsymbol\mu(\mathbf U_t)\|_2^2.
\end{equation}

Thus excess risk is also the squared population \(L_2\) error in estimating the conditional mean.
It is random through the fitted map, and we evaluate the risk by holding the map fixed and integrating a fresh observation by the time-\(t\) law, independent of its training data.

Let \(\boldsymbol\mu\) be a deterministic, square-integrable vector-valued function of \(\mathbf U_t\), and write \(\widehat{\boldsymbol\mu}_\ell\) for its estimate trained outside validation fold \(\ell\) and its buffer.
To drop the time index, we define the maximum population and validation errors over the \(L\) folds by
\begin{align*}
    E_{P,T}
    =\max_{t\in\cup_{\ell\le L}\mathcal I_{\ell,h}}
      \E_{\mathbf U_t}
        \|\widehat{\boldsymbol\mu}_\ell(\mathbf U_t)-\boldsymbol\mu(\mathbf U_t)\|_2^2, \quad
    E_{V,T}
    =\max_{\ell\le L}
      \frac{1}{|\mathcal I_{\ell,h}|}
      \sum_{t\in\mathcal I_{\ell,h}}
        \|\widehat{\boldsymbol\mu}_\ell(\mathbf U_t)-\boldsymbol\mu(\mathbf U_t)\|_2^2.
\end{align*}
Here \(E_{P,T}\) measures expected squared error at a fresh state, uniformly over validation times, with each fitted map held fixed. The quantity \(E_{V,T}\) measures empirical mean squared error averaged over the states in each observed validation block. 
In practice, $E_{P,T}$ or its related quantities are often supplied to characterize the performance of the estimators.

\begin{theorem}[Population prediction rates on buffered validation blocks]\label{thm:regression-rate-transfer}
Use the temporal block cross-fitting scheme of \Cref{alg:orthogonal-lpiv}, with \(L\) fixed. Suppose the full observed data sequence has $\beta$-mixing coefficients \(\beta(k)\to0\) as \(k\to\infty\). For each fold \(\ell\), let the learner \(\widehat{\boldsymbol\mu}_\ell(u)\) be a measurable function of its training observations, independent learner randomization, and the input \(u\). Let \(\rho_T\ge0\) be a deterministic rate sequence. Then,
\begin{equation}\label{eq:buffered-rate-transfer-prob}
    \Pr\{E_{V,T}>MK\rho_T^2\}
    \le \Pr(E_{P,T}>K\rho_T^2)+\frac{L}{M}
       +3L\beta(b_T)
\end{equation}
for any deterministic \(M\ge1\) and \(K>0\).
Consequently, under \Cref{ass:cross-fitting},
\begin{equation}\label{eq:buffered-rate-transfer-op}
    E_{P,T}=O_p(\rho_T^2)
    \quad\Longrightarrow\quad
    E_{V,T}=O_p(\rho_T^2).
\end{equation}
\end{theorem}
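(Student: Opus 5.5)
The plan is to work fold by fold. For each fold, replace the joint law of (training data, validation block) by the product of its marginals, at a total-variation cost controlled by $\beta(b_T)$. Under the product law, the validation states are independent of the fitted map. A conditional Markov inequality then converts the population error into a bound on the empirical validation error. A union bound over the $L$ folds finishes the argument.

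\emph{Step 1 (fold-level reduction).} Fix $\ell$ and write $V_\ell=|\mathcal I_{\ell,h}|^{-1}\sum_{t\in\mathcal I_{\ell,h}}\|\widehat{\boldsymbol\mu}_\ell(\mathbf U_t)-\boldsymbol\mu(\mathbf U_t)\|_2^2$. Write $E_{P,\ell}=\max_{t\in\mathcal I_{\ell,h}}\E_{U\sim P_t}\|\widehat{\boldsymbol\mu}_\ell(U)-\boldsymbol\mu(U)\|_2^2$, so that $E_{P,\ell}\le E_{P,T}$ and $E_{P,\ell}$ is a measurable function of the training data of fold $\ell$ and the learner randomization only. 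Then
\[
\{E_{V,T}>MK\rho_T^2\}\subseteq\{E_{P,T}>K\rho_T^2\}\cup\bigcup_{\ell\le L}\{V_\ell>MK\rho_T^2,\ E_{P,\ell}\le K\rho_T^2\}.
\]
It therefore suffices to show that each event in the union has probability at most $1/M+3\beta(b_T)$.

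\emph{Step 2 (decoupling).} The training set $\mathcal J_{\ell,h}$ consists of a left piece $A$ (indices $\le\min\mathcal I_{\ell,h}-b_T$) and a right piece $B$ (indices $\ge\max\mathcal I_{\ell,h}+b_T$); either may be empty. Let $V$ denote the validation block of observations. I will bound the total-variation distance between $\mathcal L(A,V,B)$ and $\mathcal L(A,B)\otimes\mathcal L(V)$ by chaining three comparisons, each an instance of the definition of $\beta$-mixing across a gap of at least $b_T$:
\begin{itemize}
\item $\mathcal L(A,V,B)$ versus $\mathcal L(A)\otimes\mathcal L(V,B)$ costs at most $\beta(b_T)$;
\item $\mathcal L(V,B)$ versus $\mathcal L(V)\otimes\mathcal L(B)$ costs at most $\beta(b_T)$;
\item $\mathcal L(A)\otimes\mathcal L(B)$ versus $\mathcal L(A,B)$ costs at most $\beta(b_T)$, since the gap between $A$ and $B$ exceeds $b_T$.
\end{itemize}
These give the constant $3$. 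Because the learner randomization is independent of the data (\Cref{ass:cross-fitting}), appending it to the training data does not change these distances. The event $\{V_\ell>MK\rho_T^2,\,E_{P,\ell}\le K\rho_T^2\}$ is a measurable function of (training data, randomization, $V$), so its probability changes by at most $3\beta(b_T)$ when the joint law is replaced by the product law.

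\emph{Step 3 (conditional Markov under the product law).} Under the product law, condition on the training data and the randomization, so that $\boldsymbol\delta_\ell=\widehat{\boldsymbol\mu}_\ell-\boldsymbol\mu$ is a fixed function. Each $\mathbf U_t$ with $t\in\mathcal I_{\ell,h}$ still has law $P_t$, hence by Fubini the conditional mean of $V_\ell$ is $|\mathcal I_{\ell,h}|^{-1}\sum_t\E_{P_t}\|\boldsymbol\delta_\ell\|_2^2\le E_{P,\ell}$. Conditional Markov on the event $\{E_{P,\ell}\le K\rho_T^2\}$, which is measurable with respect to the conditioning variables, gives probability at most $E_{P,\ell}/(MK\rho_T^2)\le 1/M$. (If $\rho_T=0$, the same argument shows $V_\ell=0$ almost surely on that event.) Summing over folds yields \eqref{eq:buffered-rate-transfer-prob}.

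For \eqref{eq:buffered-rate-transfer-op}, given $\varepsilon>0$, choose $K$ so that $\Pr(E_{P,T}>K\rho_T^2)<\varepsilon/3$ for large $T$, and choose $M=3L/\varepsilon$. Since $b_T\to\infty$, we have $3L\beta(b_T)<\varepsilon/3$ eventually.

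The main obstacle is Step 2. The training set lies on both sides of the validation block, so a single application of the mixing definition does not suffice. The measure-theoretic care needed is to ensure that $E_{P,\ell}$, defined by integrating against the fixed marginals $P_t$, is a function of the training side only, so that it survives the change of measure. This fold-specific definition is the reason for working with $E_{P,\ell}$ rather than $E_{P,T}$, which depends on all folds' training data.
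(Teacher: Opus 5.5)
Your proposal is correct and follows the paper's proof. For each fold you decouple the validation block from the training field at cost $3\beta(b_T)$ using three gap comparisons; the paper packages these as a triangle inequality for $\beta$-coefficients, but the content is the same. You then apply conditional Markov with the training-measurable population error and take a union bound over the $L$ folds.

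The only technical difference is how the decoupling is used. You move the event's probability to the product law by a total-variation change of measure. The paper instead obtains the same $3\beta(b_T)$ bound through Berbee's coupling lemma, using an independent copy of the validation block. The two are equivalent, and your version avoids constructing the coupling.
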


Under the common stationary condition, $E_{P,T}$ is just the population excess risk, and \Cref{thm:regression-rate-transfer} then allows the order of its rate to be transferred to the empirical version.

In our model \eqref{eq:model} under \Cref{ass:estimation-primitive}(i)--(ii), the full observed sequence at horizon \(h\) is
$\mathcal D_{t,h}:=(\mathbf Y_{t+h},\mathbf W_t,\mathbf Z_t,\mathbf U_t)$,
which inherits the mixing coefficients of the process \(\mathcal X_{t,h}\). Hence, the theorem applies with \(\beta(k)=C_\beta e^{-c_\beta k}\). \Cref{cor:iv-nuisance-rates} below applies the transfer separately to the three nuisance regressions, allowing each learner to have its own rate.

\begin{corollary}[Prediction guarantees for the IV nuisance regressions]\label{cor:iv-nuisance-rates}
Suppose the conditions of \Cref{thm:regression-rate-transfer} hold separately for each target \(\boldsymbol\mu_a\), \(a\in\{Y,W,Z\}\), with fold estimates \(\widehat{\boldsymbol\mu}^{(-\ell)}_a\), and the risks and conditional means above are well defined. Let \(m_{\ell,T}\asymp T\) be the number of observations used to train fold \(\ell\). Suppose that the imported regression theorem applies to these training observations and, for every \(a\in\{Y,W,Z\}\) and \(\ell\le L\), gives
\begin{equation}\label{eq:generic-excess-risk}
    \Pr\left\{
      \sup_{t\in\mathcal I_{\ell,h}}
      \left[R_{a,t}(\widehat{\boldsymbol\mu}^{(-\ell)}_a)
            -R_{a,t}(\boldsymbol\mu_a)\right]
      >C_a^2\rho_a(m_{\ell,T})^2
    \right\}\le\delta_a(m_{\ell,T}),
\end{equation}
where \(C_a>0\) is independent of \(T\) and \(\ell\); \(\rho_a(m)\ge0\) and \(\delta_a(m)\to0\) are deterministic. Let
\[
    \rho_{a,T}=\max_{\ell\le L}\rho_a(m_{\ell,T}),\qquad
    \delta_{a,T}=\sum_{\ell\le L}\delta_a(m_{\ell,T})=o(1),\qquad
    \rho_{Q,T}=\rho_{Y,T}\vee\rho_{W,T},
\]
and \(\rho_{\max,T}=\rho_{Q,T}\vee\rho_{Z,T}\). Then, for all \(a\in\{Y,W,Z\}\), 
\begin{equation}\label{eq:generic-rates}
\begin{aligned}
    \|\boldsymbol\Delta_a\|_{L_2}\vee\|\boldsymbol\Delta_a\|_{n,h,2}
      =O_p(\rho_{a,T}), \quad
    \|\boldsymbol\Delta_Z\|
      \max\{\|\boldsymbol\Delta_{Y,h}\|,\|\boldsymbol\Delta_W\|\}
    =O_p(\rho_{Z,T}\rho_{Q,T}),
\end{aligned}
\end{equation}
where the second equation holds for each of the population \(L_2\) and empirical \((n,h,2)\) norms defined and required in \Cref{cond:nuisance-rate}. Consequently, \Cref{cond:nuisance-rate} holds if
\begin{equation}\label{eq:generic-product-condition}
    \rho_{\max,T}\log T\to0
    \quad\text{and}\quad
    \rho_{Z,T}\rho_{Q,T}=o(T^{-1/2}).
\end{equation}
\end{corollary}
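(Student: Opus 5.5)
The proof reduces \Cref{cor:iv-nuisance-rates} to \Cref{thm:regression-rate-transfer}, applied separately to each target $a\in\{Y,W,Z\}$. Products and the final rate conditions are then assembled deterministically on intersections of high-probability events.

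\textbf{Step 1: population norm.} By the excess-risk identity \eqref{eq:excess-risk-L2}, the bracket in \eqref{eq:generic-excess-risk} equals $\E_{U\sim P_t}\|\boldsymbol\delta_{a,\ell}(U)\|_2^2$, with the fitted map held fixed. Taking the maximum over $t\in\mathcal I_{\ell,h}$ and then over $\ell$ identifies this with $E_{P,T}$ for target $a$, which is exactly $\|\boldsymbol\Delta_a\|_{L_2}^2$. A union bound over the $L$ folds, together with $\rho_a(m_{\ell,T})\le\rho_{a,T}$, gives
\[
\Pr\{\|\boldsymbol\Delta_a\|_{L_2}^2>C_a^2\rho_{a,T}^2\}\le\delta_{a,T}=o(1).
\]
Hence $\|\boldsymbol\Delta_a\|_{L_2}=O_p(\rho_{a,T})$.

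\textbf{Step 2: empirical norm.} Apply \eqref{eq:buffered-rate-transfer-prob} with $K=C_a^2$ and $\rho_T=\rho_{a,T}$. For fixed $M$,
\[
\Pr\{E_{V,T}>MC_a^2\rho_{a,T}^2\}\le\delta_{a,T}+L/M+3L\beta(b_T).
\]
Under \Cref{ass:cross-fitting}, $b_T\to\infty$ and the mixing coefficients vanish, so the limsup of the right side is at most $L/M$, which can be made arbitrarily small. This gives $E_{V,T}=O_p(\rho_{a,T}^2)$.

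To convert to the empirical norm, note that the validation blocks partition $\{1,\dots,n_h\}$. Therefore
\[
\|\boldsymbol\Delta_a\|_{n,h,2}^2=n_h^{-1}\sum_\ell|\mathcal I_{\ell,h}|\,\bigl(|\mathcal I_{\ell,h}|^{-1}\textstyle\sum_{t\in\mathcal I_{\ell,h}}\|\boldsymbol\delta_{a,\ell}(\mathbf U_t)\|^2\bigr)\le E_{V,T},
\]
since this is a convex combination of the block averages. So $\|\boldsymbol\Delta_a\|_{n,h,2}=O_p(\rho_{a,T})$, and the first display in \eqref{eq:generic-rates} follows.

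\textbf{Step 3: product rates.} Products of $O_p$ terms are $O_p$ of the products. This gives $\|\boldsymbol\Delta_Z\|\,\|\boldsymbol\Delta_Y\|=O_p(\rho_{Z,T}\rho_{Y,T})$ and the analogous bound with $W$. Taking the maximum of two $O_p$ quantities, each bounded by $\rho_{Z,T}\rho_{Q,T}$, yields $O_p(\rho_{Z,T}\rho_{Q,T})$. This argument applies within each norm separately.

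\textbf{Step 4: verifying \Cref{cond:nuisance-rate}.} I expect this to be the only delicate step. \Cref{cond:nuisance-rate} asks for deterministic bounds $\bar r_T$ and $r_T$ holding on one event $\mathcal E_{T,h}$ with probability tending to one, whereas Steps 1--3 only give $O_p$ statements. I would close the gap with the standard sequence argument. Given the $O_p$ bounds, there exist $\epsilon_T\to0$ and a constant $C_\epsilon$ such that each of the six norm bounds and the two product bounds hold with constant $C_\epsilon$ outside an event of probability at most $\epsilon$.

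Now choose a slowly diverging $M_T\to\infty$ such that $M_T\rho_{\max,T}\log T\to0$ and $M_T\rho_{Z,T}\rho_{Q,T}T^{1/2}\to0$. This is possible because both quantities are $o(1)$ by \eqref{eq:generic-product-condition}; for instance, take $M_T$ equal to the inverse square root of their maximum. Since $O_p(1)\le M_T$ with probability tending to one, set
\[
\bar r_T=M_T\rho_{\max,T},\qquad r_T^2=M_T\rho_{Z,T}\rho_{Q,T}.
\]
Then $\bar r_T\log n_h\le\bar r_T\log T\to0$ and $r_T=o(T^{-1/4})$. Let $\mathcal E_{T,h}$ be the intersection of the finitely many events on which each normalized quantity is at most $M_T$. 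By a union bound, $\Pr(\mathcal E_{T,h})\to1$, and both parts of \Cref{cond:nuisance-rate} hold on it for both norms.
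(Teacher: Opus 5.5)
Your proposal is correct and follows the paper's proof essentially step for step. The shared steps are the excess-risk identity with a union bound over folds for the population norm, \eqref{eq:buffered-rate-transfer-prob} plus the convex-combination bound for the pooled empirical norm, multiplication of the stochastic bounds for the products, and a slowly diverging deterministic $M_T$ to build $\bar r_T$ and $r_T$. The only cosmetic difference is that the paper plugs $M=M_T^2$ directly into \eqref{eq:buffered-rate-transfer-prob} and bounds the products by multiplying the individual bounds on one event, giving $r_T^2\propto M_T^2\rho_{Z,T}\rho_{Q,T}$; you instead pass through the $O_p$ statements and use that an $O_p(1)$ quantity eventually lies below $M_T$ with probability tending to one, which is equally valid.
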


\begin{remark}\label{rem:temporal-rate-transfer}
In i.i.d.\ settings, sample splitting makes a fitted nuisance function independent of its validation data \citep{chernozhukov2018double}, but for a single time series, disjoint folds remain dependent. \Cref{thm:regression-rate-transfer} controls the effect of this dependence through the buffer, without requiring independence among observations within a validation block.
\end{remark}

\Cref{cor:iv-nuisance-rates} uses the actual training sizes \(m_{\ell,T}\) in the imported prediction rates, so rate functions should be evaluated at training sizes. If an imported theorem requires one contiguous training stretch, the learner may use the longer of the two stretches outside the buffer. Under the corollary's training-size requirement, this still leaves order \(T\) observations.

A bound on the expected maximum excess risk, where the expectation is over the training sample and any learner randomization, also yields the population \(O_p\) rate by Markov's inequality, so \Cref{thm:regression-rate-transfer} also applies. However, a bound on raw forecast loss includes irreducible noise and can appear larger, so its transferrability needs case-by-case verification. 

As for a parameter-estimation theorem, e.g., if
\[
    \boldsymbol\mu_a(\mathbf U_t)
    =\boldsymbol\Theta_a^\top\mathbf X_t+\mathbf A_{a,t},
    \qquad
    \widehat{\boldsymbol\mu}_{a,\ell}(\mathbf U_t)
    =\widehat{\boldsymbol\Theta}_{a,\ell}^\top\mathbf X_t,
\]
with $\mathbf A_{a,t}$ being the approximation error, then, with $\boldsymbol\Sigma_{X,t}=\E(\mathbf X_t\mathbf X_t^\top)$, the excess risk
\begin{equation}\label{eq:coefficient-to-prediction}
    \E\!\left[
      \|\widehat{\boldsymbol\mu}_{a,\ell}(\mathbf U_t)
        -\boldsymbol\mu_a(\mathbf U_t)\|_2^2\right]
    \le
    \|\boldsymbol\Sigma_{X,t}
       (\widehat{\boldsymbol\Theta}_{a,\ell}-\boldsymbol\Theta_a)\|_\Fr^2
    +\E\|\mathbf A_{a,t}\|_2^2.
\end{equation}
Thus a coefficient or prediction-error rate and a bound on $\mathbf A_{a,t}$ are sufficient. If the regressors \(\mathbf X_t\) are also estimated, as with factors in various factor models \citep{stockwatson2002forecasting,wang2019matrixfactor,fan2023bridging}, their estimation error should be included as well. 

This framework hence accommodates sparse, low-rank/factor regressions, sieves, splines, kernels, and neural networks for semiparametric inference, whenever their guarantees can control the required population norm and their assumptions hold for the observations used to train each fold. And if $\boldsymbol\mu_Z(\mathbf U_t)$ is an unknown constant, e.g., when the instrument is exogenous to $\mathbf U_t$,  estimated by averaging over $m$ observations, we have $\rho_Z(m)=m^{-1/2}$, and the requirements for other regressions are significantly relaxed; see \Cref{supp:rem:const-instrument}. 

Before referring to concrete examples in \Cref{sec:nuisance-examples}, we comment briefly on high dimensionality of nuisance functions. Generally speaking, for a parametric linear model, the prediction error is $O_p(\sqrt{d/T})$ with $d$ parameters. Obtaining the convenient rate $o_p(T^{-1/4})$ thus requires $d=o(T^{1/2})$, which guides the design of low-dimensional modeling assumptions.
For a generic nonparametric task, the minimax rate is $O_p(T^{-\beta/(2\beta+q_T)})$, where $\beta$ measures smoothness. The requirement $\beta>q_T/2$ alone can rule out high dimensionality because $\beta$ is fixed while $q_T$ grows with $T$, so materially low-dimensional structure is also needed to reduce the effective complexity of the nuisance regression, just like parametric models, besides smoothness assumptions such as Sobolev or H\"older smoothness.

\section{Simulation studies}\label{sec:simulations}
We compare estimators of a dynamic causal effect when a high-dimensional state has a factor structure. We examine point estimation, interval coverage, and sensitivity to nuisance specification and instrument quality. \Cref{app:simulation-details} gives the full designs and additional results.

\subsection{Design}
\label{subsec:simulation-design}

The main design has \(T=500\) observations, \(q=1000\) state variables, six factors, and horizon \(h=4\). The observed state is
$U_t=(\Lambda F_t+E_t)/{\sqrt7}$ with $\Lambda^{\mathsf T}\Lambda/q=I_6$,
where the independent Gaussian factor and idiosyncratic coordinates have unit stationary variance and AR(1) coefficients 0.8 and 0.2. The nuisance functions depend on normalized observed indices \(S_t=B^{\mathsf T}U_t\), with \(\operatorname{Var}(S_t)=I_6\), so their coefficients in the original \(q\) coordinates are dense. We generate univariate $Z_t = g^\top S_t + \zeta_t$, $W_t = d^\top S_t + 0.6\zeta_t + 0.8C_t + 0.7\nu_t$, and
\begin{align*}
 Y_{t+h}^{(h)}=\beta_hW_t+\gamma^{\mathsf T}S_t+0.8C_t
 +\frac{0.8}{\sqrt{h+1}}\sum_{j=0}^h e_{t+j}\in \mathbb R,
 \qquad \beta_h=0.8e^{-h/8} .
\end{align*}
Multivariate case is explored in the supplement.
Here \(C_t,\nu_t,e_t\) are mutually independent standard Gaussian white-noise sequences. The residual instrument follows
\[
 \zeta_t=0.8\zeta_{t-1}+0.6\{aC_{t-1}+\sqrt{1-a^2}\,\xi_t\},
 \qquad a=0.9,
\]
with an independent standard Gaussian innovation \(\xi_t\). The instrument responds to past confounding shocks but is independent of the current structural error, so that conditional exclusion holds. We also consider \(a=0\) and \(a=0.5\) to assess the effect of this feedback.

We use ten contiguous validation folds with temporal buffers. Within each training fold, the estimator selects the factor count by \(\operatorname{IC}_{p2}\) criterion in \citet{baing2002determining} and fits the three nuisance functions by principal-component regression (PCR). Standard errors use Bartlett HAC with OLS sensitivity adjustment that holds the selected factor spaces fixed. We first conduct a baseline comparison with the above design with shared random numbers. The shuffled iid DML--IV uses the same PCR learner and ten randomly assigned folds, treating the data as iid.  
We then vary sample size, instrument feedback, and nuisance functional form to test our model's sensitivity to these features. All settings have 1000 replications.

\subsection{Results}
\begin{table}[!t]
\centering
\caption{Estimation of $\beta_4=0.485$ at $T=500$, $q=1000$, and six factors. Coverage is for nominal 95\% intervals; length is the mean interval length.}
\label{tab:simulation-main}
\small
\setlength{\tabcolsep}{4pt}
\begin{tabular}{lrrrr}
\toprule
Estimator & Bias & RMSE & Coverage & Length \\
\midrule
Proposed PCR--IV & 0.014 & 0.135 & 0.935 & 0.527 \\
Shuffled iid PCR--IV & -0.042 & 0.145 & 0.810 & 0.363 \\
Desparsified HDLP--IV ratio & 0.016 & 0.137 & 0.879 & 0.425 \\
Conventional LP--IV, 4 controls & 0.520 & 0.568 & 0.227 & 0.748 \\
Oracle-nuisance IV & -0.011 & 0.130 & 0.916 & 0.462 \\
LP + PCR without IV & 0.423 & 0.427 & 0.000 & 0.178 \\
\bottomrule
\end{tabular}
\end{table}
\Cref{tab:simulation-main} reports bias, RMSE, and coverage of nominal 95\% intervals for the structural effect. The proposed estimator has bias 0.014 and RMSE 0.135, compared with \(-0.042\) and 0.145 for shuffled iid PCR. Coverage is 0.935 and 0.810, respectively, with mean interval lengths of 0.527 and 0.363.
The desparsified external-IV ratio, which is a combination of \citet{adamek2024local} and \citet{breunig2020illposed}, has RMSE 0.137 and coverage 0.879. Conventional LP--IV with four prespecified controls leaves substantial factor confounding and has bias 0.520. We also add an ablation without IV, with the same buffered approach and an ordinary GMM estimator, and construct intervals following \citet{ciganovic2026double}, and find zero coverage. 


\Cref{fig:simulation-main}(a) and (d) vary sample size with \(q=1000\). From \(T=250\) to 2000, the proposed estimator's bias falls from 0.020 to 0.003 and its RMSE from 0.209 to 0.066. Coverage ranges from 0.933 to 0.940 across these four sample sizes. The corresponding shuffled iid coverage worsens with sample size, indicating the importance of accounting for temporal dependence. 

Panel (b) is the experiment on instrument feedback, i.e., how the instrument responds to past confounding shocks. At \(a=0\) where the effect of temporal dependence is relatively petty, shuffled iid PCR has lower RMSE with near-identical bias. When $a \ge 0.5$, our method gains the advantage. The comparison changes little with the state dimension. 

Panel (c) adds centered quadratic functions of the factor indices to the nuisance means to show the importance of the capability to handle various nuisance functions. Population linear residualization then targets \(\beta_4+\delta_Z\delta_Y/(\sqrt2\pi)\). Quadratic PCR augments the estimated factor scores with all pairwise products, including squares, and estimates their coefficients within each training fold. At \(\delta_W=\delta_Z=\delta_Y=0.5\), it reduces bias from 0.320 to 0.037 and RMSE from 0.381 to 0.153, and raises coverage rate from 0.598 to 0.948. The instrument-quality experiments in \Cref{app:simulation-details} vary first-stage strength and introduce a direct instrument effect violating exclusion assumption to show the importance of instrument validity. 
\begin{figure}[t]
 \centering
 \includegraphics[width=\textwidth]{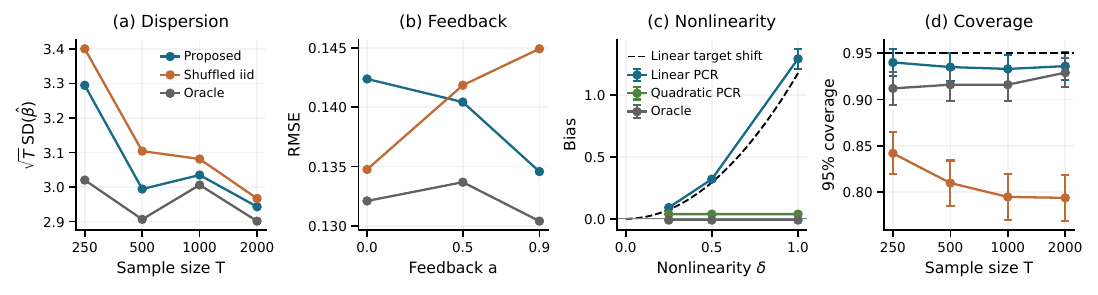}
 \caption{Simulations under factor models with instrument feedback, using 1000 replications per setting. Panels (a) and (d) vary sample size with $q=1000$; panel (b) varies instrument feedback; panel (c) varies a common nonlinear nuisance magnitude. }
 \label{fig:simulation-main}
\end{figure}

\section{Empirical application}\label{sec:application}
We estimate how U.S. macroeconomic outcomes respond to an instrumented change in monetary policy, which responds to a broad information set and to latent economic conditions, so this application calls for both high-dimensional adjustment and an external instrument. We present main results here, with additional details and diagnostics in \Cref{app:empirical-details}. 

\subsection{Data and estimation}\label{subsec:application-data}
\label{subsec:application-estimation}

We combine the April 2026 vintage of FRED-MD \citep{mccracken2016fred} with the updated Romer--Romer monetary-policy shocks \citep{romer2004new,wieland2021updated}. The instrument \(Z_t=\texttt{resid\_full}_t\) covers 468 months from January 1969 through December 2007. They are available at
\url{https://www.stlouisfed.org/research/economists/mccracken/fred-databases} and \url{https://www.icpsr.umich.edu/sites/icpsr/view/studies/135741/versions/V1.0} respectively. Treatment is the monthly change in the one-year Treasury yield, \(W_t=\Delta\texttt{GS1}_t\). Under the linear response model and conditional IV restrictions in \Cref{ass:iv}, \(\beta_h\) measures the response to a one-percentage-point instrumented increase in this yield change at horizons \(h=0,\ldots,24\). 

We study log housing starts, monthly industrial-production log growth, the monthly change in unemployment, and the second difference of log CPI, which
measures inflation acceleration. The state contains 12 lags of 122 transformed FRED-MD series, giving 1,464 pre-treatment controls. We compare Lasso nuisance regressions with regressions on eight principal components of this state. Both learners use five contiguous cross-fitting folds with buffer \(b_h=\max\{12,h+1\}\), with standardization and factor extraction using only each fold's training observations. The residuals enter the orthogonal IV moment in \Cref{alg:orthogonal-lpiv}. We report pointwise 95\% intervals using Bartlett HAC standard errors with lag \(b_h\). The Lasso residualized first-stage HAC \(F\)-statistics range from 38.8 to 53.8.

\subsection{Responses and empirical assessment}\label{subsec:application-results}
\label{subsec:application-diagnostics}

The housing response is the clearest finding in \Cref{fig:application-baseline}. With Lasso adjustment, housing starts decline by \(0.155\) log points at six months and by \(0.168\) at 12 months, both with a negative interval. Factor adjustment gives responses of \(-0.145\) and \(-0.208\), respectively, with both intervals also below zero. The factor fit produces a deeper, more persistent contraction, while ridge and a reduced control set also retain significant six- and twelve-month responses. We also find that changing the HAC bandwidth $\ell_n$ in \eqref{eq:vector-hac} does not affect the significance. 

\begin{figure}[t]
    \centering
    \includegraphics[width=0.94\textwidth]{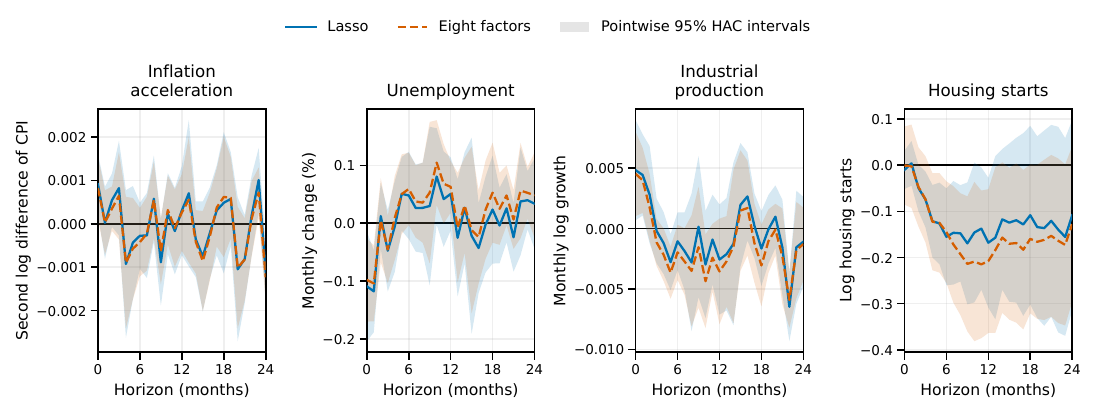}
    \caption{Monetary-policy responses with Lasso and principal-component nuisance
    regressions. Each curve gives the response to a one-percentage-point
    instrumented increase in the monthly change in the one-year Treasury yield;
    shading gives pointwise 95\% Bartlett HAC intervals with lag
    \(b_h=\max\{12,h+1\}\).}
    \label{fig:application-baseline}
\end{figure}


Most intervals for inflation acceleration and production growth include zero under both learners. Unemployment is more sensitive to nuisance specification.
At 12 months, the factor estimate is \(0.063\) percentage points with interval \([0.018,0.108]\), whereas Lasso gives \(0.051\) with interval \([-0.013,0.114]\). The housing response is therefore more stable.

An interesting pattern is the early responses of unemployment and production growth, which conflict with monetary contraction. 
A plausible explanation is the exceptional influence of the monetary-policy reversal in April 1980. The credit controls in the preceding month had unexpected adverse effects on consumer spending which prolonged to subsequent months of easing, and the downturn caused by such consumer behavior remains as the outcome disturbance; see, e.g., \citet{schreft1990credit}.
Sensitivity analyses reveal that omitting the spring 1980 data removes the pattern. This is consistent with \citet{romer2004new}. 

We also added a plasmode experiment which retains the observed treatment, instrument, and state, and generates synthetic housing outcomes with a sparse state signal and errors sampled in 12-month blocks. Coverage under a housing-like effect path stays above 95.5\%, with absolute bias below \(0.04\) log points. False rejection under a zero effect ranges from 1.0\% to 6.0\%. The plasmode results support the finite-sample reliability of our statistical inference procedures in a high-dimensional time-series setting calibrated to the empirical application. \Cref{app:empirical-details} reports the full design and comparison with other nuisance and IV.

\section{Conclusion and discussion}\label{sec:conclusion}

We develop semiparametric inference for horizon-specific causal effects of an endogenous treatment observed along a single time series via instrumental variables. The proposed procedure residualizes the outcome, treatment, and instrument against a high-dimensional pre-treatment state, combines block cross-fitting with a temporal buffer, and estimates the causal response through a Neyman-orthogonal IV moment. We develop non-asymptotic estimation error bound for the estimand and feasible HAC inference for fixed-horizon causal matrices and contrasts. The buffered rate-transfer argument shows when existing prediction guarantees for a generic temporal learner can imply the nuisance conditions an orthogonal estimator generally requires.
In the monetary-policy application, the clearest empirical pattern is a decline in housing starts at medium horizons. 

The present analysis also leaves limitations for future work. First, the identification and inference theories require the average first stage to be well conditioned (or over-identified), so weak-IV-robust tests would be a future direction. Second, the current Gaussian approximation is horizon-wise. Simultaneous inference over a growing collection of horizons that accounts for cross-horizon dependence and HAC studentization would be an important topic. Third, estimation pools observations under time-invariant conditional mean functions. Allowing the nuisance functions and score distribution to drift smoothly, or to change across structural breaks, would broaden the method to more nonstationary environments.

\putbib
\clearpage
\end{bibunit}

\begin{bibunit}
\bibliographypart{supplement}
\supplementspacing
\setcounter{section}{0}
\setcounter{algorithm}{0}
\setcounter{remark}{0}
\setcounter{table}{0}
\setcounter{condition}{0}
\setcounter{innercustom}{0}
\setcounter{footnote}{0}
\renewcommand{\theHsection}{S.\arabic{section}}
\renewcommand{\theHalgorithm}{S.\arabic{algorithm}}
\renewcommand{\theHremark}{S.\arabic{remark}}
\renewcommand{\theHtable}{S.\arabic{table}}

\renewcommand{\thesection}{\Alph{section}}
\numberwithin{equation}{section}
\numberwithin{lemma}{section}
\numberwithin{theorem}{section}
\numberwithin{assumption}{section}
\numberwithin{proposition}{section}
\numberwithin{definition}{section}
\numberwithin{corollary}{section}
\numberwithin{figure}{section}
\numberwithin{example}{section}
\renewcommand\thealgorithm{S.\arabic{algorithm}}
\renewcommand\theremark{S.\arabic{remark}}

\renewcommand{\theHequation}{\theHsection.\arabic{equation}}
\renewcommand{\theHlemma}{\theHsection.\arabic{lemma}}
\renewcommand{\theHtheorem}{\theHsection.\arabic{theorem}}
\renewcommand{\theHassumption}{\theHsection.\arabic{assumption}}
\renewcommand{\theHproposition}{\theHsection.\arabic{proposition}}
\renewcommand{\theHdefinition}{\theHsection.\arabic{definition}}
\renewcommand{\theHcorollary}{\theHsection.\arabic{corollary}}
\renewcommand{\theHfigure}{\theHsection.\arabic{figure}}
\renewcommand{\theHexample}{\theHsection.\arabic{example}}

\title{Supplementary Material for ``\papertitle''}
\pretitle{\centering\LARGE\setstretch{1.15}}
\posttitle{\par\vspace{2em}}
\preauthor{\centering\large\setstretch{1}}
\postauthor{\par\vspace{2em}}
\predate{}
\postdate{}
\renewcommand{\maketitlehooka}{\setlength{\parskip}{0pt}}
\renewcommand{\maketitlehookd}{\vspace{-1em}}

\makeatletter
\renewcommand{\@author}{\Authfont\authorlist}
\let\@thanks\@empty
\makeatother
\date{}

\setlength{\parindent}{0cm}
\setlength{\parskip}{12pt}
\titlespacing{\section}{0pt}{5pt}{-5pt}
\titlespacing{\subsection}{0pt}{5pt}{-5pt}

\maketitle
\begin{abstract}
	\vspace{-2mm}
	This supplement provides proofs and details for the main paper. In \Cref{app:model-setup}, we prove population identification and relate the causal target to structural impulse responses. \Cref{app:orthogonal-estimation-proof} establishes Neyman orthogonality and an explicit non-asymptotic oracle bound, with supporting concentration inequalities for dependent moment products, and derives consistency rates and an asymptotic linear expansion. Under the stated inference conditions, \Cref{sec:inference-proof} proves joint asymptotic normality at each fixed horizon, consistency of long-run covariance estimation, and validity of Wald inference. We prove the buffered transfer of temporal prediction guarantees to nuisance errors on validation blocks in \Cref{app:nuisance-rate-transfer}, and specialize the nuisance conditions to stationary principal-component regression and sparse spline regression in \Cref{sec:nuisance-examples}. \Cref{app:simulation-details} specifies simulation designs, estimator comparisons, and diagnostics for calibration, nonlinear misspecification, and instrument quality. \Cref{app:empirical-details} documents the monetary-policy data and estimation procedure, sensitivity analyses, and plasmode calibration.

    For easy reference, \Cref{prop:identification} is proved in \Cref{app:identification-proof}, 
    \Cref{thm:orthogonal-estimation} in \Cref{app:subsec:est-error-proof}, and \Cref{thm:vector-clt} in \Cref{app:vector-clt-proof}.
    \Cref{prop:hac-consistent,cor:vector-wald,cor:studentized-inference,rem:vector-plugin-sufficient} are proved in \Cref{app:wald-inference}. \Cref{thm:regression-rate-transfer} is proved in \Cref{app:regression-rate-transfer-proof}, and \Cref{cor:iv-nuisance-rates} in \Cref{app:iv-nuisance-rates-proof}.

\end{abstract}
\setcounter{tocdepth}{2}
\begin{spacing}{0.4}
    \section*{\contentsname}
    \makeatletter
    \input{supplement.toc}
    \makeatother
\end{spacing}
\section{Identification and related frameworks}\label{app:model-setup}
\subsection{Proof of Identification}\label{app:identification-proof}
\begin{proof}[Proof of \Cref{prop:identification}]
Fix $h\in\mathcal H$ and an arbitrary $t\in\mathcal T_h$. From \eqref{eq:model},
\[
    \mathbf Y_{t+h} = \mathbf B_h\mathbf W_t+\mathbf g_{h,t}(\mathbf U_t)+\boldsymbol\varepsilon_{t+h}, \qquad \E[\boldsymbol\varepsilon_{t+h}\mid \mathbf U_t] = \mathbf 0.
\]
Taking conditional expectations given $\mathbf U_t$ gives
\(
    \boldsymbol\mu_{Y,h,t}(\mathbf U_t) = \mathbf B_h\boldsymbol\mu_{W,t}(\mathbf U_t)+\mathbf g_{h,t}(\mathbf U_t)
\).
Recall the definition in \eqref{eq:residuals}. Subtracting this conditional mean from \eqref{eq:model} yields
\begin{equation}\label{eq:supp-residualized-model}
    \widetilde{\mathbf Y}_{t+h} = \mathbf B_h\widetilde{\mathbf W}_t+\boldsymbol\varepsilon_{t+h}.
\end{equation}
Multiplying \eqref{eq:supp-residualized-model} on the right by $\widetilde{\mathbf Z}_t^\top$ and taking expectations gives
\begin{equation}\label{eq:supp-moment-before-exclusion}
    \mathbf M_{h,t} = \mathbf B_h\mathbf A_t
    +\E[\boldsymbol\varepsilon_{t+h}\widetilde{\mathbf Z}_t^\top].
\end{equation}
Since \(\widetilde{\mathbf Z}_t = \mathbf Z_t-\boldsymbol\mu_{Z,t}(\mathbf U_t)\) is measurable with respect to \(\sigma(\mathbf Z_t,\mathbf U_t)\), \Cref{ass:iv}(i) gives
\begin{equation} \label{eq:supp-exclusion}
    \E[\boldsymbol\varepsilon_{t+h}\widetilde{\mathbf Z}_t^\top] = \E\big[\E[\boldsymbol\varepsilon_{t+h}\widetilde{\mathbf Z}_t^\top\mid \mathbf Z_t,\mathbf U_t]\big] = \E\left[\E[\boldsymbol\varepsilon_{t+h}\mid \mathbf Z_t,\mathbf U_t]\widetilde{\mathbf Z}_t^\top\right] = \mathbf 0,
\end{equation}
which leads to \eqref{eq:identification-moment} for every $t\in\mathcal T_h$.

Horizontally concatenating these equations gives
\[
    \boldsymbol{\mathcal M}_{h,T} = \mathbf B_h\boldsymbol{\mathcal A}_{h,T}.
\]
By \Cref{ass:iv}(ii), $\boldsymbol{\mathcal A}_{h,T}$ has full row rank, i.e.,
\(\boldsymbol{\mathcal A}_{h,T}\boldsymbol{\mathcal A}_{h,T}^{\dagger} = \mathbf I_s\).
Right-multiplication by \(\boldsymbol{\mathcal A}_{h,T}^{\dagger}\) therefore gives
\[
    \boldsymbol{\mathcal M}_{h,T}\boldsymbol{\mathcal A}_{h,T}^{\dagger} = \mathbf B_h\boldsymbol{\mathcal A}_{h,T}\boldsymbol{\mathcal A}_{h,T}^{\dagger} = \mathbf B_h,
\]
which proves \eqref{eq:identification-formula}.
\end{proof}

\subsection{Relationship with structural VAR/MA models} \label{app:relationship-proof}

The following example connects our causal response matrix to structural impulse responses in a structural vector moving-average (SVMA) model, including those generated by stable structural vector autoregressions (SVARs). In \eqref{eq:svma-wold}, $\mathbf X_t$ stacks the $s$ treatments $\mathbf W_t$ and the $p$ outcomes $\mathbf Y_t$, while $\bm\Theta_j$ maps structural shocks to observables $j$ periods later. The shock blocks $\bm\zeta_t^w$ and $\bm\zeta_t^o$ denote the target shocks and all remaining shocks, respectively, and $\bm\Theta_{W,0}$ and $\bm\Theta_{Y,h}$ record the target shocks' effects on the current treatment and the outcome at horizon $h$. The $r$ external proxies $\mathbf Z_t$ measure the target shocks through the loading matrix $\bm\Lambda$, with measurement noise $\bm\nu_t$, while $\mathbf U_t$ records observed information from before period $t$. 
The convention of structural impulse responses is to impose intervention on the target shocks instead of the treatment, and to report the structural response normalized by the contemporaneous treatment impact. And under the conditions below, such interventions yield exactly $\mathbf B_h$, which is the causal contrast in \eqref{eq:causal-contrast} and the coefficient of interest in \eqref{eq:model}.

\begin{example}[Proxy-SVMA and stable proxy-SVARs]\label{ex:svma}
Let $\mathbf X_t=(\mathbf W_t^\top,\mathbf Y_t^\top)^\top$ follow the stable SVMA \eqref{eq:svma-wold}, driven by i.i.d.\ mean-zero shocks $\bm\zeta_t=((\bm\zeta_t^{w})^\top,(\bm\zeta_t^{o})^\top)^\top$. Suppose $\bm\zeta_t^{w}\in\R^s$ with $\var(\bm\zeta_t^{w}) \succ 0$ is independent of $\bm\zeta_t^{o}$, and let its impacts on $\mathbf W_t$ and $\mathbf Y_{t+h}$ be $\bm\Theta_{W,0}\in\R^{s\times s}$ and $\bm\Theta_{Y,h}\in\R^{p\times s}$, respectively, with $\bm\Theta_{W,0}$ nonsingular. Suppose the instrument is a proxy for the target shocks, bearing the form
\[
    \mathbf Z_t=\bm\Lambda\bm\zeta_t^{w}+\bm\nu_t \in \mathbb R^r, \quad
    \E[\bm\nu_t] = \mathbf 0, \qquad
    \rank(\bm\Lambda)=s \le r,
\]
where $\bm\nu_t$ is proxy noise independent of the entire shock process $\{\bm\zeta_s:s\in\mathbb Z\}$. Consider any observed state of the form $\mathbf U_t=\boldsymbol\phi(\bm\zeta_{t-1},\bm\zeta_{t-2},\ldots)$ for a fixed measurable map $\boldsymbol\phi$ of the pre-$t$ shock history, such as the lagged state of a stable SVAR. When an intervention on $\mathbf W_t$ is implemented through the target shock $\bm\zeta_t^w$, the SVMA induces the potential-outcome model \eqref{eq:po-linear-response}, conditional mean exchangeability \eqref{eq:exchangeability} given \(\mathcal C_t\), and a time-invariant nuisance $\mathbf g_{h,t} \equiv \mathbf g_h$ for which \eqref{eq:model} and the conditional IV restrictions hold, with
\begin{equation}\label{eq:svma-target}
    \mathbf B_h=\bm\Theta_{Y,h}\bm\Theta_{W,0}^{-1},
\end{equation}
which can be recovered via \Cref{prop:identification}. 
\end{example}
Consistent with \citet{plagborg2021local}, linear LP can target the same population impulse responses as VAR representations, but we differ in where dimension-reducing assumptions are imposed. As shown in \Cref{sec:estimation}, our LP-IV framework localizes high-dimensional modeling to the three conditional-mean nuisance functions needed for $\mathbf B_h$ and estimates each horizon directly, rather than imposing a sparse or low-rank finite-order transition law for the entire system and iterating it. This can reduce exposure to system-wide dynamic misspecification and allow much more flexible nuisance learners, as \(\mathbf g_{h,t}(\cdot)\) is no longer constrained to be part of VAR/VMA transition functions. 
\begin{proof}[Proof of \Cref{ex:svma}]
Let \(\mathcal F_{t-1}^{\zeta}=\sigma(\bm\zeta_\ell:\ell\le t-1)\). As in the example statement, the observed state takes the form \(\mathbf U_t=\boldsymbol\phi(\bm\zeta_{t-1},\bm\zeta_{t-2},\ldots)\) for a fixed measurable map \(\boldsymbol\phi\), so \(\mathbf U_t\) is measurable with respect to this field. Write
$\bm\Sigma_w=\E[\bm\zeta_t^w\bm\zeta_t^{w\top}]$.

We first verify the potential-outcome model of \Cref{eq:po-linear-response}. Isolate the period-\(t\) target shock as
\begin{equation}\label{eq:supp-svma-decomp}
    \mathbf W_t=\bm\Theta_{W,0}\bm\zeta_t^w+\mathbf Q_t,
    \qquad
    \mathbf Y_{t+h}=\bm\Theta_{Y,h}\bm\zeta_t^w+\mathbf R_{t,h},
\end{equation}
where \(\mathbf Q_t\) and \(\mathbf R_{t,h}\) collect all remaining shock terms of the SVMA. By construction, \(\mathbf Q_t\) is a measurable function of \(\bm\zeta_t^o\) and the pre-\(t\) shocks, while \(\mathbf R_{t,h}\) is a measurable function of \(\bm\zeta_t^o\), the pre-\(t\) shocks, and the future shocks \(\bm\zeta_{t+1},\ldots,\bm\zeta_{t+h}\). The defining series converge almost surely and in \(L_2\) under the stability of the SVMA and the blanket square integrability.

Take the pre-assignment information field
\[
    \mathcal C_t=\sigma\big(\{\bm\zeta_s:s\le t-1\},\ \bm\zeta_t^o\big),
\]
which contains \(\sigma(\mathbf U_t)\), and note that \(\mathbf Q_t\) is \(\mathcal C_t\)-measurable. As is standard in proxy-SVAR analysis, we interpret an intervention on the treatment as implemented through the target shock. Externally setting \(\mathbf W_t\) to \(w\) means replacing \(\bm\zeta_t^w\) by \(\bm\Theta_{W,0}^{-1}(w-\mathbf Q_t)\) while leaving all other shocks unchanged, which is well defined because \(\bm\Theta_{W,0}\) is nonsingular. This is an interpretive assumption of the proxy-SVAR framework rather than a consequence of the SVMA, since \(\mathbf W_t\) may also load on \(\bm\zeta_t^o\). The induced potential outcome is
\[
    \mathbf Y_{t+h}(w)
    =\bm\Theta_{Y,h}\bm\Theta_{W,0}^{-1}(w-\mathbf Q_t)+\mathbf R_{t,h}
    =\mathbf B_h w+\mathbf R_{t,h}-\mathbf B_h\mathbf Q_t,
    \quad \text{where }
    \mathbf B_h:=\bm\Theta_{Y,h}\bm\Theta_{W,0}^{-1}.
\]
Evaluating at \(w=\mathbf W_t\) recovers the observed outcome by \eqref{eq:supp-svma-decomp}, so the consistency condition holds. By the linearity of the SVMA, split \(\mathbf R_{t,h}=\E[\mathbf R_{t,h}\mid\mathcal C_t]+\boldsymbol\eta_{t+h}\), where \(\boldsymbol\eta_{t+h}\) collects exactly the terms of \(\mathbf R_{t,h}\) that involve the shocks \(\bm\zeta_{t+1},\ldots,\bm\zeta_{t+h}\). Since the shocks are i.i.d., \(\boldsymbol\eta_{t+h}\) is independent of \(\sigma(\mathcal C_t,\bm\zeta_t^w)\) and has mean zero. Therefore,
\[
    \E[\mathbf Y_{t+h}(w)\mid\mathcal C_t]
    =\mathbf B_h w+\mathbf m_h(\mathcal C_t),
    \qquad
    \mathbf m_h(\mathcal C_t)
    =\E[\mathbf R_{t,h}\mid\mathcal C_t]-\mathbf B_h\mathbf Q_t,
\]
which verifies \eqref{eq:po-linear-response}, and the residual \(\boldsymbol\eta_{t+h}(w)=\mathbf Y_{t+h}(w)-\E[\mathbf Y_{t+h}(w)\mid\mathcal C_t]=\boldsymbol\eta_{t+h}\) does not depend on \(w\). Because \(\mathbf W_t\) is measurable with respect to \(\sigma(\mathcal C_t,\bm\zeta_t^w)\), the independence above gives conditional mean exchangeability, expressed as \(\E[\boldsymbol\eta_{t+h}(w)\mid\mathcal C_t,\mathbf W_t]=\mathbf 0\) for every \(w\). The causal contrast in \eqref{eq:causal-contrast} is then \(\mathbf B_h\delta\), establishing the structural interpretation in \eqref{eq:svma-target}.

We next construct the LP representation. Let
\[
    \boldsymbol\mu_{W,t}(\mathbf U_t)=\E[\mathbf W_t\mid\mathbf U_t],
    \qquad
    \boldsymbol\mu_{Y,h,t}(\mathbf U_t)=\E[\mathbf Y_{t+h}\mid\mathbf U_t],
\]
be the forecasts under the true SVMA, and define
\begin{equation}\label{eq:supp-svma-nuisance}
    \mathbf g_{h,t}(\mathbf U_t)
    =\boldsymbol\mu_{Y,h,t}(\mathbf U_t)
     -\mathbf B_h\boldsymbol\mu_{W,t}(\mathbf U_t).
\end{equation}
Then \eqref{eq:model} holds algebraically with $\boldsymbol\varepsilon_{t+h}=\widetilde{\mathbf Y}_{t+h}-\mathbf B_h\widetilde{\mathbf W}_t$ and $\E[\boldsymbol\varepsilon_{t+h}\mid\mathbf U_t]=\mathbf 0$.

This nuisance is time-invariant. The shocks are i.i.d.\ and the SVMA coefficients do not depend on \(t\), so the joint law of \((\mathbf W_t,\mathbf Y_{t+h},\bm\zeta_{t-1},\bm\zeta_{t-2},\ldots)\) does not depend on \(t\). Since \(\mathbf U_t=\boldsymbol\phi(\bm\zeta_{t-1},\bm\zeta_{t-2},\ldots)\) for the fixed map \(\boldsymbol\phi\), the joint law of \((\mathbf W_t,\mathbf Y_{t+h},\mathbf U_t)\) is also free of \(t\). The conditional-mean functions \(\boldsymbol\mu_{W,t}\) and \(\boldsymbol\mu_{Y,h,t}\) are determined by these joint laws, so they can be chosen independent of \(t\), and hence \(\mathbf g_{h,t}\equiv\mathbf g_h\) in \eqref{eq:supp-svma-nuisance}.

We now turn to the validity of the proxy $\mathbf Z_t$ in terms of \Cref{ass:iv}.
To verify conditional exclusion, recall the decomposition \eqref{eq:supp-svma-decomp}. The remainders \(\mathbf Q_t\) and \(\mathbf R_{t,h}\) are measurable functions of the other-period shocks \(\{\bm\zeta_s:s\ne t\}\) together with the contemporaneous non-target block \(\bm\zeta_t^o\), and \(\mathbf U_t\) is measurable with respect to \(\{\bm\zeta_s:s\le t-1\}\). The within-period independence of \(\bm\zeta_t^w\) and \(\bm\zeta_t^o\) and the independence of \(\bm\zeta_t=(\bm\zeta_t^{w\top},\bm\zeta_t^{o\top})^\top\) across time imply that \(\bm\zeta_t^w\) is independent of the shock collection generating \((\mathbf Q_t,\mathbf R_{t,h},\mathbf U_t)\); joining the proxy noise \(\bm\nu_t\), which is independent of the whole shock process, then makes the pair
\((\bm\zeta_t^w,\bm\nu_t)\) independent of
\((\mathbf Q_t,\mathbf R_{t,h},\mathbf U_t)\). Since \(\E[\bm\zeta_t^w\mid\mathbf U_t]=\mathbf0\),
\[
    \boldsymbol\varepsilon_{t+h}
    = \{\bm\Theta_{Y,h}-\mathbf B_h\bm\Theta_{W,0}\}\bm\zeta_t^w
    +\{\mathbf R_{t,h}-\E[\mathbf R_{t,h}\mid\mathbf U_t]\}
    -\mathbf B_h\{\mathbf Q_t-\E[\mathbf Q_t\mid\mathbf U_t]\}.
\]
The first term is zero by the definition of \(\mathbf B_h\). The remaining terms have conditional mean zero given \(\mathbf U_t\) and are conditionally independent of \(\mathbf Z_t=\bm\Lambda\bm\zeta_t^w+\bm\nu_t\). Therefore, we establish that $\E[\boldsymbol\varepsilon_{t+h}\mid\mathbf Z_t,\mathbf U_t]=\mathbf 0$.

Because the proxy is mean zero and independent of \(\mathbf U_t\), \(\widetilde{\mathbf Z}_t=\mathbf Z_t\). The first-stage and reduced-form moments are
\[
    \mathbf A
    =\E[\widetilde{\mathbf W}_t\widetilde{\mathbf Z}_t^\top]
    =\bm\Theta_{W,0}\bm\Sigma_w\bm\Lambda^\top,
    \qquad
    \mathbf M_h
    =\E[\widetilde{\mathbf Y}_{t+h}\widetilde{\mathbf Z}_t^\top]
    =\bm\Theta_{Y,h}\bm\Sigma_w\bm\Lambda^\top
    =\mathbf B_h\mathbf A.
\]
The assumptions on \(\bm\Theta_{W,0}\), \(\bm\Sigma_w\), and \(\bm\Lambda\) imply that \(\mathbf A\) has full row rank. Hence \(\mathbf A\mathbf A^\dagger=\mathbf I_s\), and \Cref{prop:identification} yields
\(
    \mathbf M_h\mathbf A^\dagger
    =\mathbf B_h\mathbf A\mathbf A^\dagger
    =\mathbf B_h
    =\bm\Theta_{Y,h}\bm\Theta_{W,0}^{-1}
\), so that \eqref{eq:svma-target} holds both as a structural impulse response and as the causal contrast in \eqref{eq:causal-contrast} identified in \eqref{eq:identification-formula}.
\end{proof}

For completeness, we consider, as an example, a stable SVMA generated by a known stable SVAR(\(d\))
\[
    \mathbf X_t=\sum_{\ell=1}^d\bm\Phi_\ell\mathbf X_{t-\ell}+\mathbf C \bm\zeta_t.
\]
Let \(\mathbf S_{t-1}=(\mathbf X_{t-1}^\top,\ldots,\mathbf X_{t-d}^\top)^\top\), let \(\mathbf F\) be the companion matrix, and let \(\mathbf J_W\) and \(\mathbf J_Y\) select the contemporaneous treatment and outcome blocks from the companion state. Concetely,
\begin{equation*}
    \mathbf F=
        \begin{pmatrix}
            \bm\Phi_1 & \bm\Phi_2 & \cdots & \bm\Phi_d \\
            \mathbf I_m & \mathbf 0 & \cdots & \mathbf 0 \\
            & \ddots & \ddots & \vdots \\
            \mathbf 0 & & \mathbf I_m & \mathbf 0
        \end{pmatrix}
        \in\mathbb R^{md\times md}, \qquad
        \begin{aligned}
            &\mathbf J_W
            =\begin{pmatrix}\mathbf I_s & \mathbf 0_{s\times p} & \mathbf 0_{s\times m(d-1)}\end{pmatrix} \in\mathbb R^{s\times md} \\
            &\mathbf J_Y =\begin{pmatrix}\mathbf 0_{p\times s} & \mathbf I_p & \mathbf 0_{p\times m(d-1)}\end{pmatrix} \in\mathbb R^{p\times md}
        \end{aligned},
\end{equation*}
where \(m=s+p\) is the total number of variables. 
Taking \(\mathbf U_t=\mathbf S_{t-1}\) gives the model-implied forecasts
\[
    \boldsymbol\mu_{W,t}(\mathbf U_t)
    =\mathbf J_W\mathbf F\mathbf S_{t-1},
    \qquad
    \boldsymbol\mu_{Y,h,t}(\mathbf U_t)
    =\mathbf J_Y\mathbf F^{h+1}\mathbf S_{t-1}.
\]
Thus the nuisance in \eqref{eq:supp-svma-nuisance} has the concrete form
\[
    \mathbf g_{h,t}(\mathbf U_t)
    =\left(\mathbf J_Y\mathbf F^{h+1}
    -\mathbf B_h\mathbf J_W\mathbf F\right)\mathbf S_{t-1}.
\]

\section{Theories for Orthogonal Estimation}\label{app:orthogonal-estimation-proof}

Fix a horizon $h\in\mathcal H$, and write $n = n_h$, $P_n = P_{n,h}$, $\overline{\mathbf A}=\overline{\mathbf A}_{h,T}=n^{-1}\sum_{t=1}^n\mathbf A_t$, and $\beta(k) = \beta_h(k)$. We suppress the horizon subscript when this causes no ambiguity. Let
\[
    \boldsymbol\Delta_{Y,t} = \widehat{\boldsymbol\mu}_{Y,h}(\mathbf U_t)-\boldsymbol\mu_{Y,h}(\mathbf U_t),
    \quad
    \boldsymbol\Delta_{W,t} = \widehat{\boldsymbol\mu}_{W}(\mathbf U_t)-\boldsymbol\mu_W(\mathbf U_t),
    \quad
    \boldsymbol\Delta_{Z,t} = \widehat{\boldsymbol\mu}_{Z}(\mathbf U_t)-\boldsymbol\mu_Z(\mathbf U_t).
\]
We denote the Frobenius norm by $\|\cdot\|_\Fr$ and the operator norm by $\|\cdot\|_{\op}$. Since $p$, $s$, and $r$ are fixed, constants may change from line to line without depending on $T$.
\subsection{Neyman-orthogonality} \label{app:subsec:neyman-orth}
The following proposition verifies Neyman orthogonality of the score in \eqref{eq:orthogonal-score}.
\begin{proposition}[Neyman orthogonality]\label{prop:supp-orthogonality}
Suppose \eqref{eq:model} and \Cref{ass:iv}(i) hold and the outcome, treatment, and instrument are square-integrable. At the true parameter and nuisance functions, the score in \eqref{eq:orthogonal-score} satisfies $\E[\psi_{t,h}(\mathbf B_h,\eta_{0,h})]=\mathbf 0_{p\times r}$. Its population mean also has zero Gateaux derivative with respect to the nuisance functions,
\[
    \left.
    \frac{\partial}{\partial u}
    \E[\psi_{t,h}(\mathbf B_h,\eta_{0,h}+u a)]
    \right|_{u = 0} = \mathbf 0_{p\times r}
\]
for every direction $a = (a_Y,a_W,a_Z)$ with conformable dimensions such that $a(\mathbf U_t)$ is square-integrable and measurable with respect to $\sigma(\mathbf U_t)$.
\end{proposition}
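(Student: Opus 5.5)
The plan is a direct computation. The score is polynomial (bilinear) in the nuisance perturbation, so the Gateaux derivative can be computed exactly without any limiting argument.

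\textbf{Unbiasedness.} At the truth, the residualized model \eqref{eq:supp-residualized-model} from the proof of \Cref{prop:identification} gives
\[
\mathbf Y_{t+h}-\boldsymbol\mu_{Y,h}(\mathbf U_t)-\mathbf B_h\{\mathbf W_t-\boldsymbol\mu_W(\mathbf U_t)\}=\boldsymbol\varepsilon_{t+h}.
\]
Hence $\psi_{t,h}(\mathbf B_h,\eta_{0,h})=\boldsymbol\varepsilon_{t+h}\widetilde{\mathbf Z}_t^\top$, as in \eqref{eq:oracle-score}. Its mean vanishes by the exclusion computation \eqref{eq:supp-exclusion}, which uses \Cref{ass:iv}(i) and the $\sigma(\mathbf Z_t,\mathbf U_t)$-measurability of $\widetilde{\mathbf Z}_t$.

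\textbf{Expansion along the perturbation.} Fix a direction $a=(a_Y,a_W,a_Z)$. Substituting $\eta_{0,h}+ua$ into \eqref{eq:orthogonal-score} gives
\[
\psi_{t,h}(\mathbf B_h,\eta_{0,h}+ua)=\{\boldsymbol\varepsilon_{t+h}-u\,\mathbf d_t\}\{\widetilde{\mathbf Z}_t-u\,a_Z(\mathbf U_t)\}^\top,
\qquad \mathbf d_t:=a_Y(\mathbf U_t)-\mathbf B_h a_W(\mathbf U_t).
\]
Expanding the product, the expectation is a quadratic polynomial in $u$:
\[
\E[\boldsymbol\varepsilon_{t+h}\widetilde{\mathbf Z}_t^\top]
-u\bigl(\E[\mathbf d_t\widetilde{\mathbf Z}_t^\top]+\E[\boldsymbol\varepsilon_{t+h}a_Z(\mathbf U_t)^\top]\bigr)
+u^2\,\E[\mathbf d_t a_Z(\mathbf U_t)^\top].
\]
Every coefficient is finite by Cauchy--Schwarz, because the outcome, treatment and instrument are square-integrable (so the population residuals are too) and $a(\mathbf U_t)$ is square-integrable. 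The derivative at $u=0$ is therefore exactly the negated linear coefficient, and no differentiation under the integral needs separate justification.

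\textbf{The two cross terms.} Both vanish by conditioning on $\mathbf U_t$.
\begin{itemize}
\item Since $\mathbf d_t$ is $\sigma(\mathbf U_t)$-measurable and $\E[\widetilde{\mathbf Z}_t\mid\mathbf U_t]=\mathbf 0$ by the definition of $\boldsymbol\mu_Z$, we get
\[
\E[\mathbf d_t\widetilde{\mathbf Z}_t^\top]=\E\bigl[\mathbf d_t\,\E[\widetilde{\mathbf Z}_t\mid\mathbf U_t]^\top\bigr]=\mathbf 0.
\]
\item Similarly, $\E[\boldsymbol\varepsilon_{t+h}\mid\mathbf U_t]=\mathbf 0$ from \eqref{eq:model} gives $\E[\boldsymbol\varepsilon_{t+h}a_Z(\mathbf U_t)^\top]=\mathbf 0$.
\end{itemize}
Hence the derivative is $\mathbf 0_{p\times r}$.

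\textbf{Main technical point.} The only step needing care is integrability: each conditional-expectation factorization must be applied to a product of $L_2$ variables. Cauchy--Schwarz puts each such product in $L_1$, so the tower property applies. Note also that the orthogonality with respect to $a_Y$ and $a_W$ relies on the residualization of $\mathbf Z_t$. Without it, the cross term would be $\E[\mathbf d_t\mathbf Z_t^\top]\neq\mathbf 0$ in general, which is exactly the motivation given after \Cref{prop:identification}.
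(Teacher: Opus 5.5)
Your proposal is correct and follows essentially the same route as the paper. It reduces the score at the truth to $\boldsymbol\varepsilon_{t+h}\widetilde{\mathbf Z}_t^\top$ via the residualized model and exclusion, expands bilinearly along the perturbation (your $\mathbf d_t$ is the paper's $-\mathbf b_t$), and kills both cross terms by the tower property given $\mathbf U_t$, with integrability from Cauchy--Schwarz. Reading off the linear coefficient of the quadratic in $u$ is equivalent to the paper's exact difference-quotient identity followed by $u\to0$.
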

\begin{proof}
By \eqref{eq:supp-residualized-model}, \(\widetilde{\mathbf Y}_{t+h}-\mathbf B_h\widetilde{\mathbf W}_t = \boldsymbol\varepsilon_{t+h}\).
At the true value, by the exclusion property \eqref{eq:supp-exclusion}, 
\[\E[\psi_{t,h}(\mathbf B_h,\eta_{0,h})] =\E[\widetilde{\mathbf Y}_{t+h}\widetilde{\mathbf Z}_t^\top]-\mathbf B_h\E[\widetilde{\mathbf W}_t\widetilde{\mathbf Z}_t^\top] = \E[\boldsymbol\varepsilon_{t+h}\widetilde{\mathbf Z}_t^\top] = \mathbf 0.\]
Let \(\eta_u = (\boldsymbol\mu_{Y,h}+ua_Y,\boldsymbol\mu_W+ua_W,\boldsymbol\mu_Z+ua_Z)\).  Writing $\mathbf b_t=\mathbf B_h a_W(\mathbf U_t)-a_Y(\mathbf U_t)$ and $\mathbf c_t=a_Z(\mathbf U_t)$, the score evaluated at \(\eta_u\) can be expressed as
\[
    \psi_{t,h}(\mathbf B_h,\eta_u)
    =
    (\boldsymbol\varepsilon_{t+h}+u\mathbf b_t)
    (\widetilde{\mathbf Z}_t-u\mathbf c_t)^\top.
\]
The random vectors \(\mathbf b_t\) and \(\mathbf c_t\) are square-integrable because the direction \(a\) is square-integrable and \(\mathbf B_h\) is fixed. Hence, by Cauchy--Schwarz, all entries of $\mathbf b_t\widetilde{\mathbf Z}_t^\top$, $\boldsymbol\varepsilon_{t+h}\mathbf c_t^\top$, and $\mathbf b_t\mathbf c_t^\top$ are integrable. Expanding the difference quotient entrywise gives the exact identity
\[
    \frac{\E[\psi_{t,h}(\mathbf B_h,\eta_u)]
    -\E[\psi_{t,h}(\mathbf B_h,\eta_{0,h})]}{u}
    =
    \E[\mathbf b_t\widetilde{\mathbf Z}_t^\top]
    -
    \E[\boldsymbol\varepsilon_{t+h}\mathbf c_t^\top]
    -
    u\E[\mathbf b_t\mathbf c_t^\top],
    \qquad u\ne0.
\]
The first two terms are zero by the tower property because \(\mathbf b_t\) and \(\mathbf c_t\) are \(\mathbf U_t\)-measurable and \(\E[\widetilde{\mathbf Z}_t\mid \mathbf U_t]=\mathbf 0\) and \(\E[\boldsymbol\varepsilon_{t+h}\mid \mathbf U_t]=\mathbf 0\). Since \(\mathbf b_t\mathbf c_t^\top\) is integrable, we let \(u\to0\) and conclude that the Gateaux derivative of the expectation at \(u=0\) is zero.
\end{proof}

\subsection{Theorems on estimation error and their proofs} \label{app:subsec:est-error-proof}
We first verify that the variance proxies in the general bound are finite in \Cref{prop:supp-variance-proxy}. We then state and prove \Cref{thm:supp-general}, a non-asymptotic bound for arbitrary nuisance rates $(\bar r_T,r_T)$, where $\bar r_T$ bounds each individual nuisance error and $r_T^2$ bounds the product of the instrument error with the outcome or treatment error, as in \Cref{cond:nuisance-rate}. Specializing to $\bar r_T\log n_h\to0$ and $r_T=o(T^{-1/4})$ recovers \Cref{thm:orthogonal-estimation}. The supporting concentration and empirical-process lemmas are proved in \Cref{app:mixing-lemmas,app:estimation-lemmas}.

\begin{proposition}[Finiteness of the variance proxies]\label{prop:supp-variance-proxy}
Fix $h\in\mathcal H$ and suppose \Cref{ass:iv,ass:estimation-primitive} hold. Then each coordinate of $\boldsymbol\varepsilon_{t+h}\widetilde{\mathbf Z}_t^\top$ and of $\widetilde{\mathbf W}_t\widetilde{\mathbf Z}_t^\top-\mathbf A_t$ is centered, geometrically $\beta$-mixing, and sub-exponential uniformly in $t$, with
\begin{equation} \label{eq:supp-se-bound}
    \sup_t\big\|(\boldsymbol\varepsilon_{t+h}\widetilde{\mathbf Z}_t^\top)_{lk}\big\|_\senorm\le K_\varepsilon K_Z,
    \qquad
    \sup_t\big\|(\widetilde{\mathbf W}_t\widetilde{\mathbf Z}_t^\top-\mathbf A_t)_{jk}\big\|_\senorm\le 2K_WK_Z .
\end{equation}
Consequently, the variance proxies respectively for the oracle score and the first-stage term
\begin{align}
    V_{\varepsilon Z}^2
    &=
    \max_{\substack{1\le l\le p\\ 1\le k\le r}}
    \sup_{t\in\mathbb Z}
    \left[
    \var\big\{(\boldsymbol\varepsilon_{t+h}\widetilde{\mathbf Z}_t^\top)_{lk}\big\}
    +2\sum_{m\ge1}
    \left|\cov\big(
    (\boldsymbol\varepsilon_{t+h}\widetilde{\mathbf Z}_t^\top)_{lk},
    (\boldsymbol\varepsilon_{t+h+m}\widetilde{\mathbf Z}_{t+m}^\top)_{lk}
    \big)\right|
    \right],
    \label{eq:lrv-epsZ}\\
    V_{WZ}^2
    &=
    \max_{\substack{1\le j\le s\\ 1\le k\le r}}
    \sup_{t\in\mathbb Z}
    \left[
    \var\big\{(\widetilde{\mathbf W}_t\widetilde{\mathbf Z}_t^\top)_{jk}\big\}
    +2\sum_{m\ge1}
    \left|\cov\big(
    (\widetilde{\mathbf W}_t\widetilde{\mathbf Z}_t^\top)_{jk},
    (\widetilde{\mathbf W}_{t+m}\widetilde{\mathbf Z}_{t+m}^\top)_{jk}
    \big)\right|
    \right]
    \label{eq:lrv-WZ}
\end{align}
are finite, with $V_{\varepsilon Z}\le C_\ast K_\varepsilon K_Z$ and $V_{WZ}\le 2C_\ast K_W K_Z$ for a constant $C_\ast$ depending only on the mixing constants $(C_\beta,c_\beta)$. 
\end{proposition}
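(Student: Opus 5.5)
We treat the three claims in order: centering, tails and mixing, and finiteness of the long-run variance proxies.

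\textbf{Centering and mixing.} Centering of $\boldsymbol\varepsilon_{t+h}\widetilde{\mathbf Z}_t^\top$ is exactly the exclusion computation \eqref{eq:supp-exclusion}, which follows from \Cref{ass:iv}(i). Centering of $\widetilde{\mathbf W}_t\widetilde{\mathbf Z}_t^\top-\mathbf A_t$ holds by the definition \eqref{eq:first-stage-matrix}. For mixing, each coordinate is a fixed measurable function of $\mathcal X_{t,h}$, minus a deterministic shift. Sigma-fields generated by measurable transforms are sub-fields, so $\beta$-coefficients can only decrease, and \Cref{ass:estimation-primitive}(ii) transfers directly with the same $(C_\beta,c_\beta)$.

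\textbf{Sub-exponential tails.} We use the product inequality $\|XY\|_{\psi_1}\le\|X\|_{\psi_2}\|Y\|_{\psi_2}$, which follows from Young's inequality $|xy|\le (x^2+y^2)/2$ and Cauchy--Schwarz on $\exp$. Apply it to the coordinates $e_l^\top\boldsymbol\varepsilon_{t+h}$ and $e_k^\top\widetilde{\mathbf Z}_t$.

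One subtlety: \Cref{ass:estimation-primitive}(iii) gives conditional sub-Gaussian norms given $\mathbf U_t$ for $\boldsymbol\varepsilon$ and $\widetilde{\mathbf Z}$. A uniform conditional bound implies the unconditional bound, since $\E\exp(X^2/K^2)=\E[\E(\exp(X^2/K^2)\mid\mathbf U_t)]\le 2$. This yields $K_\varepsilon K_Z$ for the first product.

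For the first-stage term, $\|(\widetilde{\mathbf W}_t\widetilde{\mathbf Z}_t^\top)_{jk}\|_{\psi_1}\le K_WK_Z$. Centering costs at most a factor $2$, because $|\E X|\le\|X\|_{\psi_1}$ up to the $\log 2$ normalization, giving $2K_WK_Z$. If the normalization constant creeps in, we absorb it by stating the bound up to an absolute constant.

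\textbf{Variance proxies.} This is the main step. The variance term is bounded by $\E X^2\le 2\|X\|_{\psi_1}^2$. For the covariances, we invoke Davydov's inequality for $\beta$-mixing (hence $\alpha$-mixing, since $\alpha\le\beta$) sequences with exponents $p=q=4$:
\[
|\cov(X_t,X_{t+m})|\le 8\,\alpha(m)^{1/2}\|X_t\|_{4}\|X_{t+m}\|_{4}.
\]
Sub-exponential moments give $\|X\|_4\le 4\|X\|_{\psi_1}$. Summing over $m\ge1$ with $\alpha(m)\le C_\beta e^{-c_\beta m}$ yields
\[
\sum_{m\ge1}|\cov|\le 128\, C_\beta^{1/2}(e^{c_\beta/2}-1)^{-1}\|X\|_{\psi_1}^2.
\]
Taking square roots gives $V_{\varepsilon Z}\le C_\ast K_\varepsilon K_Z$ and $V_{WZ}\le 2C_\ast K_WK_Z$, where $C_\ast^2=2+256\,C_\beta^{1/2}(e^{c_\beta/2}-1)^{-1}$ depends only on $(C_\beta,c_\beta)$.

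The only care needed is that the mixing in \Cref{ass:estimation-primitive}(ii) is defined over lag $k$ in the index $t$, with $\boldsymbol\varepsilon_{t+h}$ already built into $\mathcal X_{t,h}$. The look-ahead $h$ therefore needs no adjustment. The bounds are uniform in $t$, so the supremum over $t$ and the maximum over the finitely many coordinates are harmless.
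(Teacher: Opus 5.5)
Your proposal is correct and follows the paper's proof essentially step for step. Centering comes from the exclusion identity and the definition of $\mathbf A_t$. Mixing is inherited because each coordinate is a measurable function of $\mathcal X_{t,h}$. The unconditional $\psi_2$ bounds follow from the tower property and then the Orlicz product inequality. Finiteness of the proxies comes from Davydov's inequality with fourth moments summed against the geometric mixing envelope, which is exactly what the paper packages in its truncated variance-proxy lemma; your explicit $C_\ast$ is a concrete instance of the paper's constant. Your hedge about the centering constant is unnecessary: $|\E X|\le(\log 2)\|X\|_{\psi_1}$ and $\|c\|_{\psi_1}=|c|/\log 2$ together give exactly the factor $2$.
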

\begin{remark}
The quantities $V_{\varepsilon Z}$ and $V_{WZ}$ are proxies rather than long-run variances, since the autocovariances enter in absolute value and no stationarity is imposed. Therefore, each is an upper bound for the corresponding worst-case per-coordinate long-run variance.
\end{remark}
\begin{proof}
Fix a coordinate. The oracle-score coordinate $(\boldsymbol\varepsilon_{t+h}\widetilde{\mathbf Z}_t^\top)_{lk} = (\boldsymbol\varepsilon_{t+h})_l(\widetilde{\mathbf Z}_t)_k$ is centered by the exclusion argument \eqref{eq:supp-exclusion}. The first-stage coordinate $(\widetilde{\mathbf W}_t\widetilde{\mathbf Z}_t^\top-\mathbf A_t)_{jk}$ is centered at every $t$ by the definition $\mathbf A_t=\E[\widetilde{\mathbf W}_t\widetilde{\mathbf Z}_t^\top]$.

The tail bounds come from the conditional sub-Gaussian controls in \Cref{ass:estimation-primitive}(iii). For any unit vector $\mathbf a$, $\E[\exp\{(\mathbf a^\top\widetilde{\mathbf Z}_t)^2/K_Z^2\}\mid\mathbf U_t]\le 2$ almost surely, so taking unconditional expectations and using the tower property gives $\|\widetilde{\mathbf Z}_t\|_\sgnorm\le K_Z$.

The same argument above gives $\|\boldsymbol\varepsilon_{t+h}\|_\sgnorm\le K_\varepsilon$, while $\|\widetilde{\mathbf W}_t\|_\sgnorm\le K_W$ is already unconditional. The product property of Orlicz norms \cite[Lemma 2.8.6]{vershynin2026high} then yields $\|(\boldsymbol\varepsilon_{t+h})_l(\widetilde{\mathbf Z}_t)_k\|_\senorm\le K_\varepsilon K_Z$ and $\|(\widetilde{\mathbf W}_t)_j(\widetilde{\mathbf Z}_t)_k\|_\senorm\le K_W K_Z$ uniformly in $t$. Centering the latter variable at its time-specific expectation gives the bound $2K_W K_Z$.

We now turn to the absolute summability. The point that needs care is that the variance proxies in \eqref{eq:lrv-epsZ} and \eqref{eq:lrv-WZ} are formed from scalar coordinate processes, whereas the primitive mixing condition is imposed on the vector process
$\mathcal X_{t,h} := (\boldsymbol\varepsilon_{t+h},\widetilde{\mathbf W}_t, \widetilde{\mathbf Z}_t,\mathbf U_t)$.

For fixed coordinates $l,k,j$, define
\[
    \zeta^{\varepsilon Z}_{t,lk}
    =
    (\boldsymbol\varepsilon_{t+h})_l(\widetilde{\mathbf Z}_t)_k,
    \qquad
    \zeta^{WZ}_{t,jk}
    =
    (\widetilde{\mathbf W}_t)_j(\widetilde{\mathbf Z}_t)_k-(\mathbf A_t)_{jk}.
\]
Each of these variables is a measurable function of $\mathcal X_{t,h}$. Hence their $\beta$-mixing coefficient is no larger than that of $(\mathcal X_{t,h})$. Consequently, both scalar coordinate sequences are geometrically $\beta$-mixing with the same envelope $C_\beta e^{-c_\beta k}$.
Thus, with \eqref{eq:supp-se-bound}, \Cref{lem:supp-truncated-lrv} applies to each fixed oracle-score coordinate with $K_\zeta=K_\varepsilon K_Z$ and to each fixed first-stage coordinate with $K_\zeta=2K_WK_Z$. For the oracle coordinate, the bracketed expression in \eqref{eq:lrv-epsZ} is exactly the $V_0^2$ quantity of that lemma for $(\zeta^{\varepsilon Z}_{t,lk})$. For the first-stage coordinate, subtracting the time-specific deterministic expectation $(\mathbf A_t)_{jk}$ does not change covariances, so the bracketed expression in \eqref{eq:lrv-WZ} is the same $V_0^2$ quantity for $(\zeta^{WZ}_{t,jk})$.

By \eqref{eq:supp-variance-bound}, these coordinatewise proxies satisfy
$V_{0,\varepsilon Z,lk}^2 \le C_\ast^2K_\varepsilon^2K_Z^2$ and 
$V_{0,WZ,jk}^2 \le 4C_\ast^2K_W^2K_Z^2$,
where $C_\ast$ depends only on $(C_\beta,c_\beta)$. The maxima in \eqref{eq:lrv-epsZ} and \eqref{eq:lrv-WZ} range over finitely many fixed coordinates, so they preserve these bounds and yield $V_{\varepsilon Z}\le C_\ast K_\varepsilon K_Z$ and $V_{WZ}\le 2C_\ast K_WK_Z$.
\end{proof}

\begin{theorem}[General non-asymptotic oracle bound]\label{thm:supp-general}
Fix $h\in\mathcal H$ and any $c>0$. Suppose \Cref{ass:iv,ass:estimation-primitive,ass:average-relevance,ass:cross-fitting}, \Cref{cond:sg-nuisance-error}, and the event $\mathcal E_{T,h}$ of \Cref{cond:nuisance-rate} hold except that $\bar r_T$ and $r_T$ are arbitrary sequences with $r_T\le\bar r_T$. Let $\widehat{\mathbf B}_h$ be computed by \Cref{alg:orthogonal-lpiv} using blocked cross-fitting construction. Write $\rho_{n,h} = \{\log n_h/n_h\}^{1/2}$, let $c_3 = c_3(C_\beta,c_\beta)>0$ be the constant in the geometrically mixing Bernstein inequality of \citet{merlevede2009bernstein}, and set
\[
    \kappa_c = \Big\{\tfrac{32(c+1)}{c_3}\Big\}^{1/2},
    \qquad
    K_\circ = K_\varepsilon+(1+C_B)K_Z,
    \qquad
    \bar\ell_T =
    \begin{cases}
        1\vee\log(\bar K/\bar r_T),&\bar r_T>0,\\
        1,&\bar r_T=0.
    \end{cases}
\]
Let $C_c=C_c(c,C_\beta,c_\beta)<\infty$ be a sufficiently large constant
as specified in the proof of \Cref{lem:supp-crossfit-firstorder}.
Let $L$ be the number of cross-fitting folds, and set $N_T = \max\{N_0,\ N_1,\ 3(pr+sr+s+r+3Lpr)\}$,
where $N_0$ and $N_1$ are the dimension-free concentration thresholds of \Cref{lem:supp-oracle-empirical,lem:supp-crossfit-firstorder}. Suppose $T\ge T_0$ and $n_h\ge N_T$, and suppose the first-stage deviation
\begin{equation}\label{eq:supp-first-stage-smallness}
    D_A:=\kappa_c\sqrt{sr}\,V_{WZ}\,\rho_{n,h}
    +2\,(\sqrt{s}\,K_W+\sqrt{r}\,K_Z)\,\bar r_T+r_T^2
\end{equation}
satisfies $D_A\le\underline\sigma_A/2$. Then, with probability at least
\[
    \Pr(\mathcal E_{T,h})-n_h^{-c}-3L\,C_\beta\exp(-c_\beta b_T),
\]
the matrix $\widehat{\mathbf A}_h$ has full row rank and
\begin{equation}\label{eq:supp-general-bound}
    \|\widehat{\mathbf B}_h-\mathbf B_h\|_\Fr
    \le \frac{2}{\underline\sigma_A}\big\{
    \kappa_c\sqrt{pr}\,V_{\varepsilon Z}\,\rho_{n,h}+(1+C_B)\,r_T^2
    +LC_c\sqrt{pr}\,K_\circ
    \{\bar r_T\sqrt{\bar\ell_T}\,\rho_{n,h}+\bar K\{\log n_h\}^4/n_h\}
    \big\}.
\end{equation}
Moreover, \(\widehat{\mathbf B}_h-\mathbf B_h = P_{n,h}(\boldsymbol\varepsilon_{t+h}\widetilde{\mathbf Z}_t^\top)\overline{\mathbf A}_{h,T}^\dagger+\mathbf R_{T,h}\), where on the same event
\begin{equation}\label{eq:supp-general-remainder}
    \|\mathbf R_{T,h}\|_\Fr
    \le \frac{2}{\underline\sigma_A}\big\{(1+C_B)\,r_T^2
    +LC_c\sqrt{pr}\,K_\circ
    \{\bar r_T\sqrt{\bar\ell_T}\,\rho_{n,h}+\bar K\{\log n_h\}^4/n_h\}
    \big\}
    +\frac{3\kappa_c\sqrt{pr}\,V_{\varepsilon Z}}{\underline\sigma_A^2}\,\rho_{n,h}\,D_A .
\end{equation}
\end{theorem}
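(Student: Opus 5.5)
The proof rests on the decomposition of the sample moments. Write $\widehat{\mathbf e}_{Z,t}=\widetilde{\mathbf Z}_t-\boldsymbol\Delta_{Z,t}$, $\widehat{\mathbf e}_{W,t}=\widetilde{\mathbf W}_t-\boldsymbol\Delta_{W,t}$, and $\widehat{\mathbf e}_{Y,t,h}=\widetilde{\mathbf Y}_{t+h}-\boldsymbol\Delta_{Y,t}$. Using \eqref{eq:supp-residualized-model}, $\widehat{\mathbf e}_{Y,t,h}-\mathbf B_h\widehat{\mathbf e}_{W,t}=\boldsymbol\varepsilon_{t+h}+\mathbf b_t$ with $\mathbf b_t=\mathbf B_h\boldsymbol\Delta_{W,t}-\boldsymbol\Delta_{Y,t}$. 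Hence
\[
\widehat{\mathbf M}_h-\mathbf B_h\widehat{\mathbf A}_h
=P_n(\boldsymbol\varepsilon_{t+h}\widetilde{\mathbf Z}_t^\top)
+P_n(\mathbf b_t\widetilde{\mathbf Z}_t^\top)
-P_n(\boldsymbol\varepsilon_{t+h}\boldsymbol\Delta_{Z,t}^\top)
-P_n(\mathbf b_t\boldsymbol\Delta_{Z,t}^\top).
\]
Similarly, $\widehat{\mathbf A}_h-\overline{\mathbf A}=P_n(\widetilde{\mathbf W}_t\widetilde{\mathbf Z}_t^\top-\mathbf A_t)-P_n(\boldsymbol\Delta_{W,t}\widetilde{\mathbf Z}_t^\top)-P_n(\widetilde{\mathbf W}_t\boldsymbol\Delta_{Z,t}^\top)+P_n(\boldsymbol\Delta_{W,t}\boldsymbol\Delta_{Z,t}^\top)$. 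Since $\widehat{\mathbf A}_h$ has full row rank, $\widehat{\mathbf B}_h-\mathbf B_h=(\widehat{\mathbf M}_h-\mathbf B_h\widehat{\mathbf A}_h)\widehat{\mathbf A}_h^\dagger$. The plan is therefore to bound each piece on a single good event and then control $\widehat{\mathbf A}_h^\dagger$.

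The terms are handled in the following order.

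First, the oracle terms. Apply the Merlevède--Peligrad--Rio Bernstein inequality, packaged as \Cref{lem:supp-oracle-empirical}, coordinatewise to the centered, sub-exponential, geometrically mixing sequences of \Cref{prop:supp-variance-proxy}. Take a union bound over the $pr+sr$ coordinates. This gives $\|P_n(\boldsymbol\varepsilon\widetilde{\mathbf Z}^\top)\|_\Fr\le\kappa_c\sqrt{pr}V_{\varepsilon Z}\rho_{n,h}$ and the analogous bound with $V_{WZ}$, each failing with probability of order $n^{-c}$.

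Second, the product terms. By Cauchy--Schwarz in the empirical norm, $\|P_n(\mathbf b_t\boldsymbol\Delta_{Z,t}^\top)\|_\Fr\le\|\mathbf b\|_{n,h,2}\|\boldsymbol\Delta_Z\|_{n,h,2}\le(1+C_B)r_T^2$ on $\mathcal E_{T,h}$. The same argument bounds $P_n(\boldsymbol\Delta_W\boldsymbol\Delta_Z^\top)$ by $r_T^2$.

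Third, the first-order cross terms $P_n(\mathbf b_t\widetilde{\mathbf Z}_t^\top)$ and $P_n(\boldsymbol\varepsilon_{t+h}\boldsymbol\Delta_{Z,t}^\top)$, together with their first-stage analogues. This step carries the Neyman orthogonality: each summand is conditionally centered given $\mathbf U_t$ when the fold function $\boldsymbol\delta_{a,\ell}$ is frozen. The steps are:
\begin{enumerate}
\item Fold by fold, use Berbee's coupling with the buffer $b_T$ to replace the validation block by a copy independent of $\mathcal G_\ell$, at a cost of at most $3\beta(b_T)$ per fold. Summed over folds this produces the $3LC_\beta e^{-c_\beta b_T}$ term.
\item Conditionally on $\mathcal G_\ell$, the map $\boldsymbol\delta_{a,\ell}$ is fixed, so $(\boldsymbol\delta_{a,\ell}(\mathbf U_t))_k(\widetilde{\mathbf Z}_t)_j$ is a centered, mixing, sub-exponential sequence. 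Its sub-exponential norm is bounded via \Cref{cond:sg-nuisance-error}, and its variance is of order $\bar r_T^2$ up to the logarithmic factor $\bar\ell_T$. The $\bar\ell_T$ factor comes from truncating at level $\bar K$ and using the $\psi_2$ envelope to convert the $L_2$ bound into a variance bound for the product.
\item Apply the Bernstein inequality again. This yields the term $C_c\sqrt{pr}K_\circ(\bar r_T\sqrt{\bar\ell_T}\rho_{n,h}+\bar K(\log n)^4/n)$ per fold; this is \Cref{lem:supp-crossfit-firstorder}.
\end{enumerate}
For the $\widehat{\mathbf A}_h$ cross terms, the cruder bound $2(\sqrt sK_W+\sqrt rK_Z)\bar r_T$ suffices, obtained by Cauchy--Schwarz with the empirical sub-Gaussian second moments of $\widetilde{\mathbf W}_t$ and $\widetilde{\mathbf Z}_t$. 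This gives $\|\widehat{\mathbf A}_h-\overline{\mathbf A}\|\le D_A$.

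Fourth, the perturbation step. With $D_A\le\underline\sigma_A/2$, Weyl's inequality gives $\sigma_{\min}(\widehat{\mathbf A}_h)\ge\underline\sigma_A/2$, so $\widehat{\mathbf A}_h$ has full row rank and $\|\widehat{\mathbf A}_h^\dagger\|_{\op}\le2/\underline\sigma_A$. Combining with the first three steps yields \eqref{eq:supp-general-bound}. For the remainder \eqref{eq:supp-general-remainder}, write $\widehat{\mathbf A}_h^\dagger-\overline{\mathbf A}^\dagger$ via the standard perturbation bound for right pseudo-inverses, $\|\widehat{\mathbf A}^\dagger-\overline{\mathbf A}^\dagger\|\le 3\|\widehat{\mathbf A}-\overline{\mathbf A}\|/(\sigma_{\min}(\widehat{\mathbf A})\sigma_{\min}(\overline{\mathbf A}))$, where the constant $3$ is the usual one for full-row-rank perturbations. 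Multiplying this by the oracle term gives the cross term $3\kappa_c\sqrt{pr}V_{\varepsilon Z}\rho_{n,h}D_A/\underline\sigma_A^2$. The nuisance terms multiplied by $\widehat{\mathbf A}_h^\dagger$ supply the remaining part of the remainder bound.

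The main obstacle is the third step. The fitted function is built from data dependent on the validation block, and the summands are products of a sub-Gaussian residual with a random function of the state. Getting the sharp $\bar r_T\sqrt{\bar\ell_T}$ scale requires the Berbee coupling to hold for the entire block simultaneously. It also requires a truncation argument producing the $(\log n)^4/n$ Bernstein tail, since mixing Bernstein for sub-exponential variables carries a $(\log n)^2$ loss per unbounded factor.
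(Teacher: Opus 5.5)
Your decomposition and the order of your steps are the paper's own. That covers the oracle/first-order/quadratic split of $(\widehat{\mathbf M}_h-\mathbf B_h\widehat{\mathbf A}_h)\widehat{\mathbf A}_h^\dagger$, \Cref{lem:supp-oracle-empirical} and \Cref{lem:supp-crossfit-firstorder} for the first two pieces, Cauchy--Schwarz with the product rate for the third, the cruder Cauchy--Schwarz bound for the first-stage cross terms, Weyl, and a pseudo-inverse perturbation bound. However, your last step does not give the constant you claim. Your inequality $\|\widehat{\mathbf A}_h^\dagger-\overline{\mathbf A}^\dagger\|_{\op}\le 3\|\widehat{\mathbf A}_h-\overline{\mathbf A}\|_{\op}/\{\sigma_{\min}(\widehat{\mathbf A}_h)\sigma_{\min}(\overline{\mathbf A})\}$, combined with $\sigma_{\min}(\widehat{\mathbf A}_h)\ge\underline\sigma_A/2$, yields only $6D_A/\underline\sigma_A^2$. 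The resulting cross term is $6\kappa_c\sqrt{pr}\,V_{\varepsilon Z}\rho_{n,h}D_A/\underline\sigma_A^2$, twice the one in \eqref{eq:supp-general-remainder}. The $3$ in the theorem already absorbs the factor $2$ from $\|\widehat{\mathbf A}_h^\dagger\|_{\op}\le2/\underline\sigma_A$. The paper uses Wedin's bound with constant $\sqrt2$, which applies because both matrices have full row rank $s$, so $\sqrt2\cdot(2/\underline\sigma_A)(1/\underline\sigma_A)=2\sqrt2/\underline\sigma_A^2\le3/\underline\sigma_A^2$.

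The probability bookkeeping also has to be exact rather than ``of order $n^{-c}$''. Each coordinate event must fail with probability at most $3n^{-(c+1)}$: the $pr+sr$ oracle and first-stage coordinates, the $s+r$ second-moment coordinates behind your Cauchy--Schwarz bound for $\widehat{\mathbf A}_h$, and the $3Lpr$ first-order pieces. The threshold $n_h\ge3(pr+sr+s+r+3Lpr)$ built into $N_T$ is what collapses their union to $n_h^{-c}$.

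Your account of where $\bar\ell_T$ comes from is also not the actual mechanism. This matters if you intend to prove \Cref{lem:supp-crossfit-firstorder} rather than cite it. No truncation at level $\bar K$ is involved. The single-time variance of $d(\mathbf U_t^\ast)\xi_t^\ast$ is already at most $CK_\xi^2\bar r_T^2$, with no logarithm, by the conditional sub-Gaussian bound on $\xi_t^\ast$ given $\mathbf U_t^\ast$. What the mixing Bernstein inequality needs is the long-run proxy $\sup_a\sum_m|\cov(\zeta_a^\ast,\zeta_{a+m}^\ast\mid\mathcal G)|$. Davydov's inequality controls these covariances only through $L_4$ norms, which the envelope bounds by $\bar K$, giving $\bar K^2e^{-c_\beta m/2}$. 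Used alone, that yields a deviation of order $\bar K\rho_{n,h}$ instead of $\bar r_T\sqrt{\bar\ell_T}\rho_{n,h}$.

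The paper instead splits the lag sum at $m_T\asymp c_\beta^{-1}\log(\bar K/\bar r_T)$. It uses Cauchy--Schwarz below $m_T$, at most $C\bar r_T^2$ per lag, and Davydov above it, so $\bar\ell_T$ simply counts the short lags. Similarly, $\{\log n\}^4/n$ is not a per-factor loss. It is the product of the truncation level $M_n\asymp\log n$, the $(\log n)^2$ in the Merlev\`ede--Peligrad--Rio denominator, and the $\log n$ needed to reach failure probability $n^{-(c+1)}$.
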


\begin{proof}[Proof of \Cref{thm:supp-general}]
Let $\mathcal G_n$ be the event on which $\mathcal E_{T,h}$ and all four bounds \eqref{eq:supp-ep-oracle}--\eqref{eq:supp-ep-firstorder} hold. By \Cref{lem:supp-oracle-empirical,lem:supp-crossfit-firstorder} and a union bound,
\[
    \Pr(\mathcal G_n)
    \ge \Pr(\mathcal E_{T,h})
    -3\{pr+sr+s+r+3Lpr\}n^{-(c+1)}
    -3L\,C_\beta e^{-c_\beta b_T}.
\]
Since $n\ge 3(pr+sr+s+r+3Lpr)$, the middle term is at most $n^{-c}$. On $\mathcal G_n$, \Cref{lem:supp-first-stage} and the smallness condition $D_A\le\underline\sigma_A/2$ imply that $\widehat{\mathbf A}_h$ has full row rank and \(\|\widehat{\mathbf A}_h^\dagger\|_{\op}\le 2/\underline\sigma_A\). Suppress the subsript $h$ when no confusion arises. 
Since \(\widehat{\mathbf A}_h\widehat{\mathbf A}_h^\dagger = \mathbf I_s\),
\begin{align}
    \widehat{\mathbf B}_h-\mathbf B_h
    = \widehat{\mathbf M}_h\widehat{\mathbf A}_h^\dagger
      -\mathbf B_h\widehat{\mathbf A}_h\widehat{\mathbf A}_h^\dagger
    = P_n\{(\widehat{\mathbf e}_{Y,t,h}-\mathbf B_h\widehat{\mathbf e}_{W,t})
      \widehat{\mathbf e}_{Z,t}^\top\}\widehat{\mathbf A}_h^\dagger .
    \label{eq:supp-estimator-identity}
\end{align}
Using \(\widetilde{\mathbf Y}_{t+h}-\mathbf B_h\widetilde{\mathbf W}_t = \boldsymbol\varepsilon_{t+h}\), the numerator in \eqref{eq:supp-estimator-identity} expands as
\[\begin{aligned}
    P_n\{(\widehat{\mathbf e}_{Y,t,h}-\mathbf B_h\widehat{\mathbf e}_{W,t})
      \widehat{\mathbf e}_{Z,t}^\top\} 
    = &\underbrace{P_n(\boldsymbol\varepsilon_{t+h}\widetilde{\mathbf Z}_t^\top)}_{\mathrm{\Romannum{1}}}
      +\underbrace{P_n\left[-\boldsymbol\Delta_{Y,t}\widetilde{\mathbf Z}_t^\top
      +\mathbf B_h\boldsymbol\Delta_{W,t}\widetilde{\mathbf Z}_t^\top
      -\boldsymbol\varepsilon_{t+h}\boldsymbol\Delta_{Z,t}^\top\right]}_{\mathrm{\Romannum{2}}} \\
      &+\underbrace{P_n\{(\boldsymbol\Delta_{Y,t}-\mathbf B_h\boldsymbol\Delta_{W,t})
      \boldsymbol\Delta_{Z,t}^\top\}}_{\mathrm{\Romannum{3}}}.
\end{aligned}\]
By \Cref{lem:supp-oracle-empirical,lem:supp-crossfit-firstorder}, the first two terms \Romannum{1} and \Romannum{2} are bounded by
\[
    \kappa_c\sqrt{pr}\,V_{\varepsilon Z}\rho_n
    \quad\text{and}\quad
    LC_c\sqrt{pr}\,K_\circ
    \{\bar r_T\sqrt{\bar\ell_T}\,\rho_n+\bar K\{\log n_h\}^4/n_h\},
\]
respectively. Here term \Romannum{2} is a first-order empirical process whose three summands pair a single nuisance error with a mean-zero residual factor, so its bound scales with the individual error size $\bar r_T$ of \Cref{cond:nuisance-rate}(i). The final term is quadratic and pairs two nuisance errors, and Cauchy--Schwarz gives
\[
    \left\|P_n\{(\boldsymbol\Delta_{Y,t}-\mathbf B_h\boldsymbol\Delta_{W,t})
      \boldsymbol\Delta_{Z,t}^\top\}\right\|_\Fr
    \le \|\boldsymbol\Delta_{Y}\|_{n,h,2}\|\boldsymbol\Delta_{Z}\|_{n,h,2}
    +C_B\|\boldsymbol\Delta_{W}\|_{n,h,2}\|\boldsymbol\Delta_{Z}\|_{n,h,2}
    \le (1+C_B)\, r_T^2 ,
\]
where the last step uses the product control \(\|\boldsymbol\Delta_{Z}\|_{n,h,2}\max\{\|\boldsymbol\Delta_{Y}\|_{n,h,2},\|\boldsymbol\Delta_{W}\|_{n,h,2}\}\le r_T^2\) of \Cref{cond:nuisance-rate}(ii) on $\mathcal E_{T,h}$. This is the only place the product error rate enters the error bound. Multiplying the three numerator bounds by \(\|\widehat{\mathbf A}_h^\dagger\|_{\op}\le 2/\underline\sigma_A\) gives \eqref{eq:supp-general-bound}.

For the expansion, subtract the oracle term $P_n(\boldsymbol\varepsilon_{t+h}\widetilde{\mathbf Z}_t^\top)\overline{\mathbf A}^\dagger$ from \eqref{eq:supp-estimator-identity}. Write the numerator as
$P_n(\boldsymbol\varepsilon_{t+h}\widetilde{\mathbf Z}_t^\top)+\mathbf S_n$,
where $\mathbf S_n=\text{\Romannum{2}} + \text{\Romannum{3}}$ collects the first-order and quadratic nuisance terms. The remainder is
\[
    \mathbf R_{T,h} = P_n(\boldsymbol\varepsilon_{t+h}\widetilde{\mathbf Z}_t^\top)(\widehat{\mathbf A}_h^\dagger-\overline{\mathbf A}^\dagger)
    +\mathbf S_n\widehat{\mathbf A}_h^\dagger .
\]
By \eqref{eq:supp-ep-oracle} and \Cref{lem:supp-first-stage}, the first term is bounded by
\[
    \kappa_c\sqrt{pr}\,V_{\varepsilon Z}\rho_n\cdot\frac{3}{\underline\sigma_A^2}\,D_A = \frac{3\kappa_c\sqrt{pr}\,V_{\varepsilon Z}}{\underline\sigma_A^2}\,\rho_n D_A ,
\]
which is the last term in \eqref{eq:supp-general-remainder}. For the remaining term, the nuisance part satisfies
\[
    \|\mathbf S_n\|_\Fr
    \le LC_c\sqrt{pr}\,K_\circ
    \{\bar r_T\sqrt{\bar\ell_T}\,\rho_n+\bar K\{\log n_h\}^4/n_h\}
    +(1+C_B)r_T^2,
\]
and \(\|\widehat{\mathbf A}_h^\dagger\|_{\op}\le 2/\underline\sigma_A\). Hence \(\mathbf S_n\widehat{\mathbf A}_h^\dagger\) contributes
\[
    \frac{2}{\underline\sigma_A}
    \{(1+C_B)r_T^2+
    LC_c\sqrt{pr}\,K_\circ
    \{\bar r_T\sqrt{\bar\ell_T}\,\rho_n+\bar K\{\log n_h\}^4/n_h\}\}.
\]
Combining the two displays gives the expansion and the bound \eqref{eq:supp-general-remainder} on $\mathbf R_{T,h}$.
\end{proof}

\begin{proof}[Proof of \Cref{thm:orthogonal-estimation}]
We deduce both parts from \Cref{thm:supp-general}. Recall $n_h = T-h\asymp T$. By \Cref{cond:nuisance-rate}(ii), $r_T = o(T^{-1/4}) = o(n_h^{-1/4})\to0$, and by \Cref{cond:nuisance-rate}(i), $\bar r_T\log n_h\to0$, so in particular $\bar r_T\to0$. For any fixed $c>0$, $\kappa_c$ does not depend on $T$, so the first-stage deviation $D_A = O(\rho_{n,h}+\bar r_T)\to0$ and the smallness condition $D_A\le\underline\sigma_A/2$ holds for all sufficiently large $T$. Since $p$, $s$, $r$, and the number of folds $L$ are fixed, the threshold $N_T$ of \Cref{thm:supp-general} is a constant, so $n_h\ge N_T$ also holds for all sufficiently large $T$. When $\bar r_T>0$, $\bar\ell_T = 1\vee\log(\bar K/\bar r_T)$ satisfies $\bar\ell_T\asymp 1+\log(1/\bar r_T)$ for large $T$, since $\bar K$ is fixed and $\bar r_T\to0$. If $\bar r_T=0$, all population nuisance errors vanish and the corresponding first-order contribution is zero, so the same conclusions follow directly.

\emph{Part (i).} Fix $c>0$ and apply \Cref{thm:supp-general}, whose conclusion holds with probability at least \(\Pr(\mathcal E_{T,h})-n_h^{-c}-3L\,C_\beta e^{-c_\beta b_T}\). Since \(\kappa_c = \{32(c+1)/c_3\}^{1/2} = \{32/c_3\}^{1/2}\sqrt{1+c}\), the oracle term carries the factor $\sqrt{1+c}$. For fixed $c$, the constant $C_c$ does not depend on $T$. Moreover, $\bar r_T\sqrt{\bar\ell_T}\to0$ and $\bar K\log^4 n_h/n_h=o(\rho_{n,h})$, so the entire cross-fitted first-order term in \eqref{eq:supp-general-bound} is $o(\rho_{n,h})$ and is absorbed into the oracle term for all sufficiently large $T$. Collecting the remaining fixed factors into a constant $C$ that does not depend on $c$ or $T$ yields \eqref{eq:consistency-rate}.

\emph{Part (ii).} Using $D_A = O(\rho_{n,h}+\bar r_T)$, the remainder bound \eqref{eq:supp-general-remainder} reads
\[
    \|\mathbf R_{T,h}\|_\Fr
    \lesssim_P
    r_T^2+\bar r_T\sqrt{\bar\ell_T}\,\rho_{n,h}+\bar K\{\log n_h\}^4/n_h+\rho_{n,h}^2+\rho_{n,h}\bar r_T .
\]
The first, third, and fourth terms are $o(n_h^{-1/2})$: here $r_T^2 = o(n_h^{-1/2})$ by the product rate, \(\bar K\{\log n_h\}^4/n_h = O(n_h^{-1}\log^4 n_h) = o(n_h^{-1/2})\), and \(\rho_{n,h}^2 = O(n_h^{-1}\log n_h) = o(n_h^{-1/2})\). The last term is also $o(n_h^{-1/2})$: since $\bar r_T\log n_h\to0$ gives $\bar r_T = o(1/\log n_h)$,
\[
    \rho_{n,h}\bar r_T = \sqrt{\frac{\log n_h}{n_h}}\,\bar r_T
    = o\!\left(\frac{1}{\sqrt{n_h\log n_h}}\right) = o(n_h^{-1/2}).
\]
For the logarithmic product term, write $\bar r_T\sqrt{\bar\ell_T}\,\rho_{n,h} = n_h^{-1/2}\,\bar r_T\sqrt{\bar\ell_T\log n_h}$, so it suffices that $\bar r_T\sqrt{\bar\ell_T\log n_h}\to0$. By the arithmetic--geometric mean inequality,
\[
    \bar r_T\sqrt{\bar\ell_T\log n_h}
    \le \tfrac12\,\bar r_T\bar\ell_T+\tfrac12\,\bar r_T\log n_h .
\]
The second summand vanishes by \Cref{cond:nuisance-rate}(i). For the first, $\bar r_T\bar\ell_T = \bar r_T\{1\vee\log(\bar K/\bar r_T)\}\to0$ because $x\log(\bar K/x)\to0$ as $x\downarrow0$ and $\bar r_T\to0$. Hence $\bar r_T\sqrt{\bar\ell_T}\,\rho_{n,h} = o(n_h^{-1/2})$, and therefore \(\|\mathbf R_{T,h}\|_\Fr = o_p(n_h^{-1/2})\), which proves \eqref{eq:orthogonal-linear-expansion}. For each oracle-score coordinate, \eqref{eq:lrv-epsZ} gives \(\var(\sum_{t = 1}^{n_h}\zeta_t)\le n_hV_{\varepsilon Z}^2\). Chebyshev's inequality and the fixed score dimension therefore give \(\|P_{n,h}(\boldsymbol\varepsilon_{t+h}\widetilde{\mathbf Z}_t^\top)\|_\Fr = O_p(n_h^{-1/2})\). Since \(\|\overline{\mathbf A}_{h,T}^\dagger\|_\op\le 1/\underline\sigma_A\), the leading term is $O_p(n_h^{-1/2})$, and hence \(\|\widehat{\mathbf B}_h-\mathbf B_h\|_\Fr = O_p(n_h^{-1/2})\).
\end{proof}

\subsection{General lemmas for mixing sequences}\label{app:mixing-lemmas}
\begin{lemma}[Uniform truncated variance proxies]\label{lem:supp-truncated-lrv}
Let $(\zeta_t)_{t\in\mathbb Z}$ be a centered, mixing, and sub-exponential process in the sense that
$\beta(k)\le C_\beta e^{-c_\beta k}$, and $\sup_t\|\zeta_t\|_\senorm\le K_\zeta$.
For $M>0$, define the centered truncations
\begin{equation}\label{eq:supp-centered-truncation}
    \zeta_t^{(M)} = \zeta_t\I\{|\zeta_t|\le M\}
    -\E[\zeta_t\I\{|\zeta_t|\le M\}],
\end{equation}
and define
\[
    \begin{aligned}
        v_M^2 & = \sup_{i\in\mathbb Z}\left\{
        \var(\zeta_i^{(M)})
        +2\sum_{j>i}|\cov(\zeta_i^{(M)},\zeta_j^{(M)})|
        \right\},\\
        V_0^2 & = \sup_{i\in\mathbb Z}\left\{
        \var(\zeta_i)+2\sum_{j>i}|\cov(\zeta_i,\zeta_j)|
        \right\}.
    \end{aligned}
\]
Then $V_0 < \infty$ and $v_M^2\to V_0^2$ as $M\to\infty$. 
\end{lemma}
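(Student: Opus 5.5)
The plan is to use Davydov's covariance inequality for $\beta$-mixing (hence $\alpha$-mixing, since $\alpha(k)\le\beta(k)$) sequences, combined with uniform moment bounds that follow from the sub-exponential envelope. First, for finiteness of $V_0$: sub-exponentiality gives $\sup_t\|\zeta_t\|_{L_q}\le C q K_\zeta$ for every $q\ge1$. Fix $q=4$, say. Davydov's inequality then yields, for $j>i$,
\[
|\cov(\zeta_i,\zeta_j)|\le 8\,\alpha(j-i)^{1-2/q}\|\zeta_i\|_{L_q}\|\zeta_j\|_{L_q}\le C K_\zeta^2\,C_\beta^{1/2}e^{-c_\beta (j-i)/2}.
\]
Since $\var(\zeta_i)\le CK_\zeta^2$, summing the geometric series gives
\[
\var(\zeta_i)+2\sum_{j>i}|\cov(\zeta_i,\zeta_j)|\le C_\ast^2K_\zeta^2
\]
uniformly in $i$, where $C_\ast$ depends only on $(C_\beta,c_\beta)$. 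This proves $V_0<\infty$ and also records the bound \eqref{eq:supp-variance-bound} that is invoked later in the proof of \Cref{prop:supp-variance-proxy}.

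Next, for the truncations: $\zeta_t^{(M)}$ is a measurable function of $\zeta_t$, so it inherits the same $\beta$-mixing envelope. Its moments are uniformly bounded, with $\|\zeta^{(M)}_t\|_{L_q}\le 2\|\zeta_t\|_{L_q}$. Hence the same Davydov bound holds for truncated covariances, uniformly in $M$ and in $i$. Write $d_t=\zeta_t-\zeta_t^{(M)}=\zeta_t\I\{|\zeta_t|>M\}-\E[\zeta_t\I\{|\zeta_t|>M\}]$, using that $\zeta_t$ is centered. The sub-exponential tail gives
\[
\sup_t\|d_t\|_{L_q}\le 2\sup_t\|\zeta_t\I\{|\zeta_t|>M\}\|_{L_q}\le 2\sup_t\|\zeta_t\|_{L_{2q}}\Pr(|\zeta_t|>M)^{1/(2q)}\le CK_\zeta e^{-M/(2qK_\zeta)},
\]
which tends to $0$ as $M\to\infty$.

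To compare the two variance proxies, split the sum at a lag $m$. For lags $j-i>m$, both tails are bounded by $CK_\zeta^2e^{-c_\beta m/2}$ uniformly in $i$ and $M$. For lags $0\le j-i\le m$, use
\[
|\cov(\zeta_i,\zeta_j)-\cov(\zeta_i^{(M)},\zeta_j^{(M)})|\le |\cov(d_i,\zeta_j)|+|\cov(\zeta_i^{(M)},d_j)|\le 2\sup_t\|\zeta_t\|_{L_2}\sup_t\|d_t\|_{L_2},
\]
by Cauchy--Schwarz. Because $||x|-|y||\le|x-y|$, the bracketed quantities differ, uniformly in $i$, by at most $C(m+1)K_\zeta^2e^{-M/(4K_\zeta)}+CK_\zeta^2e^{-c_\beta m/2}$. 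Since $|\sup_i a_i-\sup_i b_i|\le\sup_i|a_i-b_i|$, we obtain
\[
|v_M^2-V_0^2|\le C(m+1)K_\zeta^2e^{-M/(4K_\zeta)}+CK_\zeta^2e^{-c_\beta m/2}.
\]
Letting $M\to\infty$ and then $m\to\infty$ gives $v_M^2\to V_0^2$.

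The main obstacle is ensuring that every bound is uniform in $i$, because $V_0^2$ and $v_M^2$ are suprema over $i$ and no stationarity is assumed; pointwise convergence would not suffice. The plan handles this by making all moment and tail bounds depend only on $K_\zeta$ and the mixing envelope, never on $i$. A second technical point is the choice of the Davydov exponent, which is tied to the decay rate. Any fixed $q>2$ works; $q=4$ is used for concreteness.
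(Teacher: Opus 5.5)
Your proposal is correct and follows essentially the same route as the paper: uniform $L_4$ moment bounds from the sub-exponential envelope, Davydov's inequality giving covariance envelopes of order $K_\zeta^2 e^{-c_\beta k/2}$ uniformly in $i$ and $M$, a truncation error $d_t$ that vanishes as $M\to\infty$, and a split at a fixed lag followed by an iterated limit in $M$ and then the lag. The only differences are cosmetic: you control the short-lag covariance differences by Cauchy--Schwarz in $L_2$, with the explicit rate $e^{-M/(4K_\zeta)}$ and a two-term decomposition, whereas the paper uses H\"older in $L_4$ with a three-term expansion and only shows $\Delta_M\to0$.
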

\begin{proof}
The sub-exponential bound is uniform in $t$. Hence, for every fixed $q\ge 1$,
\[
    \sup_t \E|\zeta_t|^q \le C_q K_\zeta^q.
\]
The same bound applies to $\zeta_t^{(M)}$, uniformly over $t$ and $M$. In particular, by \citet[Remark 2.8.8]{vershynin2026high},
\begin{equation} \label{eq:supp-l4-bound}
    \sup_{t,M}\|\zeta_t^{(M)}\|_4+\sup_t\|\zeta_t\|_4 \le C K_\zeta.
\end{equation}
By Davydov's covariance inequality \citep[(1.12b)]{rio2017asymptotic} and $\alpha(k)\le\beta(k)$, we have, uniformly in $i$, $k$, and $M$,
\begin{equation}\label{eq:supp-davydov-envelope}
    |\cov(\zeta_i^{(M)},\zeta_{i+k}^{(M)})|
    +|\cov(\zeta_i,\zeta_{i+k})|
    \le CK_\zeta^2\alpha(k)^{1/2}
    \le CK_\zeta^2 e^{-c_\beta k/2}.
\end{equation}
The right-hand side is summable in $k$, and specifically, after enlarging the constant $C$, 
\begin{align}
    V_0^2 \le C_2 K_\zeta^2 + 2 CK_\zeta^2 / (1-e^{-c_\beta/2}) \le C K_\zeta^2 < \infty,
    \label{eq:supp-variance-bound}
\end{align}
where $C$ only depends on $(C_\beta,c_\beta)$. The covariance sequence is controlled. 

We now turn to the truncated process. It is also bounded by \eqref{eq:supp-variance-bound}, i.e., $\sup_M v_M<\infty$.
We first fix a summand of index $k\ge 0$ and set $j=i+k$. Define
\[
    d_t^{(M)}:=\zeta_t^{(M)}-\zeta_t
    =-\zeta_t\I\{|\zeta_t|>M\}
      +\E[\zeta_t\I\{|\zeta_t|>M\}],
    \qquad
    \Delta_M:=\sup_t\|d_t^{(M)}\|_4.
\]
Since both $\zeta_t$ and $\zeta_t^{(M)}$ are centered,
\begin{align}
    &\left|\cov(\zeta_i^{(M)},\zeta_j^{(M)})
      -\cov(\zeta_i,\zeta_j)\right| \notag \\
    &\qquad\le
      \left|\E[d_i^{(M)}\zeta_j]\right|
      +\left|\E[\zeta_i d_j^{(M)}]\right|
      +\left|\E[d_i^{(M)}d_j^{(M)}]\right| \notag \\
    &\qquad\le (\|\zeta_i\|_4+\|\zeta_j\|_4)\Delta_M+\Delta_M^2,
    \label{eq:supp-cov-diff}
\end{align}
where the last inequality follows from Hölder inequality. Moreover, the uniform sub-exponential tail bound gives
\begin{align}
    \Delta_M
    &\le 2\sup_t
       \left\{\E\big[|\zeta_t|^4\I\{|\zeta_t|>M\}\big]\right\}^{1/4} \notag \\
    &\le 2\left\{
       2M^4e^{-M/K_\zeta}
       +8\int_M^\infty x^3e^{-x/K_\zeta}\,dx
       \right\}^{1/4}
    \longrightarrow0
    \qquad\text{as }M\to\infty.
    \label{eq:supp-trunc-l4}
\end{align}
Substituting \eqref{eq:supp-l4-bound} and \eqref{eq:supp-trunc-l4} into \eqref{eq:supp-cov-diff} and taking the supremum over $i$ gives
\begin{equation}\label{eq:supp-fixed-lag-cov-conv}
    \sup_i
    \left|
    \cov(\zeta_i^{(M)},\zeta_{i+k}^{(M)})
    -\cov(\zeta_i,\zeta_{i+k})
    \right|
    \le C K_\zeta\Delta_M+\Delta_M^2
    \to0 .
\end{equation}
Then, we sum over $k = 0,\ldots,L$ for any fixed $L$ to obtain
\begin{equation}\label{eq:supp-finite-lag-conv}
    \sup_i\sum_{k = 0}^{L}\left|
    \cov(\zeta_i^{(M)},\zeta_{i+k}^{(M)})
    -\cov(\zeta_i,\zeta_{i+k}) \right|
    \to 0 \quad \text{as } M\to\infty .
\end{equation}

The remaining lags are controlled uniformly by the summability of \eqref{eq:supp-davydov-envelope} over $k$. Define the tail lag covariance sum
\[
    R_L = \sup_{i,M}\sum_{k>L}
    \left\{
    |\cov(\zeta_i^{(M)},\zeta_{i+k}^{(M)})|
    +|\cov(\zeta_i,\zeta_{i+k})|
    \right\}.
\]
Then by \eqref{eq:supp-davydov-envelope},
\begin{equation}\label{eq:supp-tail-lag-bound}
    R_L \le CK_\zeta^2\sum_{k\ge L}e^{-c_\beta k/2} = CK_\zeta^2 e^{-c_\beta L/2}/(1-e^{-c_\beta/2})
    \to 0 \quad\text{as } L\to\infty .
\end{equation}

In order to compare the variance proxies $v_M^2$ and $V_0^2$, we define
\[
    S_M(i) = \var(\zeta_i^{(M)})
    +2\sum_{j>i}|\cov(\zeta_i^{(M)},\zeta_j^{(M)})|,
    \qquad
    S(i) = \var(\zeta_i)
    +2\sum_{j>i}|\cov(\zeta_i,\zeta_j)|.
\]
Then, using the reverse triangle inequality, for every fixed $L$,
\begin{align*}
    |v_M^2-V_0^2| & = \left|\sup_i S_M(i)-\sup_i S(i)\right| \\
    &\le \sup_i|S_M(i)-S(i)|\\
    &\le 2\sup_i \sum_{k = 0}^{L}\left|
    \cov(\zeta_i^{(M)},\zeta_{i+k}^{(M)})
    -\cov(\zeta_i,\zeta_{i+k})\right| + 2R_L .
\end{align*}
For any $\epsilon>0$, we fix an $L = L(\epsilon)$ such that $R_L<\epsilon/4$ by \eqref{eq:supp-tail-lag-bound}, and then by \eqref{eq:supp-finite-lag-conv}, there exists $M_\star = M(\epsilon,L(\epsilon))$ such that the first term is less than $\epsilon/2$ whenever $M\ge M_\star$. Then, for all $M\ge M_\star$, $|v_M^2-V_0^2|<\epsilon$. This proves $v_M^2\to V_0^2$ as $M\to\infty$. 

If $V_0 = 0$, then $\var(\zeta_i) = 0$ for every $i$. Since $\E\zeta_i = 0$, $\zeta_i = 0$ almost surely for every $i$.
\end{proof}

\begin{lemma}[Bernstein bound for sub-exponential mixing sequences]\label{lem:supp-mixing-subexp-bernstein}
Let $(\zeta_t)_{t\in\mathbb Z}$ be a centered real-valued process satisfying
\[
    \beta(k)\le C_\beta e^{-c_\beta k},
    \qquad
    \sup_t\|\zeta_t\|_\senorm\le K_\zeta.
\]
Let
\[
    V_0^2
    :=\sup_{i\in\mathbb Z}\left\{
    \var(\zeta_i)+2\sum_{j>i}|\cov(\zeta_i,\zeta_j)|
    \right\},
\]
and let $V_\zeta>0$ be any finite constant such that $V_0\le V_\zeta$. Fix $c>0$, set $\rho_n = \sqrt{\log n/n}$, and let $c_3 = c_3(C_\beta,c_\beta) > 0$ be the constant in \eqref{eq:supp-mpr}. There exists a threshold $N_\star$, depending on $(c,C_\beta,c_\beta,K_\zeta,V_\zeta)$ but not on $n$ and not on any dimension, such that for all $n\ge N_\star$,
\[
    \Pr\left(
    |P_n\zeta_t|>\Big\{\frac{32(c+1)}{c_3}\Big\}^{1/2}V_\zeta\rho_n
    \right)
    \le 3n^{-(c+1)}.
\]
Consequently, for a collection of $D$ coordinate processes of this form, a union bound gives joint failure probability at most $3D n^{-(c+1)}$, which is at most $n^{-c}$ as soon as $n\ge 3D$.
\end{lemma}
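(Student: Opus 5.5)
The plan is to reduce the statement to the Merlevède--Peligrad--Rio (MPR) Bernstein inequality for geometrically mixing sequences, referenced as \eqref{eq:supp-mpr}. That inequality is proved for bounded centered variables, so we first truncate. Fix a level $M_n = K_\zeta\,\kappa\log n$ with $\kappa$ large (to be chosen depending on $c$). Decompose $\zeta_t = \zeta_t^{(M_n)} + d_t^{(M_n)}$ using the centered truncation \eqref{eq:supp-centered-truncation}. Then split the event into $|P_n\zeta_t^{(M_n)}|>\tfrac12 x$ and $|P_n d_t^{(M_n)}|>\tfrac12 x$, with $x=\{32(c+1)/c_3\}^{1/2}V_\zeta\rho_n$. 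This halving of the threshold is presumably what produces the factor $32$ together with the constant in the MPR exponent.

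For the bounded part, the variables $\zeta_t^{(M_n)}$ are measurable functions of $\zeta_t$. Hence they inherit the $\beta$-mixing, and thus the $\alpha$-mixing envelope $C_\beta e^{-c_\beta k}$. They are centered and satisfy $|\zeta_t^{(M_n)}|\le 2M_n$. MPR gives a bound of the form
\[
\Pr\Big(\Big|\sum_{t\le n}\zeta_t^{(M)}\Big|\ge y\Big)\le \exp\Big(-\frac{c_3 y^2}{n v^2 + M^2 + yM(\log n)^2}\Big),
\]
where $v^2$ dominates the truncated long-run variance proxy. By \Cref{lem:supp-truncated-lrv}, $v_M^2\to V_0^2\le V_\zeta^2$. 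The lemma only gives convergence, so I would instead use an explicit slack: for $n$ large, $v_{M_n}^2\le 2V_\zeta^2$. This follows from \eqref{eq:supp-cov-diff}--\eqref{eq:supp-tail-lag-bound}, since $\Delta_{M_n}\to0$ as $M_n\to\infty$. Now take $y=\tfrac12 n x$, so that $y^2/(nv^2)\asymp (c+1)\log n/c_3$ after cancelling. The remaining denominator terms $M_n^2$ and $yM_n(\log n)^2\asymp \sqrt{n\log n}\,(\log n)^3$ are $o(n)$. Hence, for $n\ge N_\star$, the exponent is at least $(c+1)\log n$, and this part contributes $2n^{-(c+1)}$ from the two tails. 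The main bookkeeping obstacle is matching the constants $32$ and $c_3$ exactly. I would absorb any mismatch into the definition of $c_3$ and into the $N_\star$ threshold, using the factor-2 slack on $v^2$ and the factor-2 halving of $x$.

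For the tail part, a union bound gives $\Pr(\max_{t\le n}|\zeta_t|>M_n)\le n\cdot 2e^{-M_n/K_\zeta}=2n^{1-\kappa}$. On the complement of this event, $d_t^{(M_n)}=\E[\zeta_t\I\{|\zeta_t|>M_n\}]$ is deterministic. Its magnitude is at most $C K_\zeta M_n e^{-M_n/K_\zeta}\le C K_\zeta n^{-\kappa}\log n$, which is $o(\rho_n)$ and hence below $\tfrac12x$ for large $n$. Choosing $\kappa\ge c+2$ makes the probability at most $n^{-(c+1)}$ once $n$ is large. Summing the two contributions gives the total failure probability $3n^{-(c+1)}$. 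All thresholds depend only on $(c,C_\beta,c_\beta,K_\zeta,V_\zeta)$ and not on any dimension. The final union-bound consequence is immediate: $3Dn^{-(c+1)}\le n^{-c}$ whenever $n\ge 3D$.

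One subtlety is the degenerate case $V_\zeta$ small relative to $V_0$. This is excluded because $V_0\le V_\zeta$ is assumed. If instead $V_0=0$, then $\zeta_t\equiv0$ by the last line of \Cref{lem:supp-truncated-lrv}, and the bound is trivial.
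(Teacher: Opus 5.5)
Your proposal is correct and follows essentially the same route as the paper. Like the paper, you truncate at $M_n\asymp\log n$, control $\max_{t\le n}|\zeta_t|$ by a union bound over the sub-exponential tails, show that the deterministic centering bias is $o(\rho_n)$, and apply \eqref{eq:supp-mpr} to the centered truncated sequence with deviation $nx/2$ and a denominator that is eventually at most $4nV_\zeta^2$. Your explicit appeal to the uniform estimates \eqref{eq:supp-cov-diff}--\eqref{eq:supp-tail-lag-bound} is a welcome touch, since it is what makes $N_\star$ depend only on $(c,C_\beta,c_\beta,K_\zeta,V_\zeta)$.

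Two small fixes are needed. First, the arithmetic gives an exponent of exactly $2(c+1)\log n$, so there is no constant mismatch to absorb into $c_3$; $c_3$ is fixed by \eqref{eq:supp-mpr} and cannot be redefined. Second, with your allocation of $2n^{-(c+1)}$ to the bounded part, you need $\kappa>c+2$ strictly. Alternatively, as the paper does, take $\kappa=c+2$, charge $2n^{-(c+1)}$ to the maximum event, and note that the bounded part then costs only $n^{-2(c+1)}\le n^{-(c+1)}$.
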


\begin{proof}
We apply a Bernstein inequality for bounded mixing sequences after truncating the observations. Set the deviation $\delta_n = \{{32(c+1)}/{c_3}\}^{1/2}V_\zeta\rho_n$. Note that
\[
    \Pr(|P_n\zeta_t|>\delta_n) \le \Pr\left(\max_{1\le t\le n}|\zeta_t|>M_n\right) + \Pr\left(|P_n\zeta_t|>\delta_n, \max_{1\le t\le n}|\zeta_t|\le M_n\right).
\]
Set the truncation level as
\[
    M_n = (c+2)K_\zeta\log n.
\]
The uniform sub-exponential bound gives \(\sup_t\Pr(|\zeta_t|>x)\le 2e^{-x/K_\zeta}\). Hence
\begin{equation}\label{eq:supp-subexp-max-tail}
    \Pr\left(\mathcal E_{M_n}\right) \le 2n^{-(c+1)},
    \quad \text{where } \mathcal E_{M_n}:=\{\max_{1\le t\le n}|\zeta_t|>M_n\}.
\end{equation}
Write \(\mu_{n,t} = \E[\zeta_t\I\{|\zeta_t|\le M_n\}]\). Since $\E\zeta_t = 0$, we also have \(\mu_{n,t} = -\E[\zeta_t\I\{|\zeta_t|> M_n\}].\)
Uniform tail integration gives
\begin{align*}
    \max_{1\le t\le n}|\mu_{n,t}|
    &\le \sup_t\E[|\zeta_t|\I\{|\zeta_t|> M_n\}] \\
    & = \sup_t \left[ M_n \Pr(|\zeta_t|> M_n) + \int_{M_n}^{\infty} \Pr(|\zeta_t|>x) dx \right] \\
    &\le 2(M_n+K_\zeta)n^{-(c+2)}
     \le \delta_n/2
\end{align*}
for all sufficiently large $n$. 

On $\mathcal E_{M_n}^c$, the centered truncated sequence $(\zeta_t^{(M_n)})_{t\in\mathbb Z}$ satisfies $P_n\zeta_t = P_n\zeta_t^{(M_n)}+P_n\mu_{n,t}$,
so $|P_n\zeta_t|>\delta_n$ implies $|P_n\zeta_t^{(M_n)}|>\delta_n/2$ when $(\zeta_t)$ is bounded by $M_n$. Therefore,
\[ \{|P_n\zeta_t|>\delta_n\} \cap \mathcal E_{M_n}^c \subseteq \{|P_n\zeta_t^{(M_n)}|>\delta_n/2\}, \]
and we turn to bound the latter probability.

Note that $(\zeta_t^{(M_n)})_{t\in\mathbb Z}$ is centered, geometrically mixing, and bounded by $2M_n$. By \Cref{lem:supp-truncated-lrv}, $v_{M_n}^2\to V_0^2$. 
As a result of \citet[Theorem 2]{merlevede2009bernstein}, there exists $c_3>0$ depending only on $(C_\beta,c_\beta)$ such that
\begin{align}\label{eq:supp-mpr}
    \Pr\Big(\Big| \sum_{t = 1}^n\zeta_t^{(M_n)} \Big| \ge x \Big)
    \le
    \exp\left\{-\frac{c_3 x^2}{nv_{M_n}^2+M_n^2+xM_n(\log n)^2}\right\}.
\end{align}
Let $x = n\delta_n/2$. Since $V_0\le V_\zeta$, the denominator in \eqref{eq:supp-mpr} satisfies
\[
    n v_{M_n}^2+C M_n^2+C n\delta_n M_n(\log n)^2
    \le 4nV_\zeta^2
\]
whenever $n$ is sufficiently large. Therefore, for the centered truncated sequence, the deviation probability is bounded by
\[
    \Pr\left(|P_n\zeta_t^{(M_n)}|>\delta_n/2\right)
    \le \exp\left\{-\frac{c_3n^2\delta_n^2/4}{4nV_\zeta^2}\right\}
    \le n^{-(c+1)}.
\]
Combining this bound with tail probability \eqref{eq:supp-subexp-max-tail} proves the claim.
\end{proof}

\begin{lemma}[Triangle inequality for mixing coefficients]\label{lem:supp-mixing-triangle}
Let $\mathcal A,\mathcal B,\mathcal C$, and $\mathcal R$ be sub-$\sigma$-fields of a probability space $(\Omega,\mathcal F,\Pr)$. For $\sigma$-fields, define
$\mathcal A\vee\mathcal B=\sigma(\mathcal A\cup\mathcal B)$ 
as the smallest $\sigma$-field containing both $\mathcal A$ and $\mathcal B$. For numerical arguments, $a\vee b$ continues to denote $\max\{a,b\}$. We use the absolute-regularity coefficient
\begin{equation}\label{eq:supp-beta-partitions}
    \beta(\mathcal A,\mathcal B)
    :=\frac12\sup_{(A_i),(B_j)}
      \sum_{i,j}\left|\Pr(A_i\cap B_j)-\Pr(A_i)\Pr(B_j)\right|,
\end{equation}
where the supremum is over finite measurable partitions $(A_i)$ and $(B_j)$ of $\Omega$ with cells in $\mathcal A$ and $\mathcal B$, respectively; see \citet[(1.58)]{rio2017asymptotic}. Then the following properties hold.
\begin{enumerate}
    \item[(i)] If $\mathcal A_0\subseteq\mathcal A$ and $\mathcal B_0\subseteq\mathcal B$, then
    \(
        \beta(\mathcal A_0,\mathcal B_0)
        \le\beta(\mathcal A,\mathcal B).
    \)
    \item[(ii)] The triangle inequality gives
    \begin{equation}\label{eq:supp-beta-triangle}
        \beta(\mathcal B,\mathcal A\vee\mathcal C)
        \le\beta(\mathcal A,\mathcal B)
        +\beta(\mathcal A\vee\mathcal B,\mathcal C)+\beta(\mathcal A,\mathcal C).
    \end{equation}
    \item[(iii)] If $\mathcal R$ is independent of $\mathcal A\vee\mathcal B$, then
    \begin{equation}\label{eq:supp-beta-independent-extension}
        \beta(\mathcal A,\mathcal B\vee\mathcal R)
        =\beta(\mathcal A,\mathcal B).
    \end{equation}
\end{enumerate}
\end{lemma}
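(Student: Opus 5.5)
We prove the three parts by working with total-variation distances between joint and product laws of finitely-valued random variables. The only non-routine ingredient is reducing partitions of a join $\mathcal A\vee\mathcal C$ to partitions built from product cells $A_i\cap C_k$.

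\textbf{Step 1: part (i).} A finite partition with cells in $\mathcal A_0$ (resp.\ $\mathcal B_0$) is also a finite partition with cells in $\mathcal A$ (resp.\ $\mathcal B$). The supremum in \eqref{eq:supp-beta-partitions} defining $\beta(\mathcal A_0,\mathcal B_0)$ is therefore over a subset of the partitions admissible for $\beta(\mathcal A,\mathcal B)$, which gives (i).

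\textbf{Step 2: reduction to product cells.} The union of all sets of the form $A\cap C$, with $A\in\mathcal A$ and $C\in\mathcal C$, is a $\pi$-system. Finite disjoint unions of such sets form an algebra $\mathcal Q$ generating $\mathcal A\vee\mathcal C$. Fix a finite partition $(D_m)_{m\le N}$ of $\Omega$ with cells in $\mathcal A\vee\mathcal C$ and fix $\epsilon>0$. The approximation lemma gives $Q_m\in\mathcal Q$ with $\Pr(D_m\triangle Q_m)<\epsilon/N^2$. We disjointify by setting $Q'_m=Q_m\setminus\bigcup_{m'<m}Q_{m'}$ and assign the leftover set to the last cell. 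The result is a partition in $\mathcal Q$ whose cells are within $O(\epsilon)$ of the $D_m$ in probability.

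Each $Q'_m$ is a finite union of cells $A_i\cap C_k$, where $(A_i)$ and $(C_k)$ are common finite partitions in $\mathcal A$ and $\mathcal C$. Such common partitions exist by taking the partitions generated by the finitely many sets involved. The sum in \eqref{eq:supp-beta-partitions} is Lipschitz in the cell probabilities of both partitions, with constants depending on $N$ and the number of cells of the other partition. Hence it changes by at most $O(\epsilon)$, once $\epsilon$ is chosen after fixing the partition of the other field.

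So it suffices to consider discrete variables: $X$ indexing $(A_i)$, $Y$ indexing $(B_j)$, and $Z$ indexing $(C_k)$. For these, $\beta(\sigma(X),\sigma(Y))=\mathrm{TV}(P_{XY},P_X\otimes P_Y)$, where $\mathrm{TV}$ includes the factor $\tfrac12$. Merging cells of the join partition can only decrease the TV sum. Therefore $\beta(\mathcal B,\mathcal A\vee\mathcal C)$ is the supremum over such triples of $\mathrm{TV}(P_{YXZ},P_Y\otimes P_{XZ})$. This approximation is the step I expect to require the most care, chiefly in ordering the choice of $\epsilon$ relative to the fixed partition sizes.

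\textbf{Step 3: part (ii).} For discrete $X,Y,Z$ we insert two intermediate laws and apply the triangle inequality for TV:
\[
\mathrm{TV}(P_{XYZ},P_Y\otimes P_{XZ})\le \mathrm{TV}(P_{XYZ},P_{XY}\otimes P_Z)+\mathrm{TV}(P_{XY}\otimes P_Z,P_X\otimes P_Y\otimes P_Z)+\mathrm{TV}(P_Y\otimes P_X\otimes P_Z,P_Y\otimes P_{XZ}).
\]
Each term is identified as follows.
\begin{itemize}
\item The first term is at most $\beta(\mathcal A\vee\mathcal B,\mathcal C)$, since $(X,Y)$ generates a finite partition in $\mathcal A\vee\mathcal B$.
\item The second term equals $\mathrm{TV}(P_{XY},P_X\otimes P_Y)\le\beta(\mathcal A,\mathcal B)$, because tensoring both laws with the same $P_Z$ preserves TV.
\item The third term equals $\mathrm{TV}(P_X\otimes P_Z,P_{XZ})\le\beta(\mathcal A,\mathcal C)$, by the same tensoring argument with $P_Y$.
\end{itemize}
Taking suprema and using Step 2 gives \eqref{eq:supp-beta-triangle}.

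\textbf{Step 4: part (iii).} The inequality $\beta(\mathcal A,\mathcal B)\le\beta(\mathcal A,\mathcal B\vee\mathcal R)$ follows from (i). For the reverse inequality, apply the reduction of Step 2 to the join $\mathcal B\vee\mathcal R$, with $X$ from $\mathcal A$, $Y$ from $\mathcal B$, and $R'$ a discrete variable from $\mathcal R$. Independence of $\mathcal R$ from $\mathcal A\vee\mathcal B$ gives $P_{XYR'}=P_{XY}\otimes P_{R'}$ and $P_X\otimes P_{YR'}=P_X\otimes P_Y\otimes P_{R'}$. Hence
\[
\mathrm{TV}(P_{XYR'},P_X\otimes P_{YR'})=\mathrm{TV}(P_{XY},P_X\otimes P_Y)\le\beta(\mathcal A,\mathcal B).
\]
Taking the supremum over $X$, $Y$ and $R'$ yields \eqref{eq:supp-beta-independent-extension}.
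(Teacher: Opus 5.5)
Your proposal is correct and follows essentially the same route as the paper. The two intermediate laws in Step 3 reproduce exactly the paper's identity $p_{ijk}-p_jp_{ik}=(p_{ijk}-p_{ij}p_k)+(p_{ij}-p_ip_j)p_k+p_j(p_ip_k-p_{ik})$, Step 2 is the paper's density-plus-refinement reduction to grid partitions $(A_i\cap C_k)$, and part (iii) is handled identically. Your explicit disjointification of the approximating sets fills in a detail the paper leaves implicit.
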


\begin{proof}
Part (i) follows directly from \eqref{eq:supp-beta-partitions}, because passing to sub-$\sigma$-fields restricts the partitions over which the supremum is taken.

For part (ii), first take finite partitions $(A_i)$, $(B_j)$, and $(C_k)$ with cells in $\mathcal A$, $\mathcal B$, and $\mathcal C$, respectively. Write
$p_i = \Pr(A_i)$, $p_j = \Pr(B_j)$, and $p_k = \Pr(C_k)$ and
\[
    p_{ijk}=\Pr(A_i\cap B_j\cap C_k),
    \qquad p_{ij}=\Pr(A_i\cap B_j),
    \qquad p_{ik}=\Pr(A_i\cap C_k).
\]
The identity
\(
    p_{ijk}-p_jp_{ik}
    =(p_{ijk}-p_{ij}p_k)
      +(p_{ij}-p_ip_j)p_k
      +p_j(p_ip_k-p_{ik})
\)
and the triangle inequality imply
\begin{align*}
    \frac12\sum_{i,j,k}|p_{ijk}-p_jp_{ik}|
    &\le\frac12\sum_{i,j,k}|p_{ijk}-p_{ij}p_k|+\frac12\sum_{i,j}|p_{ij}-p_ip_j|
      +\frac12\sum_{i,k}|p_{ik}-p_ip_k|\\
    &\le\beta(\mathcal A\vee\mathcal B,\mathcal C)
      +\beta(\mathcal A,\mathcal B)
      +\beta(\mathcal A,\mathcal C).
\end{align*}
Here the sums over $k$ and $j$ in the second and third terms reduce to one. The left-hand side is the partition expression for $(B_j)$ and the refinement $(A_i\cap C_k)$.

It remains to show that restricting the partitions in $\mathcal A\vee\mathcal C$ to those of the form $(A_i\cap C_k)$ leaves the supremum in \eqref{eq:supp-beta-partitions} unchanged.
Let $(D_l)_{l=1}^L$ be an arbitrary finite partition of $\Omega$ measurable with respect to $\mathcal A\vee\mathcal C$. Fix a finite partition $(B_j)$ of $\mathcal B$. We define the variation functional evaluated at the partition $(D_l)$ as
$$V(D) = \frac{1}{2} \sum_{j,l} \left\vert{} \Pr(B_j \cap D_l) - \Pr(B_j)\Pr(D_l) \right\vert{}.$$
By definition, the $\sigma$-algebra $\mathcal A\vee\mathcal C$ is generated by the algebra $\mathcal F$ consisting of all finite disjoint unions of rectangles $A\cap C$, where $A\in\mathcal A$ and $C\in\mathcal C$. By the measure approximation theorem \citep[Section 13, Theorem D]{halmos1976measure}, the algebra $\mathcal F$ is dense in $\mathcal A\vee\mathcal C$ under the symmetric difference pseudometric $d(X,Y)=\Pr(X\Delta Y)$. That is, for any $\epsilon>0$, there exists a partition $(E_l)_{l=1}^L$ with cells in $\mathcal F$ such that
$$\sum_{l=1}^L \Pr(D_l \Delta E_l) < \epsilon.$$
The variation functional $V$ is Lipschitz continuous with respect to this approximation. By the triangle inequality and the union bound $\vert{}\Pr(X)-\Pr(Y)\vert{}\le \Pr(X\Delta Y)$, we have
\begin{align*}
    |V(D) - V(E)|&\le \frac{1}{2} \sum_{j,l} \left| \Pr(B_j \cap D_l) - \Pr(B_j \cap E_l) \right| + \frac{1}{2} \sum_{j,l} \Pr(B_j) \left| \Pr(D_l) - \Pr(E_l) \right|\\
    &\le \frac{1}{2} \sum_{j,l} \Pr(B_j \cap (D_l \Delta E_l)) + \frac{1}{2} \sum_{l} \Pr(D_l \Delta E_l) \sum_j \Pr(B_j)\\
    &= \sum_{l=1}^L \Pr(D_l \Delta E_l) < \epsilon.
\end{align*}
Consequently, $V(D) \le V(E) + \epsilon$.

Because each $E_l \in \mathcal F$, it is formed by finite set operations on a finite collection of base sets $\{A^{(m)}\}_{m=1}^M \subset \mathcal A$ and $\{C^{(n)}\}_{n=1}^N \subset \mathcal C$. Let $(A_i)$ be the finite $\mathcal A$-partition generated by all nonempty intersections of the sets $A^{(m)}$ and their complements, and similarly let $(C_k)$ be the finite $\mathcal C$-partition generated by $C^{(n)}$ and their complements.The resulting grid partition $(A_i \cap C_k)$ strictly refines the algebraic partition $(E_l)$. This means there exists an index mapping such that each $E_l = \bigcup_{(i,k) \in I_l} (A_i \cap C_k)$ for disjoint index sets $I_l$.

By the triangle inequality, refining a partition monotonically increases the variation sum:
\begin{align*}
    V(E) &= \frac{1}{2} \sum_{j,l} \left| \sum_{(i,k) \in I_l} \big( \Pr(B_j \cap A_i \cap C_k) - \Pr(B_j)\Pr(A_i \cap C_k) \big) \right| \\
    &\le \frac{1}{2} \sum_{j,l} \sum_{(i,k) \in I_l} \left| \Pr(B_j \cap A_i \cap C_k) - \Pr(B_j)\Pr(A_i \cap C_k) \right| \\
    &= \frac{1}{2} \sum_{i,j,k} \left| \Pr(B_j \cap A_i \cap C_k) - \Pr(B_j)\Pr(A_i \cap C_k) \right| = V(A \cap C).
\end{align*}
Combining the bounds yields $V(D) \le V(A \cap C) + \epsilon$. Since $\epsilon$ is arbitrary, the variation over any $\mathcal A\vee\mathcal C$-partition $D$ is bounded above by the supremum over all grid partitions $(A_i \cap C_k)$. Because every grid partition is itself a valid $\mathcal A\vee\mathcal C$-partition, the supremum over the two classes of partitions must be strictly equal, completing the proof.

For part (iii), take finite partitions $(A_i)$, $(B_j)$, and $(R_k)$ from $\mathcal A$, $\mathcal B$, and $\mathcal R$. Independence of $\mathcal R$ from the joint field $\mathcal A\vee\mathcal B$ gives
\begin{align*}
    &\frac12\sum_{i,j,k}
      \left|\Pr(A_i\cap B_j\cap R_k)
             -\Pr(A_i)\Pr(B_j\cap R_k)\right|\\
    &\qquad=\frac12\sum_{i,j,k}\Pr(R_k)
      \left|\Pr(A_i\cap B_j)-\Pr(A_i)\Pr(B_j)\right|\\
    &\qquad=\frac12\sum_{i,j}
      \left|\Pr(A_i\cap B_j)-\Pr(A_i)\Pr(B_j)\right|.
\end{align*}
The same partition-approximation argument for $\mathcal B\vee\mathcal R$ therefore yields $\beta(\mathcal A,\mathcal B\vee\mathcal R)\le\beta(\mathcal A,\mathcal B)$. The reverse inequality follows from part (i), proving \eqref{eq:supp-beta-independent-extension}.
\end{proof}

\subsection{Estimation lemmas}\label{app:estimation-lemmas}
\begin{lemma}[Empirical averages]\label{lem:supp-oracle-empirical}
Suppose \Cref{ass:iv,ass:estimation-primitive} hold, and fix any $c>0$. Write $\rho_n = \sqrt{\log n/n}$, let $c_3 = c_3(C_\beta,c_\beta)$, and set $\kappa_c = \{32(c+1)/c_3\}^{1/2}$. There is a dimension-free threshold $N_0$, depending only on $c$ and the primitives of \Cref{ass:estimation-primitive}, such that for every
$n\ge \max\{N_0, 3(pr+sr+s+r)\}$,
the following hold simultaneously with probability at least
$1-3(pr+sr+s+r)n^{-(c+1)}$.
\begin{align}
    \|P_n(\boldsymbol\varepsilon_{t+h}\widetilde{\mathbf Z}_t^\top)\|_\Fr &\le \kappa_c\sqrt{pr}\,V_{\varepsilon Z}\,\rho_n,
    \label{eq:supp-ep-oracle}\\
    \|P_n(\widetilde{\mathbf W}_t\widetilde{\mathbf Z}_t^\top)-\overline{\mathbf A}\|_\Fr &\le \kappa_c\sqrt{sr}\,V_{WZ}\,\rho_n,
    \label{eq:supp-ep-firststage}\\
    P_n\|\widetilde{\mathbf W}_t\|_2^2+P_n\|\widetilde{\mathbf Z}_t\|_2^2 &\le 4\,(sK_W^2+rK_Z^2).
    \label{eq:supp-ep-moment}
\end{align}
\end{lemma}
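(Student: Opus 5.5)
The plan is to reduce all three bounds to coordinatewise applications of \Cref{lem:supp-mixing-subexp-bernstein}, followed by a single union bound.

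\textbf{Oracle score.} For \eqref{eq:supp-ep-oracle}, fix $(l,k)$. By \Cref{prop:supp-variance-proxy}, the scalar process $\zeta_t=(\boldsymbol\varepsilon_{t+h})_l(\widetilde{\mathbf Z}_t)_k$ has three properties: it is centered by the exclusion identity \eqref{eq:supp-exclusion}, it is geometrically $\beta$-mixing with envelope $C_\beta e^{-c_\beta k}$ because it is a measurable function of $\mathcal X_{t,h}$, and it has $\sup_t\|\zeta_t\|_\senorm\le K_\varepsilon K_Z$. Its proxy $V_0$ is bounded by $V_{\varepsilon Z}$, which is finite. Applying \Cref{lem:supp-mixing-subexp-bernstein} with $V_\zeta=V_{\varepsilon Z}$ gives $|P_n\zeta_t|\le\kappa_cV_{\varepsilon Z}\rho_n$ outside an event of probability $3n^{-(c+1)}$, for $n\ge N_\star$. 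Summing the squares of the $pr$ coordinates then yields the factor $\sqrt{pr}$.

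\textbf{First stage.} For \eqref{eq:supp-ep-firststage}, I would argue the same way with $\zeta_t=(\widetilde{\mathbf W}_t\widetilde{\mathbf Z}_t^\top-\mathbf A_t)_{jk}$. This process is centered at each $t$ by the definition of $\mathbf A_t$ and has sub-exponential constant $2K_WK_Z$. Since $P_n(\widetilde{\mathbf W}_t\widetilde{\mathbf Z}_t^\top)-\overline{\mathbf A}=P_n\zeta_t$ coordinatewise, the $sr$ coordinates together give the factor $\sqrt{sr}$. Time heterogeneity of $\mathbf A_t$ causes no trouble, because the lemma only needs centering at each $t$ and uses the supremum-based proxy $V_{WZ}$.

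\textbf{Moment bound.} For \eqref{eq:supp-ep-moment}, I would handle each coordinate separately. Write $(\widetilde{\mathbf W}_t)_j^2=\E(\widetilde{\mathbf W}_t)_j^2+\{(\widetilde{\mathbf W}_t)_j^2-\E(\widetilde{\mathbf W}_t)_j^2\}$. The mean satisfies $\E(\widetilde{\mathbf W}_t)_j^2\le 2K_W^2$, by the sub-Gaussian moment bound $\E X^2\le 2\|X\|_\sgnorm^2$ applied to $\mathbf e_j^\top\widetilde{\mathbf W}_t$. The centered square is sub-exponential with norm at most $CK_W^2$, and it is mixing as a function of $\mathcal X_{t,h}$. \Cref{lem:supp-truncated-lrv} bounds its proxy by a constant times $K_W^2$. \Cref{lem:supp-mixing-subexp-bernstein} then shows the centered average is at most a constant times $\rho_n$, which is below $2K_W^2$ once $n\ge N_0$. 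Hence $P_n(\widetilde{\mathbf W}_t)_j^2\le4K_W^2$. The same argument gives $P_n(\widetilde{\mathbf Z}_t)_k^2\le4K_Z^2$, using $\|\widetilde{\mathbf Z}_t\|_\sgnorm\le K_Z$ as shown in the proof of \Cref{prop:supp-variance-proxy}. Summing over the $s+r$ coordinates gives $4(sK_W^2+rK_Z^2)$.

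\textbf{Union bound and thresholds.} The failure events are $pr+sr+s+r$ coordinatewise events, each of probability at most $3n^{-(c+1)}$, so the total failure probability is at most $3(pr+sr+s+r)n^{-(c+1)}$. I would take $N_0$ to be the maximum of the finitely many thresholds $N_\star$ produced by \Cref{lem:supp-mixing-subexp-bernstein}. Each of these depends only on $(c,C_\beta,c_\beta)$ and the $K$ constants, through the bounds $V\le C_\ast K$, and not on the dimensions. This is where care is needed, and it is the main obstacle. The thresholds must be kept dimension-free by using one uniform constant for each of the three coordinate types, rather than a separate constant per coordinate.
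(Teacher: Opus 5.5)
Your proposal is correct and follows the paper's proof essentially step for step: coordinatewise use of \Cref{lem:supp-mixing-subexp-bernstein} with the common choices $V_\zeta=V_{\varepsilon Z}$ and $V_\zeta=V_{WZ}$, the same mean-plus-centered-deviation argument for the second moments (with $V_\zeta$ taken to be an upper bound $CK_W^2$ or $CK_Z^2$), and a union bound over the $pr+sr+s+r$ coordinates. Two small points. First, the paper separately handles the degenerate case $V_{\varepsilon Z}=0$ or $V_{WZ}=0$: the Bernstein lemma needs $V_\zeta>0$, but in that case the centered coordinate processes vanish almost surely. Second, the Bernstein threshold depends on $V_{\varepsilon Z}$ and $V_{WZ}$ themselves and grows as they shrink, so the upper bounds $V\le C_\ast K$ alone do not control it; the paper's proof carries the same implicit dependence.
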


\begin{proof}
We first consider the oracle score \eqref{eq:supp-ep-oracle}. Fix a coordinate $(l,k)$ and set
\[
    \zeta_t = (\boldsymbol\varepsilon_{t+h}\widetilde{\mathbf Z}_t^\top)_{lk}.
\]
It is centered by the exclusion argument \eqref{eq:supp-exclusion}, and \(\|\zeta_t\|_{\senorm}\le K_\varepsilon K_Z\) by the product property of Orlicz norms. If $V_{\varepsilon Z} = 0$, then this centered coordinate process is identically zero almost surely, and the displayed bound is trivial. If $V_{\varepsilon Z} > 0$, then \eqref{eq:lrv-epsZ} ensures its variance proxy is no larger than $V_{\varepsilon Z}$. Hence \Cref{lem:supp-mixing-subexp-bernstein} applies with $V_\zeta = V_{\varepsilon Z}$, so
\[
    |P_n(\boldsymbol\varepsilon_{t+h}\widetilde{\mathbf Z}_t^\top)_{lk}|
    \le \kappa_c V_{\varepsilon Z}\rho_n
\]
with coordinate failure probability at most $3n^{-(c+1)}$. A union bound over the $pr$ coordinates gives failure probability at most $3pr\,n^{-(c+1)}$, which is at most $n^{-c}$ once $n\ge 3pr$, proving \eqref{eq:supp-ep-oracle}.

The first-stage bound \eqref{eq:supp-ep-firststage} is identical. For a coordinate $(j,k)$, take
\[
    \zeta_t = (\widetilde{\mathbf W}_t\widetilde{\mathbf Z}_t^\top)_{jk}-(\mathbf A_t)_{jk}.
\]
This process is centered by the definition of $\mathbf A_t$, and has $\psi_1$ norm bounded by $2K_W K_Z$. Its sample average is the corresponding coordinate of $P_n(\widetilde{\mathbf W}_t\widetilde{\mathbf Z}_t^\top)-\overline{\mathbf A}$. If $V_{WZ} = 0$, every first-stage coordinate process is identically zero almost surely, and \eqref{eq:supp-ep-firststage} is trivial. If $V_{WZ}>0$, then \eqref{eq:lrv-WZ} gives $V_0\le V_{WZ}$ for this coordinate. Hence \Cref{lem:supp-mixing-subexp-bernstein} applies with $V_\zeta = V_{WZ}$. A union bound over the $sr$ coordinates gives failure probability at most $3sr\,n^{-(c+1)}$, which is at most $n^{-c}$ once $n\ge 3sr$. This proves \eqref{eq:supp-ep-firststage}.

It remains to control the empirical second moments. For each coordinate of $\widetilde{\mathbf W}_t$, the squared residual is sub-exponential, with
\[
    \| (\widetilde{\mathbf W}_t)_j^2\|_{\senorm}\le K_W^2
    \quad \Longrightarrow \quad
    \E(\widetilde{\mathbf W}_t)_j^2\le 2K_W^2 .
\]
Apply \Cref{lem:supp-truncated-lrv} to the centered process
$(\widetilde{\mathbf W}_t)_j^2 - \E(\widetilde{\mathbf W}_t)_j^2$. By \eqref{eq:supp-variance-bound}, its variance proxy is bounded by a constant $V_{W,2}\le C K_W^2$, where $C$ depends only on the mixing constants. Applying \Cref{lem:supp-mixing-subexp-bernstein} with $V_\zeta = V_{W,2}$, and taking $N_0$ large enough, makes the Bernstein deviation for this process at most $2K_W^2$. Thus
\[
    P_n(\widetilde{\mathbf W}_t)_j^2\le \sup_t\E(\widetilde{\mathbf W}_t)_j^2+2K_W^2\le 4K_W^2
\]
with coordinate failure probability at most $3n^{-(c+1)}$. The same holds for each coordinate of $\widetilde{\mathbf Z}_t$. The union over the $s+r$ coordinates is at most $n^{-c}$ once $n\ge 3(s+r)$, which proves \eqref{eq:supp-ep-moment}.
\end{proof}

We next bound the first-order term, whose summands pair one nuisance error with a residual factor. Neyman orthogonality and buffered cross-fitting yield a bound governed by the individual error rate $\bar r_T$, with a logarithmic factor and an additional truncation term, as displayed below.

\begin{lemma}[Cross-fitted first-order empirical average]\label{lem:supp-crossfit-firstorder}
Suppose \Cref{ass:estimation-primitive,ass:cross-fitting} and \Cref{cond:nuisance-rate,cond:sg-nuisance-error} hold for nuisance estimates constructed by \Cref{alg:orthogonal-lpiv}, and fix any $c>0$. Let $L$ be the number of cross-fitting folds. With $\rho_n$, $C_c$, $K_\circ$, and $\bar\ell_T$ as in \Cref{thm:supp-general}, there is a dimension-free threshold $N_1$, depending only on $c$, the primitives of \Cref{ass:estimation-primitive}, and the sub-Gaussian bound $\bar K$ of \Cref{cond:sg-nuisance-error}, such that for every $n\ge\max\{N_1, 3Lpr\}$, with probability at least
\begin{align*}
    \Pr(\mathcal E_{T,h})-9Lpr\,n^{-(c+1)}-3L\,C_\beta\exp(-c_\beta b_T),
\end{align*}
it holds that
\begin{equation}\label{eq:supp-ep-firstorder}
    \left\|P_n\left[-\boldsymbol\Delta_{Y,t}\widetilde{\mathbf Z}_t^\top
    +\mathbf B_h\boldsymbol\Delta_{W,t}\widetilde{\mathbf Z}_t^\top
    -\boldsymbol\varepsilon_{t+h}\boldsymbol\Delta_{Z,t}^\top\right]\right\|_\Fr
    \le LC_c\sqrt{pr}\,K_\circ
    \{\bar r_T\sqrt{\bar\ell_T}\,\rho_n+\bar K\{\log n\}^4/n\} .
\end{equation}
This first-order term scales with the individual error size $\bar r_T$ of \Cref{cond:nuisance-rate}(i), not the product rate $r_T$, because each of its three summands pairs one nuisance error with a conditionally mean-zero residual factor.
\end{lemma}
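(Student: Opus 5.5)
The plan is to prove \eqref{eq:supp-ep-firstorder} one fold and one coordinate at a time, then take a union bound over the $3Lpr$ scalar processes. Since $P_n=\sum_\ell (|\mathcal I_\ell|/n)P_{\mathcal I_\ell}$, it suffices to bound, for each fold $\ell$ and coordinate $(l,k)$, the block averages
\[
\frac1n\sum_{t\in\mathcal I_\ell}\xi_t,\qquad \xi_t\in\bigl\{(\boldsymbol\delta_{Y,\ell}(\mathbf U_t))_l(\widetilde{\mathbf Z}_t)_k,\ (\mathbf B_h\boldsymbol\delta_{W,\ell}(\mathbf U_t))_l(\widetilde{\mathbf Z}_t)_k,\ (\boldsymbol\varepsilon_{t+h})_l(\boldsymbol\delta_{Z,\ell}(\mathbf U_t))_k\bigr\}.
\]
Summing over folds produces the factor $L$. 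The sub-Gaussian norms of $\widetilde{\mathbf Z}_t$ and $\boldsymbol\varepsilon_{t+h}$ (scaled by $1$ and $C_B$) combine into $K_\circ$, and the $\sqrt{pr}$ factor converts coordinatewise bounds into a Frobenius bound.

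\textbf{Step 1: decoupling.} Conditionally on $\mathcal G_\ell$, each $\boldsymbol\delta_{a,\ell}$ is a fixed function. The difficulty is that the block $\{\mathcal X_{t,h}:t\in\mathcal I_\ell\}$ is not independent of $\mathcal G_\ell$.
\begin{itemize}
\item The training indices $\mathcal J_\ell$ lie on at most two sides of $\mathcal I_\ell$, each separated by the buffer $b_T$.
\item Apply \Cref{lem:supp-mixing-triangle}(ii) with $\mathcal A$ the left training stretch, $\mathcal B$ the validation block, and $\mathcal C$ the right stretch. Each of the three $\beta$ terms is at most $\beta(b_T)$. Part (iii) absorbs the independent learner randomization.
\item Berbee's coupling then gives a copy $\{\mathcal X^*_t\}_{t\in\mathcal I_\ell}$ that is independent of $\mathcal G_\ell$, has the same joint law as the original block, and coincides with it except on an event of probability $\le 3\beta(b_T)\le 3C_\beta e^{-c_\beta b_T}$. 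This is the source of the $3LC_\beta\exp(-c_\beta b_T)$ term.
\end{itemize}
On the coupled block, condition on $\mathcal G_\ell$ and freeze $\boldsymbol\delta_{a,\ell}$. Then $\xi^*_t$ is a fixed measurable function of $\mathcal X^*_t$. It is centered because $\E[\widetilde{\mathbf Z}_t\mid\mathbf U_t]=\mathbf 0$ and $\E[\boldsymbol\varepsilon_{t+h}\mid\mathbf U_t]=\mathbf 0$; this is the same tower argument as in \Cref{prop:supp-orthogonality}. It is geometrically $\beta$-mixing with the envelope of \Cref{ass:estimation-primitive}(ii).

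\textbf{Step 2: tails and variance.} Using the conditional sub-Gaussian bounds $\|\widetilde{\mathbf Z}_t\mid\mathbf U_t\|_\sgnorm\le K_Z$ and $\|\boldsymbol\varepsilon_{t+h}\mid\mathbf U_t\|_\sgnorm\le K_\varepsilon$, together with \Cref{cond:sg-nuisance-error}, the product $\xi^*_t$ is sub-exponential with $\psi_1$ norm $\lesssim K_\circ\bar K$. Its variance is at most $K_\circ^2\,\E\|\boldsymbol\delta_{a,\ell}(\mathbf U_t)\|^2\le K_\circ^2\bar r_T^2$ on $\mathcal E_{T,h}$, using the population version of \Cref{cond:nuisance-rate}(i).

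The long-run variance proxy must scale with $\bar r_T$, not $\bar K$, and this is the main obstacle. A direct application of Davydov's inequality with $L_4$ norms gives lag-$k$ covariances of order $K_\circ^2\bar r_T\bar K\,e^{-ck}$, which is too large. Instead, I would use Davydov with exponents $(2+\epsilon)$ and let $\epsilon$ depend on $T$, trading the moment level against the mixing decay. Equivalently, I would truncate the lag sum at $k_T\asymp\log(\bar K/\bar r_T)$: lags up to $k_T$ are bounded by Cauchy--Schwarz by $\bar r_T^2K_\circ^2$ each, and the tail beyond $k_T$ is exponentially small relative to $\bar r_T^2$. This yields a variance proxy $V^2\lesssim K_\circ^2\bar r_T^2\,\bar\ell_T$, which explains the $\sqrt{\bar\ell_T}$ factor.

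\textbf{Step 3: concentration.} Apply the truncation plus Merlevède--Peligrad--Rio argument of \Cref{lem:supp-mixing-subexp-bernstein} conditionally on $\mathcal G_\ell$ with truncation level $M_n\asymp\bar K K_\circ\log n$, but keep the $xM_n(\log n)^2$ and $M_n^2$ terms rather than absorbing them. The deviation then becomes
\[
\kappa\,V\rho_n+C\,M_n(\log n)^3/n\;\lesssim\; K_\circ\bigl\{\bar r_T\sqrt{\bar\ell_T}\,\rho_n+\bar K(\log n)^4/n\bigr\},
\]
with conditional failure probability $\le 3n^{-(c+1)}$. The bias introduced by truncation is of order $n^{-(c+2)}$ and is absorbed. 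Integrating over $\mathcal G_\ell$ and intersecting with $\mathcal E_{T,h}$ gives the failure budget $9Lpr\,n^{-(c+1)}$ from $3$ summand types $\times\,pr$ coordinates $\times\,L$ folds. All constants are collected into $C_c$, and $N_1$ is chosen so that the lower-order terms are dominated.
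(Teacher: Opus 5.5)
Your proposal follows the paper's proof essentially step for step: foldwise Berbee coupling via \Cref{lem:supp-mixing-triangle}, conditional centering, splitting the lag sum at $m_T\asymp\log(\bar K/\bar r_T)$ (Cauchy--Schwarz for short lags, Davydov for long lags) to obtain the $\bar r_T^2\bar\ell_T$ variance proxy, and a truncated Merlev\`ede--Peligrad--Rio bound with $M_n\asymp\bar K\log n$ that yields the $\bar K(\log n)^4/n$ term. The paper makes two details explicit that you gloss over: the conditional argument has to run on a training-measurable event $\mathcal E_\ell^{\mathrm{pop}}\supseteq\mathcal E_{T,h}$, because $\mathcal E_{T,h}$ involves validation data; and the lag-split variance bound has to be re-checked uniformly in $M$ for the centered truncations, because the limit statement in \Cref{lem:supp-truncated-lrv} is not uniform over the $T$-dependent processes here.
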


\begin{proof}
Write $\mathcal I_\ell$ for a validation block and
\[
    P_{n,\ell}f = n^{-1}\sum_{t\in\mathcal I_\ell}f_t .
\]
Since $P_n = \sum_{\ell = 1}^L P_{n,\ell}$, it is enough to prove the coordinate bound for one fold and then sum over the fixed number of folds. Fix such a fold, let $\mathcal G$ be its training $\sigma$-field, and denote the nuisance estimates trained for this fold by
$\widehat{\boldsymbol\mu}_{Y,h}^{(-\ell)}$,
$\widehat{\boldsymbol\mu}_{W}^{(-\ell)}$, and
$\widehat{\boldsymbol\mu}_{Z}^{(-\ell)}$.

For any state value $u$, define the fold-specific error functions
\[
    \delta_Y(u) = \widehat{\boldsymbol\mu}_{Y,h}^{(-\ell)}(u)-\boldsymbol\mu_{Y,h}(u),
    \qquad
    \delta_W(u) = \widehat{\boldsymbol\mu}_{W}^{(-\ell)}(u)-\boldsymbol\mu_W(u),
    \qquad
    \delta_Z(u) = \widehat{\boldsymbol\mu}_{Z}^{(-\ell)}(u)-\boldsymbol\mu_Z(u).
\]
Thus, for $t\in\mathcal I_\ell$,
\[
    \boldsymbol\Delta_{Y,t} = \delta_Y(\mathbf U_t),
    \qquad
    \boldsymbol\Delta_{W,t} = \delta_W(\mathbf U_t),
    \qquad
    \boldsymbol\Delta_{Z,t} = \delta_Z(\mathbf U_t),
\]
and the three functions are $\mathcal G$-measurable as random functions.
Let $\mathcal E_{\ell}^{\mathrm{pop}}$ be the event on which, for this fold, the population versions of the three individual-rate bounds in \Cref{cond:nuisance-rate}(i) hold and, in addition, the $\psi_2$ envelope of \Cref{cond:sg-nuisance-error} holds, namely
\[
    \sup_{t\in\mathcal I_{\ell,h}}\max_{a\in\{Y,W,Z\}}
    \|\boldsymbol\delta_{a,\ell}\|_{\psi_2,P_t}\le\bar K .
\]
This event is $\mathcal G$-measurable because the fitted functions are fixed given $\mathcal G$ and the time-specific validation laws $P_t$ are deterministic population objects, so both the population $L_2$ norms and the $\psi_2$ norms above are functions of the training data. Moreover,
\[
    \mathcal E_{T,h}\subseteq\bigcap_{\ell=1}^L
    \mathcal E_{\ell}^{\mathrm{pop}},
\]
because $\mathcal E_{T,h}$ requires both the population and empirical versions of the nuisance bounds, and \Cref{cond:sg-nuisance-error} states that the $\psi_2$ envelope holds on $\mathcal E_{T,h}$. The envelope requirement is built into the definition of $\mathcal E_{\ell}^{\mathrm{pop}}$ deliberately, since \Cref{cond:sg-nuisance-error} guarantees it only on $\mathcal E_{T,h}$, which involves validation data and is not $\mathcal G$-measurable, whereas the conditional argument below must run on a training-measurable event. On $\mathcal E_{\ell}^{\mathrm{pop}}$, evaluating the fixed functions at an independent draw $\mathbf U_t^\circ$ from the time-$t$ validation law gives
\[
    \sup_t\E\{\|\delta_Y(\mathbf U_t^\circ)\|_2^2\mid\mathcal G\}^{1/2}
    \vee\sup_t\E\{\|\delta_W(\mathbf U_t^\circ)\|_2^2\mid\mathcal G\}^{1/2}
    \vee\sup_t\E\{\|\delta_Z(\mathbf U_t^\circ)\|_2^2\mid\mathcal G\}^{1/2}
    \le\bar r_T.
\]
On the same event, the corresponding coordinatewise conditional $\psi_2$ norms are bounded by $\bar K$ by the envelope requirement in the definition of $\mathcal E_{\ell}^{\mathrm{pop}}$. Thus, on $\mathcal E_{\ell}^{\mathrm{pop}}$, the coordinates of $\mathbf B_h\delta_W(\mathbf U_t^\circ)$ have conditional $L_2$ norm at most $C_B\bar r_T$ and conditional $\psi_2$ norm at most $C_B\bar K$.

For a fixed coordinate $(i,j)$, decompose the fold contribution as
\begin{align*}
    S_{\ell,ij}
    & = P_{n,\ell}\left[
    -\boldsymbol\Delta_{Y,t}\widetilde{\mathbf Z}_t^\top
    +\mathbf B_h\boldsymbol\Delta_{W,t}\widetilde{\mathbf Z}_t^\top
    -\boldsymbol\varepsilon_{t+h}\boldsymbol\Delta_{Z,t}^\top
    \right]_{ij}  \\
    & = P_{n,\ell}\zeta^{Y}_{t,ij}
    +P_{n,\ell}\zeta^{W}_{t,ij}
    +P_{n,\ell}\zeta^{Z}_{t,ij},
\end{align*}
where
\[
    \zeta^{Y}_{t,ij} = -\delta_{Y,i}(\mathbf U_t)\cdot(\widetilde{\mathbf Z}_t)_j,
    \qquad
    \zeta^{W}_{t,ij} = (\mathbf B_h\delta_W(\mathbf U_t))_i(\widetilde{\mathbf Z}_t)_j,
    \qquad
    \zeta^{Z}_{t,ij} = -(\boldsymbol\varepsilon_{t+h})_i\delta_{Z,j}(\mathbf U_t).
\]
All three variables are set equal to zero when $t\notin\mathcal I_\ell$. This zero-padding keeps the average in the form $n^{-1}\sum_{t = 1}^n\zeta_t$. Once centering has been established below, it also preserves centering and cannot increase the mixing coefficients or the variance proxy.

\emph{Coupling the validation block.} Let
\[
    \mathbf V_\ell = \{(\mathbf U_t,\boldsymbol\varepsilon_{t+h},\widetilde{\mathbf Z}_t):
    t\in\mathcal I_\ell\},
    \qquad
    \mathcal V = \sigma(\mathbf V_\ell),
\]
and recall that $\mathcal G$ is generated by the training observations indexed by $\mathcal J_{\ell,h}$ and any independent auxiliary learner randomization. Under the observed-data representation, each training observation $(\mathbf Y_{s+h},\mathbf W_s,\mathbf Z_s,\mathbf U_s)$ is measurable with respect to $\mathcal X_{s,h}$, while $\mathcal V$ is contained in $\sigma(\mathcal X_{t,h}:t\in\mathcal I_\ell)$. Write $\mathcal I_\ell=\{a_\ell,\ldots,d_\ell\}$ and let
\[
    \mathcal H_-=\sigma(\mathcal X_{s,h}:s\le a_\ell-b_T),
    \qquad
    \mathcal H_+=\sigma(\mathcal X_{s,h}:s\ge d_\ell+b_T).
\]
Let $\mathcal G_{\mathrm{data}}$ be the field generated by the training observations and let $\mathcal R_\ell$ be the field generated by the auxiliary learner randomization. Then
\[
    \mathcal G=\mathcal G_{\mathrm{data}}\vee\mathcal R_\ell,
    \qquad
    \mathcal G_{\mathrm{data}}\subseteq\mathcal H_-\vee\mathcal H_+.
\]
By \Cref{lem:supp-mixing-triangle}(ii) and the primitive mixing bound,
\begin{align*}
    \beta(\mathcal V,\mathcal H_-\vee\mathcal H_+)
    &\le\beta(\mathcal H_-,\mathcal V)
      +\beta(\mathcal H_-\vee\mathcal V,\mathcal H_+)\\
    &\quad+\beta(\mathcal H_-,\mathcal H_+)\\
    &\le3C_\beta e^{-c_\beta b_T}.
\end{align*}
Each coefficient on the right concerns fields separated by at least $b_T$ time indices. By \Cref{ass:cross-fitting}, $\mathcal R_\ell$ is independent of the entire data process, and hence of $\mathcal V\vee\mathcal H_-\vee\mathcal H_+$. Parts (i) and (iii) of \Cref{lem:supp-mixing-triangle} therefore give
\begin{align*}
    \beta(\mathcal V,\mathcal G)
    &\le\beta(\mathcal V,\mathcal H_-\vee\mathcal H_+\vee\mathcal R_\ell)\\
    &=\beta(\mathcal V,\mathcal H_-\vee\mathcal H_+)\\
    &\le3C_\beta e^{-c_\beta b_T}.
\end{align*}
By Berbee's coupling lemma as stated in Lemma 5.1 and proved in Section 5.3 of \citet{rio2017asymptotic}, on an extended probability space there is a copy
\[
    \mathbf V_\ell^\ast = \{(\mathbf U_t^\ast,\boldsymbol\varepsilon_{t+h}^\ast,
    \widetilde{\mathbf Z}_t^\ast):t\in\mathcal I_\ell\}
\]
such that \(\mathbf V_\ell^\ast\stackrel{d}{ = }\mathbf V_\ell\), $\mathbf V_\ell^\ast$ is independent of $\mathcal G$, and
\[
    \Pr(\mathbf V_\ell^\ast\ne\mathbf V_\ell)
    \le 3C_\beta e^{-c_\beta b_T}.
\]
Let \(\mathcal N_\ell = \{\mathbf V_\ell^\ast\ne\mathbf V_\ell\}\) and define
\[
    \zeta^{Y,\ast}_{t,ij} = -\delta_{Y,i}(\mathbf U_t^\ast)(\widetilde{\mathbf Z}_t^\ast)_j,
    \quad
    \zeta^{W,\ast}_{t,ij} = (\mathbf B_h\delta_W(\mathbf U_t^\ast))_i(\widetilde{\mathbf Z}_t^\ast)_j,
    \quad
    \zeta^{Z,\ast}_{t,ij} = -(\boldsymbol\varepsilon_{t+h}^\ast)_i\delta_{Z,j}(\mathbf U_t^\ast),
\]
again with zero-padding outside $\mathcal I_\ell$. On $\mathcal N_\ell^c$,
\[
    S_{\ell,ij} = P_{n,\ell}\zeta^{Y,\ast}_{t,ij}
    +P_{n,\ell}\zeta^{W,\ast}_{t,ij}
    +P_{n,\ell}\zeta^{Z,\ast}_{t,ij}.
\]
Let \(\mathcal N = \cup_{\ell = 1}^L\mathcal N_\ell\) denote failure of at least one fold coupling. A union bound over the fixed $L$ folds gives
\[
    \Pr(\mathcal N)\le 3L\,C_\beta e^{-c_\beta b_T}.
\]

\emph{Conditional centering.} Conditional on $\mathcal G$, the three nuisance functions in the coupled summands are fixed. Moreover, $\mathbf V_\ell^\ast$ has the same law as the validation block and is independent of $\mathcal G$. Therefore, for $t\in\mathcal I_\ell$,
\begin{align*}
    \E(\zeta^{Y,\ast}_{t,ij}\mid\mathcal G)
    & = -\E\left[
    \delta_{Y,i}(\mathbf U_t^\ast)
    \E\{(\widetilde{\mathbf Z}_t^\ast)_j
        \mid\mathbf U_t^\ast,\mathcal G\}
    \mid\mathcal G\right] = 0,\\
    \E(\zeta^{W,\ast}_{t,ij}\mid\mathcal G)
    & = \E\left[
    (\mathbf B_h\delta_W(\mathbf U_t^\ast))_i
    \E\{(\widetilde{\mathbf Z}_t^\ast)_j
        \mid\mathbf U_t^\ast,\mathcal G\}
    \mid\mathcal G\right] = 0,\\
    \E(\zeta^{Z,\ast}_{t,ij}\mid\mathcal G)
    & = -\E\left[
    \delta_{Z,j}(\mathbf U_t^\ast)
    \E\{(\boldsymbol\varepsilon_{t+h}^\ast)_i
        \mid\mathbf U_t^\ast,\mathcal G\}
    \mid\mathcal G\right] = 0.
\end{align*}
Thus, after conditioning on $\mathcal G$, each zero-padded coordinate process is centered and geometrically mixing with the original mixing envelope.

\emph{Uniform variance proxy.} It remains to bound the conditional long-run variance for any one of the three scalar processes. Write
\[
    \zeta_t^\ast = d(\mathbf U_t^\ast)\xi_t^\ast\I\{t\in\mathcal I_\ell\},
\]
where the pair $(d,\xi_t^\ast)$ is, respectively,
\[
\begin{array}{lll}
    Y\text{-piece:} & d(u) = -\delta_{Y,i}(u),
        & \xi_t^\ast = (\widetilde{\mathbf Z}_t^\ast)_j,\\[2mm]
    W\text{-piece:} & d(u) = (\mathbf B_h\delta_W(u))_i,
        & \xi_t^\ast = (\widetilde{\mathbf Z}_t^\ast)_j,\\[2mm]
    Z\text{-piece:} & d(u) = -\delta_{Z,j}(u),
        & \xi_t^\ast = (\boldsymbol\varepsilon_{t+h}^\ast)_i .
\end{array}
\]
Let $K_\xi = K_Z$ for the $Y$- and $W$-pieces and $K_\xi = K_\varepsilon$ for the $Z$-piece. Let $R_\Delta = 1$ for the $Y$- and $Z$-pieces and $R_\Delta = C_B$ for the $W$-piece. On $\mathcal E_{\ell}^{\mathrm{pop}}$, by its definition,
\[
    \sup_t\E\{d(\mathbf U_t^\ast)^2\mid\mathcal G\}^{1/2}
    \le R_\Delta \bar r_T,
    \qquad
    \sup_t\|d(\mathbf U_t^\ast)\mid\mathcal G\|_{\psi_2}
    \le R_\Delta\bar K .
\]
Because the coupled validation block is independent of $\mathcal G$ and has
the same joint law as the original validation block, the conditional residual
tail bound is preserved after coupling. More precisely, for the relevant scalar
coordinate $\xi_t^\ast$,
\[
    \E\left[
      \exp\{(\xi_t^\ast)^2/K_\xi^2\}
      \,\middle|\,\mathbf U_t^\ast,\mathcal G
    \right]
    =
    \left.
    \E\left[
      \exp\{\xi_t^2/K_\xi^2\}
      \,\middle|\,\mathbf U_t
    \right]\right|_{\mathbf U_t=\mathbf U_t^\ast}
    \le2
    \quad\text{a.s.}
\]
Thus
$\|\xi_t^\ast\mid\mathbf U_t^\ast,\mathcal G\|_{\psi_2}\le K_\xi$.
In particular, conditional second-moment control and the tower property give
\[
    \sup_t\E\{(\zeta_t^\ast)^2\mid\mathcal G\}
    \le C K_\xi^2R_\Delta^2\bar r_T^2 .
\]
Hence Cauchy--Schwarz gives the short-lag bound
\[
    |\cov(\zeta_a^\ast,\zeta_{a+m}^\ast\mid\mathcal G)|
    \le C K_\xi^2R_\Delta^2\bar r_T^2 .
\]
Davydov's covariance inequality gives the mixing-informed bound
\[
    |\cov(\zeta_a^\ast,\zeta_{a+m}^\ast\mid\mathcal G)|
    \le C\alpha(m)^{1/2}\sup_t\|\zeta_t^\ast\mid\mathcal G\|_4^2
    \le C K_\xi^2R_\Delta^2\bar K^2 e^{-c_\beta m/2}.
\]
If $\bar r_T=0$, the population $L_2$ bound makes
$d(\mathbf U_t^\ast)=0$ almost surely for every validation time, so the desired
bound is immediate. Suppose henceforth that $\bar r_T>0$. We choose different bounds when $m$ takes different values. Let \(m_T = \left\lceil 2c_\beta^{-1}\log\{(\bar K/\bar r_T)^2\vee e\}\right\rceil\).
We use the Cauchy-Schwarz bound for $m\le m_T$ and the Davydov bound for $m>m_T$. Then, uniformly in $a$,
\begin{align*}
    \sum_{m\ge 0}|\cov(\zeta_a^\ast,\zeta_{a+m}^\ast\mid\mathcal G)|
    &\le C K_\xi^2R_\Delta^2
    \left\{
        \bar r_T^2(m_T+1)+\bar K^2\sum_{m>m_T}e^{-c_\beta m/2}
    \right\} \\
    &\le C_{\mathrm{mix}}K_\xi^2R_\Delta^2\bar r_T^2
    \{1\vee\log(\bar K/\bar r_T)\}.
\end{align*}
The last inequality uses $\bar K^2e^{-c_\beta m_T/2}\le \bar r_T^2$ when $\bar r_T<\bar K$, while if $\bar r_T\ge\bar K$, the Davydov series is already bounded by a constant multiple of $K_\xi^2R_\Delta^2\bar r_T^2$. Hence
\begin{equation}\label{eq:supp-lrv-firstorder}
    v_\ast^2
    :=\sup_a\left\{
      \var(\zeta_a^\ast\mid\mathcal G)
      +2\sum_{b>a}|\cov(\zeta_a^\ast,\zeta_b^\ast\mid\mathcal G)|
      \right\}
    \le C_{\mathrm{mix}}K_\xi^2R_\Delta^2\bar r_T^2\bar\ell_T,
    \quad
    \bar\ell_T = 1\vee\log(\bar K/\bar r_T),
\end{equation}
where $C_{\mathrm{mix}}$ depends only on $(C_\beta,c_\beta)$.

The same variance bound holds uniformly for the conditionally centered truncations
\[
    \zeta_t^{\ast,(M)}
    =
    \zeta_t^\ast\I\{|\zeta_t^\ast|\le M\}
    -\E[\zeta_t^\ast\I\{|\zeta_t^\ast|\le M\}\mid\mathcal G].
\]
Indeed, conditionally on $\mathcal G$,
\[
    \|\zeta_t^{\ast,(M)}\|_2^2
    \le \E\{(\zeta_t^\ast)^2\mid\mathcal G\},
    \qquad
    \|\zeta_t^{\ast,(M)}\|_4
    \le 2\|\zeta_t^\ast\|_4 .
\]
Consequently, the Cauchy--Schwarz and Davydov bounds above apply to
$\zeta_t^{\ast,(M)}$ with the same orders. Using the same split at $m_T$ gives, uniformly in $M$,
\begin{equation}\label{eq:supp-lrv-firstorder-truncated}
    v_{\ast,M}^2
    \le C_{\mathrm{mix}}K_\xi^2R_\Delta^2
    \bar r_T^2\bar\ell_T ,
\end{equation}
where $v_{\ast,M}^2$ is the conditional long-run variance proxy of the centered truncated process.

\emph{Bernstein bound and union.} Set
\[
    V_\ast = K_\xi R_\Delta \bar r_T\sqrt{\bar\ell_T},
    \qquad
    B_\ast = K_\xi R_\Delta\bar K,
    \qquad
    \delta_n = C_c\{V_\ast\rho_n+B_\ast\log^4 n/n\}.
\]
The conditional sub-Gaussian controls and the product property give
$\sup_t\|\zeta_t^\ast\mid\mathcal G\|_{\psi_1}\le C_\psi B_\ast$
for a universal constant $C_\psi$. We may suppose $B_\ast>0$, since otherwise
the process is identically zero. Truncate at
\[
    M_n=(c+2)C_\psi B_\ast\log n .
\]
The conditional sub-exponential tail bound and a union bound yield
\begin{equation}\label{eq:supp-firstorder-max-tail}
    \Pr\left(
    \max_{1\le t\le n}|\zeta_t^\ast|>M_n
    \,\middle|\,\mathcal G
    \right)
    \le 2n^{-(c+1)} .
\end{equation}
Let
\[
    \mu_{n,t}
    =
    \E[\zeta_t^\ast\I\{|\zeta_t^\ast|\le M_n\}\mid\mathcal G].
\]
Conditional centering and tail integration give
\[
    \max_t|\mu_{n,t}|
    \le
    2(M_n+C_\psi B_\ast)
    \exp\{-M_n/(C_\psi B_\ast)\}
    \le \delta_n/2
\]
for all sufficiently large $n$, with a threshold depending on $c$ but not on
$\bar r_T$. On the complement of the event in
\eqref{eq:supp-firstorder-max-tail},
$P_{n,\ell}\zeta_t^\ast
=P_{n,\ell}\zeta_t^{\ast,(M_n)}+P_{n,\ell}\mu_{n,t}$.

Apply \eqref{eq:supp-mpr} to the centered truncated process using
\eqref{eq:supp-lrv-firstorder-truncated} and set $x=n\delta_n/2$.
The denominator in \eqref{eq:supp-mpr} is bounded, up to constants depending
only on $c$ and the mixing constants, by
\[
    nV_\ast^2+B_\ast^2\log^2 n
    +B_\ast V_\ast\sqrt n\log^{7/2} n
    +B_\ast^2\log^7 n
    \le C\{nV_\ast^2+B_\ast^2\log^7 n\},
\]
where the cross term is absorbed by $2ab\le a^2+b^2$. On the other hand,
\[
    x^2
    \ge c_0 C_c^2\{nV_\ast^2\log n+B_\ast^2\log^8 n\}
\]
for a numerical constant $c_0>0$. Thus, by choosing $C_c$ sufficiently large
as a function only of $c$ and the mixing constants, the truncated-process
probability is at most $n^{-(c+1)}$, uniformly in $\bar r_T$. Combining this
bound with \eqref{eq:supp-firstorder-max-tail} and the centering-bias bound
gives, for each coordinate,
\[
    |P_{n,\ell}\zeta_t^\ast|
    \le C_c K_\xi
    R_\Delta(\bar r_T\sqrt{\bar\ell_T}\rho_n+\bar K\log^4 n/n)
\]
with conditional probability at least $1-3n^{-(c+1)}$, once $n\ge N_1$. Summing the three pieces gives the coordinate bound for this fold contribution with coefficient
$K_\varepsilon+(1+C_B)K_Z = K_\circ$. Summing these bounds over the fixed $L$ fold contributions gives the displayed $L$-multiple in \eqref{eq:supp-ep-firstorder}. The union over the three pieces, the $pr$ coordinates, and the $L$ folds has probability at most $9Lpr\,n^{-(c+1)}$. For every training realization in
$\mathcal E_{\ell}^{\mathrm{pop}}$, integrating the conditional bound over $\mathcal G$ removes the conditioning. Taking the union over folds and using
$\mathcal E_{T,h}\subseteq\cap_{\ell=1}^L
\mathcal E_{\ell}^{\mathrm{pop}}$, together with the coupling failure probability, gives \eqref{eq:supp-ep-firstorder} with the probability stated in the lemma.
\end{proof}

\begin{lemma}[Perturbation of the sample first stage moment]\label{lem:supp-first-stage}
Suppose $\mathcal E_{T,h}$ occurs and the bounds \eqref{eq:supp-ep-firststage} and \eqref{eq:supp-ep-moment} hold. Then \(\|\widehat{\mathbf A}_h-\overline{\mathbf A}\|_\Fr\le D_A\), where $D_A$ is the first-stage deviation in \eqref{eq:supp-first-stage-smallness}. If $D_A\le\underline\sigma_A/2$, then $\widehat{\mathbf A}_h$ has full row rank and
\[
    \|\widehat{\mathbf A}_h^\dagger\|_{\op}\le \frac{2}{\underline\sigma_A},
    \qquad
    \|\widehat{\mathbf A}_h^\dagger-\overline{\mathbf A}^\dagger\|_{\op}
    \le \frac{3}{\underline\sigma_A^2}\,\|\widehat{\mathbf A}_h-\overline{\mathbf A}\|_{\op}
    \le \frac{3}{\underline\sigma_A^2}\,D_A .
\]
\end{lemma}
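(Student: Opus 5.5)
The plan is to decompose $\widehat{\mathbf A}_h-\overline{\mathbf A}$ exactly as the numerator was decomposed in the proof of \Cref{thm:supp-general}, and then to apply standard matrix perturbation theory to the pseudo-inverse.

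\textbf{Decomposing the first stage.} Since $\widehat{\mathbf e}_{W,t}=\widetilde{\mathbf W}_t-\boldsymbol\Delta_{W,t}$ and $\widehat{\mathbf e}_{Z,t}=\widetilde{\mathbf Z}_t-\boldsymbol\Delta_{Z,t}$, we have
\[
    \widehat{\mathbf A}_h-\overline{\mathbf A}
    =\{P_n(\widetilde{\mathbf W}_t\widetilde{\mathbf Z}_t^\top)-\overline{\mathbf A}\}
    -P_n(\boldsymbol\Delta_{W,t}\widetilde{\mathbf Z}_t^\top)
    -P_n(\widetilde{\mathbf W}_t\boldsymbol\Delta_{Z,t}^\top)
    +P_n(\boldsymbol\Delta_{W,t}\boldsymbol\Delta_{Z,t}^\top).
\]
\begin{itemize}
\item The first term is bounded by $\kappa_c\sqrt{sr}\,V_{WZ}\rho_n$ via \eqref{eq:supp-ep-firststage}.
\item The two cross terms are handled by Cauchy--Schwarz in the empirical norm. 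For example, $\|P_n(\boldsymbol\Delta_{W,t}\widetilde{\mathbf Z}_t^\top)\|_\Fr\le \|\boldsymbol\Delta_W\|_{n,h,2}(P_n\|\widetilde{\mathbf Z}_t\|_2^2)^{1/2}$.
\item On $\mathcal E_{T,h}$, the empirical nuisance norm is at most $\bar r_T$. By \eqref{eq:supp-ep-moment}, $(P_n\|\widetilde{\mathbf Z}_t\|_2^2)^{1/2}\le 2(\sqrt{s}K_W+\sqrt r K_Z)$, using $\sqrt{a+b}\le\sqrt a+\sqrt b$; the same bound holds for $\widetilde{\mathbf W}_t$.
\item Together the two cross terms therefore give at most $4(\sqrt sK_W+\sqrt rK_Z)\bar r_T$. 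This is twice the stated coefficient, so I would instead use the split $(P_n\|\widetilde{\mathbf W}_t\|^2)^{1/2}+(P_n\|\widetilde{\mathbf Z}_t\|^2)^{1/2}\le \sqrt2\,(P_n\|\widetilde{\mathbf W}_t\|^2+P_n\|\widetilde{\mathbf Z}_t\|^2)^{1/2}\le 2\sqrt2(sK_W^2+rK_Z^2)^{1/2}$. Matching the stated constant $2(\sqrt sK_W+\sqrt rK_Z)$ exactly is a minor bookkeeping point, and I expect it to be the only fiddly step.
\item The quadratic term is bounded by $\|\boldsymbol\Delta_W\|_{n,h,2}\|\boldsymbol\Delta_Z\|_{n,h,2}\le r_T^2$ using \Cref{cond:nuisance-rate}(ii).
\end{itemize}
Summing these gives $\|\widehat{\mathbf A}_h-\overline{\mathbf A}\|_\Fr\le D_A$.

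\textbf{Rank and the inverse bound.} By Weyl's inequality for singular values, together with \Cref{ass:average-relevance} and $\|\cdot\|_\op\le\|\cdot\|_\Fr$, we get $\sigma_{\min}(\widehat{\mathbf A}_h)\ge\underline\sigma_A-D_A\ge\underline\sigma_A/2>0$. Hence $\widehat{\mathbf A}_h$ has full row rank $s$. Since $\|\widehat{\mathbf A}_h^\dagger\|_\op=1/\sigma_{\min}(\widehat{\mathbf A}_h)$, this yields $\|\widehat{\mathbf A}_h^\dagger\|_\op\le 2/\underline\sigma_A$.

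\textbf{Perturbation of the pseudo-inverse.} Write $\mathbf E=\widehat{\mathbf A}_h-\overline{\mathbf A}$ and $\mathbf A=\overline{\mathbf A}$. Both matrices have full row rank, so $\widehat{\mathbf A}_h^\dagger$ and $\mathbf A^\dagger$ are right inverses with orthogonal projectors $\mathbf P=\widehat{\mathbf A}_h^\dagger\widehat{\mathbf A}_h$ and $\mathbf P_0=\mathbf A^\dagger\mathbf A$. I would use the Wedin-type decomposition
\[
    \widehat{\mathbf A}_h^\dagger-\mathbf A^\dagger
    =-\widehat{\mathbf A}_h^\dagger\mathbf E\mathbf A^\dagger
    +(\mathbf I-\mathbf P)\mathbf E^\top(\mathbf A\mathbf A^\top)^{-1}.
\]
To derive it:
\begin{itemize}
\item Since $\widehat{\mathbf A}_h\widehat{\mathbf A}_h^\dagger=\mathbf I_s=\mathbf A\mathbf A^\dagger$, we have $\widehat{\mathbf A}_h^\dagger-\mathbf A^\dagger=\mathbf P(\widehat{\mathbf A}_h^\dagger-\mathbf A^\dagger)+(\mathbf I-\mathbf P)(\widehat{\mathbf A}_h^\dagger-\mathbf A^\dagger)$.
\item The first piece equals $\widehat{\mathbf A}_h^\dagger(\mathbf I_s-\widehat{\mathbf A}_h\mathbf A^\dagger)=-\widehat{\mathbf A}_h^\dagger\mathbf E\mathbf A^\dagger$.
\item The second piece equals $-(\mathbf I-\mathbf P)\mathbf A^\dagger$. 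Since $(\mathbf I-\mathbf P)\widehat{\mathbf A}_h^\top=\mathbf 0$, this becomes $(\mathbf I-\mathbf P)(\widehat{\mathbf A}_h-\mathbf A)^\top(\mathbf A\mathbf A^\top)^{-1}$.
\end{itemize}
Bounding the norms, the first piece is at most $(2/\underline\sigma_A)(1/\underline\sigma_A)\|\mathbf E\|_\op$ and the second at most $\|\mathbf E\|_\op/\underline\sigma_A^2$. Together this gives $3\|\mathbf E\|_\op/\underline\sigma_A^2\le 3D_A/\underline\sigma_A^2$. I expect this identity check to be the main point needing care; everything else is routine.
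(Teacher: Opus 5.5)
Your decomposition of $\widehat{\mathbf A}_h-\overline{\mathbf A}$, your bounds on the oracle and quadratic terms, and your Weyl step are exactly the paper's. The cross terms are a small genuine gap, and the split you propose does not close it.

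Write $a=\sqrt s\,K_W$ and $b=\sqrt r\,K_Z$. Your split gives $2\sqrt2\,(a^2+b^2)^{1/2}\bar r_T$, which is never smaller than the required $2(a+b)\bar r_T$, because $2(a^2+b^2)\ge(a+b)^2$ with equality only when $a=b$. No argument built on the summed inequality \eqref{eq:supp-ep-moment} alone can do better, since $\sup\{\sqrt x+\sqrt y:\ x+y\le4(a^2+b^2)\}=2\sqrt2\,(a^2+b^2)^{1/2}$.

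The fix is to use the two separate bounds $P_n\|\widetilde{\mathbf W}_t\|_2^2\le4sK_W^2$ and $P_n\|\widetilde{\mathbf Z}_t\|_2^2\le4rK_Z^2$. The proof of \Cref{lem:supp-oracle-empirical} establishes these coordinate by coordinate on the same event, and \eqref{eq:supp-ep-moment} is merely their sum, so the hypothesis should be read in this stronger form; the paper's proof uses them. With them, Cauchy--Schwarz gives $\|P_n(\boldsymbol\Delta_{W,t}\widetilde{\mathbf Z}_t^\top)\|_\Fr\le2\sqrt r\,K_Z\bar r_T$ and $\|P_n(\widetilde{\mathbf W}_t\boldsymbol\Delta_{Z,t}^\top)\|_\Fr\le2\sqrt s\,K_W\bar r_T$. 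These add up to exactly the middle term of $D_A$.

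For the pseudo-inverse bound your route genuinely differs from the paper's. Set $\mathbf E=\widehat{\mathbf A}_h-\overline{\mathbf A}$. The paper cites Wedin's perturbation theorem, $\|\widehat{\mathbf A}_h^\dagger-\overline{\mathbf A}^\dagger\|_\op\le\sqrt2\,\|\widehat{\mathbf A}_h^\dagger\|_\op\|\overline{\mathbf A}^\dagger\|_\op\|\mathbf E\|_\op\le2\sqrt2\,\|\mathbf E\|_\op/\underline\sigma_A^2$. You instead derive the explicit identity $\widehat{\mathbf A}_h^\dagger-\overline{\mathbf A}^\dagger=-\widehat{\mathbf A}_h^\dagger\mathbf E\,\overline{\mathbf A}^\dagger+(\mathbf I-\mathbf P)\mathbf E^\top(\overline{\mathbf A}\,\overline{\mathbf A}^\top)^{-1}$ with $\mathbf P=\widehat{\mathbf A}_h^\dagger\widehat{\mathbf A}_h$. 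Each step of your derivation checks out, and it gives the constant $3$ directly.

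The citation is shorter. Your identity is self-contained, using only $\widehat{\mathbf A}_h\widehat{\mathbf A}_h^\dagger=\mathbf I_s=\overline{\mathbf A}\,\overline{\mathbf A}^\dagger$ and $(\mathbf I-\mathbf P)\widehat{\mathbf A}_h^\top=\mathbf 0$. Its two pieces also have orthogonal ranges (those of $\mathbf P$ and $\mathbf I-\mathbf P$), so combining them in quadrature sharpens the constant to $\sqrt5<2\sqrt2$ if you want it.
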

This lemma bounds how far the estimated first-stage matrix $\widehat{\mathbf A}_h$ can be from the average population moment $\overline{\mathbf A}$, and converts this into a bound on the pseudoinverses.
\begin{proof}
Expanding the sample first-stage matrix gives
\begin{align*}
    \widehat{\mathbf A}_h
    & = P_n\{(\widetilde{\mathbf W}_t-\boldsymbol\Delta_{W,t})(\widetilde{\mathbf Z}_t-\boldsymbol\Delta_{Z,t})^\top\} \\
    & = P_n(\widetilde{\mathbf W}_t\widetilde{\mathbf Z}_t^\top)
      -P_n(\boldsymbol\Delta_{W,t}\widetilde{\mathbf Z}_t^\top)
      -P_n(\widetilde{\mathbf W}_t\boldsymbol\Delta_{Z,t}^\top)
      +P_n(\boldsymbol\Delta_{W,t}\boldsymbol\Delta_{Z,t}^\top).
\end{align*}
By \eqref{eq:supp-ep-firststage} the oracle term is within $\kappa_c\sqrt{sr}\,V_{WZ}\rho_n$ of $\overline{\mathbf A}$. By Cauchy--Schwarz and \eqref{eq:supp-ep-moment}, the two linear terms pair a single nuisance error with a residual factor, so they scale with the individual rate,
\[
    \|P_n(\boldsymbol\Delta_{W,t}\widetilde{\mathbf Z}_t^\top)\|_\Fr
    \le \|\boldsymbol\Delta_W\|_{n,h,2}\,(P_n\|\widetilde{\mathbf Z}_t\|_2^2)^{1/2}
    \le \bar r_T\cdot 2\sqrt{r}\,K_Z,
\]
and likewise \(\|P_n(\widetilde{\mathbf W}_t\boldsymbol\Delta_{Z,t}^\top)\|_\Fr\le 2\sqrt{s}\,K_W\,\bar r_T\), using \(\|\boldsymbol\Delta_W\|_{n,h,2}\vee\|\boldsymbol\Delta_Z\|_{n,h,2}\le \bar r_T\) from \Cref{cond:nuisance-rate}(i) on $\mathcal E_{T,h}$. The quadratic term pairs two nuisance errors, so the product rate applies, \(\|P_n(\boldsymbol\Delta_{W,t}\boldsymbol\Delta_{Z,t}^\top)\|_\Fr\le\|\boldsymbol\Delta_W\|_{n,h,2}\|\boldsymbol\Delta_Z\|_{n,h,2}\le r_T^2\) by \Cref{cond:nuisance-rate}(ii). Summing the three nuisance contributions gives \(\|\widehat{\mathbf A}_h-\overline{\mathbf A}\|_\Fr\le D_A\) with $D_A$ as in \eqref{eq:supp-first-stage-smallness}.

If $D_A\le \underline\sigma_A/2$, then \(\|\widehat{\mathbf A}_h-\overline{\mathbf A}\|_{\op}\le\underline\sigma_A/2\), and by Weyl's inequality for singular values,
\[\sigma_{\min}(\widehat{\mathbf A}_h)\ge \sigma_{\min}(\overline{\mathbf A})-\|\widehat{\mathbf A}_h-\overline{\mathbf A}\|_{\op}\ge \underline\sigma_A/2 .\]
Hence $\widehat{\mathbf A}_h$ has full row rank, with \(\|\widehat{\mathbf A}_h^\dagger\|_{\op} = \sigma_{\min}(\widehat{\mathbf A}_h)^{-1}\le 2/\underline\sigma_A\). Because $\overline{\mathbf A}$ and $\widehat{\mathbf A}_h$ both have full row rank $s$, the pseudo-inverse perturbation bound of \citet[Theorem 4.1]{wedin1973perturbation} gives
\[
    \|\widehat{\mathbf A}_h^\dagger-\overline{\mathbf A}^\dagger\|_{\op}
    \le \sqrt{2}\|\widehat{\mathbf A}_h^\dagger\|_{\op}\|\overline{\mathbf A}^\dagger\|_{\op}\|\widehat{\mathbf A}_h-\overline{\mathbf A}\|_{\op}
    \le \frac{3}{\underline\sigma_A^2}\,\|\widehat{\mathbf A}_h-\overline{\mathbf A}\|_{\op},
\]
which is at most $(3/\underline\sigma_A^2)D_A$.
\end{proof}

\section{Proofs for uncertainty quantification}\label{sec:inference-proof}

This section proves the central limit theorem and feasible inference results in \Cref{sec:inference}. We keep the horizon $h$ fixed, write $n=n_h$, and use $P_n=P_{n,h}$. Let
\[
    \mathbf q_{t,h}
    :=\vectorize(\boldsymbol\varepsilon_{t+h}\widetilde{\mathbf Z}_t^\top),
    \qquad
    \mathbf G_{h,T}
    :=\overline{\mathbf A}_{h,T}^{\dagger\top}\otimes\mathbf I_p,
    \qquad
    \mathbf G_{h,\infty}
    :=\mathbf A_\infty^{\dagger\top}\otimes\mathbf I_p.
\]
The exact finite-sample influence vector is
\[
    \boldsymbol\psi_{t,h}^{(T)}=\mathbf G_{h,T}\mathbf q_{t,h}.
\]
Under \Cref{ass:iv,ass:estimation-primitive,ass:uq-stationarity}, $\mathbf q_{t,h}$ is centered, strictly stationary, geometrically $\beta$-mixing, and sub-exponential. Hence every fixed linear transformation of $\mathbf q_{t,h}$ has absolutely summable autocovariances. Moreover, continuity of the pseudoinverse at the full row-rank matrix $\mathbf A_\infty$ gives
\[
    \mathbf G_{h,T}\longrightarrow\mathbf G_{h,\infty}.
\]
Consequently, $\|\mathbf G_{h,T}\|_{\op}$ is bounded for all sufficiently large $T$.

\subsection{Proof of the joint central limit theorem}\label{app:vector-clt-proof}

\begin{proof}[Proof of \Cref{thm:vector-clt}]
Let
\[
    \mathbf S_n:=\frac1{\sqrt n}\sum_{t=1}^n\mathbf q_{t,h}.
\]
Absolute summability of the coordinatewise autocovariances implies $\E\|\mathbf S_n\|^2=O(1)$, so $\mathbf S_n=O_p(1)$. Vectorizing the oracle expansion \eqref{eq:orthogonal-linear-expansion} gives
\[
    \sqrt n\,\vectorize(\widehat{\mathbf B}_h-\mathbf B_h)
    =\mathbf G_{h,T}\mathbf S_n
     +\sqrt n\,\vectorize(\mathbf R_{T,h}).
\]
Fix $\lambda\in\mathbb R^{ps}$. Since $\mathbf G_{h,T}\to\mathbf G_{h,\infty}$ and $\mathbf S_n=O_p(1)$,
\[
    \lambda^\top(\mathbf G_{h,T}-\mathbf G_{h,\infty})\mathbf S_n=o_p(1).
\]
Also, by \Cref{thm:orthogonal-estimation}(ii),
\[
    \left|\sqrt n\,\lambda^\top\vectorize(\mathbf R_{T,h})\right|
    \le\|\lambda\|\sqrt n\,\|\mathbf R_{T,h}\|_\Fr=o_p(1).
\]
It remains to study
\[
    \lambda^\top\mathbf G_{h,\infty}\mathbf S_n
    =\frac1{\sqrt n}\sum_{t=1}^n\zeta_t(\lambda),
    \qquad
    \zeta_t(\lambda):=\lambda^\top\mathbf G_{h,\infty}\mathbf q_{t,h}.
\]
This is a centered, strictly stationary, geometrically mixing, sub-exponential scalar sequence. Its long-run variance is
$\omega^2=\lambda^\top\boldsymbol\Omega_h\lambda$.
If $\omega>0$, then $\E|\zeta_t(\lambda)|^3<\infty$ and the geometric mixing rate gives $\sum_{m\ge1}\alpha(m)^{1/3}<\infty$. The strong-mixing central limit theorem of \citet{ibragimov1962some} therefore gives
\[
    \frac1{\sqrt n}\sum_{t=1}^n\zeta_t(\lambda)
    \dto N(0,\omega^2).
\]
If $\omega=0$, let $\gamma_\lambda(m)=\cov\{\zeta_t(\lambda),\zeta_{t+m}(\lambda)\}$. The Davydov bound gives $\sum_m|\gamma_\lambda(m)|<\infty$, and therefore
\[
    \var\!\left\{\frac1{\sqrt n}\sum_{t=1}^n\zeta_t(\lambda)\right\}
    =\sum_{|m|<n}\left(1-\frac{|m|}{n}\right)\gamma_\lambda(m)
    \longrightarrow\sum_{m\in\mathbb Z}\gamma_\lambda(m)=0.
\]
The sum then converges to zero in mean square. Thus, for every $\lambda$,
\[
    \sqrt n\,\lambda^\top\vectorize(\widehat{\mathbf B}_h-\mathbf B_h)
    \dto N(0,\lambda^\top\boldsymbol\Omega_h\lambda).
\]
The Cramer--Wold theorem proves the stated joint convergence, including the degenerate case.

For a scalar contrast, the identity
\[
    \sqrt n\{\widehat\theta_h(a,b)-\theta_h(a,b)\}
    =(b\otimes a)^\top
      \sqrt n\,\vectorize(\widehat{\mathbf B}_h-\mathbf B_h)
\]
and \Cref{thm:vector-clt} give convergence to a centered normal variable with variance $\sigma_h^2(a,b)$. Division by the positive constant $\sigma_h(a,b)$ proves the result.
\end{proof}

\begin{remark}[Possible quantitative refinement]
For a fixed scalar contrast, the stationary strong-mixing oracle sum can also admit a quantitative normal approximation under the present moment and dependence conditions; see \citet{tikhomirov1981rate}. Transferring such a bound to a feasible confidence interval would additionally require explicit high-probability rates for the DML remainder, HAC estimation, and studentization. We do not pursue that refinement here.
\end{remark}

\subsection{Feasible long-run variance and Wald inference} \label{app:wald-inference}

Before proving \Cref{prop:hac-consistent}, we verify the claims of \Cref{rem:vector-plugin-sufficient}, namely that the estimation theory of \Cref{thm:orthogonal-estimation} supplies the plug-in rate required by \Cref{cond:vector-plugin} under an explicit bandwidth restriction.

\begin{lemma}[Plug-in stability rate; Proof of Remark \ref{rem:vector-plugin-sufficient}]\label{lem:supp-plugin-rate}
Suppose the conditions of \Cref{thm:orthogonal-estimation} hold. Then the plug-in influence vectors in \eqref{eq:vector-hac} satisfy
\begin{equation}\label{eq:supp-plugin-rate}
    P_n\big\|\widehat{\boldsymbol\psi}_{t,h}-\boldsymbol\psi^{(T)}_{t,h}\big\|_2^2
    =O_p\!\left(\bar r_T^2\log n+\frac{\log n}{n}\right).
\end{equation}
Consequently, \Cref{cond:vector-plugin} holds for every bandwidth sequence $\ell_n\to\infty$ with
\(
    \ell_n\bar r_T^2\log n\to0
\)
and
\(
    \ell_n\log n/n\to0 .
\)
\end{lemma}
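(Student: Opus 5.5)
We prove the rate by writing the plug-in influence as a perturbation of the oracle influence. We then bound each perturbation piece in the empirical $(n,h,2)$ norm on the high-probability event of \Cref{thm:supp-general}.

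\textbf{Decomposition.} Write $\widehat{\boldsymbol\varepsilon}_t:=\widehat{\mathbf e}_{Y,t,h}-\widehat{\mathbf B}_h\widehat{\mathbf e}_{W,t}$. Using $\widetilde{\mathbf Y}_{t+h}-\mathbf B_h\widetilde{\mathbf W}_t=\boldsymbol\varepsilon_{t+h}$, this becomes
\[
\widehat{\boldsymbol\varepsilon}_t=\boldsymbol\varepsilon_{t+h}-\boldsymbol\Delta_{Y,t}+\mathbf B_h\boldsymbol\Delta_{W,t}-(\widehat{\mathbf B}_h-\mathbf B_h)(\widetilde{\mathbf W}_t-\boldsymbol\Delta_{W,t}),
\]
and $\widehat{\mathbf e}_{Z,t}=\widetilde{\mathbf Z}_t-\boldsymbol\Delta_{Z,t}$. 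Then
\[
\widehat{\boldsymbol\varepsilon}_t\widehat{\mathbf e}_{Z,t}^\top\widehat{\mathbf A}_h^\dagger-\boldsymbol\varepsilon_{t+h}\widetilde{\mathbf Z}_t^\top\overline{\mathbf A}^\dagger
=\boldsymbol\varepsilon_{t+h}\widetilde{\mathbf Z}_t^\top(\widehat{\mathbf A}_h^\dagger-\overline{\mathbf A}^\dagger)+(\widehat{\boldsymbol\varepsilon}_t\widehat{\mathbf e}_{Z,t}^\top-\boldsymbol\varepsilon_{t+h}\widetilde{\mathbf Z}_t^\top)\widehat{\mathbf A}_h^\dagger .
\]
The second bracket expands into finitely many products. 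Each product contains at least one "small" factor: $\boldsymbol\Delta_{a,t}$, or the deterministic-on-the-event factor $\widehat{\mathbf B}_h-\mathbf B_h$, paired with one residual among $\boldsymbol\varepsilon,\widetilde{\mathbf W},\widetilde{\mathbf Z}$ or another $\boldsymbol\Delta$. By (a+b+…)$^2$ bounds and $\|\widehat{\mathbf A}_h^\dagger\|_{\op}\le 2/\underline\sigma_A$, $\|\widehat{\mathbf A}_h^\dagger-\overline{\mathbf A}^\dagger\|_{\op}\le 3D_A/\underline\sigma_A^2$ from \Cref{lem:supp-first-stage}, it suffices to bound $P_n$ of the squared norm of each product.

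\textbf{Handling each term.} The key tool is a maximal bound on residuals. Uniform sub-exponential tails of $\|\boldsymbol\varepsilon_{t+h}\|^2,\|\widetilde{\mathbf Z}_t\|^2,\|\widetilde{\mathbf W}_t\|^2$ (\Cref{ass:estimation-primitive}(iii)) and a union bound give $\max_{t\le n}(\|\boldsymbol\varepsilon_{t+h}\|_2^2+\|\widetilde{\mathbf W}_t\|_2^2+\|\widetilde{\mathbf Z}_t\|_2^2)=O_p(\log n)$. The terms are then handled as follows.

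\begin{enumerate}
\item \emph{First-order terms.} For $\boldsymbol\Delta_{Y,t}\widetilde{\mathbf Z}_t^\top$, $\boldsymbol\Delta_{W,t}\widetilde{\mathbf Z}_t^\top$ and $\boldsymbol\varepsilon_{t+h}\boldsymbol\Delta_{Z,t}^\top$, we bound
\[
P_n\|\boldsymbol\Delta_{a,t}\|^2\|\xi_t\|^2\le \max_t\|\xi_t\|^2\cdot\|\boldsymbol\Delta_a\|_{n,h,2}^2=O_p(\bar r_T^2\log n),
\]
using \Cref{cond:nuisance-rate}(i).
\item \emph{Quadratic nuisance terms.} $P_n\|\boldsymbol\Delta_{a,t}\|^2\|\boldsymbol\Delta_{Z,t}\|^2$ needs a sup bound on one $\boldsymbol\Delta$. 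On the coupled validation block, \Cref{cond:sg-nuisance-error} gives $\max_t\|\boldsymbol\Delta_{a,t}\|^2=O_p(\bar K^2\log n)$; alternatively one can avoid the sup by noting $\|\boldsymbol\Delta_{Z,t}\|\le \|\widetilde{\mathbf Z}_t\|+\|\widehat{\mathbf e}_{Z,t}\|$ is not needed. Either way this term is $O_p(\bar r_T^2\log n)$.
\item \emph{Terms containing $\widehat{\mathbf B}_h-\mathbf B_h$.} These give $\|\widehat{\mathbf B}_h-\mathbf B_h\|_\Fr^2\cdot P_n\|\widetilde{\mathbf W}_t\|^2\|\widehat{\mathbf e}_{Z,t}\|^2=O_p(\log n/n)\cdot O_p(\log n)$. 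A finer argument removes the extra log: use $P_n\|\widetilde{\mathbf W}_t\|^2\|\widetilde{\mathbf Z}_t\|^2=O_p(1)$ by Markov's inequality, since fourth moments are uniformly bounded. Combined with $\|\widehat{\mathbf B}_h-\mathbf B_h\|_\Fr=O_p(n^{-1/2})$ from \Cref{thm:orthogonal-estimation}(ii), this gives $O_p(1/n)$.
\item \emph{Pseudoinverse term.} By the same Markov argument, $P_n\|\boldsymbol\varepsilon_{t+h}\|^2\|\widetilde{\mathbf Z}_t\|^2=O_p(1)$, times $D_A^2=O(\rho_n^2+\bar r_T^2)$. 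This gives $O_p(\log n/n+\bar r_T^2)$.
\item \emph{Remaining mixed terms.} Cross terms pairing $\widehat{\mathbf B}_h-\mathbf B_h$ with a $\boldsymbol\Delta$ are dominated by the terms above.
\end{enumerate}

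Summing the contributions yields \eqref{eq:supp-plugin-rate}.

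\textbf{Consequence and main obstacle.} Multiplying by $\ell_n$ gives $\ell_n P_n\|\widehat{\boldsymbol\psi}_{t,h}-\boldsymbol\psi^{(T)}_{t,h}\|^2=O_p(\ell_n\bar r_T^2\log n+\ell_n\log n/n)=o_p(1)$, which is \Cref{cond:vector-plugin}. The main obstacle is the quadratic terms, which need a uniform-in-$t$ control of $\boldsymbol\Delta_{a,t}$ over the validation block. I would obtain it by reusing the Berbee coupling from the proof of \Cref{lem:supp-crossfit-firstorder}: conditional on $\mathcal G_\ell$, the coupled $\boldsymbol\delta_{a,\ell}(\mathbf U_t^\ast)$ are sub-Gaussian with envelope $\bar K$, so their maximum squared norm over $n$ indices is $O_p(\log n)$ off a coupling-failure event of probability $3LC_\beta e^{-c_\beta b_T}\to0$.
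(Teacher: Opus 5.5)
Your proposal is correct and takes essentially the same route as the paper. Both arguments split the difference into a pseudoinverse-perturbation term and a score-perturbation term multiplied by $\widehat{\mathbf A}_h^\dagger$. Both then pair the empirical nuisance rates with $O_p(\log n)$ maxima of the residuals, and obtain the maximum of $\|\boldsymbol\Delta_{Z,t}\|_2^2$ from the Berbee coupling together with the $\bar K$ envelope on the training-measurable events $\mathcal E_\ell^{\mathrm{pop}}$.

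The paper only groups terms more compactly. It writes $\widehat{\mathbf e}_{Y,t,h}-\widehat{\mathbf B}_h\widehat{\mathbf e}_{W,t}=\boldsymbol\varepsilon_{t+h}+\mathbf d_t$ and factors out a single maximum $M_n$ of $\|\widehat{\mathbf e}_{Z,t}\|_2^2+\|\boldsymbol\varepsilon_{t+h}\|_2^2$. The $O_p(n^{-1/2})$ rate for $\widehat{\mathbf B}_h-\mathbf B_h$ then gives the $\log n/n$ term directly, without your finer Markov step.

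Drop the garbled ``alternatively'' remark in your item 2. The coupling-based sup bound on $\boldsymbol\Delta_{Z,t}$ is genuinely needed.
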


\begin{proof}
All constants below may depend on the fixed dimensions $(p,s,r,L)$ and the model primitives $(C_\beta,c_\beta,K_W,K_Z,K_\varepsilon,C_B,\bar K,\underline\sigma_A)$, but not on $T$. Write $\rho_n=\sqrt{\log n/n}$ and $\overline{\mathbf A}=\overline{\mathbf A}_{h,T}$. By \Cref{lem:supp-first-stage} and the proof of \Cref{thm:orthogonal-estimation},
\[
    \|\widehat{\mathbf A}_h^\dagger\|_{\op}=O_p(1),
    \qquad
    \|\widehat{\mathbf A}_h^\dagger-\overline{\mathbf A}^\dagger\|_{\op}
    =O_p(\rho_n+\bar r_T).
\]
Indeed, the first-stage bounds hold on the event $\mathcal G_n$ used there, whose probability tends to one, and $D_A=O(\rho_n+\bar r_T)\to0$.

We first record a maximum bound for the fitted instrument residuals. Use the foldwise coupling and the training-measurable events $\mathcal E_\ell^{\mathrm{pop}}$ from the proof of \Cref{lem:supp-crossfit-firstorder}. On each such event, the coordinates of $\boldsymbol\delta_{Z,\ell}(\mathbf U_t^\ast)$ have conditional $\psi_2$ norm at most $\bar K$. A conditional union bound over validation times and coordinates, followed by the coupling bound, gives, for every $u>0$,
\[
    \Pr\!\left(\max_{t\le n}\|\boldsymbol\Delta_{Z,t}\|_2>u\right)
    \le \Pr(\mathcal E_{T,h}^{\,c})
       +3LC_\beta e^{-c_\beta b_T}
       +2nr\exp\!\left(-\frac{u^2}{r\bar K^2}\right).
\]
Here $\mathcal E_{T,h}\subseteq\cap_\ell\mathcal E_\ell^{\mathrm{pop}}$, and no independence within a validation block is needed for the union bound. Taking $u$ to be a sufficiently large constant times $\sqrt{\log n}$ proves $\max_t\|\boldsymbol\Delta_{Z,t}\|_2^2=O_p(\log n)$. The marginal sub-Gaussian bounds in \Cref{ass:estimation-primitive}(iii) give the same order for the maxima of $\|\widetilde{\mathbf Z}_t\|_2^2$ and $\|\boldsymbol\varepsilon_{t+h}\|_2^2$. Consequently,
\[
    M_n:=\max_{t\le n}\left\{
       \|\widehat{\mathbf e}_{Z,t}\|_2^2
       +\|\boldsymbol\varepsilon_{t+h}\|_2^2\right\}
    =O_p(\log n).
\]

Set
\[
    \mathbf d_t
    :=-(\widehat{\mathbf B}_h-\mathbf B_h)\widetilde{\mathbf W}_t
      -\boldsymbol\Delta_{Y,t}
      +\widehat{\mathbf B}_h\boldsymbol\Delta_{W,t},
\]
so that $\widehat{\mathbf e}_{Y,t,h}-\widehat{\mathbf B}_h\widehat{\mathbf e}_{W,t}=\boldsymbol\varepsilon_{t+h}+\mathbf d_t$. By \Cref{thm:orthogonal-estimation}(ii), $\|\widehat{\mathbf B}_h-\mathbf B_h\|_\Fr=O_p(n^{-1/2})$ and $\|\widehat{\mathbf B}_h\|_{\op}=O_p(1)$. Also, the sub-Gaussian moment bounds and Markov's inequality give
\[
    P_n\|\widetilde{\mathbf W}_t\|_2^2=O_p(1),
    \qquad
    P_n\|\boldsymbol\varepsilon_{t+h}\widetilde{\mathbf Z}_t^\top\|_\Fr^2=O_p(1),
\]
where the second bound follows from Cauchy--Schwarz and the uniformly bounded fourth moments. Since $P_n\|\boldsymbol\Delta_{a,t}\|_2^2\le\bar r_T^2$ for $a\in\{Y,W,Z\}$ on $\mathcal E_{T,h}$, expanding $\mathbf d_t$ gives
\[
\begin{split}
    P_n\|\mathbf d_t\|_2^2
    \le{}&3\|\widehat{\mathbf B}_h-\mathbf B_h\|_{\op}^2P_n\|\widetilde{\mathbf W}_t\|_2^2
       +3P_n\|\boldsymbol\Delta_{Y,t}\|_2^2\\
       &+3\|\widehat{\mathbf B}_h\|_{\op}^2P_n\|\boldsymbol\Delta_{W,t}\|_2^2
    =O_p(n^{-1}+\bar r_T^2).
\end{split}
\]
The difference of the influence matrices before vectorization is exactly
\[
    \boldsymbol\varepsilon_{t+h}\widetilde{\mathbf Z}_t^\top
       (\widehat{\mathbf A}_h^\dagger-\overline{\mathbf A}^\dagger)
    +\left(\mathbf d_t\widehat{\mathbf e}_{Z,t}^\top
       -\boldsymbol\varepsilon_{t+h}\boldsymbol\Delta_{Z,t}^\top\right)
       \widehat{\mathbf A}_h^\dagger.
\]
Using $\|\vectorize(\mathbf H)\|_2=\|\mathbf H\|_\Fr$ and $\|\mathbf a\mathbf b^\top\|_\Fr=\|\mathbf a\|_2\|\mathbf b\|_2$, we obtain
\[
\begin{split}
    P_n\|\widehat{\boldsymbol\psi}_{t,h}-\boldsymbol\psi^{(T)}_{t,h}\|_2^2
    \le{}&3\|\widehat{\mathbf A}_h^\dagger-\overline{\mathbf A}^\dagger\|_{\op}^2
       P_n\|\boldsymbol\varepsilon_{t+h}\widetilde{\mathbf Z}_t^\top\|_\Fr^2\\
    &+3\|\widehat{\mathbf A}_h^\dagger\|_{\op}^2M_n
       \left(P_n\|\mathbf d_t\|_2^2+P_n\|\boldsymbol\Delta_{Z,t}\|_2^2\right)\\
    ={}&O_p\!\left(\rho_n^2+\bar r_T^2
                   +\log n\{n^{-1}+\bar r_T^2\}\right)
     =O_p\!\left(\bar r_T^2\log n+\frac{\log n}{n}\right).
\end{split}
\]
This proves \eqref{eq:supp-plugin-rate}. Multiplication by $\ell_n$ gives the required $o_p(1)$ bound because $\ell_n\bar r_T^2\log n\to0$ and $\ell_n\log n/n\to0$. The latter also implies $\ell_n/n\to0$, completing the verification of \Cref{cond:vector-plugin}.
\end{proof}

\begin{proof}[Proof of \Cref{prop:hac-consistent}]
Let $\boldsymbol\Psi_n$ and $\widehat{\boldsymbol\Psi}_n$ be the $n\times ps$ matrices with rows $\boldsymbol\psi_{t,h}^{(T)\top}$ and $\widehat{\boldsymbol\psi}_{t,h}^\top$, respectively, and set
\[
    \mathbf K_n=\left[K\!\left(\frac{t-s}{\ell_n}\right)\right]_{t,s=1}^n,
    \qquad
    \boldsymbol\Omega_n^{\mathrm o}
    :=n^{-1}\boldsymbol\Psi_n^\top\mathbf K_n\boldsymbol\Psi_n.
\]
The Bartlett matrix $\mathbf K_n$ is positive semidefinite by the construction underlying \citet[Theorem 1]{newey1987simple}, and its absolute row sums give $\|\mathbf K_n\|_{\op}\le1+2\ell_n$.

We apply \citet[Theorem 2.1]{dejongdavidson2000consistency} to the array $\mathbf X_{nt}=n^{-1/2}\boldsymbol\psi_{t,h}^{(T)}=n^{-1/2}\mathbf G_{h,T}\mathbf q_{t,h}$. To verify their assumptions, take the underlying mixing process to be $\mathbf q_{t,h}$ itself, so that the near-epoch approximation error is zero. The geometric mixing bound, uniformly bounded fourth moments, and boundedness of $\mathbf G_{h,T}$ verify their Assumption 2 with $c_{nt}=n^{-1/2}$. The Bartlett kernel belongs to their kernel class, and their bandwidth requirement reduces to $\ell_n\to\infty$ and $\ell_n/n\to0$, both implied by \Cref{cond:vector-plugin}. Thus $\boldsymbol\Omega_n^{\mathrm o}-\var(n^{-1/2}\sum_t\boldsymbol\psi_{t,h}^{(T)})\pto\mathbf0$. Absolute summability of the autocovariances of $\mathbf q_{t,h}$ and $\mathbf G_{h,T}\to\mathbf G_{h,\infty}$ give
\[
    \var\!\left(n^{-1/2}\sum_{t=1}^n\boldsymbol\psi_{t,h}^{(T)}\right)
    =\mathbf G_{h,T}\left\{\sum_{|m|<n}
       \left(1-\frac{|m|}{n}\right)
       \cov(\mathbf q_{0,h},\mathbf q_{m,h})\right\}\mathbf G_{h,T}^\top
    \longrightarrow\boldsymbol\Omega_h.
\]
Consequently, $\boldsymbol\Omega_n^{\mathrm o}\pto\boldsymbol\Omega_h$ and $\trace(\boldsymbol\Omega_n^{\mathrm o})=O_p(1)$.

The fitted influence vectors are already centered. Indeed, the closed form of $\widehat{\mathbf B}_h$ and the Moore--Penrose identity $\widehat{\mathbf A}_h^\dagger\widehat{\mathbf A}_h\widehat{\mathbf A}_h^\dagger=\widehat{\mathbf A}_h^\dagger$ give
\[
    P_n\widehat{\boldsymbol\psi}_{t,h}
    =\vectorize\!\left\{
       (\widehat{\mathbf M}_h-\widehat{\mathbf B}_h\widehat{\mathbf A}_h)
       \widehat{\mathbf A}_h^\dagger\right\}
    =\mathbf0.
\]
Hence the $\widehat{\boldsymbol\Omega}_h$ in \eqref{eq:vector-hac} equals $n^{-1}\widehat{\boldsymbol\Psi}_n^\top\mathbf K_n\widehat{\boldsymbol\Psi}_n$. Write $\mathbf D_n=\widehat{\boldsymbol\Psi}_n-\boldsymbol\Psi_n$ and
\[
    \delta_n^2:=n^{-1}\|\mathbf D_n\|_\Fr^2
       =P_n\|\widehat{\boldsymbol\psi}_{t,h}-\boldsymbol\psi_{t,h}^{(T)}\|_2^2,
    \qquad
    a_n^2:=n^{-1}\|\mathbf K_n^{1/2}\mathbf D_n\|_\Fr^2
       \le(1+2\ell_n)\delta_n^2=o_p(1),
\]
where the last equality follows from \Cref{cond:vector-plugin}. Expanding the two quadratic forms and applying Cauchy--Schwarz to $\mathbf K_n^{1/2}\boldsymbol\Psi_n$ and $\mathbf K_n^{1/2}\mathbf D_n$ yields
\[
    \|\widehat{\boldsymbol\Omega}_h-\boldsymbol\Omega_n^{\mathrm o}\|_\Fr
    \le 2\{\trace(\boldsymbol\Omega_n^{\mathrm o})\}^{1/2}a_n+a_n^2
    =o_p(1).
\]
This proves $\widehat{\boldsymbol\Omega}_h\pto\boldsymbol\Omega_h$. The contrast result follows by continuity of the quadratic form $(b\otimes a)^\top\boldsymbol\Omega(b\otimes a)$.
\end{proof}

\begin{proof}[Proof of \Cref{cor:vector-wald}]
Let $\mathbf S_n=\sqrt n\,\vectorize(\widehat{\mathbf B}_h-\mathbf B_h)$. By \Cref{thm:vector-clt}, $\mathbf S_n\dto\mathbf G\sim N(\mathbf 0,\boldsymbol\Omega_h)$. By \Cref{prop:hac-consistent}, $\widehat{\boldsymbol\Omega}_h\to_p\boldsymbol\Omega_h$. Since $\boldsymbol\Omega_h$ is invertible, matrix inversion is continuous at $\boldsymbol\Omega_h$, so $\widehat{\boldsymbol\Omega}_h^{-1}\to_p\boldsymbol\Omega_h^{-1}$. The continuous mapping and Slutsky theorems give
\[
    \mathbf S_n^\top\widehat{\boldsymbol\Omega}_h^{-1}\mathbf S_n
    \dto
    \mathbf G^\top\boldsymbol\Omega_h^{-1}\mathbf G
    \sim\chi^2_{ps}.
\]
The confidence ellipsoid follows by inverting the quadratic form. For a fixed full-row-rank $\mathbf M\in\mathbb R^{k\times ps}$, the same argument applied to $\mathbf M\mathbf S_n$ gives the $\chi_k^2$ limit.
\end{proof}

\begin{proof}[Proof of \Cref{cor:studentized-inference}]
By the scalar CLT in \Cref{thm:vector-clt}, the infeasible standardized statistic with denominator $\sigma_h(a,b)$ converges to $N(0,1)$. By \Cref{prop:hac-consistent} and $\sigma_h(a,b)>0$, $s(a,b)/\sigma_h(a,b)\to_p1$, where $s(a,b)$ denotes the positive square root of $s^2(a,b)$. Slutsky theorem gives the studentized convergence. The displayed interval follows by inverting the limiting normal distribution.
\end{proof}

\section{Buffered transfer of temporal learner rates}\label{app:nuisance-rate-transfer}

This section proves the general transfer theorem in \Cref{thm:regression-rate-transfer} and its application to the IV nuisance regressions in \Cref{cor:iv-nuisance-rates}. The argument converts population prediction error into error on the observed validation blocks. The concrete learners and their proofs in \Cref{sec:nuisance-examples} supply the population rates through prediction bounds for stationary principal-component regression (PCR) and spline least squares.

\subsection{Proof of the general transfer theorem}\label{app:regression-rate-transfer-proof}
\begin{proof}[Proof of \Cref{thm:regression-rate-transfer}]
Let $\mathcal D_t$ denote the full observed data vector, including $\mathbf U_t$ and all variables used for fitting. For each fold $\ell$, let $\mathcal G_\ell$ be the field generated by its training observations and independent learner randomization. Fix the regression target $\boldsymbol\mu$ and a fold $\ell$, and write $\mathcal I_{\ell,h}=\{a_\ell,\ldots,d_\ell\}$. Let
\[
    \mathbf V_\ell=\{\mathcal D_t:t\in\mathcal I_{\ell,h}\},
    \qquad \mathcal V_\ell=\sigma(\mathbf V_\ell),
\]
and define the fields
\[
    \mathcal H_-=\sigma(\mathcal D_s:s\le a_\ell-b_T),
    \qquad
    \mathcal H_+=\sigma(\mathcal D_s:s\ge d_\ell+b_T).
\]
Write $\mathcal G_{\ell,\mathrm{data}}$ for the training-data field and $\mathcal R_\ell$ for the field generated by the learner randomization. By \Cref{ass:cross-fitting},
\[
    \mathcal G_\ell=\mathcal G_{\ell,\mathrm{data}}\vee\mathcal R_\ell,
    \qquad
    \mathcal G_{\ell,\mathrm{data}}\subseteq\mathcal H_-\vee\mathcal H_+,
\]
and $\mathcal R_\ell$ is independent of the entire data process. Applying \Cref{lem:supp-mixing-triangle}(ii) gives
\begin{align*}
    \beta(\mathcal V_\ell,\mathcal H_-\vee\mathcal H_+)
    &\le\beta(\mathcal H_-,\mathcal V_\ell)
      +\beta(\mathcal H_-\vee\mathcal V_\ell,\mathcal H_+)\\
    &\quad+\beta(\mathcal H_-,\mathcal H_+)\\
    &\le3\beta(b_T),
\end{align*}
since each pair of fields on the right is separated by at least $b_T$ time indices. Parts (i) and (iii) of \Cref{lem:supp-mixing-triangle} then give
\begin{align*}
    \beta(\mathcal V_\ell,\mathcal G_\ell)
    &\le\beta(\mathcal V_\ell,\mathcal H_-\vee\mathcal H_+\vee\mathcal R_\ell)\\
    &=\beta(\mathcal V_\ell,\mathcal H_-\vee\mathcal H_+)\\
    &\le3\beta(b_T).
\end{align*}
By Berbee's coupling lemma, as stated in Lemma 5.1 and proved in Section 5.3 of \citet{rio2017asymptotic}, an extended probability space supports a copy $\mathbf V_\ell^\ast$ with the same law as $\mathbf V_\ell$, independent of $\mathcal G_\ell$, such that
\[
    \Pr(\mathbf V_\ell^\ast\ne\mathbf V_\ell)
    \le 3\beta(b_T).
\]
Construct these copies for the fixed number of folds on a common extension. They need not be mutually independent. A union bound gives
\[
    \Pr\!\left(\bigcup_{\ell=1}^L
       \{\mathbf V_\ell^\ast\ne\mathbf V_\ell\}\right)
    \le 3L\beta(b_T).
\]

Write $\mathbf U_t^\ast$ for the state coordinate in the copied block. The assumed measurable prediction rule means that $(\omega,u)\mapsto\widehat{\boldsymbol\mu}_\ell(\omega,u)$ is $\mathcal G_\ell\otimes\mathcal B(\mathbb R^{q_T})$-measurable, where $\omega$ denotes the underlying sample outcome. Define
\begin{align*}
    R_\ell^2
    &=\sup_{t\in\mathcal I_{\ell,h}}
      \int\|\widehat{\boldsymbol\mu}_\ell(u)-\boldsymbol\mu(u)\|_2^2\,P_t(du),\\
    (S_\ell^\ast)^2
    &=\frac{1}{|\mathcal I_{\ell,h}|}
      \sum_{t\in\mathcal I_{\ell,h}}
      \|\widehat{\boldsymbol\mu}_\ell(\mathbf U_t^\ast)
        -\boldsymbol\mu(\mathbf U_t^\ast)\|_2^2.
\end{align*}
Then $E_{P,T}=\max_{\ell\le L}R_\ell^2$, and each $R_\ell$ is $\mathcal G_\ell$-measurable. The joint measurability of the fitted maps and independence of $\mathbf V_\ell^\ast$ from $\mathcal G_\ell$ give
\begin{align*}
    \E\!\left[(S_\ell^\ast)^2
          \,\middle|\,\mathcal G_\ell\right]
    &=\frac{1}{|\mathcal I_{\ell,h}|}
      \sum_{t\in\mathcal I_{\ell,h}}
      \int\|\widehat{\boldsymbol\mu}_\ell(u)-\boldsymbol\mu(u)\|_2^2\,P_t(du)\\
    &\le R_\ell^2.
\end{align*}
This calculation does not require independence among observations within a validation block.

Fix $K>0$ and $M\ge1$, and let
\[
    \mathcal E_{\ell,K}
    =\{R_\ell^2\le K\rho_T^2\}.
\]
This event is $\mathcal G_\ell$-measurable. If $\rho_T>0$, conditional Markov inequality yields
\[
    \Pr\!\left[
      \mathcal E_{\ell,K}\cap
      \{(S_\ell^\ast)^2>MK\rho_T^2\}
    \right]\le M^{-1}.
\]
If $\rho_T=0$, the conditional second moment vanishes on $\mathcal E_{\ell,K}$, and the probability on the left is zero. On $\{E_{P,T}\le K\rho_T^2\}$ all these events occur. Taking a union bound over folds, and then accounting for coupling failure, gives
\[
    \Pr(E_{V,T}>MK\rho_T^2)
    \le \Pr(E_{P,T}>K\rho_T^2)
       +\frac{L}{M}
       +3L\beta(b_T).
\]
This proves \eqref{eq:buffered-rate-transfer-prob}. If $E_{P,T}=O_p(\rho_T^2)$, first choose a fixed $K$ to control the first term and then a fixed $M$ to control the second. Since $b_T\to\infty$ and $\beta(b_T)\to0$, this also proves \eqref{eq:buffered-rate-transfer-op}.
\end{proof}

\subsection{Proof of the IV nuisance-rate corollary}\label{app:iv-nuisance-rates-proof}
\begin{proof}[Proof of \Cref{cor:iv-nuisance-rates}]
Fix $a\in\{Y,W,Z\}$ and fold $\ell$. Holding the fitted map fixed, the conditional-mean identity \eqref{eq:excess-risk-L2} gives, at every validation time $t$,
\[
    \int\|\widehat{\boldsymbol\mu}^{(-\ell)}_a(u)-\boldsymbol\mu_a(u)\|_2^2\,P_t(du)
    =R_{a,t}(\widehat{\boldsymbol\mu}^{(-\ell)}_a)
     -R_{a,t}(\boldsymbol\mu_a).
\]
Taking the supremum over validation times and a union bound over folds in \eqref{eq:generic-excess-risk} yields
\[
    \Pr\!\left(\|\boldsymbol\Delta_a\|_{L_2}
                   >C_a\rho_{a,T}\right)\le\delta_{a,T}.
\]
Apply \Cref{thm:regression-rate-transfer} with target $\boldsymbol\mu=\boldsymbol\mu_a$ and fold estimates $\widehat{\boldsymbol\mu}_\ell=\widehat{\boldsymbol\mu}^{(-\ell)}_a$. It gives the $O_p(\rho_{a,T})$ rate for the largest foldwise empirical norm. The squared pooled norm $\|\boldsymbol\Delta_a\|_{n,h,2}^2$ is a weighted average of the squared foldwise norms and is therefore bounded by their maximum. Applying this argument to each of the three nuisances proves the first line of \eqref{eq:generic-rates}. The second follows by multiplying the corresponding stochastic bounds; it requires no independence among nuisance estimators.

To construct the event in \Cref{cond:nuisance-rate}, use \eqref{eq:generic-product-condition} to choose a deterministic $M_T\to\infty$ slowly enough that
\[
    M_T\rho_{\max,T}\log T\to0,
    \qquad
    M_T^2\rho_{Z,T}\rho_{Q,T}\sqrt T\to0.
\]
Such a sequence exists because both expressions without $M_T$ converge to zero. Let $C_0=\max\{1,C_Y,C_W,C_Z\}$, which is model-dependent and independent of $T$, and set
\[
    \bar r_T=C_0M_T\rho_{\max,T},
    \qquad
    r_T^2=C_0^2M_T^2\rho_{Z,T}\rho_{Q,T}.
\]
Apply \eqref{eq:buffered-rate-transfer-prob} separately to each nuisance regression with $K=C_0^2$ and $M=M_T^2$, and intersect the resulting events with the three population-error events above. Their joint probability tends to one because $\sum_a\delta_{a,T}=o(1)$, $M_T^{-2}\to0$, and $\beta(b_T)\to0$. On this event, the population and empirical individual errors are bounded by $\bar r_T$, and their required products are bounded by $r_T^2$. The displayed limits imply $\bar r_T\log n_h\to0$ and $r_T=o(T^{-1/4})$, as required by \Cref{cond:nuisance-rate}.
\end{proof}

\begin{remark} \label{supp:rem:const-instrument}
    Specifically, if \(\boldsymbol\mu_Z\) is known to be zero and is not estimated, then \(\boldsymbol\Delta_Z=\mathbf0\) identically. If this conditional mean is an unknown constant, its learner is the foldwise training mean. Under the geometric mixing and moment bounds in \Cref{ass:estimation-primitive}, the instrument coordinates have uniformly summable autocovariances. Since \(|\mathcal J_{\ell,h}|\asymp T\), each coordinate of the foldwise training mean has variance \(O(T^{-1})\). Chebyshev's inequality therefore gives \(\|\boldsymbol\Delta_Z\|=O_p(T^{-1/2})\) uniformly over the fixed folds and instrument coordinates. Hence
\[
    \|\boldsymbol\Delta_Z\|\max\{\|\boldsymbol\Delta_{Y,h}\|,\|\boldsymbol\Delta_W\|\}=O_p(T^{-1/2}\rho_{Q,T}),
\]
which is \(o_p(T^{-1/2})\) whenever \(\rho_{Q,T}=o(1)\). The same product-rate argument applies to other parametric or nonparametric learners for the outcome and treatment nuisances.
\end{remark}

\section{Nuisance learner examples and their proofs}
\label{sec:nuisance-examples}
This section verifies the conditions in \Cref{sec:nuisance-learners} for two nuisance learners using the transfer results proved in \Cref{app:nuisance-rate-transfer}. Each specialization is followed by its proof. Principal-component regression allows the conditional means to depend linearly on a fixed number of leading population principal components of a stationary high-dimensional Gaussian state. Sparse spline regression allows smooth nonlinear dependence on a small, unknown subset of the state coordinates.

\subsection{Stationary principal-component regression}
\label{subsec:stationary-pcr}

Principal-component regression (PCR) reduces a high-dimensional state to a small number of estimated indices before fitting the nuisance regressions \citep{stockwatson2002forecasting}. We consider a stationary Gaussian specialization in which the conditional means lie in the leading population eigenspace. A short population prediction bound then allows direct application of \Cref{thm:regression-rate-transfer}.

Fix the horizon \(h\), write \(\mathbf D_{Y,t}=\mathbf Y_{t+h}\), \(\mathbf D_{W,t}=\mathbf W_t\), and \(\mathbf D_{Z,t}=\mathbf Z_t\), and let \(d_Y=p,d_W=s,d_Z=r\). These response dimensions remain fixed. We subtract population means for notation, although the fitted regressions below include intercepts. Let \(q=q_T\), and let \(\mathbf P_K\) be the orthogonal projector onto the leading \(K\) eigenvectors of \(\boldsymbol\Sigma_U=\E(\mathbf U_t\mathbf U_t^\top)\), where \(1\le K<q\) is fixed and known.

\begin{example}[Stationary Gaussian principal-component nuisance means]
\label{ex:stationary-pcr}
Suppose the following conditions hold. All positive constants below are model-dependent and independent of \(T,q\), and the fold.
\begin{enumerate}
    \item[(i)] The full observed process \(\{(\mathbf U_t,\mathbf D_{Y,t},\mathbf D_{W,t},\mathbf D_{Z,t})\}_{t\in\mathbb Z}\) is centered, jointly Gaussian, and strictly stationary. For every \(a\in\{Y,W,Z\}\) and \(j\le d_a\),
    \[
        \mu_{a,j}(u)=\boldsymbol\theta_{a,j}^\top u,
        \qquad \mathbf P_K\boldsymbol\theta_{a,j}=\boldsymbol\theta_{a,j},
        \qquad \E D_{a,j,t}^2\le C_D.
    \]
    \item[(ii)] The ordered eigenvalues of \(\boldsymbol\Sigma_U\) satisfy
    \[
        c_\lambda q\le\lambda_K\le\lambda_1\le C_\lambda q,
        \qquad \lambda_K-\lambda_{K+1}\ge c_g q.
    \]
    \item[(iii)] Write \(\mathbf e_{a,t}=\mathbf D_{a,t}-\boldsymbol\mu_a(\mathbf U_t)\) and \(\mathbf V_t=(\mathbf U_t^\top,\mathbf e_{Y,t}^\top,\mathbf e_{W,t}^\top,\mathbf e_{Z,t}^\top)^\top\). For every pair of coordinates \(i,j\) and every \(k\in\mathbb Z\),
    \[
        |\cov(V_{i,0},V_{j,k})|\le C_0 e^{-c_0|k|}.
    \]
\end{enumerate}
\end{example}

The coefficients can be dense in the observed coordinates. For instance, the linear simulation state has covariance \((\boldsymbol\Lambda\boldsymbol\Lambda^\top+\sigma_E^2\mathbf I_q)/(K+\sigma_E^2)\), where \(\boldsymbol\Lambda^\top\boldsymbol\Lambda=q\mathbf I_K\). Its leading eigenspace is \(\operatorname{col}(\boldsymbol\Lambda)\), its eigengap is \(q/(K+\sigma_E^2)\), and its nuisance means depend on observed indices in this space. Thus repeated leading eigenvalues are permitted, and the target remains the conditional mean given \(\mathbf U_t\).

Use deterministic fold boundaries satisfying \Cref{ass:cross-fitting}. In fold \(\ell\), let \(m_{\ell,T}=|\mathcal J_{\ell,h}|\asymp T\), and denote training means by \(\overline{\mathbf U}_\ell\) and \(\overline D_{a,j,\ell}\). Let \(\widehat{\mathbf V}_\ell\in\mathbb R^{q\times K}\) contain the leading eigenvectors of \(m_{\ell,T}^{-1}\sum_{t\in\mathcal J_{\ell,h}}(\mathbf U_t-\overline{\mathbf U}_\ell)(\mathbf U_t-\overline{\mathbf U}_\ell)^\top\), with their eigenvalues in the diagonal matrix \(\widehat{\mathbf L}_\ell\). The PCR fit is
\begin{equation}\label{eq:pcr-fit}
\begin{split}
    \widehat{\boldsymbol\theta}_{a,j,\ell}
    &=\widehat{\mathbf V}_\ell\widehat{\mathbf L}_\ell^{-1}\widehat{\mathbf V}_\ell^\top
      \frac1{m_{\ell,T}}\sum_{t\in\mathcal J_{\ell,h}}
      (\mathbf U_t-\overline{\mathbf U}_\ell)(D_{a,j,t}-\overline D_{a,j,\ell}),\\
    \widehat\mu^{(-\ell)}_{a,j}(u)
    &=\overline D_{a,j,\ell}
      +\widehat{\boldsymbol\theta}_{a,j,\ell}^\top(u-\overline{\mathbf U}_\ell).
\end{split}
\end{equation}
If \(\widehat{\mathbf L}_\ell\) is singular, set the fitted slope to zero. This event has probability tending to zero under the conditions below. For use in the next subsection, also denote by \(\mathcal K_{\ell,h}\subseteq\mathcal J_{\ell,h}\) the longer of its at most two contiguous training stretches, which still has size of order \(T\).

\begin{corollary}[Stationary PCR specialization]\label{cor:stationary-pcr-rate}
Suppose \Cref{ex:stationary-pcr} holds. Let \(\beta_T(k)\) denote the beta-mixing coefficients of the full observed process, allowing its law to depend on \(T\), and suppose \(\beta_T(b_T)\to0\). Then the fits in \eqref{eq:pcr-fit} satisfy
\begin{equation}\label{eq:pcr-nuisance-rate}
    \|\boldsymbol\Delta_a\|_{L_2}\vee\|\boldsymbol\Delta_a\|_{n,h,2}
    =O_p(T^{-1/2}),\qquad a\in\{Y,W,Z\},
\end{equation}
and verify \Cref{cond:nuisance-rate,cond:sg-nuisance-error}. No restriction on \(q/T\) is needed for the population prediction bound. The full-process mixing requirement is separate from the coordinate covariance bounds in \Cref{ex:stationary-pcr}(iii).
\end{corollary}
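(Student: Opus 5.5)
The plan is to reduce everything to a population excess-risk bound and then invoke \Cref{thm:regression-rate-transfer} and \Cref{cor:iv-nuisance-rates}. First I will prove, for each fold $\ell$, target $a$, and coordinate $j$, that $\E_{U\sim P}\{\widehat\mu^{(-\ell)}_{a,j}(U)-\mu_{a,j}(U)\}^2=O_p(T^{-1/2\cdot 2})=O_p(T^{-1})$. Here $P$ is the stationary law of $\mathbf U_t$, so the population norm needs no uniformity over $t$. The excess risk decomposes as $(\overline D_{a,j,\ell}-\boldsymbol\theta_{a,j}^\top\overline{\mathbf U}_\ell)^2+(\widehat{\boldsymbol\theta}_{a,j,\ell}-\boldsymbol\theta_{a,j})^\top\boldsymbol\Sigma_U(\widehat{\boldsymbol\theta}_{a,j,\ell}-\boldsymbol\theta_{a,j})$.

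\textbf{Intercept term.} Note that $\overline D_{a,j,\ell}-\boldsymbol\theta^\top\overline{\mathbf U}_\ell$ is the training mean of $e_{a,j,t}$, a scalar stationary series. By (iii) its autocovariances are summable, so Chebyshev gives $O_p(T^{-1/2})$.

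\textbf{Slope term.} Work in whitened coordinates: write the slope error after multiplication by $\boldsymbol\Sigma_U^{1/2}$. Decompose the cross-covariance as $\widehat{\boldsymbol\Sigma}_{UD}=\widehat{\boldsymbol\Sigma}_U\boldsymbol\theta+\widehat{\boldsymbol\Sigma}_{Ue}$. Then
\[
\widehat{\boldsymbol\theta}-\boldsymbol\theta=(\widehat{\mathbf P}_K-\mathbf P_K)\boldsymbol\theta+\widehat{\mathbf V}\widehat{\mathbf L}^{-1}\widehat{\mathbf V}^\top\widehat{\boldsymbol\Sigma}_{Ue},
\]
where $\widehat{\mathbf P}_K=\widehat{\mathbf V}\widehat{\mathbf V}^\top$ and $\mathbf P_K\boldsymbol\theta=\boldsymbol\theta$ is used. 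The key quantities are scaled by $q$. Since $\boldsymbol\Sigma_U$ has top eigenvalues of order $q$ and $\boldsymbol\theta$ lies in their span, $\|\boldsymbol\Sigma_U^{1/2}\boldsymbol\theta\|^2=\var(\mu_{a,j})\le C_D$ gives $\|\boldsymbol\theta\|=O(q^{-1/2})$. For the projector, Davis--Kahan with gap $c_gq$ gives $\|\widehat{\mathbf P}_K-\mathbf P_K\|\lesssim\|\widehat{\boldsymbol\Sigma}_U-\boldsymbol\Sigma_U\|_{\op}/q$. The dimension-free Gaussian concentration for dependent data, via effective rank $\trace(\boldsymbol\Sigma_U)/\|\boldsymbol\Sigma_U\|$, has to be handled carefully. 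The naive bound involves $\sqrt{\mathrm{r}(\boldsymbol\Sigma_U)/T}$, and $\mathrm{r}$ may grow with $q$ through the idiosyncratic part. To avoid requiring $q/T\to0$, I will not bound the projector in operator norm. Instead I will bound only the quantities actually needed: $\boldsymbol\Sigma_U^{1/2}(\widehat{\mathbf P}_K-\mathbf P_K)\boldsymbol\theta$ and the whitened noise term. Both are projections onto the $K$-dimensional leading space plus a remainder. Using the expansion $\widehat{\mathbf P}_K-\mathbf P_K\approx$ a first-order Sylvester term involving $\mathbf P_K(\widehat{\boldsymbol\Sigma}_U-\boldsymbol\Sigma_U)(\mathbf I-\mathbf P_K)$, I will bound the remainder by fixed-$K$ bilinear forms. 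Each such form is a scalar average of products of Gaussian coordinates with geometrically decaying covariances (iii), so its variance is $O(T^{-1})$ by Isserlis' formula. This step, controlling the eigenspace perturbation in the $\boldsymbol\Sigma_U$-weighted norm without a $q/T$ condition, is the main obstacle. I would try first to show that the $(\mathbf I-\mathbf P_K)$ directions weighted by $\boldsymbol\Sigma_U^{1/2}$ contribute at most $\lambda_{K+1}\|\cdot\|^2$ with $\lambda_{K+1}\le C_\lambda q$, combined with a $1/q^2$ factor from the gap. The noise term is handled analogously: the factor $\widehat{\mathbf L}^{-1}\asymp q^{-1}$ compensates the $\sqrt q$ scale of $\widehat{\mathbf V}^\top\widehat{\boldsymbol\Sigma}_{Ue}$, whose variance is $O(q/T)$ by Isserlis. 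Finally, $\widehat{\mathbf L}$ is nonsingular with probability tending to one by Weyl's inequality.

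\textbf{Transfer and tail condition.} Having $E_{P,T}=O_p(T^{-1})$ with $m_{\ell,T}\asymp T$, \Cref{thm:regression-rate-transfer} with $\beta_T(b_T)\to0$ transfers it to the validation norm, giving \eqref{eq:pcr-nuisance-rate}. The argument goes through \Cref{cor:iv-nuisance-rates}, using $\rho_{a,T}=T^{-1/2}$ and $\delta_a=o(1)$. Condition (i) then holds since $T^{-1/2}\log T\to0$, and condition (ii) holds since the product is $O_p(T^{-1})=o(T^{-1/2})$, choosing $\bar r_T,r_T$ as in that corollary's proof. For \Cref{cond:sg-nuisance-error}, the fitted error is affine in $U$: $\delta(U)=c_\ell+\mathbf v_\ell^\top U$. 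Under Gaussian $U\sim P$, its $\psi_2$ norm is bounded by a constant times $|c_\ell|+(\mathbf v_\ell^\top\boldsymbol\Sigma_U\mathbf v_\ell)^{1/2}$, which is the root excess risk. On the rate event this is $O(\bar r_T)\le\bar K$ for large $T$, after intersecting $\mathcal E_{T,h}$ with this event.
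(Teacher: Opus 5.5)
\textbf{Gap: the eigenspace-perturbation step is left open.} Your overall architecture matches the paper: the intercept/slope decomposition of the excess risk, the identity $\widehat{\boldsymbol\theta}-\boldsymbol\theta=(\widehat{\mathbf P}_K-\mathbf P_K)\boldsymbol\theta+\widehat{\mathbf V}\widehat{\mathbf L}^{-1}\widehat{\mathbf V}^\top\mathbf g_c$, $\|\boldsymbol\theta\|_2=O(q^{-1/2})$, transfer via Theorem~\ref{thm:regression-rate-transfer}, and the Gaussian $\psi_2$ bound. But the step that carries the corollary is not proved, and your reason for avoiding the direct route is mistaken.

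An operator-norm bound on $\widehat{\mathbf P}_K-\mathbf P_K$ does not force $q/T\to0$. Each entry $(\mathbf G-\boldsymbol\Sigma_U)_{ij}$ is a training average of $U_{i,t}U_{j,t}-\Sigma_{ij}$. Isserlis' formula together with Example~\ref{ex:stationary-pcr}(iii) bounds its autocovariances by $2C_0^2e^{-2c_0|k|}$, uniformly in $(i,j)$ and $q$. Hence each entry has variance at most $C/m$, and summing over the $q^2$ entries gives $\E\|\mathbf G-\boldsymbol\Sigma_U\|_\Fr^2\le Cq^2/m$. The bound $\E\|\overline{\mathbf U}_\ell\|_2^2\le Cq/m$ handles the training centering in the same way.

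Thus $\|\mathbf G_c-\boldsymbol\Sigma_U\|_{\op}\le\|\mathbf G_c-\boldsymbol\Sigma_U\|_\Fr=O_p(qm^{-1/2})$, and the eigengap $c_gq$ cancels the factor $q$. Davis--Kahan gives $\|\widehat{\mathbf P}_K-\mathbf P_K\|_{\op}=O_p(m^{-1/2})$, and Weyl gives $\widehat\lambda_K\ge c_\lambda q/2$ with probability tending to one. Equivalently, your effective-rank worry does not arise here: $\trace(\boldsymbol\Sigma_U)\le C_0q$ and $\|\boldsymbol\Sigma_U\|_{\op}\ge c_\lambda q$, so $\mathrm{r}(\boldsymbol\Sigma_U)\le C_0/c_\lambda$ is bounded.

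With this, your identity finishes in one line:
\[
\|\widehat{\boldsymbol\theta}-\boldsymbol\theta\|_2\le\|\widehat{\mathbf P}_K-\mathbf P_K\|_{\op}\|\boldsymbol\theta\|_2+\tfrac{2}{c_\lambda q}\|\mathbf g_c\|_2=O_p\big((mq)^{-1/2}\big).
\]
Then $\|\boldsymbol\Sigma_U\|_{\op}\le C_\lambda q$ gives an $O_p(m^{-1})$ slope contribution, so no $\boldsymbol\Sigma_U$-weighted refinement is needed.

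Your proposed workaround would not have avoided this step. The first-order Sylvester term and the second-order remainder both involve the $(q-K)$-dimensional complement block $(\mathbf I-\mathbf P_K)(\widehat{\boldsymbol\Sigma}_U-\boldsymbol\Sigma_U)$, so they are not fixed-$K$ bilinear forms. Moreover, controlling the remainder requires $\|\widehat{\boldsymbol\Sigma}_U-\boldsymbol\Sigma_U\|_{\op}/q\to0$, which is exactly the operator-norm control you set out to avoid.

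\textbf{Two smaller points.}
\begin{itemize}
\item The intercept error of the fitted map is $\widehat b=\overline\varepsilon-(\widehat{\boldsymbol\theta}-\boldsymbol\theta)^\top\overline{\mathbf U}_\ell$, not $\overline D_{a,j,\ell}-\boldsymbol\theta^\top\overline{\mathbf U}_\ell$. The extra piece is $O_p((mq)^{-1/2}\sqrt{q/m})=O_p(m^{-1})$, so it is harmless.
\item An $O_p(m^{-1/2})$ rate does not by itself give the fixed-$C_a$, $\delta_a(m)\to0$ form required by Corollary~\ref{cor:iv-nuisance-rates}. Either inflate to $\rho_a(m)=m^{-1/2}\log m$, or, more simply and as the paper does, apply Theorem~\ref{thm:regression-rate-transfer} directly and set $\bar r_T=r_T=T^{-1/2}\log T$ on the resulting high-probability event.
\end{itemize}
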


\begin{proof}
Fix a fold and a scalar response, suppress their indices, and write \(J=\mathcal J_{\ell,h}\), \(m=|J|\), \(\boldsymbol\Sigma=\boldsymbol\Sigma_U\), and \(\varepsilon_t=D_t-\boldsymbol\theta^\top\mathbf U_t\). Throughout, \(C\) denotes a finite model-dependent constant, independent of \(m,q\), which may change between displays. For any square-integrable scalar stationary process \(A_t\) with summable autocovariances, counting at most \(m\) ordered pairs at each lag gives
\[
    \var\left(\frac1m\sum_{t\in J}A_t\right)
    \le\frac1m\sum_{k\in\mathbb Z}|\cov(A_0,A_k)|.
\]
This bound also applies to the two separated training stretches. The Gaussian fourth-moment identity of \citet{isserlis1918formula} and \Cref{ex:stationary-pcr}(iii) give \(|\cov(V_{i,0}V_{j,0},V_{i,k}V_{j,k})|\le2C_0^2e^{-2c_0|k|}\). Since \(\E(\mathbf U_t\varepsilon_t)=\mathbf0\), defining \(\mathbf G=m^{-1}\sum_{t\in J}\mathbf U_t\mathbf U_t^\top\) and \(\mathbf g=m^{-1}\sum_{t\in J}\mathbf U_t\varepsilon_t\) gives
\begin{equation}\label{eq:pcr-moment-bounds}
    \E\|\mathbf G-\boldsymbol\Sigma\|_F^2\le\frac{Cq^2}{m},\quad
    \E\|\mathbf g\|_2^2\le\frac{Cq}{m},\quad
    \E\|\overline{\mathbf U}\|_2^2\le\frac{Cq}{m},\quad
    \E\overline\varepsilon^2\le\frac C m.
\end{equation}
Consequently, Markov's inequality and training centering yield
\[
    \|\mathbf G_c-\boldsymbol\Sigma\|_{\op}=O_p(qm^{-1/2}),\qquad
    \|\mathbf g_c\|_2=O_p(\sqrt{q/m}),
\]
where \(\mathbf G_c=\mathbf G-\overline{\mathbf U}\overline{\mathbf U}^\top\) and \(\mathbf g_c=\mathbf g-\overline{\mathbf U}\,\overline\varepsilon\).

Consider the event \(\mathcal A_m=\{\|\mathbf G_c-\boldsymbol\Sigma\|_{\op}\le q\min(c_\lambda,c_g)/2\}\), whose probability tends to one. Weyl's inequality gives \(\widehat\lambda_K\ge c_\lambda q/2\). Put \(\mathbf Q=\mathbf I_q-\mathbf P_K\). Each leading sample eigenvector \(\widehat v_j\) satisfies
\[
    (\widehat\lambda_j\mathbf I_q-\boldsymbol\Sigma)\mathbf Q\widehat v_j
    =\mathbf Q(\mathbf G_c-\boldsymbol\Sigma)\widehat v_j.
\]
On \(\operatorname{col}(\mathbf Q)\), the matrix on the left has eigenvalues at least \(c_gq/2\). Hence, with \(\widehat{\mathbf P}=\widehat{\mathbf V}\widehat{\mathbf V}^\top\),
\[
    \|\widehat{\mathbf P}-\mathbf P_K\|_{\op}
    \le\|\widehat{\mathbf P}-\mathbf P_K\|_F
    =\sqrt2\|\mathbf Q\widehat{\mathbf V}\|_F
    \le\frac{2\sqrt{2K}}{c_gq}\|\mathbf G_c-\boldsymbol\Sigma\|_{\op}
    =O_p(m^{-1/2}).
\]
This is the leading-eigenspace case of the Davis--Kahan bound in \citet[Theorem~2]{yu2015useful}. The signal bound gives \(\|\boldsymbol\theta\|_2^2\le C_D/(c_\lambda q)\). On \(\mathcal A_m\), the centered normal equations give the exact identity
\[
    \widehat{\boldsymbol\theta}-\boldsymbol\theta
    =(\widehat{\mathbf P}-\mathbf P_K)\boldsymbol\theta
      +\widehat{\mathbf V}\widehat{\mathbf L}^{-1}\widehat{\mathbf V}^\top\mathbf g_c,
    \qquad
    \|\widehat{\boldsymbol\theta}-\boldsymbol\theta\|_2=O_p((mq)^{-1/2}).
\]
The fitted intercept is \(\widehat b=\overline\varepsilon-(\widehat{\boldsymbol\theta}-\boldsymbol\theta)^\top\overline{\mathbf U}=O_p(m^{-1/2})\). Holding the fit fixed and integrating an independent state therefore gives
\begin{equation}\label{eq:pcr-population-risk}
    \E_U\{\widehat\mu(U)-\mu(U)\}^2
    =\widehat b^2+(\widehat{\boldsymbol\theta}-\boldsymbol\theta)^\top
      \boldsymbol\Sigma(\widehat{\boldsymbol\theta}-\boldsymbol\theta)
    =O_p(m^{-1}).
\end{equation}
Taking maxima over the fixed folds and summing over the fixed response coordinates proves the population part of \eqref{eq:pcr-nuisance-rate}. Applying \eqref{eq:buffered-rate-transfer-prob} with \(\beta_T(b_T)\to0\) gives its validation part.

For a fresh Gaussian state, every scalar fitted error is affine Gaussian. Its \(\psi_2\) norm is at most a universal constant times
\[
    |\widehat b|+
    \{(\widehat{\boldsymbol\theta}-\boldsymbol\theta)^\top
      \boldsymbol\Sigma(\widehat{\boldsymbol\theta}-\boldsymbol\theta)\}^{1/2}.
\]
Applying this bound in every unit response direction bounds the vector error's \(\psi_2\) norm by a fixed constant times its population \(L_2\) norm. By \eqref{eq:pcr-nuisance-rate}, the event that both nuisance norms are at most \(T^{-1/2}\log T\) for all three regressions has probability tending to one. On this event, taking \(\bar r_T=r_T=T^{-1/2}\log T\) verifies \Cref{cond:nuisance-rate}, since \(\bar r_T\log T\to0\) and \(r_T=o(T^{-1/4})\). The Gaussian bound supplies a fixed model-dependent envelope \(\bar K\) on the same event for all sufficiently large \(T\), verifying \Cref{cond:sg-nuisance-error}.
\end{proof}

This specialization uses a fixed known rank and linear conditional means. Consistently selecting the rank would require a separate verification for the training samples, while quadratic PCR would require a separate prediction and tail analysis.

\subsection{Sparse spline nuisance model}
\label{subsec:sparse-spline-nuisance}
\label{supp:subsec:sparse-spline-proof}

We next allow the nuisance means to be nonlinear functions of a small, unknown subset of the state coordinates. The learner selects the coordinates and fits a spline regression on the same training stretch. We use the population prediction bound of \citet{barrera2021generalization} for least squares with dependent observations. Its application here requires stationarity, bounded responses, and smoothness of the nuisance means on their active coordinates.

For \(\nu>0\), set \(b_\nu=\lceil\nu\rceil-1\) and \(\gamma_\nu=\nu-b_\nu\in(0,1]\). Let \(\mathcal H_d^\nu(A_0)\) be the H\"older ball of functions on \([0,1]^d\) whose partial derivatives up to order \(b_\nu\) are continuous, extend to the boundary, and satisfy
\[
    \max_{|\alpha|_1\le b_\nu}\|\partial^\alpha f\|_\infty
    +\max_{|\alpha|_1=b_\nu}\sup_{x\ne y}
      \frac{|\partial^\alpha f(x)-\partial^\alpha f(y)|}
           {\|x-y\|_\infty^{\gamma_\nu}}
    \le A_0.
\]
This convention uses Lipschitz derivatives of order \(\nu-1\) when \(\nu\) is an integer.

\begin{example}[Nuisance means with sparse coordinate dependence]
\label{ex:sparse-spline-nuisance}
Fix the horizon \(h\). Suppose that the following conditions hold, with fixed \(d\ge1\), \(\nu>0\), and model-dependent constants \(A_0,B_0,C_B,C_\beta,c_\beta>0\) independent of \(T\) and \(q=q_T\). The dimensions \(p,s,r\) remain fixed.
\begin{enumerate}
    \item[(i)] The joint process \(\{(\mathbf U_t,\mathbf Y_{t+h},\mathbf W_t,\mathbf Z_t)\}_{t\in\mathbb Z}\) is strictly stationary and satisfies \(\beta(k)\le C_\beta e^{-c_\beta k}\). Almost surely,
    \[
        \mathbf U_t\in[0,1]^q,
        \qquad
        \max\{\|\mathbf Y_{t+h}\|_\infty,
                  \|\mathbf W_t\|_\infty,
                  \|\mathbf Z_t\|_\infty\}\le B_0.
    \]
    \item[(ii)] For all sufficiently large \(T\), \(q_T\ge d\) and there is an unknown deterministic set \(S_T\subseteq\{1,\ldots,q\}\) of cardinality \(d\) and maps \(\mathbf f_g,\mathbf f_W,\mathbf f_Z\) on \([0,1]^d\) such that, almost surely,
    \[
        \mathbf g_h(\mathbf U_t)=\mathbf f_g(\mathbf U_{t,S_T}),
        \qquad
        \boldsymbol\mu_W(\mathbf U_t)=\mathbf f_W(\mathbf U_{t,S_T}),
        \qquad
        \boldsymbol\mu_Z(\mathbf U_t)=\mathbf f_Z(\mathbf U_{t,S_T}).
    \]
    Every scalar coordinate of these maps belongs to \(\mathcal H_d^\nu(A_0)\), and \(\|\mathbf B_h\|_{\op}\le C_B\). The support and the maps may vary with \(T\), subject to the same bounds.
\end{enumerate}
\end{example}

The common support may contain inactive coordinates, so \(d\) is an upper bound on the number of coordinates needed jointly by the primitive nuisances. Their outcome counterpart satisfies
\begin{equation}
    \boldsymbol\mu_{Y,h}(\mathbf U_t)
    =\mathbf f_Y(\mathbf U_{t,S_T}),
    \qquad
    \mathbf f_Y=\mathbf B_h\mathbf f_W+\mathbf f_g.
    \label{eq:spline-outcome-closure}
\end{equation}
Every coordinate of \(\mathbf f_Y\) belongs to \(\mathcal H_d^\nu\{(1+\sqrt{s}C_B)A_0\}\). Thus the outcome regression has the same support bound and smoothness. These restrictions allow interactions among the active coordinates and \(q\gg T\).

The learner uses the bounds \(d,\nu,B_0\), while \(S_T\) is unknown. For a fixed integer degree \(\kappa_{\mathrm{sp}}\ge\lceil\nu\rceil\), let \(\mathcal S_{J,\kappa_{\mathrm{sp}}}\) be the univariate spline space on \([0,1]\) with \(J\) equal knot intervals, degree \(\kappa_{\mathrm{sp}}\), and \(\kappa_{\mathrm{sp}}-1\) continuous derivatives at the interior knots. Its dimension is \(J+\kappa_{\mathrm{sp}}\). Define
\begin{equation}
    \mathcal F_{q,J}
    =\bigcup_{\substack{S\subseteq\{1,\ldots,q\}\\|S|=d}}
       \left\{u\mapsto f(u_S):
           f\in\mathcal S_{J,\kappa_{\mathrm{sp}}}^{\otimes d}\right\}.
    \label{eq:spline-sieve}
\end{equation}
Each candidate support therefore has \(K_J=(J+\kappa_{\mathrm{sp}})^d\) spline coefficients.

Use deterministic fold boundaries satisfying \Cref{ass:cross-fitting}, and take the longer contiguous training stretch \(\mathcal K_{\ell,h}\) from \Cref{subsec:stationary-pcr}, with \(m_{\ell,T}=|\mathcal K_{\ell,h}|\asymp T\). For each response coordinate \(D_{a,j,t}\), \(a\in\{Y,W,Z\}\), fit
\begin{equation}
    \widehat f_{a,j,\ell}
    \in\argmin_{f\in\mathcal F_{q,J_{\ell,T}}}
       \frac{1}{m_{\ell,T}}
       \sum_{t\in\mathcal K_{\ell,h}}
           \{D_{a,j,t}-f(\mathbf U_t)\}^2,
    \qquad
    \widehat\mu^{(-\ell)}_{a,j}(u)
    =\operatorname{clip}_{B_0}\{\widehat f_{a,j,\ell}(u)\},
    \label{eq:spline-ls}
\end{equation}
where \(\operatorname{clip}_{B_0}(x)=\max\{-B_0,\min(B_0,x)\}\). Within each support, choose the minimum Euclidean norm coefficient solution in a fixed B-spline basis, and break ties between supports by a fixed ordering. This defines a measurable least-squares minimizer even when a design matrix is singular. Stack the clipped coordinates to obtain \(\widehat{\boldsymbol\mu}^{(-\ell)}_a\).

\begin{corollary}[Sparse spline prediction rate]
\label{cor:sparse-spline-rate}
Suppose \Cref{ex:sparse-spline-nuisance} holds. In \eqref{eq:spline-ls}, take
\[
    J_{\ell,T}
    =\left\lceil
        \left\{\frac{m_{\ell,T}}{\log^2(2m_{\ell,T})}\right\}^{1/(2\nu+d)}
      \right\rceil,
    \qquad
    \rho_{\mathrm{SP},T}
    =\left\{\frac{\log^2(2T)}{T}\right\}^{\nu/(2\nu+d)}
      +\sqrt{\frac{d\log(2q_T)\log^2(2T)}{T}}.
\]
Then the population nuisance errors satisfy
\begin{equation}
    \max_{a\in\{Y,W,Z\}}\|\boldsymbol\Delta_a\|_{L_2}
    =O_p(\rho_{\mathrm{SP},T}).
    \label{eq:spline-population-rate}
\end{equation}
The constants in this rate may depend on the fixed model bounds, \(d,\nu,\kappa_{\mathrm{sp}}\), the response dimensions, and the fixed fold geometry, but not on \(T\) or \(q_T\).
\end{corollary}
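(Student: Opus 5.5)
The plan is to reduce everything to an imported oracle inequality for least squares with dependent data, namely the bound of \citet{barrera2021generalization}, and then convert the clipped fit's excess risk into the population norm through \eqref{eq:excess-risk-L2}. Fix a fold $\ell$ and a scalar response coordinate $D_{a,j,t}$. The training stretch $\mathcal K_{\ell,h}$ is contiguous with length $m=m_{\ell,T}\asymp T$. By \Cref{ex:sparse-spline-nuisance}(i), it is stationary and geometrically $\beta$-mixing, and its responses are bounded by $B_0$. These are the hypotheses needed to apply that bound to the finite-union sieve $\mathcal F_{q,J}$. The bound I aim for, holding with probability $1-\delta(m)$ where $\delta(m)\to0$, is
\[
    \E_U\{\widehat\mu(U)-\mu(U)\}^2
    \lesssim \inf_{f\in\mathcal F_{q,J}}\|f-\mu\|_{L_2}^2
    +\frac{\{K_J+\log N_{q}\}\log^2(2m)}{m},
    \qquad N_q=\binom{q}{d}.
\]
Here the logarithmic factors absorb the effective sample size $m/\log m$ under geometric mixing. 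Clipping at $B_0$ can only reduce the pointwise error, because $|\mu|\le B_0$. This is what lets the imported bound be stated for bounded predictors.

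The two terms are then controlled separately. For the approximation term, take $S=S_T$ and use the standard tensor-product spline approximation for Hölder functions with $\kappa_{\mathrm{sp}}\ge\lceil\nu\rceil$, which gives $\inf_f\|f-f_{a,j}\|_\infty\lesssim A_0 J^{-\nu}$. For $a=Y$ this requires the closure \eqref{eq:spline-outcome-closure}, which keeps the Hölder radius at $(1+\sqrt sC_B)A_0$. For the estimation term, use $\log N_q\le d\log q$ and $K_J\asymp J^d$. With $J\asymp\{m/\log^2(2m)\}^{1/(2\nu+d)}$, the bias $J^{-2\nu}$ balances the variance $J^d\log^2 m/m$, and both are of order $\{\log^2(2T)/T\}^{2\nu/(2\nu+d)}$. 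The support-selection term contributes $d\log(2q)\log^2(2T)/T$. Taking square roots gives $\rho_{\mathrm{SP},T}$.

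Finally, since $m_{\ell,T}\asymp T$ uniformly over folds, $\max_\ell\rho(m_{\ell,T})\lesssim\rho_{\mathrm{SP},T}$. Stationarity makes $P_t$ free of $t$, so the supremum over validation times is immediate. A union bound over the fixed numbers of folds and response coordinates then gives \eqref{eq:spline-population-rate}.

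The main obstacle is matching the exact hypotheses of \citet{barrera2021generalization}. That bound is stated for a fixed hypothesis class, but here the class grows with $T$ through $J$ and $q_T$, and it also contains the union over supports. I would first check that the complexity enters only through a covering number or pseudo-dimension that scales like $K_J+d\log q$, with the $\log^2$ mixing penalty. If it does not, the fallback is to apply the bound to each support separately at confidence level $\delta/N_q$ and take a union, accepting the resulting $\log N_q$ term. A secondary technical point is measurability of the argmin. That is handled by the minimum-norm and tie-breaking convention in \eqref{eq:spline-ls}.
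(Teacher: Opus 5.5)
Your route matches the paper's. The paper also uses the Barrera--Gobet inequality for clipped least squares over the union sieve $\mathcal F_{q,J}$, tensor-spline approximation of order $J^{-\nu}$ on the true support (with \eqref{eq:spline-outcome-closure} for $a=Y$), the bias--variance balance in $J$, and a union over the fixed folds and coordinates. The paper states the imported bound in expectation and finishes with Markov's inequality, which is interchangeable with your high-probability form.

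The one step you leave as something to check is the step the paper actually proves, and it is the crux. The paper shows that the VC-subgraph dimension of the union class satisfies $V_{q,J}\le C\{K_J+1+d\log(2q)\}$, via a Sauer count:
\begin{itemize}
\item Each support gives a linear space of dimension $K_J$, so its VC-subgraph dimension is at most $v_J=K_J+1$.
\item The $N_q=\binom qd$ supports together therefore induce at most $N_q(eM/v_J)^{v_J}$ subgraph traces on $M$ points.
\item This count is below $2^M$ once $M\ge 16v_J$ and $M\ge 4\log_2 N_q$.
\end{itemize}
The imported inequality is non-asymptotic, with constants uniform in the VC dimension, so this also settles your worry about a $T$-dependent class. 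Taking the lag $k_m=\lceil 3\log m/c_\beta\rceil$ makes the statistical term of order $(V_{q,J}+1)\log^2(2m)/m$. The mixing remainder is then $m\beta(k_m)=O(m^{-2})$, which is negligible.

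Your fallback does not work as stated. The estimator minimizes empirical risk over the whole union, so its support $\widehat S$ is chosen from the data. Suppose you apply a per-support oracle inequality at level $\delta/N_q$ for every $S$ and intersect the events. That only bounds the excess risk of the selected fit by the approximation error inside $\mathcal F_{\widehat S}$, and this need not be small when $\widehat S\neq S_T$. Comparing the selected support against $S_T$ needs a deviation bound that holds uniformly over all of $\mathcal F_{q,J}$, and a localized, ratio-type one if you want to keep the $1/m$ rate. You could assemble such a bound from per-support \emph{uniform deviation} bounds, but not from black-box oracle inequalities. The pseudo-dimension bound above delivers it in a single application of the imported theorem.
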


Clipping also verifies the tail requirement. Part (i) of \Cref{ex:sparse-spline-nuisance} implies \(|\mu_{a,j}(\mathbf U_t)|\le B_0\) almost surely. Hence every fitted coordinate error is at most \(2B_0\) in absolute value, for every training realization. With \(P_U\) denoting the stationary state law,
\begin{equation}
    \max_{\ell\le L}\max_{a\in\{Y,W,Z\}}
    \|\boldsymbol\delta_{a,\ell}\|_{\psi_2,P_U}
    \le 2B_0\sqrt{\frac{\max\{p,s,r\}}{\log2}}
    =:\bar K.
    \label{eq:spline-tail-bound}
\end{equation}
This bound gives \Cref{cond:sg-nuisance-error} with a constant independent of the ambient dimension.

We now apply \Cref{thm:regression-rate-transfer} to \eqref{eq:spline-population-rate}. The resulting validation errors satisfy
\begin{equation}
    \max_{a\in\{Y,W,Z\}}
    \left\{\|\boldsymbol\Delta_a\|_{L_2}
           \vee\|\boldsymbol\Delta_a\|_{n,h,2}\right\}
    =O_p(\rho_{\mathrm{SP},T}).
    \label{eq:spline-validation-rate}
\end{equation}
In particular, if
\begin{equation}
    \nu>d/2,
    \qquad
    d\log(2q_T)\log^2(2T)=o(\sqrt T),
    \label{eq:spline-inference-growth}
\end{equation}
then \(\rho_{\mathrm{SP},T}=o(T^{-1/4})\), so the individual and product requirements of \Cref{cond:nuisance-rate} hold in both norms. Together with \eqref{eq:spline-tail-bound}, this verifies the nuisance conditions used by the inference results, subject to their remaining assumptions. The proof below constructs the common event whose probability tends to one.

The ambient dimension enters the rate through \(\log(2q_T)\), because the active coordinates are selected from the full state. For fixed \(d\) and \(\nu>d/2\), \eqref{eq:spline-inference-growth} permits every polynomial sequence \(q_T=T^A\) with fixed \(A>0\). The additional boundedness assumption concerns the observed responses themselves; it is used in the imported prediction theorem. Dependence between successive regression errors is allowed under the joint mixing assumption.

\begin{proof}[Proof of \Cref{cor:sparse-spline-rate} and its application]
Fix a fold \(\ell\), a response coordinate \(D_{a,j,t}\), and the horizon \(h\). Write \(m=m_{\ell,T}\), \(J=J_{\ell,T}\), and \(\mu=\mu_{a,j}\). The training indices form one deterministic contiguous stretch. Its observations have the stationary marginal law and geometric mixing bound in \Cref{ex:sparse-spline-nuisance}, uniformly over folds. All constants denoted by \(C\) below are finite, may change between displays, and depend only on the fixed bounds listed in \Cref{cor:sparse-spline-rate}.

We first bound the complexity of \(\mathcal F_{q,J}\). For one support, this class is a linear space of dimension \(K_J=(J+\kappa_{\mathrm{sp}})^d\), so its VC-subgraph dimension is at most \(v_J=K_J+1\); see \citet[Example 3.7]{barrera2021generalization}. There are \(N_q=\binom qd\) candidate supports. By the Sauer bound, their union induces at most
\[
    N_q\left(\frac{eM}{v_J}\right)^{v_J}
\]
distinct subgraph traces on \(M\ge v_J\) points. If \(M\ge16v_J\) and \(M\ge4\log_2N_q\), the logarithm of this expression is at most
\[
    \frac{M\log2}{4}+\frac{M\log(16e)}{16}
    <M\log2.
\]
Here we used that \(x\mapsto\log(ex)/x\) decreases for \(x\ge1\). Such a set of points cannot be shattered. Consequently, for a universal constant \(C_{\mathrm{vc}}\), the VC-subgraph dimension \(V_{q,J}\) satisfies
\begin{equation}
    V_{q,J}
    \le C_{\mathrm{vc}}\{K_J+1+\log N_q\}
    \le C_{\mathrm{vc}}\{K_J+1+d\log(2q)\}.
    \label{eq:spline-vc-bound}
\end{equation}
For each support, rational coefficient vectors give a countable pointwise dense subclass. Taking their finite union proves pointwise measurability. The minimum norm least-squares coefficients are Borel functions of the data through the Moore--Penrose inverse, and the fixed tie rule preserves measurability. Thus the learner meets the measurability requirements of the imported theorem and of \Cref{thm:regression-rate-transfer}. The spline functions can be extended by zero outside \([0,1]^q\) when applying the imported theorem on \(\mathbb R^q\).

Theorem 3.10 and Remark 3.11 of \citet{barrera2021generalization} apply to ordinary least squares followed by clipping, as in \eqref{eq:spline-ls}. To specify their parameters, take their \(c=4\), \(\lambda=2\), and dependence lag
\[
    k_m=\left\lceil\frac{3\log m}{c_\beta}\right\rceil.
\]
Their admissibility condition \((3.8)\) holds for all sufficiently large \(m\), since \(c^2-71<0\) and \(\lfloor m/k_m\rfloor\ge1\). The statistical term in their inequality is at most
\[
    C B_0^2(V_{q,J}+1)\frac{k_m\log m}{m},
\]
whereas its mixing remainder is at most
\[
    C B_0^2 m\beta(k_m)
    \le C B_0^2 m^{-2}.
\]
Their mixing coefficients are bounded by the joint process coefficients assumed here. Stationarity makes their average of marginal prediction integrals equal to integration against \(P_U\). Absorbing the mixing remainder gives, for all sufficiently large \(m\),
\begin{equation}
\begin{split}
    \E\left[
       \|\widehat\mu^{(-\ell)}_{a,j}-\mu\|_{L_2(P_U)}^2
       \right]
    \le{}&2\inf_{f\in\mathcal F_{q,J}}
               \|f-\mu\|_{L_2(P_U)}^2\\
       &+C B_0^2(V_{q,J}+1)\frac{\log^2(2m)}{m}.
\end{split}
    \label{eq:spline-population-oracle}
\end{equation}
The expectation is over the training sample; the inner norm integrates a fresh state with the fitted function held fixed. The constant in this inequality is uniform in the sieve dimension and the ambient dimension.

It remains to bound the approximation error. By \eqref{eq:spline-outcome-closure}, for every \(a\in\{Y,W,Z\}\) the target has a representation \(\mu(\mathbf U_t)=f_{a,j}(\mathbf U_{t,S_T})\), where \(f_{a,j}\) belongs to a H\"older ball with radius at most \((1+\sqrt{s}C_B)A_0\). Tensor-product B-spline approximation on the regular knot grid gives
\[
    \inf_{v\in\mathcal S_{J,\kappa_{\mathrm{sp}}}^{\otimes d}}
       \|v-f_{a,j}\|_\infty
    \le C J^{-\nu};
\]
see \citet[Lemma 2(a) and Remark 7]{shen2015adaptive}. The constant depends on the fixed H\"older radius, dimension, smoothness, and spline degree. Since the true support is among the candidates in \eqref{eq:spline-sieve}, this uniform approximation also bounds the population error under any state law in the model. Equations \eqref{eq:spline-vc-bound} and \eqref{eq:spline-population-oracle} therefore yield
\begin{equation}
    \E\left[
       \|\widehat\mu^{(-\ell)}_{a,j}-\mu_{a,j}\|_{L_2(P_U)}^2
       \right]
    \le C\left[
        J^{-2\nu}
        +\{(J+\kappa_{\mathrm{sp}})^d+d\log(2q)\}
           \frac{\log^2(2m)}{m}
        \right].
    \label{eq:spline-bias-variance}
\end{equation}
The choice of \(J\) in the corollary balances the approximation term and the term from the spline coefficients. Using \(m\asymp T\) gives a bound \(C\rho_{\mathrm{SP},T}^2\) on the right. Summing over the fixed response coordinates and folds and applying Markov's inequality proves \eqref{eq:spline-population-rate}.

For the tail bound, let \(d_a\) be the dimension of response \(a\). Bounded responses imply bounded conditional means, and clipping bounds the fitted coordinates for every training sample. Thus, for every unit vector \(v\in\mathbb R^{d_a}\),
\[
    |v^\top\boldsymbol\delta_{a,\ell}(\mathbf U_t)|
    \le\|\boldsymbol\delta_{a,\ell}(\mathbf U_t)\|_2
    \le2B_0\sqrt{d_a}
\]
almost surely under \(P_U\). The definition of the sub-Gaussian norm gives \eqref{eq:spline-tail-bound}, including conditional on the training data with a fresh state drawn from \(P_U\).

Each fitted map uses only observations in \(\mathcal K_{\ell,h}\subseteq\mathcal J_{\ell,h}\). Hence \Cref{thm:regression-rate-transfer} applies to its population \(O_p\) rate and proves \eqref{eq:spline-validation-rate}. Under \eqref{eq:spline-inference-growth}, both terms in \(\rho_{\mathrm{SP},T}\) are \(o(T^{-1/4})\). Choose a deterministic sequence \(M_T\to\infty\) sufficiently slowly that
\[
    M_T\rho_{\mathrm{SP},T}=o(T^{-1/4}).
\]
By \eqref{eq:spline-validation-rate}, the event
\[
    \mathcal E_{T,h}
    =\left\{
       \max_{a\in\{Y,W,Z\}}
       \left(\|\boldsymbol\Delta_a\|_{L_2}
             \vee\|\boldsymbol\Delta_a\|_{n,h,2}\right)
       \le M_T\rho_{\mathrm{SP},T}
      \right\}
\]
has probability tending to one. Set \(\bar r_T=r_T=M_T\rho_{\mathrm{SP},T}\). Then \(\bar r_T\log n_h\to0\), \(r_T=o(T^{-1/4})\), and on this event the required product of nuisance errors is at most \(r_T^2\) in both norms. This verifies \Cref{cond:nuisance-rate}; the deterministic bound \eqref{eq:spline-tail-bound} verifies \Cref{cond:sg-nuisance-error} on the same event.
\end{proof}

\section{Simulation designs and additional results}\label{app:simulation-details}

\subsection{Data-generating process and estimation} \label{app:subsec:dgp}

\subsubsection{Dense nuisance functions and predetermined feedback}\label{app:sim-dense-feedback}

The observed state is an approximate factor panel. For column vectors \(U_t\in\mathbb R^q\), \(F_t\in\mathbb R^r\), and \(E_t\in\mathbb R^q\), let
\begin{align}
 U_t&=\frac{\Lambda F_t+\sigma_E E_t}{\sqrt{r+\sigma_E^2}},
 &F_t&=\rho_F F_{t-1}+\sqrt{1-\rho_F^2}\,\eta_t^F,\label{eq:supp-sim-state}\\
 E_t&=\rho_E E_{t-1}+\sqrt{1-\rho_E^2}\,\eta_t^E,
 &C_t&\stackrel{\mathrm{iid}}{\sim}N(0,1),\label{eq:supp-sim-confounder}\\
 \zeta_t&=\rho_Z\zeta_{t-1}+\sqrt{1-\rho_Z^2}
 \{aC_{t-1}+\sqrt{1-a^2}\,\xi_t\}.\label{eq:supp-sim-residual-instrument}
\end{align}
All primitive innovation streams are mutually independent standard Gaussian sequences. Factor and idiosyncratic coordinates start in their stationary distributions. The loading matrix contains the first \(r\) columns of the real Fourier dictionary
\[
 1,\quad \sqrt2\cos(2\pi j/q),\quad\sqrt2\sin(2\pi j/q),\quad
 \sqrt2\cos(4\pi j/q),\quad\sqrt2\sin(4\pi j/q),\quad\ldots,
\]
where \(j=0,\ldots,q-1\). Thus \(\Lambda^{\mathsf T}\Lambda/q=I_r\), and its entries are bounded by the universal constant \(\sqrt2\). Define the observed indices
\[
 S_t=B^{\mathsf T}U_t,\qquad
 B=\frac{\sqrt{r+\sigma_E^2}}{q\sqrt{1+\sigma_E^2/q}}\Lambda.
\]
They satisfy \(\operatorname{Var}(S_t)=I_r\). The feasible estimators observe \(U_t\); the indices \(S_t\) and matrix \(B\) are used only to generate the data.

Let \(A_r=(d,g,\gamma)\in\mathbb R^{r\times3}\). To hold the joint nuisance signal covariance constant when changing \(r\) or \(q\), set \(A_r=H_rR\), where
\[
 R^{\mathsf T}R=
 \begin{pmatrix}
 .54&.1025&.06\\ .1025&.4975&.35\\ .06&.35&.6775
 \end{pmatrix}.
\]
The rows of the embedding matrix are
\[
 (H_r)_{j,\cdot}=\frac1{\sqrt r}
 \left(1,\sqrt2\cos\frac{\pi(j-1/2)}r,
 \sqrt2\cos\frac{2\pi(j-1/2)}r\right),
\]
for \(j=1,\ldots,r\), and \(R\) is the upper triangular Cholesky factor with positive diagonal. The coefficients \(Bd,Bg,B\gamma\) are dense vectors in the observed coordinates. Outcomes, treatment, and instrument satisfy
\begin{align}
 Z_t&=g^{\mathsf T}S_t+\zeta_t,\label{eq:supp-sim-instrument}\\
 W_t&=d^{\mathsf T}S_t+\pi\zeta_t+0.8C_t+0.7\nu_t,\label{eq:supp-sim-treatment}\\
 Y_{t+h}^{(h)}&=\beta_hW_t+\gamma^{\mathsf T}S_t+0.8C_t
 +\frac{0.8}{\sqrt{h+1}}\sum_{j=0}^h e_{t+j}+\tau\zeta_t,
 \qquad \beta_h=0.8e^{-h/8}.\label{eq:supp-sim-outcome}
\end{align}
The main design sets \(T=500\), \(q=1000\), \(r=6\), \(h=4\), \(\rho_F=.8\), \(\rho_E=.2\), \(\sigma_E=1\), \(\rho_Z=.8\), \(a=.9\), \(\pi=.6\), and \(\tau=0\). These model parameters are fixed within each setting. We discard 200 initial observations and retain \(T\) horizon-aligned observations, generating the additional innovations needed for the moving average. The moving-average normalization keeps its marginal variance fixed across horizons.

The white-noise restriction on \(C_t\) is essential to this feedback design. It makes \(\zeta_t\) independent of \(C_t\) and of the outcome innovation sequence, so conditional exclusion holds exactly when \(\tau=0\). In particular,
\[
 \mathbb E(Y_{t+h}^{(h)}\mid U_t)=\beta_h d^{\mathsf T}S_t+\gamma^{\mathsf T}S_t,
 \qquad \mathbb E(W_t\mid U_t)=d^{\mathsf T}S_t,
 \qquad \mathbb E(Z_t\mid U_t)=g^{\mathsf T}S_t.
\]
For \(k\ge1\), however,
\[
 \operatorname{Cov}(\zeta_{t+k},C_t)
 =a\sqrt{1-\rho_Z^2}\,\rho_Z^{k-1},\qquad
 \operatorname{Cov}(\zeta_t,C_{t+k})=0.
\]
Consequently a valid current instrument can depend on past confounding shocks. The settings \(a=0\) and \(a=.5\) vary this feedback while preserving the marginal instrument variance and persistence. Serial dependence remains in the score through the persistent instrument and overlapping outcome innovations even at \(a=0\).

\subsubsection{Nuisance fitting and interval construction}\label{app:sim-nuisance-fitting}

The proposed estimator divides the observations into ten contiguous validation folds, using all remaining observations except those within
\[
 b_T=\max\{h+1,\lceil\log T\rceil\}
\]
indices of a validation observation. The main design gives \(b_T=7\). Every observation appears in exactly one validation fold. On each training sample of size \(m\), we center \(U_t\), compute its principal components, and select the rank from \(\{0,\ldots,10\}\) using
\[
 \operatorname{IC}_{p2}(k)=\log V(k)
 +k\frac{m+q}{mq}\log\{\min(m,q)\},
\]
where \(V(k)\) is the mean squared rank-\(k\) reconstruction error over the \(mq\) entries \citep{baing2002determining}. We then regress each nuisance response on the selected scores and an intercept by OLS. All steps use training observations only. This is principal-component regression as in \citet{stockwatson2002forecasting}.

Write the held-out residuals as \(\widehat y_t,\widehat w_t,\widehat z_t\), and let
\[
 \widehat A=T^{-1}\sum_t\widehat z_t\widehat w_t,
 \qquad \widehat\beta=\frac{\sum_t\widehat z_t\widehat y_t}{\sum_t\widehat z_t\widehat w_t},
 \qquad \widehat d_t=\widehat y_t-\widehat\beta\widehat w_t,
 \qquad \widehat\psi_t=\widehat z_t\widehat d_t.
\]
We adjust the score for the sensitivity of the foldwise OLS nuisance fits before applying Bartlett HAC. For fold \(j\), denote its training and validation sets by \(\mathcal T_j\) and \(\mathcal V_j\), and let \(x_{jt}\) contain its intercept and selected factor scores. Set
\[
 Q_j=\sum_{s\in\mathcal T_j}x_{js}x_{js}^{\mathsf T},\qquad
 v_{jz}=Q_j^{-1}\sum_{s\in\mathcal V_j}x_{js}\widehat z_s,
 \qquad
 v_{jd}=Q_j^{-1}\sum_{s\in\mathcal V_j}x_{js}\widehat d_s.
\]
Let \(\widehat e_{a,jt}=a_t-x_{jt}^{\mathsf T}\widehat\theta_{a,j}\) be the training residual for response \(a\in\{Y,W,Z\}\). We apply Bartlett HAC to
\begin{equation}
 \widehat\phi_t=\widehat\psi_t-
 \sum_{j:t\in\mathcal T_j}
 \left\{x_{jt}^{\mathsf T}v_{jz}
 (\widehat e_{Y,jt}-\widehat\beta\widehat e_{W,jt})
 +x_{jt}^{\mathsf T}v_{jd}\widehat e_{Z,jt}\right\}.
 \label{eq:supp-sim-corrected-score}
\end{equation}
All inverses above are evaluated only for nonsingular training designs. The dimensions of \(x_{jt},v_{jz},v_{jd}\) equal the selected rank plus one. Holding the PCA spaces, ranks, and folds fixed, \(\widehat\phi_t/(T\widehat A)\) is exactly the derivative of the IV estimate with respect to observation \(t\)'s case weight when that weight enters both the training OLS fits and the final IV moment. This is a stacked estimating-equation calculation, analogous to the two-step sensitivity calculation in \citet[Section~6]{neweymcfadden1994large}.

For a scalar series \(v_t\), define
\[
 \widehat\Omega_L(v)=\widehat\gamma_0(v)
 +2\sum_{k=1}^{L}\left(1-\frac{k}{L+1}\right)\widehat\gamma_k(v),
 \qquad
 \widehat\gamma_k(v)=T^{-1}\sum_{t=k+1}^T(v_t-\bar v)(v_{t-k}-\bar v).
\]
The proposed standard error and interval are
\[
 \widehat{\mathrm{SE}}_{\mathrm{corr}}
 =\frac{\{\widehat\Omega_{L_T}(\widehat\phi)/T\}^{1/2}}{|\widehat A|},
 \qquad
 \widehat\beta\ \pm\ 1.959964\,\widehat{\mathrm{SE}}_{\mathrm{corr}},
 \qquad
 L_T=\max\left\{h+1,\left\lfloor4(T/100)^{2/9}\right\rfloor\right\}.
\]
The main design gives \(L_T=5\). The OLS adjustment changes the standard error while leaving the point estimate unchanged.


\subsubsection{Monte Carlo design}\label{app:sim-monte-carlo}

Each setting has 1000 replications, indexed by \(s=0,\ldots,999\), with data-generation seed \(203609170+s\). Common random numbers pair the settings, and all estimators within a setting use the same observations. Six independent streams generate \(F_t\), \(E_t\), \(C_t\), \(\xi_t\), \(\nu_t\), and \(e_t\). Random fold assignment for PCR uses the data seed plus 1101. Changes in \(q\), \(r\), or \(\rho_F\) leave the oracle residual streams unchanged.

We report bias and RMSE relative to the structural effect \(\beta_h\), coverage of nominal 95\% intervals, and mean interval length. Under nuisance misspecification or exclusion failure, we also report coverage of the population residualization target. All replications enter the summaries, including extreme estimates under weak instruments; numerical failures terminate the computation. With 1000 independent replications, the binomial Monte Carlo standard error near 95\% coverage is about 0.007. Paired differences have corresponding Monte Carlo standard errors.

\subsection{Estimator comparisons} \label{app:subsec:sim-comparison}
\subsubsection{Competing estimators}\label{app:simulation-baselines}

The main shuffled iid DML--IV comparator uses the same training-only PCA rank selector, OLS learner, and ten folds as the proposed estimator, but assigns observations randomly to folds and applies no temporal buffer. Its standard error is the usual iid orthogonal-score expression
\[
 \widehat{\mathrm{SE}}_{\mathrm{iid}}
 =\frac{\{T^{-1}\widehat\Omega_0(\widehat\psi)\}^{1/2}}{|\widehat A|}.
\]
This applies the iid partially linear IV procedure \citep{chernozhukov2018double} on dependent observations. The nuisance learner and number of folds match the proposed estimator, so differences in coverage reflect fold construction and variance estimation.

We also compare Lasso and post-Lasso nuisance learners with ten contiguous folds and the same temporal buffer as the proposed estimator. Both use penalties \(0.75\widehat\sigma_a\sqrt{\log(2q)/m}\), where the response standard deviation \(\widehat\sigma_a\) and predictor standardization are computed on the training sample. Post-Lasso refits the selected coordinates by OLS and uses a training-mean prediction when none are selected. Both estimators use ordinary Bartlett HAC. Because the nuisance coefficients are dense, sparse-regression guarantees need not apply to these comparisons.

The conventional LP--IV baseline prespecifies the first four state coordinates, partials all three responses on those controls and an intercept by full-sample OLS, and construct intervals following \citep{stock2018identification}. Its exact population target is
\[
 \beta_h^{\mathrm{conv}}=\beta_h+
 \frac{g^{\mathsf T}V_4\gamma+\tau}{\pi+g^{\mathsf T}V_4d},
 \qquad
 V_4=I_r-\operatorname{Cov}(S_t,U_{Jt})
 \operatorname{Var}(U_{Jt})^{-1}\operatorname{Cov}(U_{Jt},S_t),
\]
where \(J=\{1,2,3,4\}\). In the main design this gives a target of 1.003048, compared with the structural value 0.485225. The target accounts for the covariance among the observed state coordinates.

The desparsified HDLP--IV ratio applies the modified desparsified Lasso of \citet{adamek2024local} separately to the outcome reduced form and treatment first stage. The external-IV ratio is our adaptation, since their paper considers an exogenous target regressor. \citet{breunig2020illposed} develop a related high-dimensional IV desparsification. In each regression, the instrument coefficient is unpenalized and the state coefficients are penalized. A shared nodewise Lasso of \(Z\) on \(U\) yields residual \(\widehat v_t\) and scale \(\widehat\tau_v^2=T^{-1}\sum_t\widehat v_t^2+\lambda_Z\|\widehat\delta\|_1\). With initial coefficients \((\widetilde\phi_a,\widetilde\gamma_a)\), the correction is
\[
 \widehat\phi_a^{\mathrm{db}}
 =\widetilde\phi_a+\frac{T^{-1}\sum_t\widehat v_t
 (a_t-\widetilde\phi_a Z_t-U_t^{\mathsf T}\widetilde\gamma_a)}{\widehat\tau_v^2},
 \qquad
 \widehat\beta^{\mathrm{dbIV}}=\frac{\widehat\phi_Y^{\mathrm{db}}}{\widehat\phi_W^{\mathrm{db}}}.
\]
All variables are demeaned and state predictors are standardized. The initial nuisance penalty is \(0.75\widehat\sigma_{M_Za}\sqrt{\log(2q)/T}\), and the nodewise penalty is \(0.75\widehat\sigma_Z\sqrt{\log(2q)/T}\). The multiplier is fixed at 0.75; the original procedure uses an iterative plug-in selector. Inference applies ordinary Bartlett HAC to the delta-method contribution \((\widehat\psi_{Y,t}-\widehat\beta^{\mathrm{dbIV}}\widehat\psi_{W,t})/\widehat\phi_W^{\mathrm{db}}\), where \(\widehat\psi_{a,t}=\widehat v_t\widehat e_{a,t}/\widehat\tau_v^2\). The oracle estimator uses the true conditional means and estimates the long-run variance by ordinary Bartlett HAC.

\begin{table}[!htbp]
\centering
\caption{Estimation and coverage under alternative nuisance learners and variance estimators.}
\label{tab:supp-literature-baselines}
\small
\setlength{\tabcolsep}{4pt}
\begin{tabular}{lrrrrr}
\toprule
Estimator & Bias & RMSE & SE/SD & Coverage & Length \\
\midrule
Proposed PCR--IV & 0.014 & 0.135 & 1.003 & 0.935 & 0.527 \\
Shuffled iid PCR--IV & -0.042 & 0.145 & 0.668 & 0.810 & 0.363 \\
Desparsified HDLP--IV ratio & 0.016 & 0.137 & 0.799 & 0.879 & 0.425 \\
Conventional LP--IV, 4 controls & 0.520 & 0.568 & 0.835 & 0.227 & 0.748 \\
Oracle-nuisance IV & -0.011 & 0.130 & 0.906 & 0.916 & 0.462 \\
Blocked Lasso IV & 0.057 & 0.145 & 0.895 & 0.871 & 0.469 \\
Blocked post-Lasso IV & 0.017 & 0.139 & 0.834 & 0.873 & 0.452 \\
Blocked PCR, ordinary HAC & 0.014 & 0.135 & 0.841 & 0.877 & 0.442 \\
Blocked PCR, iid SE & 0.014 & 0.135 & 0.611 & 0.757 & 0.321 \\
\bottomrule
\end{tabular}
\par\medskip\begin{minipage}{.97\linewidth}\footnotesize The last two rows share the proposed point estimate and change only its standard error. Both rows use blocked folds; the shuffled iid estimator appears separately. The blocked Lasso and post-Lasso estimators use penalty multiplier 0.75 and ordinary Bartlett HAC.\end{minipage}
\end{table}

\subsubsection{State adjustment and instrument use}\label{app:sim-adjustment}

The unadjusted IV estimator uses an intercept but no state controls. Its population shift is \((g^{\mathsf T}\gamma+\tau)/(\pi+g^{\mathsf T}d)\). The adjusted OLS estimator replaces the instrument with treatment in the blocked PCR procedure, using the corresponding OLS sensitivity adjustment with Bartlett HAC. Its baseline shift is \(0.8^2/(\pi^2+0.8^2+0.7^2)\). Unadjusted OLS has shift \((d^{\mathsf T}\gamma+0.8^2)/(d^{\mathsf T}d+\pi^2+0.8^2+0.7^2)\). These estimators isolate the contributions of state adjustment and instrumental-variable identification.

\begin{table}[!htbp]
\centering
\caption{The roles of state adjustment and instrumental-variable identification.}
\label{tab:supp-adjustment-ablations}
\small
\setlength{\tabcolsep}{4pt}
\begin{tabular}{lrrrrr}
\toprule
Estimator & Target shift & Bias & RMSE & Cov. $\beta$ & Cov. ref. \\
\midrule
Proposed PCR--IV & 0.000 & 0.014 & 0.135 & 0.935 & 0.935 \\
Conventional LP--IV, 4 controls & 0.518 & 0.520 & 0.568 & 0.227 & 0.900 \\
IV without state adjustment & 0.498 & 0.500 & 0.538 & 0.170 & 0.919 \\
PCR--OLS without instrument & 0.430 & 0.423 & 0.427 & 0.000 & 0.942 \\
OLS without state adjustment & 0.345 & 0.347 & 0.352 & 0.001 & 0.899 \\
\bottomrule
\end{tabular}
\par\medskip\begin{minipage}{.97\linewidth}\footnotesize Reference coverage uses the exact population residualization target. Dense nuisance coefficients may violate the conditions for sparse-regression consistency.\end{minipage}
\end{table}

\subsubsection{Sample size and instrument feedback}\label{app:sim-sample-size-feedback}

We vary \(T\) over 250, 500, 1000, and 2000 while retaining \(q=1000\) and six factors. The feedback comparison uses \(a=0,.5,.9\) at the main sample size. These are paired comparisons, and both the proposed and shuffled estimates use the same PCA learner throughout.

\begin{table}[!htbp]
\centering
\caption{Sample-size scaling with $q=1000$ and six factors.}
\label{tab:supp-scaling-results}
\small
\setlength{\tabcolsep}{4pt}
\begin{tabular}{llrrrr}
\toprule
$T$ & Method & Bias & $\sqrt T\,\mathrm{SD}$ & $\sqrt T\,\overline{\mathrm{SE}}$ & Coverage \\
\midrule
250 & Proposed & 0.020 & 3.294 & 3.344 & 0.940 \\
250 & Shuffled iid & -0.091 & 3.400 & 2.323 & 0.842 \\
250 & Oracle & -0.025 & 3.020 & 2.643 & 0.912 \\
500 & Proposed & 0.014 & 2.994 & 3.004 & 0.935 \\
500 & Shuffled iid & -0.042 & 3.104 & 2.073 & 0.810 \\
500 & Oracle & -0.011 & 2.907 & 2.633 & 0.916 \\
1000 & Proposed & 0.006 & 3.035 & 2.852 & 0.933 \\
1000 & Shuffled iid & -0.022 & 3.081 & 1.972 & 0.795 \\
1000 & Oracle & -0.008 & 3.006 & 2.649 & 0.916 \\
2000 & Proposed & 0.003 & 2.943 & 2.783 & 0.936 \\
2000 & Shuffled iid & -0.011 & 2.967 & 1.927 & 0.794 \\
2000 & Oracle & -0.004 & 2.902 & 2.674 & 0.929 \\
\bottomrule
\end{tabular}
\end{table}
\begin{table}[!htbp]
\centering
\caption{Estimation and coverage across instrument-feedback strengths.}
\label{tab:supp-feedback-results}
\small
\setlength{\tabcolsep}{4pt}
\begin{tabular}{llrrrr}
\toprule
$a$ & Method & Bias & RMSE & Coverage & Length \\
\midrule
0 & Proposed & -0.006 & 0.142 & 0.940 & 0.536 \\
0 & Shuffled iid & -0.007 & 0.135 & 0.798 & 0.342 \\
0 & Oracle & -0.005 & 0.132 & 0.915 & 0.455 \\
0.5 & Proposed & 0.005 & 0.140 & 0.942 & 0.530 \\
0.5 & Shuffled iid & -0.026 & 0.142 & 0.791 & 0.353 \\
0.5 & Oracle & -0.008 & 0.134 & 0.916 & 0.457 \\
0.9 & Proposed & 0.014 & 0.135 & 0.935 & 0.527 \\
0.9 & Shuffled iid & -0.042 & 0.145 & 0.810 & 0.363 \\
0.9 & Oracle & -0.011 & 0.130 & 0.916 & 0.462 \\
\bottomrule
\end{tabular}
\end{table}

\subsection{Sensitivity analyses} \label{app:subsec:sim-sensitivity}
\subsubsection{Nonlinear nuisance functions}\label{app:sim-nonlinear}

For the nonlinear specifications, define
\[
 f_W(S)=\frac{S_1^2-1}{\sqrt2},\qquad
 f_Z(S)=\frac{S_2^2-1}{\sqrt2},\qquad
 f_Y(S)=\frac{f_Z(S)+S_3S_4}{\sqrt2}.
\]
We add \(\delta_W f_W(S_t)\) to the treatment mean, \(\delta_Z f_Z(S_t)\) to the instrument mean, and \(\delta_Y f_Y(S_t)\) to the structural outcome nuisance. The total outcome mean also incorporates \(\beta_h\delta_W f_W(S_t)\). The residual disturbances and instrument validity are unchanged. At each time the indices are independent standard normals, so the three functions have mean zero and variance one and are orthogonal to all linear functions of \(U_t\). Moreover, \(\mathbb E(f_Zf_Y)=1/\sqrt2\) and \(\mathbb E(f_Zf_W)=0\). Population linear residualization therefore targets
\[
 \beta_h^{\mathrm{lin}}=\beta_h+\frac{\delta_Z\delta_Y}{\sqrt2\pi}.
\]
The grid includes \((\delta_W,\delta_Z,\delta_Y)=(.5,0,.5),(0,.5,0)\), and common magnitudes \(.25,.5,1\). The first two isolate nuisance errors for which the displayed population shift vanishes.

The nonlinear learner uses the selected principal-component scores, every pairwise product of those scores including squares, and an intercept. It estimates all coefficients by OLS on the training fold. With six selected factors this gives 28 coefficients per nuisance including the intercept. The complete quadratic space is invariant under rotation of the selected factor coordinates. Both the factor space and the regression coefficients are estimated, so factor-estimation error remains part of the comparison. Blocked quadratic PCR uses the same OLS sensitivity adjustment and Bartlett rule, while shuffled quadratic PCR uses ordinary iid score variance. At the moderate common magnitude, the blocked quadratic estimator has bias 0.037, RMSE 0.153, and coverage 0.948, compared with -0.091, 0.180, and 0.764 for its shuffled iid counterpart. The quadratic results vary little across nonlinear magnitudes. This is consistent with accurate factor recovery and close approximation of the added terms by the full quadratic dictionary.

\begin{table}[!htbp]
\centering
\caption{Nonlinear nuisance sensitivity.}
\label{tab:supp-misspec-results}
\small
\setlength{\tabcolsep}{4pt}
\begin{tabular}{llrrrr}
\toprule
$(\delta_W,\delta_Z,\delta_Y)$ & Method & Bias & RMSE & Coverage & Bias ref. \\
\midrule
(0.5,0,0.5) & Linear PCR & 0.016 & 0.157 & 0.934 & 0.016 \\
(0.5,0,0.5) & Quadratic PCR & 0.037 & 0.153 & 0.948 & 0.037 \\
(0.5,0,0.5) & iid quadratic & -0.091 & 0.180 & 0.765 & -0.091 \\
(0.5,0,0.5) & Oracle & -0.011 & 0.130 & 0.916 & -0.011 \\
(0,0.5,0) & Linear PCR & 0.013 & 0.152 & 0.945 & 0.013 \\
(0,0.5,0) & Quadratic PCR & 0.037 & 0.153 & 0.948 & 0.037 \\
(0,0.5,0) & iid quadratic & -0.091 & 0.180 & 0.764 & -0.091 \\
(0,0.5,0) & Oracle & -0.011 & 0.130 & 0.916 & -0.011 \\
(0.25,0.25,0.25) & Linear PCR & 0.091 & 0.167 & 0.894 & 0.017 \\
(0.25,0.25,0.25) & Quadratic PCR & 0.037 & 0.153 & 0.948 & 0.037 \\
(0.25,0.25,0.25) & iid quadratic & -0.091 & 0.180 & 0.764 & -0.091 \\
(0.25,0.25,0.25) & Oracle & -0.011 & 0.130 & 0.916 & -0.011 \\
(0.5,0.5,0.5) & Linear PCR & 0.320 & 0.381 & 0.598 & 0.026 \\
(0.5,0.5,0.5) & Quadratic PCR & 0.037 & 0.153 & 0.948 & 0.037 \\
(0.5,0.5,0.5) & iid quadratic & -0.091 & 0.180 & 0.764 & -0.091 \\
(0.5,0.5,0.5) & Oracle & -0.011 & 0.130 & 0.916 & -0.011 \\
(1,1,1) & Linear PCR & 1.295 & 1.839 & 0.268 & 0.116 \\
(1,1,1) & Quadratic PCR & 0.037 & 0.153 & 0.948 & 0.037 \\
(1,1,1) & iid quadratic & -0.091 & 0.180 & 0.765 & -0.091 \\
(1,1,1) & Oracle & -0.011 & 0.130 & 0.916 & -0.011 \\
\bottomrule
\end{tabular}
\par\medskip\begin{minipage}{.97\linewidth}\footnotesize Linear-PCR reference shift is $\delta_Z\delta_Y/(\sqrt2\pi)$. Quadratic PCR uses all pairwise products of estimated factor scores, including squares, learned within each training fold. Its reference target and the oracle target equal $\beta_4$.\end{minipage}
\end{table}

\subsubsection{Instrument quality}\label{app:sim-instrument-quality}

We vary relevance over \(\pi=.15,.30,.60\) under exclusion, and vary \(\tau=0,.03,.06,.12\) at \(\pi=.60\). Under exclusion failure, even oracle residualization targets \(\beta_h+\tau/\pi\). We report coverage of both the structural effect and the displaced IV target. The Wald procedures are not weak-IV robust. At the weakest first stage, RMSE is 1.026 for the proposed estimator and 20.757 for shuffled iid PCR. Their largest absolute errors are 17.272 and 603.309, respectively. The extreme ratios make RMSE and interval-length summaries unstable, so high coverage at this setting should be interpreted with caution.

\begin{table}[!htbp]
\centering
\caption{Instrument relevance and exclusion sensitivity.}
\label{tab:supp-instrument-quality}
\small
\setlength{\tabcolsep}{4pt}
\begin{tabular}{llrrrr}
\toprule
$(\pi,\tau)$ & Method & Bias & RMSE & Cov. $\beta$ & Cov. ref. \\
\midrule
(0.15,0) & Proposed & -0.015 & 1.025 & 0.945 & 0.945 \\
(0.15,0) & Shuffled iid & -0.922 & 20.757 & 0.913 & 0.913 \\
(0.15,0) & Oracle & -0.100 & 2.155 & 0.945 & 0.945 \\
(0.3,0) & Proposed & 0.018 & 0.275 & 0.937 & 0.937 \\
(0.3,0) & Shuffled iid & -0.103 & 0.331 & 0.861 & 0.861 \\
(0.3,0) & Oracle & -0.033 & 0.278 & 0.927 & 0.927 \\
(0.6,0) & Proposed & 0.014 & 0.135 & 0.935 & 0.935 \\
(0.6,0) & Shuffled iid & -0.042 & 0.145 & 0.810 & 0.810 \\
(0.6,0) & Oracle & -0.011 & 0.130 & 0.916 & 0.916 \\
(0.6,0.03) & Proposed & 0.063 & 0.147 & 0.901 & 0.935 \\
(0.6,0.03) & Shuffled iid & 0.010 & 0.137 & 0.802 & 0.806 \\
(0.6,0.03) & Oracle & 0.040 & 0.134 & 0.898 & 0.917 \\
(0.6,0.06) & Proposed & 0.113 & 0.173 & 0.831 & 0.933 \\
(0.6,0.06) & Shuffled iid & 0.062 & 0.149 & 0.724 & 0.799 \\
(0.6,0.06) & Oracle & 0.090 & 0.155 & 0.806 & 0.918 \\
(0.6,0.12) & Proposed & 0.211 & 0.247 & 0.599 & 0.934 \\
(0.6,0.12) & Shuffled iid & 0.167 & 0.212 & 0.472 & 0.784 \\
(0.6,0.12) & Oracle & 0.192 & 0.228 & 0.570 & 0.917 \\
\bottomrule
\end{tabular}
\par\medskip\begin{minipage}{.97\linewidth}\footnotesize The reference target is $\beta_4+\tau/\pi$. All draws are retained, including extreme weak-instrument ratios. These Wald intervals are not weak-IV robust.\end{minipage}
\end{table}
\begin{figure}[!htbp]
 \centering
 \includegraphics[width=\textwidth]{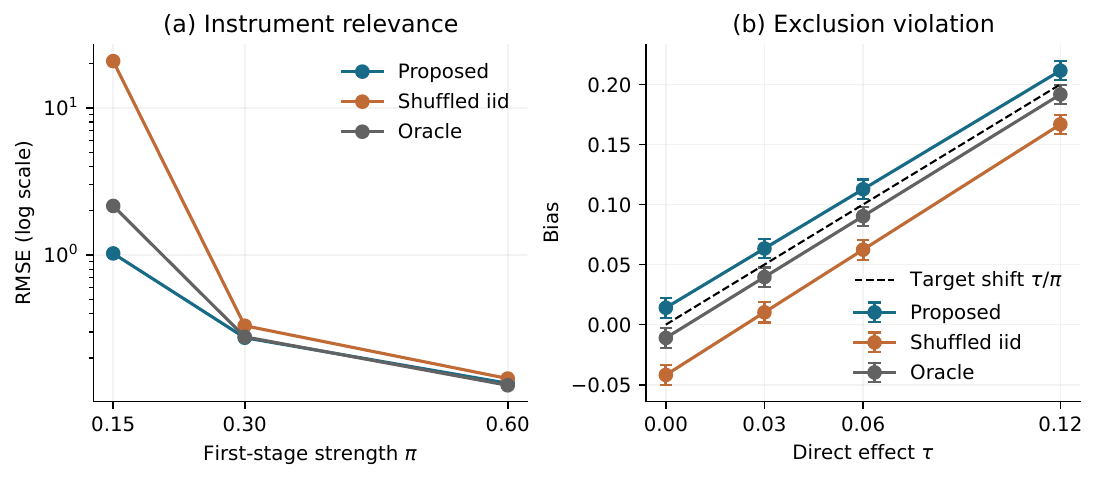}
 \caption{Instrument relevance and exclusion under the factor model. Panel (a) uses a logarithmic RMSE scale and retains all weak-instrument draws. Bars in panel (b) are 95\% Monte Carlo intervals for bias; the dashed line is the exact exclusion-induced target shift.}
 \label{fig:supp-instrument-quality}
\end{figure}

\subsubsection{Dimension, persistence, horizon, and buffer}\label{app:sim-dimension-horizon-buffer}

The dimension grid uses \(q=120,400,1000\), holding the factor count and joint nuisance signal covariance fixed. Increasing \(q\) adds noisy measurements of the factors. A larger panel can therefore improve factor estimation. Factor-count sensitivity uses \(r=3,6,10\), factor persistence uses \(\rho_F=.2,.5,.8\), and instrument persistence uses \(\rho_Z=.1,.5,.8\). The rank search is 0 through 10 in every setting, with ten factors at the upper boundary. The proposed estimator's coverage is 0.923 with three factors, 0.936 with ten factors, and 0.926--0.927 at the two lower factor persistences. Coverage remains below nominal even though every training fold selects the correct rank. At instrument persistence 0.1, the proposed and iid RMSE values are also close, 0.096 and 0.095.

\begin{table}[!htbp]
\centering
\caption{Dimension, factor count, and persistence sensitivity.}
\label{tab:supp-dimension-results}
\small
\setlength{\tabcolsep}{4pt}
\begin{tabular}{llrrrr}
\toprule
Setting & Method & Bias & RMSE & Coverage & Rank correct \\
\midrule
$q=120$ & Proposed & 0.014 & 0.135 & 0.936 & 1.000 \\
$q=120$ & Shuffled iid & -0.042 & 0.145 & 0.806 & 1.000 \\
$q=400$ & Proposed & 0.014 & 0.135 & 0.934 & 1.000 \\
$q=400$ & Shuffled iid & -0.042 & 0.145 & 0.808 & 1.000 \\
$q=1000$ & Proposed & 0.014 & 0.135 & 0.935 & 1.000 \\
$q=1000$ & Shuffled iid & -0.042 & 0.145 & 0.810 & 1.000 \\
$r=3$ & Proposed & 0.004 & 0.136 & 0.923 & 1.000 \\
$r=3$ & Shuffled iid & -0.031 & 0.141 & 0.816 & 1.000 \\
$r=10$ & Proposed & 0.026 & 0.142 & 0.936 & 1.000 \\
$r=10$ & Shuffled iid & -0.057 & 0.156 & 0.815 & 1.000 \\
$\rho_F=0.2$ & Proposed & -0.002 & 0.132 & 0.927 & 1.000 \\
$\rho_F=0.2$ & Shuffled iid & -0.022 & 0.136 & 0.823 & 1.000 \\
$\rho_F=0.5$ & Proposed & 0.003 & 0.132 & 0.926 & 1.000 \\
$\rho_F=0.5$ & Shuffled iid & -0.027 & 0.138 & 0.823 & 1.000 \\
$\rho_Z=0.1$ & Proposed & 0.015 & 0.096 & 0.946 & 1.000 \\
$\rho_Z=0.1$ & Shuffled iid & -0.017 & 0.095 & 0.931 & 1.000 \\
$\rho_Z=0.5$ & Proposed & 0.016 & 0.115 & 0.942 & 1.000 \\
$\rho_Z=0.5$ & Shuffled iid & -0.029 & 0.119 & 0.871 & 1.000 \\
\bottomrule
\end{tabular}
\par\medskip\begin{minipage}{.97\linewidth}\footnotesize Each row changes the indicated parameter from the main design. Rank correct is the fraction of training folds selecting the true factor count, averaged over replications.\end{minipage}
\end{table}

Horizon sensitivity uses \(h=0,4,8,12\), with \(\beta_h=0.8e^{-h/8}\) and the prescribed horizon-specific buffer and bandwidth. At horizon zero the shuffled iid interval has coverage 0.948, compared with 0.934 for the proposed interval; the coverage ordering is therefore also not uniform across horizons. At horizons 8 and 12, proposed coverage is 0.924 and 0.929, compared with 0.732 and 0.699 for shuffled iid PCR. At horizon 12 the RMSE values are close, 0.180 and 0.179. We also vary the buffer over 0, 5, 7, 28, and 56 under the main DGP, using paired observations. The zero-buffer comparison falls outside the asymptotic separation requirement.

\begin{table}[!htbp]
\centering
\caption{Horizon sensitivity with unit-variance normalized moving-average innovations.}
\label{tab:supp-horizon-results}
\small
\setlength{\tabcolsep}{4pt}
\begin{tabular}{llrrrr}
\toprule
$h$ & Method & Bias & RMSE & Coverage & Length \\
\midrule
0 & Proposed & 0.015 & 0.093 & 0.934 & 0.370 \\
0 & Shuffled iid & -0.042 & 0.109 & 0.948 & 0.366 \\
0 & Oracle & -0.011 & 0.092 & 0.942 & 0.337 \\
4 & Proposed & 0.014 & 0.135 & 0.935 & 0.527 \\
4 & Shuffled iid & -0.042 & 0.145 & 0.810 & 0.363 \\
4 & Oracle & -0.011 & 0.130 & 0.916 & 0.462 \\
8 & Proposed & 0.014 & 0.161 & 0.924 & 0.616 \\
8 & Shuffled iid & -0.043 & 0.166 & 0.732 & 0.361 \\
8 & Oracle & -0.012 & 0.150 & 0.914 & 0.522 \\
12 & Proposed & 0.015 & 0.180 & 0.929 & 0.674 \\
12 & Shuffled iid & -0.043 & 0.179 & 0.699 & 0.359 \\
12 & Oracle & -0.012 & 0.164 & 0.908 & 0.558 \\
\bottomrule
\end{tabular}
\end{table}
\begin{table}[!htbp]
\centering
\caption{Buffer length in blocked PCR with OLS-adjusted Bartlett HAC.}
\label{tab:supp-buffer-results}
\small
\setlength{\tabcolsep}{4pt}
\begin{tabular}{lrrrr}
\toprule
Buffer & Bias & RMSE & Coverage & Length \\
\midrule
0 & 0.011 & 0.134 & 0.929 & 0.524 \\
5 & 0.014 & 0.134 & 0.933 & 0.526 \\
7 & 0.014 & 0.135 & 0.935 & 0.527 \\
28 & 0.016 & 0.137 & 0.937 & 0.537 \\
56 & 0.018 & 0.141 & 0.937 & 0.557 \\
\bottomrule
\end{tabular}
\par\medskip\begin{minipage}{.97\linewidth}\footnotesize The main rule gives buffer 7. A zero buffer falls outside the asymptotic separation requirement.\end{minipage}
\end{table}

\subsubsection{Two-outcome inference}\label{app:sim-vector}

For joint inference, we use the main treatment and instrument with a second outcome
\begin{align*}
 Y_{2,t+4}&=-.35W_t+\gamma_2^{\mathsf T}S_t-.5C_t
 +\frac{.8}{\sqrt5}\sum_{j=0}^4e^{(2)}_{t+j},\\
 (\gamma_2)_j&=.6\cos\{\pi(j-1/2)/6\},\qquad j=1,\ldots,6.
\end{align*}
The second outcome uses an independent innovation stream initialized from the integer pair \((\mathrm{seed},90901)\) through \texttt{SeedSequence}. Both outcomes share the fold-specific PCA spaces. We form one OLS-adjusted score per outcome for the proposed estimator and estimate their full Bartlett covariance matrix in chronological order. The iid comparator uses only the contemporaneous covariance of its two ordinary scores. Component intervals use normal critical values, and the joint Wald region uses the 0.95 quantile of \(\chi^2_2\). The joint Wald statistic requires a positive-definite estimated covariance matrix.

\begin{table}[!htbp]
\centering
\caption{Two-outcome inference with one treatment and one instrument.}
\label{tab:supp-vector-results}
\small
\setlength{\tabcolsep}{4pt}
\begin{tabular}{llrrrr}
\toprule
Component & Method & Bias & RMSE & Marginal cov. & Joint cov. \\
\midrule
1 & Proposed & 0.014 & 0.135 & 0.935 & 0.917 \\
2 & Proposed & -0.005 & 0.126 & 0.931 & 0.917 \\
1 & Shuffled iid & -0.042 & 0.145 & 0.810 & 0.661 \\
2 & Shuffled iid & 0.029 & 0.130 & 0.748 & 0.661 \\
1 & Oracle & -0.011 & 0.130 & 0.916 & 0.894 \\
2 & Oracle & 0.008 & 0.119 & 0.916 & 0.894 \\
\bottomrule
\end{tabular}
\par\medskip\begin{minipage}{.97\linewidth}\footnotesize The target is $(0.485225,-0.35)^\mathsf T$. Joint coverage uses the 0.95 quantile of $\chi^2_2$ and is repeated across the two component rows.\end{minipage}
\end{table}

\section{Empirical design and additional diagnostics}\label{app:empirical-details}

\subsection{Variable construction and estimation} \label{app:subsec:estimation}

The baseline application combines the April 2026 FRED-MD vintage \citep{mccracken2016fred} with the updated Romer--Romer shock \citep{romer2004new,wieland2021updated}. The nonmissing instrument window contains 468 monthly policy observations from January 1969 through December 2007. For each FRED-MD series, we apply the transformation code (\texttt{tcode}) supplied with this vintage, following the transformation convention of \citet{mccracken2016fred} and the accompanying FRED-MD data appendix. The data appendix and the code implementing these transformations are available at \url{https://www.stlouisfed.org/research/economists/mccracken/fred-databases}. \Cref{tab:application-variables} records the codes and transformations for the outcomes and treatment.
In particular, the CPI outcome is inflation acceleration, industrial production is monthly growth, and housing starts are in log levels.

\begin{table}[H]
    \centering
    \caption{Variables in the monetary-policy application.}
    \label{tab:application-variables}
    \begin{tabularx}{\textwidth}{ll>{\raggedright\arraybackslash}X}
        \toprule
        Object & Series & Construction \\
        \midrule
        CPI outcome & \texttt{CPIAUCSL} & Second difference of log CPI (code 6) \\
        Unemployment outcome & \texttt{UNRATE} & Monthly first difference (code 2) \\
        Industrial-production outcome & \texttt{INDPRO} & Monthly first difference of log production (code 5) \\
        Housing-starts outcome & \texttt{HOUST} & Log level (code 4) \\
        Treatment & \texttt{GS1} & Monthly first difference in percentage points (code 2) \\
        Instrument & \texttt{resid\_full} & Updated Romer--Romer residual \\
        \bottomrule
    \end{tabularx}
\end{table}

Let \(X_t\in\mathbb R^{122}\) contain the retained transformed FRED-MD series. The pre-treatment state is
\[
    U_t=\big(X_{t-1}^\top,X_{t-2}^\top,\ldots,X_{t-12}^\top\big)^\top
    \in\mathbb R^{1464}.
\]
We exclude \texttt{ACOGNO}, \texttt{ANDENOx}, \texttt{TWEXAFEGSMTHx}, and \texttt{UMCSENTx} because their lag histories are incomplete over the instrument window. All controls precede treatment by at least one month, so no contemporaneous outcomes or post-treatment variables are leaked into the state $\mathbf U_t$.

For each outcome and horizon, \(\beta_h\) is the corresponding scalar entry of the causal response matrix in \Cref{eq:model}. Its interpretation as a response to a one-percentage-point instrumented increase in \(\Delta\texttt{GS1}_t\) requires the linear response model and the conditional IV restrictions in \Cref{ass:iv}. The application cannot test the exclusion restriction or establish the causal validity of the updated Romer--Romer shock by itself, so its validity is endorsed by the standard practice in macroeconometics. The normalization is a one-percentage-point yield change.

We separately estimate the conditional means of \(Y_{t+h}\), \(W_t\), and \(Z_t\) by Lasso with five contiguous validation folds. At horizon \(h\), each training sample excludes its validation block and \(b_h=\max\{12,h+1\}\) observations on either side. With the resulting cross-fitted residuals, the estimate is
\[
    \widehat\beta_h
    =\frac{\sum_t\widehat e_{Z,t}\widehat e_{Y,t,h}}
           {\sum_t\widehat e_{Z,t}\widehat e_{W,t}}.
\]
This is the scalar specialization of \Cref{alg:orthogonal-lpiv}. We use a Bartlett HAC estimator with truncation lag \(b_h\) for both the score variance
and the residualized first-stage diagnostic. The displayed normal 95\% intervals are pointwise. 

\subsection{Sensitivity analyses and calibration} \label{app:subsec:real-sensitivity}

\subsubsection{Nuisance specifications and response sensitivity}\label{app:empirical-nuisance-sensitivity}

We compare the baseline Lasso learner with ridge regression and regression on eight principal components of the 1,464-dimensional state. All three learners use the same 468 observations, five contiguous folds, temporal buffer, and HAC lag. We standardize controls within each training fold. Lasso selects among 30 penalties by three-fold cross-validation; ridge uses three-fold cross-validation over 25 logarithmically spaced penalties from \(10^{-4}\) to \(10^4\). The factor learner extracts eight principal components from the standardized training controls and regresses each nuisance response on those components. 

The fourth specification uses Lasso on 60 controls, comprising 12 lags of the treatment and the four outcomes. It retains the same five-fold estimation and inference protocol. These comparisons cover every horizon from 0 through 24 for every outcome.

\Cref{tab:application-nuisance-sensitivity} compares the housing responses at six and twelve months. Every specification gives negative point estimates and intervals below zero at both horizons. At six months, the estimates are close, ranging from \(-0.155\) to \(-0.138\). At twelve months, the range widens to \([-0.230,-0.168]\). Factor adjustment gives a deeper and more persistent housing contraction than the baseline Lasso fit. The largest difference between these two housing paths is 0.078 log points at month 11.

\begin{table}[htbp]
\centering
\caption{Housing responses under alternative nuisance specifications. The intervals are pointwise 95\% Bartlett HAC intervals.}
\label{tab:application-nuisance-sensitivity}
\small
\setlength{\tabcolsep}{5pt}
\begin{tabular}{lrrrr}
\toprule
& \multicolumn{2}{c}{Six months} & \multicolumn{2}{c}{Twelve months} \\
Nuisance learner & Estimate & 95\% interval & Estimate & 95\% interval \\
\midrule
Lasso, 1464 controls & -0.155 & $[-0.260,-0.050]$ & -0.168 & $[-0.309,-0.027]$ \\
Eight factors & -0.145 & $[-0.255,-0.035]$ & -0.208 & $[-0.364,-0.052]$ \\
Ridge, 1464 controls & -0.139 & $[-0.235,-0.044]$ & -0.198 & $[-0.378,-0.017]$ \\
Lasso, 60 controls & -0.138 & $[-0.202,-0.074]$ & -0.230 & $[-0.400,-0.060]$ \\
\bottomrule
\end{tabular}
\end{table}

The Lasso and factor intervals overlap at all 100 outcome-horizon pairs, although conclusions about individual horizons can differ. For CPI inflation acceleration, 22 of the 25 intervals include zero under each learner. For industrial-production growth, the counts are 21 with Lasso and 20 with factors; for unemployment, they are 23 and 20. At twelve months, the unemployment response is \(0.051\) percentage points with Lasso, with interval \([-0.013,0.114]\), compared with \(0.063\) and interval \([0.018,0.108]\) under factor adjustment. Thus the nuisance specification affects the estimated magnitude and duration, as well as pointwise significance, while preserving the main housing pattern. These interval comparisons are descriptive and do not test equality of the response paths.

\subsubsection{HAC lag and policy-rate sensitivity}\label{app:empirical-hac-policy-rate}

With the baseline Lasso nuisance fits, varying the HAC truncation lag over 6, 12, 18, and 24 months, together with the horizon-dependent baseline choice, preserves the negative six- and twelve-month housing intervals. At six months, the HAC standard error ranges from 0.0498 to 0.0598; at twelve months, it ranges from 0.0711 to 0.0773. The point estimates are unchanged because only variance estimation is affected. 

Replacing \(\Delta\texttt{GS1}_t\) with the monthly change in the Wu--Xia shadow rate \citep{wu-xia} leaves 215 overlapping months, from February 1990 through December 2007, with the same 1,464 controls. Using the same five-fold Lasso procedure over all 25 horizons gives a median residualized first-stage HAC \(F\)-statistic of 1.03, with a range from 0.12 to 2.67. This alternative policy-rate measure has a weak first stage and may require weak-IV-robust inference for a reliable comparison of response paths.

\subsubsection{Plasmode design and calibration}\label{app:empirical-plasmode}

We use a plasmode experiment to assess estimation accuracy and interval coverage at the sample size and dimension of the empirical application. The experiment keeps the observed treatment \(W_t\), instrument \(Z_t\), and 1464-dimensional state \(U_t\) fixed for the 468 policy months. We replace the observed housing outcomes with synthetic outcomes whose treatment coefficient is specified in advance,
\[
    Y^{\ast}_{t+h}=\beta_h^{\ast}W_t+m_h(U_t)+e^{\ast}_{t+h}.
\]
We evaluate horizons \(h\in\{0,6,12,18,24\}\) under two response paths. The zero-effect path sets \(\beta_h^{\ast}=0\). The housing-like path uses the baseline Lasso estimates of the housing response, rounded to three decimal places. Specifically, it sets \(\beta_h^{\ast}\) to \(-0.010\), \(-0.155\), \(-0.168\), \(-0.108\), and \(-0.108\) at these horizons, respectively.

The function \(m_h(U_t)\) supplies the part of the synthetic outcome driven by the observed state. At each horizon, we standardize the state coordinates and select the ten with the largest absolute sample correlations with the observed housing outcome \(Y_{t+h}\). We sum these coordinates, weighting each by the sign of its correlation, and rescale the sum to have half the sample standard deviation of \(Y_{t+h}\). This gives a sparse linear nuisance signal, which we hold fixed across replications.

To construct the synthetic errors, we first form an empirical residual pool,
\[
    \widetilde\beta_h=\frac{\sum_t Z_tY_{t+h}}{\sum_t Z_tW_t},
    \qquad
    r_{t,h}=Y_{t+h}-\widetilde\beta_hW_t-m_h(U_t).
\]
The unadjusted IV ratio \(\widetilde\beta_h\) is a pilot treatment coefficient used to form this pool. We center the residuals and sample blocks of 12 consecutive values with replacement. We then concatenate the blocks and retain the first 468 entries as \(e^{\ast}_{t+h}\). Sampling whole blocks retains dependence within each block. The bootstrap therefore supplies the errors for a new synthetic outcome sample in each replication.

For each response path and horizon, we run 200 replications. We fit Lasso nuisance regressions with three contiguous validation folds and temporal buffer \(b_h=\max\{12,h+1\}\). Treatment and instrument residuals are computed once per horizon, while the synthetic outcome is re-residualized in every replication. Each replication produces a pointwise 95\% interval using Bartlett HAC standard errors with truncation lag \(b_h\) and normal critical values. We assess coverage by checking whether these intervals contain the imposed coefficient \(\beta_h^{\ast}\).

Under the housing-like path, coverage ranges from 0.955 to 0.990. Absolute bias is below 0.04 log points at every horizon, compared with median HAC standard errors between 0.034 and 0.065. Under the zero-effect path, the nominal 5\% test falsely rejects with frequency between 0.010 and 0.060. These results support the procedure's ability to recover imposed dynamic effects with small bias and generally conservative pointwise intervals at the dimensions of the application. This evidence is conditional on the chosen sparse nuisance signal and resampled errors; the causal interpretation of the empirical housing response still depends on the IV assumptions.

\putbib
\clearpage
\end{bibunit}
\end{document}